\newcommand{\citet}[1]{\cite{#1}}
\newcommandx{\unsure}[2][1=]{\todo[linecolor=green,backgroundcolor=green!25,bordercolor=green,#1]{\normalsize #2}}
\newcommandx{\improvement}[2][1=]{\todo[inline,linecolor=blue,backgroundcolor=blue!05,bordercolor=blue,#1]{\normalsize #2}}
\newcommandx{\info}[2][1=]{\todo[linecolor=yellow,backgroundcolor=yellow!25,bordercolor=yellow,#1]{#2}}
\newcommandx{\floatmodel}[2][1=]{\todo[inline,linecolor=red,backgroundcolor=yellow!25,bordercolor=yellow,#1]{#2}}
\newcommandx{\thiswillnotshow}[2][1=]{\todo[disable,#1]{#2}}
\newcommandx{\celine}[2][1=]{\todo[inline,linecolor=green,backgroundcolor=green!25,bordercolor=green,caption={\normalsize \textbf{Celine}},#1]{\normalsize #2}}
\newcommandx{\karol}[2][1=]{\todo[inline,linecolor=blue,backgroundcolor=blue!25,bordercolor=blue,caption={\normalsize \textbf{Karol}},#1]{\normalsize #2}}
\newcommandx{\jesper}[2][1=]{\todo[inline,linecolor=red,backgroundcolor=red!25,bordercolor=red,caption={\normalsize \textbf{Jesper}},#1]{\normalsize #2}}
\newcommandx{\michal}[2][1=]{\todo[inline,linecolor=gray,backgroundcolor=red!25,bordercolor=red,caption={\normalsize \textbf{Micha\l{}}},#1]{\normalsize #2}}
\newtheorem{theorem}{Theorem}
\newtheorem{definition}[theorem]{Definition}
\newtheorem{lemma}[theorem]{Lemma}
\newtheorem{corollary}[theorem]{Corollary}
\newtheorem{claim}[theorem]{Claim}
\newtheorem{conjecture}[theorem]{Conjecture}
\newtheorem{remark}[theorem]{Remark}
\newtheorem*{maingoal*}{Main Question}
\numberwithin{theorem}{section}
\newcommand{\IH}[1]{\smallskip  \noindent \fbox{ \begin{minipage}{\textwidth} \paragraph{Induction hypothesis.} {#1} \end{minipage}}}
\newcommand{\Prb}{\mathbb{P}}
\newcommand{\bnd}{\partial}
\newcommand{\floor}[1]{\left\lfloor #1 \right\rfloor}
\newcommand{\ceil}[1]{\left\lceil #1 \right\rceil}
\newcommand{\eps}{\varepsilon}
\newcommand{\Oh}{\mathcal{O}}
\newcommand{\nat}{\mathbb{N}}
\newcommand{\N}{\mathbb{N}}
\newcommand{\Ff}{\mathcal{F}}
\newcommand{\Rr}{\mathcal{R}}
\newcommand{\Ss}{\mathcal{S}}
\newcommand{\Cc}{\mathcal{C}}
\newcommand{\Mm}{\mathcal{M}}
\newcommand{\Gg}{\mathcal{G}}
\newcommand{\Pp}{\mathcal{P}}
\newcommand{\Hh}{\mathcal{H}}
\newcommand{\poly}{\mathrm{poly}}
\newcommand{\polylog}{\mathrm{polylog}}
\newcommand{\ol}{\overline}
\newcommand{\CMSOtwo}{\mathsf{CMSO}_2}
\newcommand{\angles}[1]{\langle {#1} \rangle}
\newcommand{\MIS}{\mathsf{MIS}}
\newcommand{\subtree}{\mathsf{subtree}}
\newcommand{\tail}{\mathsf{tail}}
\newcommand{\conf}{\mathsf{conf}}
\newcommand{\lvl}{\mathrm{lvl}}
\newcommand{\Min}{\mathsf{Min}}
\newcommand{\id}{\mathsf{id}}
\renewcommand{\leq}{\leqslant}
\renewcommand{\geq}{\geqslant}
\renewcommand{\le}{\leqslant}
\renewcommand{\ge}{\geqslant}
\title{Isolation schemes for problems on decomposable graphs}
\date{}
\author{
    Jesper Nederlof\footnote{Utrecht University, The
    Netherlands, \textsf{j.nederlof@uu.nl}. Supported by
    the project CRACKNP that has received funding from the European
    Research Council (ERC) under the European Union's Horizon 2020 research and
    innovation programme (grant agreement No 853234).}
    \and
    Micha\l{} Pilipczuk\footnote{Institute of Informatics, University of
    Warsaw, Poland, \textsf{michal.pilipczuk@mimuw.edu.pl}. This works is a part of
    the project TOTAL that has received funding from the European
    Research Council (ERC) under the European Union's Horizon 2020 research and
    innovation programme (grant agreement No 677651).}
    \and
    Céline M. F. Swennenhuis\footnote{Eindhoven University of Technology, The
    Netherlands, \textsf{c.m.f.swennenhuis@tue.nl}. Supported by the Netherlands
    Organization for Scientific Research under project no. 613.009.031b.}
    \and
    Karol W\k{e}grzycki\footnote{Saarland University and Max Planck Institute for Informatics,
        Saarbr\"ucken, Germany, \textsf{wegrzycki@cs.uni-saarland.de}. 
    This work is part of the project TIPEA that has
    received funding from the European Research Council (ERC) under the European Union's Horizon
    2020 research and innovation programme (grant agreement No 850979).}
}
\begin{document}

\maketitle

\thispagestyle{empty}
\begin{abstract}
The Isolation Lemma of Mulmuley, Vazirani and Vazirani~[Combinatorica'87] provides a self-reduction scheme that allows one to assume that a given instance of a problem has a unique solution, provided a solution
exists at all. Since its introduction, much effort has been dedicated towards
derandomization of the Isolation Lemma for specific classes of problems. So far, the focus was mainly on problems solvable in polynomial time.


In this paper, we study a setting that is more typical for $\mathsf{NP}$-complete problems, and obtain partial derandomizations in the form of significantly decreasing the number of required random bits.
In particular, motivated by the advances in parameterized algorithms, we focus on problems on
decomposable graphs. For example, for the problem of detecting a Hamiltonian cycle, we build upon the rank-based approach from~[Bodlaender et al., Inf. Comput.'15] and design isolation schemes that use
\begin{itemize}[nosep]
 \item $\Oh(t\log n + \log^2{n})$ random bits on graphs of treewidth at most $t$;
 \item $\Oh(\sqrt{n})$ random bits on planar or $H$-minor free graphs; and
 \item $\Oh(n)$-random bits on general graphs.
\end{itemize}
In all these schemes, the weights are bounded exponentially in the number of random bits used.
As a corollary, for every fixed $H$ we obtain an algorithm for detecting a Hamiltonian cycle in an $H$-minor-free graph that runs in deterministic time $2^{\Oh(\sqrt{n})}$ and uses polynomial space; this is the first algorithm to achieve such complexity guarantees. For problems of more local nature, such as finding an independent set of maximum size, we obtain isolation schemes on graphs of treedepth at most $d$ that use $\Oh(d)$ random bits and assign polynomially-bounded weights.

We also complement our findings with several unconditional and conditional lower bounds, which show that many of the results cannot be significantly improved.
\end{abstract}

\begin{picture}(0,0)
\put(462,-170)
{\hbox{\includegraphics[width=40px]{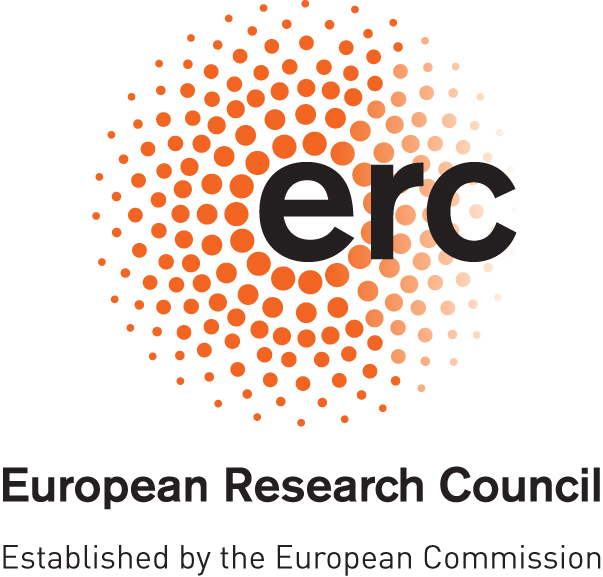}}}
\put(452,-230)
{\hbox{\includegraphics[width=60px]{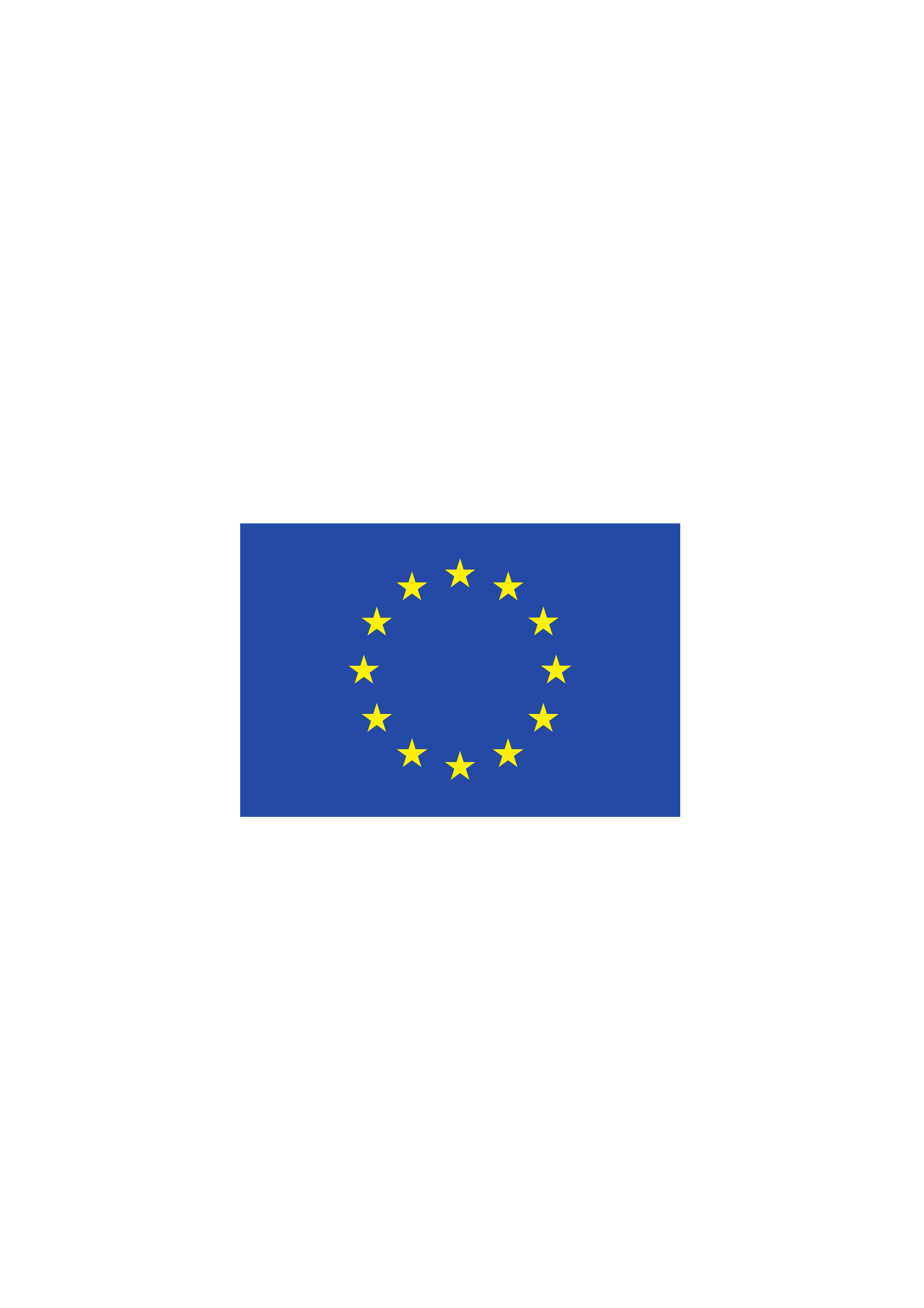}}}
\end{picture}

\clearpage
\setcounter{page}{1}

\section{Introduction}


{\em{Isolation}} is a procedure that allows to single out a unique solution to a given problem within a possibly larger solution space, thus effectively reducing the original problem to a variant where one may assume that if a solution exists, then there is a unique one. The classic Isolation Lemma of Mulmuley, Vazirani and Vazirani~\cite{mulmuley} can be used to achieve this at the cost of allowing randomization. In complexity theory, isolation is used to show that hard problems
are not easier to solve on instances with unique solutions~\cite{usat}. This idea has found numerous applications ranging from structural results in complexity theory (e.g. $\mathsf{NL/poly} \subseteq \oplus
\mathsf{L/poly}$~\cite{wigderson94} or
$\mathsf{NL/poly} = \mathsf{UL/poly}$~\cite{reinhardt00}) to the design of parallel algorithms~\cite{mulmuley,matchings-6,matchings-7,tarnawski17}. 



Since obtaining a general derandomization of the Isolation Lemma is impossible by counting arguments~\cite{iso-lower-bound,isolation-lemma2,iso-lower-bound2}, it is natural to ask whether the isolation step can be derandomized for specific problems with explicit representation.
In this context, there has recently been an exciting progress in isolation for perfect matchings~\cite{matchings-1,matchings-2,matchings-3,matchings-4,matchings-5,matchings-6}, which culminated in an isolation scheme that uses $\Oh(\log^3 n)$ random bits, implying a quasi-$\mathsf{NC}$ algorithm for detecting a perfect matching~\cite{tarnawski17}.

In contrast to this, derandomization of isolation procedures for $\mathsf{NP}$-complete problems is relatively less studied, and not because of a lack of motivation: Many contemporary fixed-parameter algorithms rely on the Isolation Lemma~\cite{LiN20,wg2020,bjorklund10,stacs2020,JansenN19,focs2011,Williams09}. Usually, the isolation procedure is the only subroutine requiring randomness. Many of the algorithms mentioned above apply the Isolation Lemma in combination with a decomposition-based method such as Divide\&Conquer or dynamic programming.  This motivates us to study the following:
\begin{maingoal*}
	How much randomness is required for isolating problems with decomposable structure?
\end{maingoal*}
%
%
More concretely, we focus on graph problems where the underlying graph is \emph{decomposable}, in the sense that it can be decomposed using small separators. Examples of such graphs are planar graphs or graphs of bounded treewidth.
It is well-known that for many $\mathsf{NP}$-complete problems, the nice structure of such graphs can be leveraged to solve these problems faster than in general graphs.
We show that a similar phenomenon occurs when one considers the amount of
randomness needed to isolate a single solution.

\paragraph{The model for isolation schemes.}
Suppose $U$ is a finite set and $\omega \colon U \rightarrow \nat$ is a weight function. For $X \subseteq U$ we write $\omega(X) \coloneqq \sum_{e \in X}
\omega(e)$. For a set family $\mathcal{F} \subseteq 2^U$ we say that $\omega$ \emph{isolates $\mathcal{F}$} if there is exactly one set $S\in \mathcal{F}$ such that $\omega(S)$ is the minimum possible among the weights of the sets in $\mathcal{F}$. 
The classic Isolation Lemma of Mulmuley et al.~\cite{mulmuley} states that a weight function $\omega\colon U \rightarrow \{1,\ldots,2|U|\}$ chosen uniformly at random isolates any family $\mathcal{F}\subseteq 2^U$ with probability at least $\tfrac{1}{2}$. Note that sampling such $\omega$ requires $\Oh(|U|\log|U|)$ random bits.



Most of our isolation schemes work in a very restricted model inspired by the discussion above, which we explain now. Intuitively, the scheme is not aware of the graph or its decomposition, but is only aware of the vertex count of the graph and the relevant width parameter, such as the treewidth or treedepth.

Formally, a {\em{vertex selection problem}} is a function $\Pp$ that maps every graph $G$ to a family $\Pp(G)\subseteq 2^{V(G)}$ consisting of subsets of the vertex set of $G$. Edge selection problems are defined analogously: $\Pp(G)$ consists of subsets of $E(G)$. For example, we could define a vertex selection problem $\MIS(\cdot)$ that maps every graph $G$ to the family $\MIS(G)$ comprising all maximum-size independent sets in $G$, or an edge selection problem $\mathsf{HC}(\cdot)$ that maps every graph $G$ to the family $\mathsf{HC}(G)$ comprising all (edge sets of) Hamiltonian cycles in~$G$. Further, let $\Cc$ be a class of graphs, that is, a set of graphs that is invariant under isomorphism. For instance, $\Cc$ could be the class of planar graphs, or the class of graphs of treewidth at most $k$, for any fixed $k$. Then  our definition of an isolation scheme reads as follows (here, we write $[n]\coloneqq \{1,\ldots,n\}$):

\begin{definition}
    \label{def:isolation-scheme}
    For a graph class $\Cc$, we say that a vertex selection problem $\Pp$ admits an
    {\em{isolation scheme}} on $\Cc$ if for every $n \in \nat$ there exist weight functions
    $\omega_1,\ldots,\omega_\ell\colon [n] \rightarrow \nat$ such that for every $G\in \Cc$ with vertex set $[n]$, $\omega_i$ isolates $\Pp(G)$ for at least half of the indices $i \in [\ell]$.
\end{definition}



Isolation schemes for edge selection problems are defined analogously: the weight functions $\omega_1,\ldots,\omega_\ell$ have domain $[m]$ and should assign weights to all the edges in $m$-edge graphs in $\Cc$, where the edges are assumed to be enumerated with numbers in $[m]$.


The two main parameters of interest for isolation schemes will be the number of \emph{random bits}, which is defined as $\log \ell$, and the \emph{maximum weight}, defined as the maximum value that any of the functions $\omega_i$ may take.
Although Definition~\ref{def:isolation-scheme} only assumes the
\emph{existence} of suitable weight functions, all the isolation schemes
proposed in this paper are extremely simple and can be use as an effective derandomization tool.


\subsection{Our contribution}


In the following discussion we restrict attention to Hamiltonian cycles and maximum-size independent sets for concreteness, that is, to the edge- and vertex-selection problems $\mathsf{HC}(\cdot)$ and $\MIS(\cdot)$ described above. However, our techniques have a wider applicability, which we comment on throughout the presentation. On a very high level, the natural idea that permeates all our arguments is to reduce the randomness using Divide\&Conquer along small separators: If a separator $X$ splits the given graph $G$ in a balanced way, then the same random bits can be reused in each part of $G-X$.


\paragraph{Isolation schemes for Hamiltonian cycles.}
We first consider the problem of detecting a Hamiltonian cycle, since it represents an important class of connectivity problems such as {\sc{Steiner Tree}} or {\sc{$k$-Path}}. For these problems, the Isolation Lemma has been particularly useful in the design of parameterized algorithms~\cite{LiN20,wg2020,bjorklund10,stacs2020,JansenN19,focs2011,Williams09}. Our first results concerns general graphs.

\begin{theorem}
    \label{thm:hc-gen}
	There is an isolation scheme for Hamiltonian cycles in undirected graphs that uses $\Oh(n)$ random bits and assigns weights upper bounded by $2^{\Oh(n)}$. 
\end{theorem}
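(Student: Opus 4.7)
The plan is to circumvent the $\Theta(m \log m) = \Theta(n^2 \log n)$ random bits that the classical Mulmuley--Vazirani--Vazirani Isolation Lemma would require when applied directly to $E(G)$, by designing weights with combinatorial structure. The intuition is that a Hamiltonian cycle is specified by only $\Oh(n \log n)$ bits (its cyclic vertex order), and according to the rank-based framework of Bodlaender et al., the ``essentially different'' partial cycle structures along any vertex ordering live in a space of size only $2^{\Oh(n)}$, so isolation should be achievable with $\Oh(n)$ random bits.

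First I would fix an enumeration $v_1,\ldots,v_n$ of $V(G)$ and, for each Hamiltonian cycle $C$, consider the sequence of matchings that $C$ induces on the cuts between each prefix $\{v_1,\ldots,v_k\}$ and its complement. Using the rank-based approach, I would encode this sequence into a fingerprint $\phi(C) \in \nat^{\Oh(n)}$ that is both \emph{(i)} injective on the family of Hamiltonian cycles, and \emph{(ii)} \emph{edge-additive}, in the sense that $\phi(C) = \sum_{e \in C} \psi(e)$ for an explicit $\psi\colon E(G) \to \nat^{\Oh(n)}$ computable from the endpoints of $e$ and the ordering.

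Given such a fingerprint, I would sample a random functional $r \in \{0,\ldots,2^{cn}-1\}^{\Oh(n)}$ using $\Oh(n)$ bits, and define $\omega_r(e) \coloneqq \sum_i r_i \cdot \psi_i(e)$, so that $\omega_r(C) = \langle r, \phi(C) \rangle$; the individual weights then fit into $2^{\Oh(n)}$. Isolation would follow from a union bound: the nonzero difference vectors $\phi(C_1) - \phi(C_2)$ over pairs of HCs lie in a set of size at most $2^{\Oh(n)}$, and each contributes probability $2^{-cn}$ to the collision event $\langle r, \phi(C_1)\rangle = \langle r, \phi(C_2)\rangle$; picking $c$ large makes the total collision probability at most $1/2$.

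The main obstacle I foresee is constructing the fingerprint $\phi$ so that it is simultaneously injective on Hamiltonian cycles, edge-additive, and only $\Oh(n)$-dimensional. The rank-based reduction naturally gives representative sets of size $2^{\Oh(k)}$ at each prefix of size $k$, but those representatives are selected by Gaussian elimination on a partial-matching matrix, and are not obviously linear in the edge-indicator of $C$. To recover linearity while keeping the total dimension $\Oh(n)$, I would work along a balanced-separator hierarchy and reuse the same $\Oh(n/2^i)$ random bits across all $2^i$ parallel subproblems at depth~$i$, so the amount of randomness sums geometrically to $\Oh(n)$. Verifying that such reuse is safe---because the subproblems act on disjoint vertex sets and interact only through small separators that are handled by the rank-based linearization---is the delicate point that the bulk of the technical argument will have to settle.
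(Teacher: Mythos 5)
Your proposal runs into genuine gaps before it reaches the $\Oh(n)$ bound. The core construction — an injective, edge-additive fingerprint $\phi(C)\in\nat^{\Oh(n)}$ hashed by a single random functional $r$ — has two problems beyond the one you flag. First, the random-bit budget is miscounted: sampling $r\in\{0,\ldots,2^{cn}-1\}^{\Oh(n)}$ costs $\Theta(n^2)$ bits, not $\Oh(n)$; if you shrink the per-coordinate range to fit an $\Oh(n)$-bit budget, the per-pair collision probability deteriorates to $\operatorname{poly}(n)^{-1}$ and the union bound cannot close. Second, the claim that the difference vectors $\phi(C_1)-\phi(C_2)$ number only $2^{\Oh(n)}$ is not only unproved but inconsistent with your requirement~(i): if $\phi$ is injective on a family that may contain up to $n!$ Hamiltonian cycles, its image has size $n!$, and nothing forces the pairwise differences to collapse to fewer than $(n!)^2=2^{\Theta(n\log n)}$ values. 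The rank-based result of Bodlaender et al.\ does not help here either — it bounds the $\mathbb{F}_2$-rank of a compatibility matrix and hence the number of \emph{minimum-weight representatives of each boundary configuration once a weight function is fixed}; it says nothing about the cardinality of a set of global fingerprints, and it certainly does not supply a linear encoding of those representatives (the very difficulty you name as the main obstacle and then leave open).

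The paper sidesteps all of this by never attempting a one-shot linear fingerprint. It fixes an arbitrary numbering $[n]$ of the vertices, defines a two-dimensional family of intervals $G\langle s,t,s',t'\rangle$, and runs a level-by-level recursion: at level $i$ only intervals with $t-s,t'-s'\le 2^i$ are considered, a fresh prime $p_i$ is drawn from a range of magnitude $2^{\Theta(\log n+2^i)}$ (so per-level random bits grow geometrically and sum to $\Oh(n)$), and $\omega_i$ concatenates the residues of $2^{\id(e)}$ modulo $p_0,\ldots,p_i$. The rank-based theorem is invoked \emph{per level}: once the inductive hypothesis gives a unique $\omega_{i-1}$-minimizer per configuration on each size-$2^{i-1}$ interval, Theorem~\ref{thm:rankbased} caps the total number of such minimizers at $2^{\Oh(2^{i-1})}$, so an FKS prime of matching magnitude separates them with high probability. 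Thus the rank bound is used only to \emph{count} candidate colliders at each scale, never to linearize them. Your closing sketch of ``reuse $\Oh(n/2^i)$ bits at depth $i$'' is morally the same reuse idea, but casting it as a balanced-separator hierarchy is off-target for general graphs (which have no $o(n)$ balanced separators): the ``boundary'' of a level-$i$ interval in the paper is its entire $\Theta(2^i)$-vertex index set, and that is precisely why $\Theta(2^i)$ random bits must be spent at that level.
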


Observe that in an $n$-vertex graph there can be as many as $n!$ different
Hamiltonian cycles. Hence, the application of the general-usage isolation scheme
of Chari et al.~\cite{isolation-lemma2} would give an isolation scheme for
Hamiltonian cycles in general graphs that uses $\Oh(\log (n!))=\Oh(n\log n)$ random bits. Note that as proved in~\cite{isolation-lemma2}, isolating a family $\Ff$ over a universe of size $n$ requires $\Omega(\log |\Ff|+\log n)$ random bits in general, hence the shaving of the $\log n$ factor reported in Theorem~\ref{thm:hc-gen} required a problem-specific insight into the family of Hamiltonian cycles in a graph. This insight is provided by the {\em{rank-based approach}}, a technique introduced in the context of detecting Hamiltonian cycles in graphs of bounded treewidth~\cite{rank-based}. 
The fact that this works is unexpected because
all known methods for derandomizing Hamiltonian cycle require at least
exponential space (see~\cite{rank-based} for overview). 

Let us note that isolation of Hamiltonian cycles was used by Bj{\"{o}}rklund~\cite{bjorklund10} in his $\Oh(1.657^n)$-time algorithm for detecting a Hamiltonian cycle in an undirected graph. This algorithm is randomized due to the usage of the Isolation Lemma, and derandomizing it, even within time complexity $\Oh((2-\eps)^n)$ for any $\eps>0$, is a major open problem. While the constant hidden in the $\Oh(\cdot)$ notation used in Theorem~\ref{thm:hc-gen} is too large to allow exploring the whole space of random bits within time $\Oh((2-\eps)^n)$, in principle we show that the amount of randomness needed is of the same magnitude as would be required for an efficient derandomization of the algorithm of Bj{\"{o}}rklund.


Next, we show that 
in the setting of graphs of bounded treewidth the amount of randomness can be reduced dramatically, to a polylogarithm in $n$.

\begin{theorem}
    \label{thm:hc-tree}
    For every $t\in \nat$, there is an isolation scheme for Hamiltonian cycles in graphs of treewidth at most $t$ that uses $\Oh(t\log n + \log^2(n))$ random bits and assigns weights upper bounded by $2^{\Oh(t\log n + \log^2 n)}$.
\end{theorem}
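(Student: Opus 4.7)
The plan is to combine the rank-based approach of Bodlaender et al.~\cite{rank-based} for Hamiltonian cycles on graphs of bounded treewidth with a divide-and-conquer strategy that reuses random bits across the recursion. Every graph of treewidth $\le t$ on $n$ vertices admits a balanced tree decomposition of depth $D = \Oh(\log n)$ whose bags have size $\Oh(t)$; along such a tree, each separator $X$ of size $\Oh(t)$ cuts any Hamiltonian cycle into path segments, and the only information about the cycle that needs to be remembered at $X$ is an \emph{interface}: a subset $S \subseteq X$ together with a perfect matching on $S$ encoding how the segments across $X$ are paired. The rank-based theorem guarantees that, on either side of $X$, only a representative family of $2^{\Oh(t)}$ interfaces (corresponding to essentially different \PCC behaviours) is ever needed to decide whether partial solutions on the two sides combine into a full Hamiltonian cycle. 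Hence isolating the interface used by the hypothetical unique solution at every separator propagates isolation upwards through the tree.

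Building on this, I would design the scheme as a layered aggregation indexed by the $\Oh(\log n)$ levels of the balanced decomposition. For each layer $i$, sample an independent weight function $\omega^{(i)}\colon [m] \to \nat$ using $\Oh(t + \log n)$ random bits; this should suffice, by the classical Mulmuley--Vazirani--Vazirani Isolation Lemma applied to the interface family, for $\omega^{(i)}$ to isolate the interface at \emph{every} separator at level $i$ with probability at least $1 - 1/(4\log n)$, using that separators at the same level are disjoint so the same weights can be shared across them without interference. Defining the overall weight as $\omega(e) = \sum_{i=0}^{D} 2^{iB} \, \omega^{(i)}(e)$ for $B = \Oh(t + \log n)$ strictly larger than the per-layer weight range, the combined weight lexicographically refines the comparisons layer by layer, and a union bound over the $\Oh(\log n)$ layers yields isolation of Hamiltonian cycles with probability at least $1/2$. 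The total randomness sums to $\Oh(\log n) \cdot \Oh(t + \log n) = \Oh(t\log n + \log^2 n)$, and the maximum weight is $2^{\Oh(t\log n + \log^2 n)}$, matching the claim.

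The principal obstacle is the universality requirement of Definition~\ref{def:isolation-scheme}: the weight functions must be fixed without knowledge of $G$, whereas the balanced separator tree of $G$ is $G$-dependent. I would overcome this by making the layered sampling oblivious, replacing the unknown separator structure at each level by a uniformly random partition of $[m]$ of the appropriate granularity, and verifying that the rank-based representative bound of $2^{\Oh(t)}$ is insensitive to which $\Oh(t)$-sized set plays the role of the separator. The core technical work is showing that for every fixed $G$ in the class, with sufficient probability the random layered partition is compatible with \emph{some} balanced separator tree of $G$ so that the rank-based combination argument simultaneously succeeds on all $\Oh(\log n)$ layers --- enabling the per-layer bound of $\Oh(t + \log n)$ random bits rather than the naive $\Oh(t \log t + \log n)$ that would arise from enumerating interfaces directly.
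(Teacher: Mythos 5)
Your high-level architecture is aligned with the paper's: use a tree decomposition as a divide-and-conquer skeleton, employ the rank-based approach to cap the number of relevant boundary behaviours at $2^{\Oh(t)}$, layer weight functions lexicographically so each layer refines the previous one, and charge $\Oh(t+\log n)$ bits per layer times $\Oh(\log n)$ layers. However, there are two genuine gaps that would prevent your scheme from working as stated.

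First, the per-layer sampling does not actually achieve $\Oh(t+\log n)$ random bits via the Mulmuley--Vazirani--Vazirani Isolation Lemma. MVV requires a uniformly random weight function on the whole universe $[m]$, which costs $\Omega(m\log m)$ random bits; invoking it ``on the interface family'' does not circumvent this, because the weights still live on edges of $G$. The paper instead uses FKS hashing modulo a random prime: because the rank-based approach bounds the number of minimum-weight compliant partial solutions surviving the previous layer by $2^{\Oh(t)}\cdot\poly(n)$, one random prime in the range $\{1,\ldots,2^{C(t+\log n)}\}$ separates all of them with high probability, and sampling such a prime costs exactly $\Oh(t+\log n)$ bits. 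This is the technical engine behind the claimed bit count, and it is not interchangeable with a black-box appeal to MVV.

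Second, your resolution of the obliviousness requirement --- replacing the unknown separator structure by a uniformly random partition of $[m]$ and arguing compatibility with \emph{some} balanced separator tree --- is both unnecessary and, as sketched, unlikely to close the gap. The paper's scheme is oblivious by construction: $\omega_i(e) = Mn\cdot\omega_{i-1}(e) + (2^{\id(e)} \bmod p_i)$ depends only on the fixed edge identifier $\id(e)$ and the random primes $p_1,\ldots,p_{3\log n}$; no knowledge of $G$, its decomposition, or any partition is encoded into $\omega$. The tree decomposition appears \emph{only in the analysis}, where the induction is on the size of a segment of the decomposition tree (using the five-way segment-splitting Lemma~\ref{lemma:5SubTrees}), not on levels of a balanced tree; this avoids needing the weight functions to ``see'' a matching separator hierarchy. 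Your random-partition detour would have to show that, simultaneously for all $\Oh(\log n)$ layers, the sampled partition aligns with some balanced separator tree --- an event whose probability you do not control and whose failure would compromise the rank-based bound at every layer depending on it.

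In short, you have the right outline but are missing the FKS-based mechanism that makes $\Oh(t+\log n)$ bits per refinement step possible, and you have introduced a spurious obliviousness obstacle that the paper's construction sidesteps entirely by keeping the decomposition out of the scheme and inside the analysis.
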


The proof of Theorem~\ref{thm:hc-tree} fully exploits the idea of using small separators to save on randomness. It also uses the rank-based approach to shave off a $\log t$ factor in the number of random bits.



Finally, we use the separator properties of $H$-minor free graphs to prove the following.

\begin{theorem}
    \label{thm:planar}
    For every fixed $H$, there is  an isolation scheme for Hamiltonian cycles in $H$-minor-free graphs that uses $\Oh(\sqrt{n})$ random bits and assigns weights upper bounded by $2^{\Oh(\sqrt{n})}$.
\end{theorem}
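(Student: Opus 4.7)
The plan is to combine the approach of Theorem~\ref{thm:hc-tree} with the balanced separator theorem of Alon, Seymour, and Thomas: for every fixed $H$, an $H$-minor-free graph on $n$ vertices admits a balanced separator of size $c_H \sqrt{n}$, for a constant $c_H$ depending only on $H$. Applying this theorem recursively yields a hierarchical decomposition of $G$ in which each piece of size $n'$ is split by a separator of size $\Oh(\sqrt{n'})$. Note that plugging $t = \Oh(\sqrt{n})$ directly into Theorem~\ref{thm:hc-tree} is not enough: it would cost $\Oh(\sqrt{n}\log n)$ bits, whereas we need to shave off the $\log n$ factor by exploiting the balanced decomposition more carefully.

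First I would set up the isolation scheme along this recursive decomposition. At the top level, pick a balanced separator $X \subseteq V(G)$ of size $\Oh(\sqrt{n})$ that splits $G - X$ into subgraphs $G_1, G_2$ each on at most $\alpha n$ vertices, for some fixed $\alpha < 1$. Using $\Oh(\sqrt{n})$ fresh random bits, assign weights to the edges incident to $X$ that isolate the way a Hamiltonian cycle interacts with $X$. The key ingredient here is the rank-based approach of~\cite{rank-based}: the number of essentially different interaction patterns on $X$ that must be distinguished is $2^{\Oh(|X|)}$ rather than $|X|!$ or $2^{\Oh(|X| \log |X|)}$, so $\Oh(|X|) = \Oh(\sqrt{n})$ random bits per separator suffice. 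Then recursively assign weights to the edges inside $G_1$ and $G_2$ using the isolation scheme on $H$-minor-free graphs of size at most $\alpha n$. Crucially, the \emph{same} random bits can be reused across the two branches, since, once the pattern on $X$ is fixed, the two subproblems are independent.

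Let $T(n)$ and $W(n)$ denote the number of random bits and the maximum weight of the resulting scheme on $n$-vertex graphs. The construction above gives the recurrences
\[
T(n) \le T(\alpha n) + \Oh(\sqrt{n}), \qquad W(n) \le 2^{\Oh(\sqrt{n})} \cdot W(\alpha n),
\]
which solve to $T(n) = \Oh(\sqrt{n})$ and $W(n) = 2^{\Oh(\sqrt{n})}$, since $\sum_{i \ge 0} \sqrt{\alpha^{i} n}$ is a geometric series in $\sqrt{n}$ with ratio $\sqrt{\alpha} < 1$, hence bounded by $\Oh(\sqrt{n})$.

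The main technical obstacle is making the composition along the separator hierarchy rigorous: at each internal separator, the rank-based representative families must compose correctly with those at deeper levels, and one must verify that reusing random bits across sibling subproblems preserves the isolation guarantee. This mirrors the composition along a tree decomposition used in Theorem~\ref{thm:hc-tree}, but here the decomposition is driven by the balanced separator theorem rather than by a global width parameter, which is exactly what permits the per-level cost to scale with the current subgraph size and gives the geometric telescoping to $\Oh(\sqrt{n})$.
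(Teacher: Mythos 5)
Your proposal follows the same high-level route as the paper: recursively extract balanced separators, use the rank-based approach to cut the number of relevant boundary configurations down to $2^{\Oh(|X|)}$, and let the geometric decay of separator sizes drive the total randomness to $\Oh(\sqrt{n})$. The paper packages exactly this in a decomposition theorem (Theorem~\ref{thm:separable}) and a two-phase analysis. But your sketch glosses over a step that, as written, does \emph{not} give $\Oh(\sqrt{n})$: controlling the \emph{inherited boundary}. When you recurse into a piece $G_1$, the interface of $G_1$ with the rest of the graph is not just the fresh separator $X_1$; it also contains whatever portion of the ancestor separator $X$ borders $G_1$. If you only balance the vertex count, a subproblem at depth $i$ can still have an interface of size $\Theta(\sqrt{n})$ even though its vertex count is $\alpha^i n$, so the rank-based bound at that level is still $2^{\Theta(\sqrt{n})}$ and you pay $\Theta(\sqrt{n})$ random bits at each of the $\Theta(\log n)$ levels, giving $\Theta(\sqrt{n}\log n)$ rather than $\Oh(\sqrt{n})$. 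To get the geometric telescoping you invoke, you must choose each separator to simultaneously bisect the current vertex set \emph{and} the inherited boundary; the paper does this by taking the union of two balanced separators (one for $A$, one for $S=N_G(A)$) and then regrouping the resulting components into at most seven parts so both quantities halve --- this is precisely the content of Claim~\ref{cl:partition} and Condition~\ref{c:size}, and it is the part of the argument that actually buys you the saving over $t=\Oh(\sqrt{n})$ plugged into Theorem~\ref{thm:hc-tree}.

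Two smaller points worth fixing. First, your phrasing ``assign weights to the edges incident to $X$'' makes the weight assignment depend on the decomposition, which is not allowed under Definition~\ref{def:isolation-scheme}: the scheme must be oblivious to $G$. The paper's weight functions are built purely from edge identifiers hashed modulo two families of random primes (of geometrically increasing, resp.\ decreasing, ranges); the decomposition appears only in the analysis, via an induction over the generalized elimination forest. Second, isolating the patterns across a single separator of size $s$ is not free: you need a base-level isolation primitive. The paper uses a strengthening of the general-graph scheme of Theorem~\ref{thm:hc-gen} (Theorem~\ref{thm:H_x}), applied simultaneously to all node-graphs $H_x$, which itself consumes the $\Oh(\sqrt{n})$ bits coming from the geometrically-growing prime ranges $M_i=2^{C(\log n^\alpha+2^i)}$.
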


Recently, in~\cite{wg2020} the authors presented a randomized algorithm for
detecting a Hamiltonian cycle in a graph of treedepth at most $d$ that works in time $2^{\Oh(d)} \cdot  (W+n)^{\Oh(1)}$ time
and uses polynomial space; here, $W$ is the maximum weight assigned by isolation
scheme\footnote{They did not consider the weighted case, but the statement is implied by a standard extension, see Section~\ref{sec:algo} for details.}. The only source of randomness in the algorithm of~\cite{wg2020} is the
Isolation Lemma. Since $H$-minor free graphs have treedepth
$\Oh(\sqrt{n})$, we can use the isolation scheme of Theorem~\ref{thm:planar} to derandomize this algorithm, thus obtaining the following result.

\begin{theorem}
    \label{thm:alg}
    For every fixed $H$, there is a deterministic algorithm for detecting a
    Hamiltonian cycle in an $H$-minor-free graph that runs in time
    $2^{\Oh(\sqrt{n})}$ and uses polynomial space.
\end{theorem}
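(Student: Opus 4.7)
The plan is to combine the deterministic isolation scheme of Theorem~\ref{thm:planar} with the weighted variant of the parameterized algorithm of~\cite{wg2020}, by simply running the latter on every weight function produced by the scheme.

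First I would line up three ingredients. (i) By the classical balanced separator theorem for $H$-minor-free graphs and a recursive separator decomposition, every $n$-vertex $H$-minor-free graph $G$ has treedepth $d=\Oh(\sqrt{n})$, and a corresponding treedepth decomposition can be computed deterministically in polynomial time. (ii) As mentioned in the paragraph preceding the theorem, the algorithm of~\cite{wg2020} extends to weighted Hamiltonicity on graphs of treedepth $d$: given a weight function $\omega$ on the edges bounded by $W$, it runs in time $2^{\Oh(d)}\cdot(W+n)^{\Oh(1)}$ and polynomial space, with the Isolation Lemma being its only source of randomness. Consequently, with $\omega$ supplied as a fixed input, it becomes a deterministic procedure with one-sided error: it returns YES whenever $\omega$ isolates $\mathsf{HC}(G)$ and $\mathsf{HC}(G)\neq\emptyset$, and never returns YES when $\mathsf{HC}(G)=\emptyset$. (iii) Theorem~\ref{thm:planar} produces $\ell=2^{\Oh(\sqrt{n})}$ explicit weight functions $\omega_1,\ldots,\omega_\ell$ with weights at most $W=2^{\Oh(\sqrt{n})}$ such that, for every $H$-minor-free $G$ on vertex set $[n]$, at least half of them isolate $\mathsf{HC}(G)$.

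With these in hand I would simply compose them. Compute the treedepth decomposition of depth $d=\Oh(\sqrt{n})$ and enumerate the weight functions $\omega_1,\ldots,\omega_\ell$; for each $i\in[\ell]$, invoke the weighted algorithm of~\cite{wg2020} on $(G,\omega_i)$, and return YES iff at least one invocation returns YES. Correctness follows from (ii) and (iii): if $G$ is Hamiltonian, then at least one $\omega_i$ isolates $\mathsf{HC}(G)$ and the corresponding invocation returns YES, while if $G$ is non-Hamiltonian no invocation returns YES. The total running time is bounded by
\[
\ell\cdot 2^{\Oh(d)}\cdot(W+n)^{\Oh(1)} \;=\; 2^{\Oh(\sqrt{n})}\cdot 2^{\Oh(\sqrt{n})}\cdot\bigl(2^{\Oh(\sqrt{n})}+n\bigr)^{\Oh(1)} \;=\; 2^{\Oh(\sqrt{n})},
\]
and the space remains polynomial because each inner call is polynomial-space and only the current index $i$ (a string of $\Oh(\sqrt{n})$ bits) needs to persist across iterations.

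I expect the main obstacle to be the formal verification of ingredient~(ii). Two points require care: that the weighted extension of~\cite{wg2020} really has the one-sided-error shape claimed above, so that a non-isolating $\omega_i$ cannot create a false YES; and that its $(W+n)^{\Oh(1)}$ runtime dependence is genuinely polynomial in $W$ without inflating the space usage, so that plugging in $W=2^{\Oh(\sqrt{n})}$ keeps the exponent at $\Oh(\sqrt{n})$. Both points should follow from the standard algebraic template underlying~\cite{wg2020}, in which one evaluates a generating polynomial $\sum_{C\in\mathsf{HC}(G)} x^{\omega(C)}$ modulo $2$ by dynamic programming along the treedepth decomposition, but they must be formally established in the technical section dedicated to this corollary.
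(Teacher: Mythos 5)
Your proposal is correct and follows essentially the same approach as the paper: compute an elimination forest of height $\Oh(\sqrt{n})$, enumerate the explicit weight functions of Theorem~\ref{thm:planar}, and for each one run the weighted, polynomial-space extension of the treedepth algorithm of~\cite{wg2020}. The technical point you flag in ingredient~(ii)---keeping the space polynomial in $n$ and $\log W$ despite the $W$-sized weight range---is precisely what the paper resolves via Lemma~\ref{lem:wg2020}, which evaluates the relevant coefficient of the generating polynomial one target weight at a time via a discrete Fourier transform instead of storing the polynomial explicitly.
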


To the best of our knowledge, this is the first application of a
randomness-efficient isolation scheme for a full derandomization of an exponential-time algorithm without a significant loss on complexity guarantees. Further, we are not aware of any previous algorithms that would simultaneously achieve determinism, running time $2^{\Oh(\sqrt{n})}$, and polynomial space complexity, even in the setting of planar graphs\footnote{
Deterministic $2^{\Oh(\sqrt{n})}$-time algorithms were previously known, but all of these use exponential space~\cite{rank-based,matroid-derandomization}.}. Finally, let us note that the algorithm of Theorem~\ref{thm:alg} does not rely on any topological properties of $H$-minor-free graphs: the existence of balanced separators of size $\Oh(\sqrt{n})$ is the only property we use.

\paragraph{MSO-definable problems on graphs of bounded treewidth.}
We observe that the approach used in the proof of Theorem~\ref{thm:hc-tree} relies only on finite-state properties of the {\sc{Hamiltonian Cycle}} problem on graphs of bounded treewidth. The range of problems enjoying such properties is much wider and encompasses all problems definable in $\CMSOtwo$: the Monadic Second-Order logic with modular counting predicates. Consequently, we can lift the proof of Theorem~\ref{thm:hc-tree} to a generic reasoning that yields an analogous result for every $\CMSOtwo$-definable problem. This proves the following (see Section~\ref{sec:mso} for definitions).

\begin{theorem}\label{thm:mso}
	Let $\Pp$ be a $\CMSOtwo$-definable edge selection problem. There exists a computable function $f$ such that for every $k\in \N$, $\Pp$ admits an isolation scheme on graphs of treewidth at most $k$ that uses $R\coloneqq f(k)\cdot \log n+\Oh(\log^2 n)$ random bits and assigns weights upper bounded by $2^R$.
\end{theorem}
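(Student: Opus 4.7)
The plan is to mirror the divide-and-conquer argument of Theorem~\ref{thm:hc-tree}, substituting the problem-specific rank-based representatives for Hamiltonian cycles by a logic-based, bounded-index equivalence that every $\CMSOtwo$-definable property enjoys on graphs of bounded treewidth. Fix a $\CMSOtwo$-formula $\varphi$ defining $\Pp$. For the analysis I would fix, for each input graph $G$ of treewidth at most $k$, a rooted binary tree decomposition $(T,\{X_u\}_{u\in V(T)})$ of width $\Oh(k)$ and depth $\Oh(\log n)$, obtainable via Bodlaender's balancing construction; for a node $u$ let $V_u$ be the vertices in the subtree rooted at $u$ and $G_u = G[V_u]$. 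The crucial ingredient is the Myhill--Nerode theorem for $\CMSOtwo$ implicit in Courcelle's automaton construction: there is a computable $g(k,\varphi)$ such that at every node $u$ the edge subsets of $G_u$ partition into at most $g(k,\varphi)$ equivalence classes, called \emph{types}, with the property that two subsets of the same type at $u$ behave indistinguishably with respect to $\varphi$ when combined with any edge subset outside $G_u$. Types compose: the type at $u$ is a function of the types at the two children of $u$ together with $S\cap E(X_u)$.

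With this in hand, the isolation scheme is a direct generalization of the one behind Theorem~\ref{thm:hc-tree}. For each level $\ell = 1,\ldots,\Oh(\log n)$ of $T$ I would sample an independent Isolation Lemma weight function $\omega^{(\ell)}\colon [m]\to\{1,\ldots,W_\ell\}$ with $W_\ell = \Oh(g(k,\varphi)^2 \cdot m)$, consuming $\Oh(\log g(k,\varphi) + \log n)$ random bits per level. Summing over the $\Oh(\log n)$ levels yields $\Oh(f(k)\log n + \log^2 n)$ random bits in total, where $f(k)\coloneqq\Oh(\log g(k,\varphi))$ absorbs the $\varphi$-dependence. Combine the levels into a single weight $\omega(e) = \sum_\ell M^{\ell-1}\,\omega^{(\ell)}(e)$ with $M$ chosen large enough that lower levels act as strict tie-breakers for higher ones; the weights stay bounded by $2^R$ as claimed.

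I would then prove by induction on the nodes of $T$, processed from the root downward, the invariant that with probability at least $1/2$, for every node $u$ and every type $\tau$ at $u$ there is at most one $\omega$-minimum edge subset of $E(G_u)$ realising $\tau$. At the root this already yields a unique $\omega$-minimum element of $\Pp(G)$, since $\Pp(G)$ is the disjoint union, over root types that encode a satisfying assignment of $\varphi$, of the partial solutions of that type. The inductive step, conditioned on the randomness used above level $\ell$, reduces to isolating among at most $g(k,\varphi)^2$ child-type combinations per target type, which is precisely a setting of the classical Isolation Lemma with the allotted level-$\ell$ bits.

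The only genuinely delicate step is the Myhill--Nerode setup: one must state the correct \emph{edge}-selection version of Courcelle's theorem so that types are well defined with a uniform bound $g(k,\varphi)$ and have a bounded-arity composition rule at internal decomposition nodes. Once that finite-index combinatorics is laid out, the isolation argument is essentially verbatim the one from Theorem~\ref{thm:hc-tree}, with rank-based representative sets replaced by type equivalence classes, and the counting of random bits goes through unchanged.
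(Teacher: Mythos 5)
Your high-level plan — replace the rank-based representatives by Myhill--Nerode types for $\CMSOtwo$ and then run a divide-and-conquer isolation over a balanced decomposition — is the right idea, and indeed the paper's proof sets up a compositional configuration scheme (the analogue of your ``types'') and then performs an induction over segments of a tree decomposition. Using a depth-$\Oh(\log n)$, width-$\Oh(k)$ balanced tree decomposition instead of the paper's segment surgery (Lemma~\ref{lemma:5SubTrees}) is a legitimate structural alternative.

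However, there is a genuine gap in how you sample the per-level weight functions. You propose to take, for each of the $\Oh(\log n)$ levels, ``an independent Isolation Lemma weight function $\omega^{(\ell)}\colon[m]\to\{1,\ldots,W_\ell\}$ with $W_\ell=\Oh(g^2m)$, consuming $\Oh(\log g+\log n)$ random bits per level,'' and in the inductive step invoke ``the classical Isolation Lemma.'' But a uniformly random function $[m]\to[W_\ell]$ costs $m\log W_\ell$ random bits, not $\log W_\ell$: the classical Isolation Lemma needs independent weights per universe element. If each level really consumed $\Theta(m\log n)$ bits you would use $\Theta(m\log^2 n)$ in total, far above the claimed bound. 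Moreover, at each level there are up to $\Theta(n)$ nodes whose families must be isolated \emph{simultaneously} by the same weight function; the bare Isolation Lemma gives constant success probability for one family, so even after fixing weights you would need a mechanism to handle this simultaneity. The paper's answer to both issues is the FKS hashing lemma (Lemma~\ref{FKS}): the $\ell$-th weight function is $e\mapsto 2^{\id(e)}\bmod p_\ell$ for a random prime $p_\ell\in[M]$ with $M=\Gamma^{\Oh(1)}\cdot n^{\Oh(1)}$, which costs only $\Oh(\log\Gamma+\log n)$ bits and makes the weights of \emph{all} surviving partial solutions across \emph{all} segments of the current scale pairwise distinct with high probability. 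Your proposal never specifies such a low-randomness hashing mechanism, so the claimed bit count does not follow from what you wrote; this is precisely the step where the actual technical work of the proof lives.
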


\paragraph{Lower bounds.} We show that a significant improvement of the parameters in the isolation schemes presented above is unlikely. First,
a counting argument shows that the $\log n$ factor is necessary.

\begin{theorem}\label{thm:unlb}
	There does not exist an isolation scheme for Hamiltonian cycles on graphs of treewidth at most~$4$ that uses $o(\log n)$ random bits and polynomially bounded weights.
\end{theorem}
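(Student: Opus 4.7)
The plan is to proceed via a counting argument: exhibit a family of $n^{\Omega(1)}$ treewidth-$4$ graphs on $[n]$ that any isolation scheme with $n^{o(1)}$ weight functions and polynomial weights cannot collectively handle.

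First, I would construct a family of treewidth-$4$ graphs in which isolation of each graph imposes a specific linear constraint on the scheme's weight functions. A basic construction: for each $4$-tuple of distinct vertices $\{a,b,c,d\} \subseteq [n]$, form $G_{abcd}$ by taking a Hamiltonian path on $[n]$ that visits $a,b,c,d$ consecutively, augmenting it with the remaining edges of a $K_4$ on $\{a,b,c,d\}$, and closing the path into a cycle. Such a $G_{abcd}$ has treewidth at most $4$ and exactly two Hamiltonian cycles whose symmetric difference is a $4$-cycle of $K_4$ edges; isolation under $\omega$ amounts to $\omega(ab)+\omega(cd) \neq \omega(ac)+\omega(bd)$. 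By chaining several disjoint such gadgets along a Hamiltonian path, or using slightly richer gadgets (e.g.\ ``theta''-shaped subgraphs with three internally vertex-disjoint paths, yielding longer difference cycles), one obtains many more treewidth-$4$ graphs with structurally diverse isolation requirements.

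Next, I would combine this construction with a pigeonhole bound on a single weight function's ``reach''. For any fixed $\omega : \binom{[n]}{2} \to \{1,\ldots,n^{\Oh(1)}\}$, sums of a constant number of polynomial-weight values take only $n^{\Oh(1)}$ distinct values, while the number of relevant vertex tuples is $\Theta(n^4)$ (or a higher power for longer difference cycles). A Cauchy--Schwarz additive-energy estimate then gives at least $n^{\Omega(1)}$ gadget placements at which $\omega$ fails the isolation inequality. Taking a union bound across the scheme's $\ell = n^{o(1)}$ weight functions, the total number of gadget placements isolated by more than half of the functions is $\ell \cdot n^{\Oh(1)} = n^{o(1)}$, which is strictly smaller than the $n^{\Omega(1)}$ placements in the constructed family. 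By pigeonhole, some graph in the family is not isolated by at least half the $\omega_i$'s, contradicting correctness of the scheme.

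The main obstacle is choosing gadgets robust enough to defeat ``algebraic'' weight functions. For instance, the naive $4$-edge gadget is not sufficient on its own because $\omega(ij)=i\cdot j$ satisfies $\omega(ab)+\omega(cd)=\omega(ac)+\omega(bd)$ if and only if $(a-d)(b-c)=0$, which never occurs for distinct $a,b,c,d$; thus a single algebraic $\omega$ with polynomial weights already isolates all basic $K_4$-gadget graphs. Overcoming this requires replacing $K_4$ by gadgets whose difference cycle has length $\geq 6$ (so the failure equation becomes a nontrivial polynomial identity in $\geq 6$ variables that must admit many solutions by a dimension count), or combining several gadgets within one graph so that the scheme cannot simultaneously avoid every collision pattern available for polynomially-bounded weights.
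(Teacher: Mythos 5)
Your own last paragraph already puts its finger on the fatal issue, but neither of the proposed repairs closes it. Any fixed gadget family whose two Hamiltonian cycles differ by a cycle of bounded length $2h$ is defeated by a \emph{single} polynomially bounded weight function: take $\omega$ to be a $B_h$-sequence (Sidon set of order $h$) on the $m=\Theta(n^2)$ potential edges, which exists with values $n^{O(h)}$. By the defining property of a $B_h$ set, no identity $\omega(e_1)+\omega(e_3)+\cdots+\omega(e_{2h-1})=\omega(e_2)+\omega(e_4)+\cdots+\omega(e_{2h})$ with distinct underlying edge multisets can hold, so every gadget placement is isolated by this one $\omega$. Hence the ``dimension count / many solutions'' heuristic simply does not apply at constant cycle length, and chaining disjoint gadgets does not help either, since isolation of the chained graph reduces to the conjunction of the same local inequalities. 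The additive-energy step has a separate flaw: Cauchy--Schwarz yields many colliding \emph{pairs of sums across different tuples}, not that the two specific sums \emph{within} a single gadget coincide.

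The paper avoids all of this by not fixing a gadget family in advance. It applies pigeonhole to the map $X\mapsto(\omega_1(X),\ldots,\omega_\ell(X))$ over \emph{all} subsets of the edge universe (Lemma~\ref{pigeonhole}): since $\binom{m}{\lfloor\beta m\rfloor}>(mW)^\ell$ when $\ell$ is small, one gets two equal-size edge sets $A,B$ with $A\cup B=[m]$ and $\omega_i(A)=\omega_i(B)$ for every $i$. Crucially $|A\triangle B|$ may be $\Theta(m)$, which is exactly why a $B_h$-style obstruction is impossible here (it would require weights of magnitude $m^{\Omega(m)}$). Only \emph{after} $A$ and $B$ are found is a graph built --- a subdivided prism of pathwidth at most $4$ (Lemma~\ref{hamiltonian-cycle-lower-bound}) --- whose two Hamiltonian cycles have edge sets exactly $A$ and $B$. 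This inversion of the order of quantifiers (weights first, graph second) is the missing ingredient; once you allow the difference cycle to have superconstant length you are essentially forced back to the paper's construction.
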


Using similar constructions we also provide analogous $\Omega(\log n)$ lower bounds for isolating other families of combinatorial objects related to $\mathsf{NP}$-hard problems, such as maximum independent sets, minimum Steiner trees, and minimum maximal matchings. These lower bounds hold even in graphs of bounded {\em{treedepth}}, which is a more restrictive setting than bounded treewidth.

We also show using existing reductions that a significant improvement over the scheme of Theorem~\ref{thm:hc-gen} would imply a surprising partial derandomization of isolation schemes for {\sc{SAT}}.

\begin{theorem}\label{thm:lbcond}
	Suppose there is an isolation scheme for Hamiltonian cycles in undirected graphs that uses $o(n)$
	random bits and polynomially bounded weights.
	Then there is a randomized polynomial-time reduction from {\sc{SAT}} to {\sc{Unique SAT}} that uses $o(n)$ random bits, where $n$ is the number of variables.
\end{theorem}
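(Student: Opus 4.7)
The plan is to convert the hypothesised isolation scheme for Hamiltonian cycles into a randomness-efficient reduction from {\sc{SAT}} to {\sc{Unique SAT}}, in the spirit of the Valiant--Vazirani argument. Given a SAT instance $\varphi$ with $n$ variables (and a polynomial number of clauses, which we may first normalise to 3-CNF), I would apply a polynomial-time \emph{parsimonious} reduction to Hamiltonian cycle detection with a \emph{linear} vertex blow-up, producing an undirected graph $G$ on $N = \Oh(n)$ vertices such that the satisfying assignments of $\varphi$ are in canonical one-to-one correspondence with the Hamiltonian cycles of $G$. Ensuring parsimoniousness together with $N = \Oh(n)$ simultaneously is the one place where we need a tight size analysis rather than a generic polynomial-time construction.

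Given $G$, I would then invoke the hypothesised isolation scheme to sample a weight function $\omega\colon E(G) \to \N$. By assumption this uses $o(N) = o(n)$ random bits and assigns weights bounded by $W_{\max} = \poly(n)$. With probability at least $1/2$, $\omega$ isolates $\mathsf{HC}(G)$, meaning that $G$ has a unique Hamiltonian cycle of minimum $\omega$-weight, whose weight we denote $W^{\star}$.

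To translate back into SAT, for each integer $W \in \{0, 1, \ldots, N \cdot W_{\max}\}$ I would construct in polynomial time a CNF formula $\psi_W(x, y)$ over the variables $x$ of $\varphi$ together with polynomially many auxiliary variables $y$ (used for Tseitin-style arithmetic bookkeeping), asserting that (i) $\varphi(x)$ holds and (ii) the total $\omega$-weight of the Hamiltonian cycle of $G$ corresponding to $x$ under the parsimonious reduction equals $W$. Parsimoniousness guarantees that the satisfying assignments of $\psi_W$ are in bijection with the Hamiltonian cycles of $G$ of $\omega$-weight exactly $W$. Finally, I would use $\Oh(\log n)$ additional random bits to guess $W$ uniformly at random from the above range and output $\psi_W$.

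For correctness: if $\varphi$ is unsatisfiable then every $\psi_W$ is unsatisfiable; if $\varphi$ is satisfiable then with probability at least $1/2$ the weight function isolates $\mathsf{HC}(G)$, and conditional on that event the guess $W = W^{\star}$ succeeds with probability $1/\poly(n)$, in which case $\psi_W$ has a unique satisfying assignment. The overall success probability of $1/\poly(n)$ matches the standard guarantee demanded of a reduction to {\sc{Unique SAT}}, and the total number of random bits used is $o(n) + \Oh(\log n) = o(n)$. The main obstacle lies in establishing the parsimonious polynomial-time SAT-to-HC reduction of vertex size $\Oh(n)$ claimed above---once this is in place, encoding the weight predicate in CNF and guessing the optimum weight are routine bookkeeping.
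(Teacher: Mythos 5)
Your high-level strategy---reduce {\sc{SAT}} to Hamiltonian Cycle, isolate, randomly guess the isolated weight, and output a {\sc{Unique SAT}} instance---matches the paper's. The execution differs in a way that introduces a genuine gap.

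You require the \emph{forward} reduction from {\sc{SAT}} to Hamiltonian Cycle to be parsimonious with linear vertex blow-up $N = \Oh(n)$, and then Tseitin-encode the weight predicate directly into CNF. This cannot be carried out as stated: a parsimonious reduction with $N = \Oh(n)$ is not available for arbitrary {\sc{SAT}} (or even 3-{\sc{SAT}}) instances, because the standard reductions use $\Omega(n+m)$ vertices, where $m$ is the number of clauses, and $m$ can be polynomially larger than $n$. You explicitly flag this as the delicate step, but you do not justify it, and it is the step on which the argument breaks. The paper sidesteps exactly this difficulty by \emph{not} demanding parsimoniousness of the forward reduction: it applies an ordinary, non-parsimonious {\sc{SAT}}-to-Hamiltonian-Cycle reduction, isolates, replaces every edge $e$ of weight $\omega(e)$ by a path of $\omega(e)$ edges to obtain an \emph{unweighted} {\sc{Subset TSP}} instance over the original vertices as terminals, and only then applies a parsimonious reduction from unweighted {\sc{Subset TSP}} back to {\sc{SAT}}. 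The parsimoniousness burden thus falls entirely on the direction from an $\mathsf{NP}$ problem \emph{into} {\sc{SAT}}, which is always achievable via Cook--Levin-style encodings. Your direct CNF encoding of the weight is fine as bookkeeping, but it is only correct when the satisfying assignments of $\varphi$ and the Hamiltonian cycles of $G$ are in bijection, so you cannot drop the parsimoniousness requirement; the subdivision trick is what lets the paper avoid it. (Controlling the \emph{size} of the intermediate instance so that the scheme's $o(\cdot)$ bound indeed yields $o(n)$ random bits is a concern shared by both routes---the paper's remarks in Section~\ref{sec:lbs} handle it by tracking the \emph{treewidth} of the produced graph rather than its vertex count.)
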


Observe that since an $n$-vertex graph has treewidth at most $n-1$, Theorem~\ref{thm:lbcond} also implies that in Theorem~\ref{thm:hc-tree} one cannot expect reducing the number of random bits to $o(t)$. However, we stress  that the lower bounds of Theorems~\ref{thm:unlb} and~\ref{thm:lbcond} are not completely tight with respect to the upper bounds of Theorems~\ref{thm:hc-gen} and~\ref{thm:hc-tree}, because the latter allow superpolynomial weights. It remains open whether the weights used by the schemes of Theorems~\ref{thm:hc-gen},~\ref{thm:hc-tree}, and~\ref{thm:planar} can be reduced to polynomial.

In Section~\ref{sec:lbs} we further discuss consequences of the hypothetical existence of a polynomial-time reduction from {\sc{SAT}} to {\sc{Unique SAT}} that would use $o(n)$ random bits.

\paragraph{Level-aware isolation schemes for independent sets.}
In the light of the $\Omega(\log{n})$ lower bound of Theorem~\ref{thm:unlb}, we consider a relaxation of the model from Definition~\ref{def:isolation-scheme}, where the graph is provided together with an {\em{elimination forest}} (a decomposition notion suited for the graph parameter {\em{treedepth}}), and the weight of a vertex may depend both on the vertex' identifier and its level in the elimination forest. We demonstrate that in this relaxed model, the $\Omega(\log{n})$ lower bound can be circumvented.

\begin{definition}
	We say that vertex selection problem $\Pp$ admits a
	{\em{level-aware isolation scheme}} if for all $n,d\in \nat$ there exist functions 
	$\omega_1,\ldots,\omega_\ell \colon [n] \times [d] \rightarrow \nat$
	such that for every graph $G$ on vertex set $[n]$ and elimination forest $F$ of $G$ of height at most $d$, at least half of the functions $\omega_1,\ldots,\omega_\ell$ isolate $\Pp(G)$. Here, when evaluating $\omega_i$ on a vertex $u\in [n]$, we apply $\omega_i$ to $u$ and the index of the level of $u$ in $F$.
\end{definition}
%

\begin{theorem}\label{thm:mis}
    For every $d \in \nat$, there is a level-aware isolation scheme for maximum-size independent sets in graphs of treedepth at most $d$ that uses $\Oh(d)$ random bits and assigns weights bounded by $\Oh(n^6)$.
\end{theorem}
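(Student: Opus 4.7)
The plan is to exploit the structure of the elimination forest. Recall that for a graph $G$ equipped with an elimination forest $F$ of height $d$, every edge of $G$ joins a pair of ancestor and descendant in $F$; equivalently, for each level $\ell \in [d]$, the set $V_\ell := \{v : \mathrm{lvl}_F(v) = \ell\}$ is an independent set in $G$. Moreover, $\MIS(G)$ admits a standard bottom-up dynamic program over $F$ whose state at a vertex $v$ records the subset of $\mathrm{anc}_F(v)$ committed to the independent set; this keeps the DP table of size at most $n \cdot 2^d$.

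The weight scheme I would design uses $\Oh(1)$ random bits per level, totaling $\Oh(d)$ bits. Concretely, I would sample, for each level $\ell \in [d]$, a small random tag $r_\ell$ from a constant-sized alphabet, and define $\omega(v, \ell)$ as a low-degree polynomial in $v$ with coefficients depending on $r_\ell$ and a deterministic tiebreaker depending on $v$ and $\ell$. For instance, one natural attempt is
\[
\omega(v, \ell) \;=\; \alpha(r_\ell) \cdot v \;+\; \beta(r_\ell) \cdot v^3 \;+\; \gamma(v, \ell),
\]
for carefully chosen coefficient maps $\alpha, \beta$ and a polynomially bounded deterministic tiebreaker $\gamma$; this yields weights in the range $\Oh(n^6)$ while using only $\Oh(d)$ random bits.

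The analysis would proceed by induction on the height of the elimination forest. At each level $\ell$, the randomness $r_\ell$ is used to break ties among MIS restricted to the subtrees rooted at level $\ell$, conditional on successful isolation at deeper levels. The key step is to show that the $\Oh(1)$ random bits at level $\ell$ simultaneously resolve all ties at that level with constant probability, using the fact that subtrees rooted at level $\ell$ interact only through their common ancestors at shallower levels and not with each other (since such edges are forbidden in an elimination forest). Applying a Schwartz--Zippel-style argument to the resulting low-degree polynomial weight expression at each level, together with a union bound over the $d$ levels, would yield an overall isolation probability of at least $\tfrac{1}{2}$.

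The main obstacle I anticipate is circumventing the generic counting lower bound: a graph of treedepth $d$ can have $2^{\Omega(n)}$ maximum-size independent sets (e.g., a perfect matching has $d = 2$ and $2^{n/2}$ of them), so a naive union bound over \emph{pairs} of MIS would demand $\Omega(n)$ random bits rather than $\Oh(d)$. The argument must critically exploit the decomposability of $\MIS(G)$ along $F$: weight differences between distinct MIS factor into contributions from each DP decision, and per-level randomness must be shown to isolate all decisions at that level \emph{together}, not one pair at a time. Formalizing this decomposition --- essentially, that conditioning on the previously fixed DP trace reduces the number of ``free'' MIS choices at level $\ell$ to something bounded by a polynomial in $n$ --- and leveraging it to obtain a per-level constant-probability tie-breaking guarantee with constantly many random bits and polynomially bounded weights, is what I expect to be the heart of the proof.
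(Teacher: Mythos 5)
Your proposal correctly identifies the right structural features of elimination forests and correctly anticipates the key obstacle (the need to avoid a naive union bound over pairs of independent sets), but it has a genuine gap in the randomness allocation that the proposed technique cannot repair. You propose sampling an \emph{$\Oh(1)$-bit} random tag $r_\ell$ independently for each of the $d$ levels and then resolving ties at each level via a Schwartz--Zippel-style argument plus a union bound over levels. This cannot work as stated: a Schwartz--Zippel-style isolation argument over roughly $\poly(n)$ competing partial solutions at a level requires a random coefficient drawn from a domain of size $\poly(n)$, i.e.\ $\Omega(\log n)$ bits per level, not $\Oh(1)$. Moreover, a union bound over $d$ levels forces each level's success probability to be $1-\Oh(1/d)$, again inconsistent with constant-sized alphabets. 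Your own ``constant-probability per-level tie-breaking'' goal does not compose to $1/2$ overall under a union bound.

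The paper closes exactly this gap with two ideas you have not reconstructed. First, it establishes an abstract \emph{exchange property} for $\MIS$: if two distinct minimizers exist, then two distinct minimizers exist that differ below a single \emph{pivotal vertex}, i.e.\ agree everywhere outside the subtree rooted at one node. This lets one reduce the problem to isolating a family of \emph{linear forms} in the per-level random variables, one form per maximum independent set, where the exchange property guarantees that non-isolation produces two \emph{distinct} forms. Second, rather than ``$\Oh(1)$ bits per level plus Schwartz--Zippel,'' the scheme interleaves a deterministic exponential shift $2^{f}$ (over windows of $\lceil\log n\rceil$ consecutive levels) with roughly $d/\log n$ independent random integers drawn from $[\Oh(n^5)]$ --- one per window, costing $\Oh(\log n)$ bits each. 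This hybrid gives exactly $\Oh(d)$ total random bits and weights $\Oh(n^6)$, and the per-variable domain is large enough to invoke a result of Chari, Rohatgi, and Srinivasan (Proposition~2 of~\cite{isolation-lemma2}) that isolates a set of linear forms with failure probability $1/2$ \emph{independent of the number of variables}, thereby avoiding any union bound over levels. These two ingredients --- the exchange property and the block-wise hybrid of deterministic shift and Chari-et-al.-style linear-form isolation --- are the heart of the paper's argument, and they are what your proposal is missing.
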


In the proof of Theorem~\ref{thm:mis} we describe an abstract condition, dubbed the \emph{exchange property}, which is sufficient for the argument to go through. This property is enjoyed also by other families of combinatorial objects defined through constraints of local nature, such as minimum dominating sets or minimum vertex covers. Therefore, we can prove analogous isolation results for those families as well. 

Also, in Section~\ref{sec:maximum-matching-treedepth} we discuss a similar reasoning for edge-selection problems on the example of maximum matchings, achieving a level-aware isolation scheme that uses $\Oh(d\log n)$ random bits and assigns weights bounded by $n^{\Oh(\log n)}$. This provides another natural class of graphs where isolation-based algorithms for finding a maximum matching can be derandomized (see~\cite{matchings-1,matchings-2,matchings-3,matchings-4}).

We summarize our results with Table~\ref{tab:summary}.

\renewcommand{\arraystretch}{1.3}
\begin{table}[ht!]
    \centering
    \caption{Summary of our results based on Theorems~\ref{thm:hc-gen}-\ref{thm:mis}.} 
	\begin{tabular}{l|l|l|l}
	\hline
	\textbf{Problem}     & \textbf{Random Bits}           &\textbf{Max Weight} & \textbf{Graph Class}                           \\ \hline
	\textsc{Hamiltonian Cycle}    & $\Oh(n)$                       &$2^{\Oh(n)}$ & \multirow{2}{*}{General Graphs}            \\
						 & $\Omega(n)$                    &$\poly(n)$&                                            \\\cline{2-4}
						 & $\Oh(\sqrt{n})$                &$2^{\Oh(\sqrt{n})}$& \multirow{2}{*}{$H$-minor free graphs}     \\
						 & $\Omega(\sqrt{n})$             &$\poly(n)$&                                            \\\cline{2-4}
						 & $\Oh(t \log(n) + \log^2(n))$   &$n^{\Oh(t + \log(n))}$& \multirow{2}{*}{Treewidth $t$ graphs} \\
						 & $\Omega(t + \log(n))$          &$\poly(n)$&                                            \\
	\hline
	$\CMSOtwo$           & $f(t) \log(n) + \Oh(\log^2(n))$ &$n^{f(t) +\Oh(\log(n))}$& Treewidth $t$ graphs\\
	\hline
	\textsc{Max Independent Set} & $\Oh(d)$                       &$\poly(n)$& \multirow{2}{*}{Treedepth $d$ graphs} \\
						 & $\Omega(d)$                    &$\poly(n)$&                                           \\\hline
	\end{tabular}
    \label{tab:summary}
\end{table}

\subsection{Organization}
In Section~\ref{sec:informal} we introduce the main techniques behind our isolation schemes in an informal way.
In Section~\ref{sec:prelim} we provide preliminaries.
Section~\ref{sec:hamcycle} is dedicated to the formal proofs of Theorems~\ref{thm:hc-gen},~\ref{thm:hc-tree} and~\ref{thm:planar}.
The derandomized algorithm from Theorem~\ref{thm:alg} is subsequently proved in Section~\ref{sec:algo}, and the general $\CMSOtwo$-result of Theorem~\ref{thm:mso} is formally supported in Section~\ref{sec:mso}.
The lower bounds from Theorem~\ref{thm:unlb} and Theorem~\ref{thm:lbcond} are proved in~\ref{sec:lbs}.
Finally, the level-aware isolation schemes for local vertex (respectively, edge) selection problems are given in Sections~\ref{sec:indepedent_set} (respectively, Section~\ref{sec:maximum-matching-treedepth}), and we finish the paper with possible directions for further research in Section~\ref{sec:conc}.

\section{An informal introduction to our techniques}
\label{sec:informal}

In this section we present an isolation scheme for Hamiltonian
cycle on graphs of bounded \emph{pathwidth} at most $k$ that uses $\Oh(k \log{k} \log{n} +
\log^2{n})$ random bits. 
The arguments in this section are informal in order to convey the underlying intuition, and merely serve as a preliminary overview of the general
framework that we formalize and further develop in the subsequent sections.

Throughout the paper we heavily build upon the approach proposed by Kallampally et
al.~\cite{mfcs16}, who showed an isolation scheme for
shortest paths that uses $\Oh(\log^2(n))$ random bits assigns weights upper bounded by $n^{\Oh(\log{n})}$ (see e.g., \cite{matchings-6} for a more recent application). In fact, our isolation schemes are almost identical to
Kallampally et al.~\cite{mfcs16} except for a different selection of prime
numbers. Our contribution comes with the new insight for $\mathsf{NP}$-complete problems.
To achieve this we use a modern toolset from parameterized algorithms. 


\begin{figure}[t]
    \centering
    \includegraphics[width=0.8\textwidth]{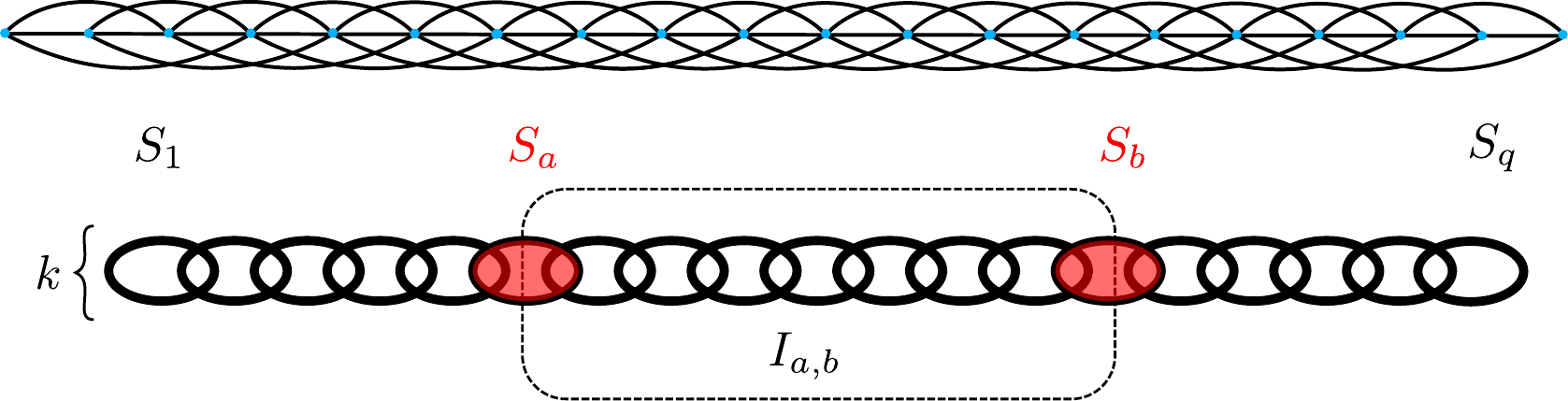}
    \caption{Top figure presents an example graph with bounded pathwidth. Figure
    below presents a definition of a path decomposition of width $k$, with bags $S_a,S_b$ and interval $I_{a,b}$.}
    \label{fig:pathwidth-definition}
\end{figure}

Let $G=(V,E)$ be the given graph.
Informally, the pathwidth of $G$ is parameter that measures how well $G$ can
be represented as a \emph{thickened path}, which formalized through the notion of a {\em{path decomposition}} of {\em{width}} $k$. The reader is invited to think about
a path decomposition of $G$ of width $k$ as a sequence of bags
$S_1,\ldots,S_q \subseteq V$, each of size at most $k$, that traverse the whole graph, i.e., $S_1\cup \ldots\cup S_q =V$ and subsequent bags differ
by exactly one vertex $|S_i \triangle S_{i+1}| \le 1$ (see
Figure~\ref{fig:pathwidth-definition} for an example of a graph with bounded
pathwidth and a schematic view of path decomposition). We may assume that $q\leq n$. Each bag is a separator in the sense that vertices present only in the bags to the left of it are pairwise non-adjacent to the vertices present only in the bags to the right of it.
In this section we focus on pathwidth in order to avoid several technical difficulties that arise when dealing with treewidth.

We first describe our isolation scheme for Hamiltonian cycles. The
crucial ingredient in our methods is a well-known hashing scheme due
to Fredman, Koml{\'{o}}s and Szemer{\'{e}}di~\cite{fks} (the FKS hashing lemma, see Section~\ref{sec:prelim} for a proof): For any set $A$ of $n$-bits integers with $|A| = n^{\Oh(1)}$, for most of the primes $p$ of order $|A|^{\Oh(1)}$ it holds that $x \not\equiv y \bmod p$ for all distinct $x,y \in A$.
An important property that is guaranteed by this lemma is that
after hashing modulo a prime $p$, \emph{every} element of the set $A$ is given a different value. 

\paragraph{Isolation scheme.}
Assume without loss of generality that $\log n$ is an integer. Our isolation scheme for $n$-vertex
graphs of pathwidth $k$ reads as follows.
Let $\id\colon E(G) \to \{1,\dots,|E(G)|\}$ be any bijection that assigns to each edge $e \in E(G)$ its unique \emph{identifier} $\id(e)$. 
First we select the range $M \coloneqq k^{\Oh(k)} \cdot n^{\Oh(1)}$ and
$\log{n}$ random prime numbers $p_1,\ldots,p_{\log{n}} \in
\{1,\ldots,M\}$. Note that we need $\Oh(k \log{k} \log{n} + \log^2{n})$
random bits to sample these prime numbers.

Next, we inductively define weights functions $\omega_1,\dots,\omega_{\log n}$ on $E(G)$ as follows:
\begin{itemize}
	\item Set $\omega_1(e) \coloneqq 2^{\id(e)} \bmod p_1$ for all $e \in E(G)$. 
	\item For each $e \in E(G)$ and $i=1,\dots,\log n$, set 
	$$ \omega_i(e) \coloneqq Mn \cdot \omega_{i-1}(e) + \left( 2^{\id(e)}\bmod p_i \right).  $$
\end{itemize}
Let $\omega \coloneqq \omega_{\log n}$ and observe that $\omega$ assigns weights bounded by $2^{\Oh(k \log{k} + \log^2{n})}$.
Note that the path decomposition $(S_1,\ldots,S_q)$ of $G$ is not used at all in the isolation scheme. We will use it only in the analysis, that is, the proof that the sampled weight function $\omega$ isolates the family of Hamiltonian cycles in $G$ with probability at least $\frac{1}{2}$.

%

\paragraph{Analysis.}
We first introduce the notion of an \emph{interval} in a path decomposition. This is just a graph induced by all the bags present between two given ones. More precisely, for $1\leq a\leq b\leq q$, we define
$$I_{a,b} \coloneqq \bigcup_{i \in [a,b]} S_i \subseteq V$$ to be the interval between bags $S_a$ and
$S_b$ (see Figure~\ref{fig:pathwidth-definition}). The length of this interval is  $|b-a|$ 
For an interval $I_{a,b}$ we say that $P_{a,b}\subseteq E(I_{a,b})$ is a \emph{partial solution} if it is a collection of vertex-disjoint paths with endpoints in $S_a\cup S_b$ that together visit all vertices of $I_{a,b}$.
Note that if we want to extend $P_{a,b}$ to a Hamiltonian cycle with another edge set $P'$, in order to check the feasibility of this extension we only need to know the pattern of connections induced by $P_{a,b}$ on $S_a$ and on $S_b$. 
More precisely, we only need to know the \emph{configuration} of $P_{a,b}$ on its boundary: such a configuration is represented by a matching $M$ on $S_a \cup S_b$, which indicates which vertices of $S_a\cup S_b$ are corresponding endpoints of a path in $P_{a,b}$, and information on how many edges of $P_{a,b}$ are incident on every vertex of $S_a\cup S_b$. 
Then $P_{a,b} \cup P'$ is a Hamiltonian cycle if and only if $P' \cup M$ is a Hamiltonian cycle on the vertices of $G - (I_{a+1,b-1}\cup V_2)$, where $V_2$ are vertices of $S_a\cup S_b$ incident on two edges of $P_{a,b}$. 

For an example of a realization of a configuration on $I_{a,b}$, see Figure~\ref{fig:configurations-pathwidth}.

\begin{figure}[ht!]
    \centering
    \includegraphics[width=0.7\textwidth]{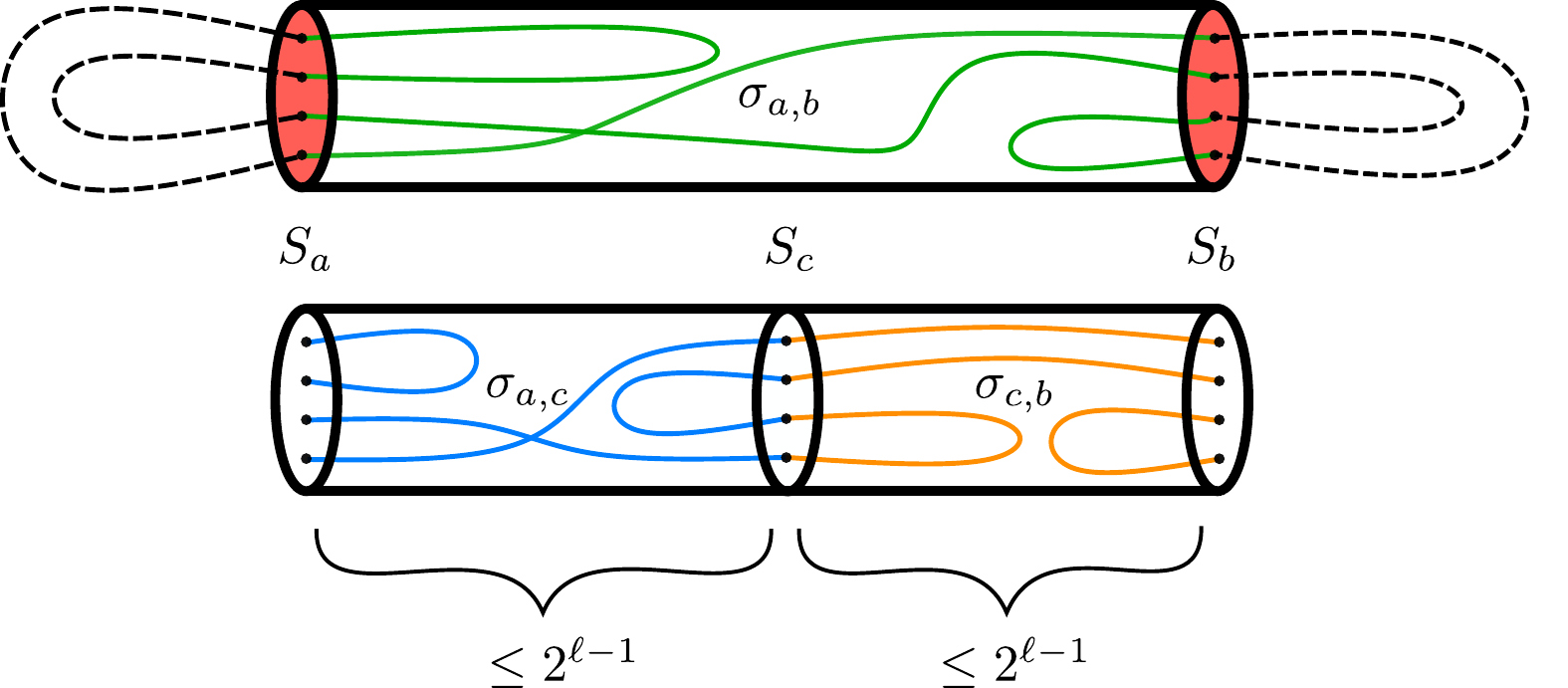}
    \caption{The top figure represents a realization of configuration $\sigma_{a,b}$
    in the interval $I_{a,b}$ (green stroked) and a complementary configuration
    for the rest of the graph (dashed lines). The figure below gives intuition for the
    induction argument. For a fixed $\sigma_{a,b}$ we know that $\omega_{\ell-1}$
    isolates all realizations in $I_{a,c}$ and $I_{c,b}$ and there are at most
    $2^{\Oh(k \log{k})}$ possible configurations $\sigma_{a,c},\sigma_{b,c}$
    that combined give $\sigma_{a,b}$.}
    \label{fig:configurations-pathwidth}
\end{figure}

Since a configuration is composed of an information about a
matching and a partition of vertices in $S_a \cup S_b$, the number of possible configurations within each interval $I_{a,b}$ is
at most $2^{\Oh(k \log k)}$. 

To prove that the weight function $\omega$ isolates the family of Hamiltonian cycles in $G$ with high probability, we prove the following claim by induction on $\ell$.

\IH{
    For every $\ell\in \{1,\ldots,\log{n}\}$, the following event happens with a sufficiently high probability: for every interval $I_{a,b}$ of length at most $2^\ell$ and a configuration $\sigma$ on $I_{a,b}$, the weight function $\omega_\ell$ isolates the family of all partial solutions in $I_{a,b}$ whose configuration is $\sigma$.
}\\

This induction hypothesis for $\ell=\log n$ immediately shows that $\omega = \omega_{\log{n}}$
isolates all Hamiltonian cycles with a sufficiently high probability. 

To prove the base case ($\ell=1$), 
we look at intervals of length at most $2$, that is, we look at the subgraphs $G[S_i]$ and $G[S_i\cup S_{i+1}]$, for $i \in \{1,\dots,q\}$.
Each of these subgraphs has at most $2k$ vertices, hence also at most $k^{\Oh(k)}$ different partial solutions. Also, there are at most $2n$ such subgraphs in total.
Hence, the total number of different partial solution in intervals of length $2$ is at most $n\cdot k^{\Oh(k)}$. We find from the FKS hashing lemma (see Lemma~\ref{FKS}) that if we choose the prime $p_1$ uniformly at random from the interval $\{1,\dots,M\}$, with a high probability all those partial solutions will be assigned pairwise different weights by the weight function $\omega_1$. Indeed, take $\psi(e)=2^{\id(e)}$ and $W = \{\psi(X): X$ a partial solution in interval of length at most 2$\}$. Note that all the $\psi(X)$ in $W$ for different $X$ are unique. Then FKS says that $\psi(X) \not\equiv \psi(X') \bmod p_1$ for any two such solutions $X$ and $X'$ with a high probability. This implies that with a high probability, $\omega_1(X)\neq \omega_1(X')$. The base case follows.

Now assume the induction hypothesis to be true for all $i < \ell$. Fix an interval $I_{a,b}$ for some $1\leq a\leq b\leq n$ of length $\le 2^{\ell}$. Observe that if $I_{a,b}$ has length 
at most $2^{\ell-1}$, then it is already appropriately taken care of by function $\omega_{\ell-1}$, and hence also by $\omega_\ell$.

Therefore, we can assume that the length of $I_{a,b}$ belongs to $[2^{\ell-1}+1, 2^{\ell}]$.
Fix some configuration $\sigma_{a,b}$ on $S_a \cup S_b$, the boundary of $I_{a,b}$. Observe that there exists $c \in
(a,b)$, such that $I_{a,c} \cup I_{c,b} = I_{a,b}$ and both $I_{a,c}$ and $I_{c,b}$ have length at most $2^{\ell-1}$. Further, there are at most $2^{\Oh(k \log{k})}$ different pairs of
configurations $\sigma_{a,c}$ and $\sigma_{c,b}$ that, when naturally combined, give a
configuration $\sigma_{a,b}$.
See Figure~\ref{fig:configurations-pathwidth} for a visualization.

The crucial observation is that by the induction hypothesis, for every pair of configurations $\sigma_{a,c},\sigma_{c,b}$ as above, the weight function $\omega_{\ell-1}$ already isolates the family of partial solutions in $I_{a,c}$ with configuration $\sigma_{a,c}$, as well as the family of partial solutions in $I_{c,b}$ with configuration $\sigma_{c,b}$. Therefore, for a fixed interval $I_{a,b}$ there can be at most $2^{\Oh(k \log{k})}$ different partial solution with configuration $\sigma_{a,b}$ that have minimum weight w.r.t. $\omega_{\ell-1}$. This is because they must be composed from  partial solutions in $I_{a,c}$ and $I_{c,b}$ that have minimum weights for their configurations. Moreover, there are at most $\Oh(n^2)$
different intervals of length in $[2^{\ell-1}+1,2^{\ell}]$. This means that in total, there
can be at most $2^{\Oh(k\log{k})} \cdot n^2$ different partial solutions in intervals $I_{a,b}$ of length at most $2^\ell$ that are minimum-weight realizations (w.r.t. $\omega_{\ell-1}$) of their respective configurations. Now moving from $\omega_{\ell-1}$ to $\omega_\ell$, we can argue using the FKS Lemma that all of these partial solutions will receive pairwise different values in $\omega_\ell$, with high probability.

This concludes the intuitive sketch of the proof of
the induction hypothesis. For a formal argument, see Section~\ref{sec:hamcycle}.

\paragraph*{Extensions of the method.}
All our isolation schemes for Hamiltonian cycle follow the same blueprint sketched above. The main difference, however, is
that we select our primes to be of the order $2^{\Oh(k)} \cdot n^{\Oh(1)}$. To argue that this is sufficient, we employ the rank-based approach to argue that the set of partial solutions that are ``representative enough'' is much smaller than $2^{\Oh(k \log{k})}$: it is actually of size $2^{\Oh(k)}$. In the above sketch, this reduces the number of random bits from $\Oh(k\log k\log n+\log^2 n)$ to $\Oh(k\log n+\log^2 n)$.

To complete the proof of Theorem~\ref{thm:hc-tree} it remains to lift the reasoning from graphs of bounded pathwidth to graphs of bounded treewidth. We do this by carefully generalizing the notion of an interval in a path decomposition to a notion of a {\em{segment}} in a tree decomposition. In particular, a segment of a tree decomposition can be always partitioned into at most five segments of twice smaller sizes, similarly we partitioned an interval into two intervals of at most half the length.


If we directly applied our analysis to the problem of isolating Hamiltonian cycles in $H$-minor-free graphs, then even with the rank-based approach employed we would only obtain an isolation scheme that uses $\Oh(\sqrt{n}\log{n})$ random bits.
To shave off the additional $\Oh(\log n)$ factor, we use certain properties of the decompositions of $H$-minor-free graphs that guarantee that size of separator decreases geometrically. 
In a nutshell, these properties will allow us to select $\Oh(\log{n})$ primes, but
each prime will be selected using a number of random bits that follows a geometric progression
(see Section~\ref{sec:HCdecomp} for details).

In Section~\ref{sec:mso} we generalize  the ideas to prove a meta-statement
about all problems definable in Monadic Second-Order logic, $\CMSOtwo$. The idea is that in the sketch above, we almost did not use any particular combinatorial properties of Hamiltonian cycles. The only property we relied on is that the behavior of a partial solution within an interval can be subsumed in a configuration on the interval's boundary, and the number of configurations is bounded by a function of $k$ only. Such a ``finite-state'' property is enjoyed by all problems definable in $\CMSOtwo$, which allows us to perform the whole reasoning on the meta-level.

Methods presented in Section~\ref{sec:indepedent_set} for isolating
local problems follow a completely different framework that uses
additional information about a graph. The analysis in this section is arguably
simpler. There, we use a technical contribution of Chari et
al.~\cite{isolation-lemma2} and extend it with
new observations regarding pivotal vertices in treedepth bounded graphs.

\section{Preliminaries}
\label{sec:prelim}


\paragraph*{Notation.} For an integer $k$, we write $[k]\coloneqq \{1,\ldots,k\}$. We use standard graph notation: $V(G)$ and $E(G)$ respectively denote the vertex set and the edge set of a graph $G$, for $X\subseteq V(G)$ the {\em{closed neighborhood}} $N_G[X]$ is $X$ plus all the neighbors of vertices of $X$, and the {\em{open neighborhood}} is $N_G(X)\coloneqq N_G[X]\setminus X$.

\paragraph*{Hashing modulo primes.} The following standard hashing lemma that dates back to the work of Fredman, Koml\'os, and Szemer\'edi~\cite{fks}, will be the main source of randomness in our isolation schemes.

\begin{lemma}[FKS hashing lemma~\cite{fks}]\label{FKS}
 Let $S\subseteq \{0,1,\ldots,2^n\}$ be a set of $k$ integers, where $n,k\geq 1$.
 Suppose that $p$ is a prime number chosen uniformly at random among prime numbers in the range $\{1,\ldots,M\}$, where $M\geq 2$. Then
 $$\Prb\left[x\not\equiv y\bmod p\quad\textrm{for all}\quad x,y\in S, x\neq y\right]\geq 1-\frac{nk^2}{\sqrt{M}}.$$
\end{lemma}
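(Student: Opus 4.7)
The plan is to prove the lemma by a standard two-step argument: first count, for a fixed pair of distinct elements, the number of primes that can cause a collision between them; then apply the union bound over all pairs and divide by a lower bound on $\pi(M)$, the number of primes in $\{1,\ldots,M\}$.

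For the first step, observe that a collision $x \equiv y \bmod p$ for distinct $x,y \in S$ is equivalent to $p$ dividing the nonzero integer $x-y$. Since $|x-y| \le 2^n$ and each prime divisor is at least $2$, the integer $x-y$ admits at most $\log_2(2^n) = n$ distinct prime divisors. Hence for each unordered pair $\{x,y\}$ of distinct elements of $S$, there are at most $n$ primes that cause a collision on this pair.

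For the second step, there are at most $\binom{k}{2} \le k^2/2$ such pairs, so by a union bound the total number of ``bad'' primes in $\mathbb{N}$ is at most $n k^2/2$. The probability of drawing a bad prime is therefore bounded by $(nk^2/2)/\pi(M)$. To finish, I would invoke a Chebyshev-type bound on the prime-counting function, namely $\pi(M) \ge \sqrt{M}/2$ for all $M \ge 2$; plugging this in gives the claimed bound $nk^2/\sqrt{M}$.

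The only delicate point is justifying the bound $\pi(M) \ge \sqrt{M}/2$ uniformly in $M \ge 2$. For large $M$ this follows from $\pi(M) \ge M/\ln M$ together with the elementary inequality $\ln M \le \sqrt{M}$ (valid for all $M \ge 1$), and for the small cases $M \in \{2,3,4\}$ one checks it by hand. Alternatively, if $nk^2/\sqrt{M} \ge 1$ then the claim is vacuous, so we may reduce to the regime where $M$ is sufficiently large and apply the standard asymptotic estimate. This is the only step that requires care; once it is in place, the proof is a direct combination of the divisor count and the union bound described above.
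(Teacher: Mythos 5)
Your proof takes essentially the same route as the paper's: count the ``bad'' primes that could cause a collision (the paper packs all pairwise differences into a single number $R=\prod_{x\neq y}|x-y|$ with at most $n\binom{k}{2}$ prime divisors, you apply a union bound over pairs; these yield the same estimate), and then divide by a lower bound on $\pi(M)$ of the form $\pi(M)\geq \sqrt{M}/2$. The only point where you slip is the justification of that lower bound. The inequality $\pi(M)\geq M/\ln M$ is \emph{not} a uniform elementary fact: it fails for several small $M$ (e.g.\ $M=2,3,4,5,6,9,10$) and is known unconditionally only for $M\geq 17$, by Rosser's theorem --- which is exactly what the paper invokes. So checking ``the small cases $M\in\{2,3,4\}$ by hand'' is not enough; you would need to verify $\pi(M)\geq \sqrt{M}/2$ directly for all $2\leq M\leq 16$ (all of these do hold, as the paper notes). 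Your alternative vacuity argument also does not close the gap as written: from $n,k\geq 1$ you only get $nk^2\geq 1$, so ``$nk^2/\sqrt{M}\geq 1$'' gives $M\leq n^2k^4$, not a bound forcing $M$ to be small. It would work if you observe that $k=1$ is trivial (no pairs) and that $k\geq 2$ forces $nk^2\geq 4$, whence non-vacuity gives $\sqrt{M}>4$, i.e.\ $M\geq 17$, but this case split is not in your write-up. With either of these fixes the argument is complete and matches the paper.
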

\begin{proof}
 Let
 $$R\coloneqq \prod_{x,y\in S, x\neq y}\ |x-y|.$$
 Note that $R\leq 2^{n\cdot \binom{k}{2}}$. This implies that $R$ may have at most $n\cdot \binom{k}{2}$ different prime divisors. On the other hand, from the prime number theorem it follows that $\pi(M)\in \Omega(\frac{M}{\log M})$, where $\pi(M)$ denotes the number of primes in the range $\{1,\ldots,M\}$. In fact, using a more precise estimate of Rosser~\cite{Rosser41}, for $M\geq 17$ we have $\pi(M)\geq \frac{M}{\ln M}$. For $2\leq M\leq 17$ a direct check shows that $\pi(M)\geq \sqrt{M}/2$. Since $\frac{M}{\ln M}\geq \sqrt{M}/2$ for all $M\geq 2$, we conclude that the probability that a random prime in the range $\{1,\ldots M\}$ is not among the at most $n\cdot \binom{k}{2}$ prime divisors of $R$ is at least
 $$1-\frac{n\cdot \binom{k}{2}}{\sqrt{M}/2}\geq 1-\frac{nk^2}{\sqrt{M}}.\qedhere$$
\end{proof}

\newcommand{\treedecomp}{\mathbb{T}}

\paragraph*{Graph decompositions.}
A {\em{rooted forest}} is directed acyclic graph $F$ where every node $x$ has at most one outneighbor, called the {\em{parent}} of $x$. A {\em{root}} is a node with no parent. If a node $y$ is reachable from $x$ by a directed path, then we write $y\preceq_F x$ and say that $y$ is an {\em{ancestor}} of $x$ and $x$ is a {\em{descendant}} of $y$. Note that every vertex is considered its own ancestor and descendant. For $x\in V(F)$, we write 
\begin{eqnarray*}
	\tail_F[x]\coloneqq \{y\colon y\preceq_F x\},& \qquad & \subtree_F[x]\coloneqq \{z\colon z\succeq_F x\},\\
	\tail_F(x)\coloneqq \tail_F[x]\setminus \{x\},& \qquad & \subtree_F(x)\coloneqq \subtree_F[x]\setminus \{x\}.
\end{eqnarray*}
The {\em{level}} of a node $x$ in $F$, denoted $\lvl_F(x)$, is the number of its strict ancestors, that is, $|\tail_F(x)|$. Note that roots have level $0$.
The {\em{height}} of a forest $F$ is the maximum level among its nodes, plus $1$.
If the forest $F$ is clear from the context, then we may omit it in the above notation.

An {\em{elimination forest}} of a graph $G$ is a rooted forest $F$ with $V(F)=V(G)$ such that for every edge $uv$ of $G$, either $u$ is an ancestor of $v$ in $F$ or vice versa. The {\em{treedepth}} of a graph $G$ is the least possible height of an elimination forest of $G$. Treedepth as a graph parameter plays a central role in the structural theory of sparse graphs, see~\cite[Chapters~6 and~7]{sparsity}. It also has several applications in parameterized complexity and algorithm design~\cite{Chenetal20,EisenbrandHKKLO19,furer-1,wg2020,PilipczukS19,pw-stacs2016}, as well as exhibits interesting combinatorial properties~\cite{Chenetal20,CzerwinskiNP19,DvorakGT12} and connections to descriptive complexity theory~\cite{ElberfeldGT12}. We refer to the introductory sections of the above works for a wider discussion.

A {\em{tree decomposition}} of a graph $G$ is a pair $\treedecomp = (T,\beta)$, where $T$ is an (unrooted) tree and $\beta\colon V(T)\to 2^{V(G)}$ is a function that assigns to each node $x\in V(T)$ its {\em{bag}} $\beta(x)\subseteq V(G)$ so that the following two conditions are satisfied:
\begin{itemize}[nosep]
 \item for each $u\in V(G)$, the set $\{x\colon u\in \beta(x)\}$ induces a nonempty and connected subtree of $T$; and
 \item for each $uv\in E(G)$, there exists $x\in V(T)$ such that $\{u,v\}\subseteq \beta(x)$. 
\end{itemize}
The {\em{width}} of $\treedecomp$ is $\max_{x\in V(T)}|\beta(x)|-1$ and the {\em{treewidth}} of $G$ is the minimum possible width of a tree decomposition of $G$. It is easy to see that the treedepth of a graph is at most its treewidth plus one. Conversely, the treewidth is upper bounded by the treedepth times the logarithm of the vertex count~\cite{sparsity}.

For surgery on tree decompositions we will use the following definition and standard lemma.

\begin{definition}[Segment of a tree]\label{def:segment} 
For an unrooted tree $T$, a {\em{segment}} of $T$ is a nonempty and connected subtree $I$ of $T$ such that there are at most two vertices of $I$ that have a neighbor outside of $I$. The set of those at most two vertices is the {\em{boundary}} of $I$, and is denoted by $\bnd I$. The {\em{size}} of $I$ is equal to $|E(I)|$. 
\end{definition}

\begin{lemma}\label{lemma:5SubTrees}
	Let $T$ be an unrooted tree and let $I$ be a segment of $T$ of size $\ell\geq 2$. Then there are at most $5$ segments $I_1,\ldots,I_t$ of $T$ ($t\leq 5$), each of size at most $\ell/2$, such that segments $I_1,\ldots,I_t$ have pairwise disjoint edge sets and $E(I_1)\cup \ldots \cup E(I_t)=E(I)$.
\end{lemma}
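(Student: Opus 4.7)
My plan is to find a pivot vertex $v^*$ inside $I$ (and, in a harder sub-case, an auxiliary second pivot $u^*$) such that splitting $I$ at the pivot(s) decomposes its edges into pieces which can be grouped into at most five segments of $T$, each of size at most $\ell/2$. The two competing difficulties are choosing the pivot so each resulting piece is small, and combining pieces without violating the size-two boundary condition required for a segment of $T$.

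I would first construct the pivot via a centroid-style descent. Root $I$ arbitrarily, let $\sigma(v)$ denote the number of edges in the subtree rooted at $v$, and descend from the root into any child $c$ with $\sigma(c) > \ell/2$. The process terminates at a vertex $v^*$ such that, for every child $c$ of $v^*$, the component of $I - v^*$ containing $c$ is small enough that, together with $v^*$ and the edge $(v^*,c)$, it forms a candidate segment of size at most $\ell/2$; the ``upper'' component (containing the ancestors of $v^*$) has the same property, since $\sigma(v^*) > \ell/2$. Each candidate segment's boundary in $T$ is contained in $\{v^*\} \cup (\bnd I \cap V(C))$.

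I would next bin-pack the candidate segments into at most five final segments of size at most $\ell/2$. Since the items sum to $\ell$ and each has size at most $\ell/2$, a first-fit argument packs them into a small constant number of bins of capacity $\ell/2$. When $|\bnd I| \leq 1$, or $|\bnd I| = 2$ with $v^*$ lying on the unique $a$-$b$ path $P$ inside $I$, the two vertices of $\bnd I$ are either absent or lie in distinct candidate components: reserving at most one bin per ``special'' candidate (containing $a$ or $b$) and bin-packing the remainder then yields at most five final segments, each with boundary of size at most two.

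The main obstacle is the sub-case $|\bnd I| = 2$ in which no centroid lies on $P$: then $a$ and $b$ both lie in a single component of $I - v^*$, so a naive merge produces a size-three boundary $\{v^*,a,b\}$. I would address this by introducing a second pivot $u^*$ on $P$, namely the median of $\{a,b,v^*\}$ in $I$ (equivalently, the vertex of $P$ closest to $v^*$ in $I$), and splitting $I$ at both $v^*$ and $u^*$. After this double split, $a$, $b$, and the subtree rooted at $v^*$ end up in distinct pieces, each forming a candidate segment with boundary of size at most two. Confirming that bin-packing these pieces into at most five final segments never exceeds any boundary budget is the main technical challenge; the factor five arises precisely because we may need up to three ``special'' pieces (containing $a$, $b$, and $v^*$ respectively), leaving only two bins for the remaining regular pieces.
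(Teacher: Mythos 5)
Your proposal takes a genuinely different route from the paper's, and the core step is left unfinished, so there is a gap.

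The paper decouples the size constraint from the boundary constraint. It finds one centroid $z$ of $I$, giving branches $\widehat{I}_x$ (each with at most $\ell/2$ edges, and all containing $z$), and then repeatedly merges the \emph{two smallest} branches as long as their union has at most $\ell/2$ edges. Since when four or more trees remain the two smallest together have at most $\ell/2$ edges, this terminates with at most three subtrees of size at most $\ell/2$ each. Because $|\bnd I|\le 2$, at most one of these subtrees $J$ can have boundary of size $3$ (namely $\{z\}$ together with both vertices of $\bnd I$); that single $J$ is then split by a vertex $d$ chosen so that every component of $J-d$ sees at most one vertex of $\bnd J$, yielding three segments. The total is $3-1+3=5$.

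You instead try to handle sizes and boundaries simultaneously: split at $v^*$, and in the hard case also at a second pivot $u^*$, then bin-pack. This is where the proof has a real hole. First, "first-fit packs them into a small constant number of bins" only gives at most $5$ bins from the size constraint alone (at most one bin can be less than half full), so there is no slack left to pay for boundary conditions; the paper's sharper merge-two-smallest argument is what brings the count down to $3$, leaving a budget of $2$ extra pieces for the boundary repair. Second, segments of $T$ must be \emph{connected}, which you do not account for: after splitting at $v^*$ and $u^*$, a piece incident only to $v^*$ cannot share a bin with a piece incident only to $u^*$ unless a connecting piece (the one between the pivots) is also in that bin, so the bin-packing is far more constrained than unconstrained first-fit. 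Third, you explicitly leave the two-pivot case ("the main technical challenge") unresolved; for instance, if after reserving three bins for the pieces through $a$, $b$, and $v^*$, the remaining pieces are three chunks each slightly larger than $\ell/4$, then naive first-fit needs three more bins, and merging one of them with a special piece requires checking both connectivity and the boundary budget simultaneously. The paper's merge-then-repair strategy is precisely designed to avoid these interactions, and I would recommend adopting that structure rather than trying to make the simultaneous bin-packing go through.
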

\begin{proof}
 For each edge $xy\in E(I)$, let $I_{y,x}$ and $I_{x,y}$ be the connected components of $I-xy$ that contain $x$ and $y$, respectively. Let $\vec{I}$ be the orientation of $I$ where each edge $xy$ is oriented towards $x$ if $|E(I_{y,x})|>|E(I_{x,y})|$ and towards $y$ if $|E(I_{y,x})|<|E(I_{x,y})|$; in case $|E(I_{y,x})|=|E(I_{x,y})|$, the edge $xy$ is oriented in any way. Since $I$ has $\ell$ edges and $\ell+1$ nodes, there is a node $z$ of $I$ that has outdegree $0$ in $\vec{I}$. This means that for every neighbor $x$ of $z$, we have $|E(I_{z,x})|\leq |E(I_{x,z})|$, implying $|E(I_{z,x})|<\ell/2$. Denote $I_x\coloneqq I_{z,x}$ and let $\widehat{I}_x$ be $I_x$ with the edge $xz$ added.
 
 We first argue that $I$ can be edge-partitioned into at most $3$ subtrees (not necessarily segments), each with at most $\ell/2$ edges. Consider first the corner case when there exists a neighbor $x$ of $z$ such that $\widehat{I}_x$ has more than $\ell/2$ edges. Then both $I_x=I_{z,x}$ and $I_{x,z}$ have exactly $\frac{\ell-1}{2}$ edges each, so we can partition $I$ into $I_{z,x}$, $I_{x,z}$, and a separate subtree consisting only of the edge $xz$. This case being resolved, we can assume that each tree $\widehat{I}_x$ has at most $\ell/2$ edges. Starting with the set of trees $\mathcal{T}\coloneqq\{\widehat{I}_x\colon x\textrm{ is a neighbor of }z\}$, iteratively apply the following procedure: take two trees from $\mathcal{T}$ with the smallest edge counts, and replace them with their union, provided this union has at most $\ell/2$ edges. The procedure stops when this assertion fails to be satisfied. Observe that the procedure can be carried out as long as $|\mathcal{T}|\geq 4$, for then the two trees from $\mathcal{T}$ that have the smallest edge counts together include at most half of the edges of $I$. Therefore, at the end we obtain the desired edge-partition of $I$ into at most three subtrees.
 
 All in all, in both cases we edge-partitioned $I$ into at most three subtrees, each having at most $\ell/2$ edges. Since $|\bnd I|\leq 2$, it is easy to see that all of those subtrees are already segments (i.e. have boundaries of size at most $2$) apart from at most one, say $J$, which may have a boundary of size $3$. Supposing that $J$ exists, let $\bnd J=\{a,b,c\}$. Then there exists a node $d$ of $J$ such that every connected component of $J-d$ contains at most one of the vertices $a,b,c$. It is now straightforward to edge-partition $J$ into three trees so that the boundary of each of them consists of $d$ and one of the vertices $a,b,c$. Thus, replacing $J$ with those three segments yields an edge-partition of $I$ into at most $5$ segments, each with at most $\ell/2$ edges.
\end{proof}

\section{Isolating Hamiltonian cycles}\label{sec:hamcycle}
In this section we prove Theorems~\ref{thm:hc-tree},~\ref{thm:hc-gen}, and~\ref{thm:planar}. We begin by defining {\em{configurations}} for Hamiltonian cycles, which reflect the states of a natural dynamic programming algorithm for detection of a Hamiltonian cycle in a bounded-treewidth graph. Then we use the rank-based approach to bound the number of {\em{minimum weight compliant edge sets}} (see Theorem~\ref{thm:rankbased}). This technical result captures the essence of the rank-based approach and will be used in all subsections that follow. Next, we prove Theorem~\ref{thm:hc-gen} in Section~\ref{sec:HCgeneral}. Then Theorem~\ref{thm:hc-tree} is proved in Section~\ref{sec:ham-tw}. Finally, in Section~\ref{sec:HCdecomp} we first recall basic definitions and facts about separable graph classes, then we give a decomposition theorem (Theorem~\ref{thm:separable}) for such classes that produces a low-depth elimination forest with several important technical properties, and finally we use this decomposition theorem to prove Theorem~\ref{thm:planar}. 

\subsection{Configurations for Hamiltonian cycles}
Let us fix a graph $G$. An edge set $S\subseteq E(G)$ is called a \emph{partial solution} if every vertex of $G$ is incident to at most two edges of $S$ and $S$ has no cycles.
The following notion of a \emph{configuration} describes the behavior of a partial solution with respect to a set of vertices.

\begin{definition}[Configurations]
	For $X \subseteq V(G)$, we define the set of {\em{configurations}} on $X$ as: 
	$$\conf(X)\coloneqq\{\,(V_0,V_1,V_2,M)\ \colon\ (V_0,V_1,V_2) \textrm{ is a partition of }X\textrm{ and }M\textrm{ is a perfect matching on }V_1\,\}.$$ 
\end{definition}

Given a subgraph $H$ of $G$, one can view the configurations on $X \subseteq V(H)$ as all possible different ways that a partial solution may behave on $X$. 
A vertex is then in the set $V_i$ if it is incident to exactly $i$ edges of the partial solution. The matching $M$ on $V_1$ describes the endpoints of each path in the partial solution. This intuition is formalized in the following definition.

\begin{definition}  \label{def:c(S)}
	Let $X\subseteq V(G)$ be a set of vertices of $G$ and let $S \subseteq E(G)$ be a partial solution. Then define the {\em{configuration of $S$ on $X$}} as	
	$c_X(S) \coloneqq (V_0,V_1,V_2,M) \in \conf(X)$, where
	\begin{itemize}[nosep]
		\item $V_0\coloneqq \{ v \in X\colon v$ is not incident to any edge of $S\}$,
		\item $V_1\coloneqq \{ v \in X\colon v$ is incident to exactly one edge of $S\}$,
		\item $V_2\coloneqq \{ v \in X\colon v$ is incident to exactly two edges of $S\}$,
		\item $M\coloneqq \{ \{u,v\} \in \binom{V_1}{2}\colon$ there is a path with edges from $S$ connecting $u$ and $v\}$,
	\end{itemize}
	We omit $X$ in the notation and write $c(S)$ when $X$ is clear from context. 
\end{definition}
Note that in the above definition $M$ is indeed a matching, because each $v \in V_1$ is connected to exactly one $u \in V_1$ through $S$, as any partial solution covers each vertex at most twice. For an example of deriving $c_X(S)$ from a partial solution $S$, see Figure~\ref{fig:configurations}.

\begin{figure}[ht!]
	\centering
	\begin{minipage}{.45\textwidth}
		\centering
		\vspace{2pt}
		\scalebox{0.7}{\begin{tikzpicture}[scale = 0.7,shorten >=1pt,auto,node distance=1cm,
	thick,main node/.style={circle,fill=white!20,draw,font=\sffamily\Large\bfseries}]

	\draw[black, rounded corners=2pt] (-0.5,-0.5) rectangle (0.5,8.5);
	\node[] at (0,9.5) {\Large $X$};
	
	\draw[blue,rounded corners= 30pt]  (0,0)-- (2.6,0.5) --  (0,1);

	\draw[blue] (0,1) .. controls (1.5,1.1) and (1.5,1.9) .. (0,2);
	
	\draw[blue] (0,4) .. controls (3.5,3.8) and (3.5,5.2) .. (0,5);

	\draw[blue,rounded corners = 10pt] (0,0) -- (2.5,-0.05) -- (4.2,0.5) -- (2.8,1) -- (4.2,2) -- (2,3.03) --(0,3);
	
		\filldraw[black] (0,0) circle (2pt);
	\filldraw[black] (0,1) circle (2pt);
	\filldraw[black] (0,2) circle (2pt);
	\filldraw[black] (0,3) circle (2pt);
	\filldraw[black] (0,4) circle (2pt);
	\filldraw[black] (0,5) circle (2pt);
	\filldraw[black] (0,6) circle (2pt);
	\filldraw[black] (0,7) circle (2pt);
	\filldraw[black] (0,8) circle (2pt);
	
	\filldraw[black] (1.7,0.5) circle (2pt);
	\filldraw[black] (2,4.9) circle (2pt);
	\filldraw[black] (2,4.1) circle (2pt);
	
	\filldraw[black] (2.5,0) circle (2pt);
	\filldraw[black] (3.9,0.5) circle (2pt);
	\filldraw[black] (3.1,1) circle (2pt);
	\filldraw[black] (3.9,2) circle (2pt);
	\filldraw[black] (2,2.95) circle (2pt);

	\draw[black,rounded corners=2pt] (8.5,5.6) rectangle (9.5,8.5);
	\draw[black,rounded corners=2pt] (8.5,1.6) rectangle (9.5,5.4);
	\draw[black,rounded corners=2pt] (8.5,-0.5) rectangle (9.5,1.4);
	\node[] at (9,9.5) {\Large $X$};

	\draw[red] (9,2) .. controls (10.5,2.1) and (10.5,2.9) .. (9,3);
	\draw[red] (9,4) .. controls (10.5,4.1) and (10.5,4.9) .. (9,5);
	\node[red] at (10,3.5) {\Large $M$};
	\node[] at (8,7) {\Large $V_0$};
	\node[] at (8,3.5) {\Large $V_1$};
	\node[] at (8,0.5) {\Large $V_2$};
	
	\node[blue] at (3,3.5) {\Large $S$};
	
	\filldraw[black] (9,0) circle (2pt);
	\filldraw[black] (9,1) circle (2pt);
	\filldraw[black] (9,2) circle (2pt);
	\filldraw[black] (9,3) circle (2pt);
	\filldraw[black] (9,4) circle (2pt);
	\filldraw[black] (9,5) circle (2pt);
	\filldraw[black] (9,6) circle (2pt);
	\filldraw[black] (9,7) circle (2pt);
	\filldraw[black] (9,8) circle (2pt);

	\draw[->] (4.8,4) -- (6.5,4);
	\node[] at (5.5,4.5) {\Large{$c_X(S)$}};
	
\end{tikzpicture}}
		\captionof{figure}{Example partial solution $S$ and its configuration $c_X(S)=(V_0,V_1,V_2,M)$ on a set~$X$.}
		\label{fig:configurations}
	\end{minipage} \hspace{30pt}
	\begin{minipage}{.4\textwidth}
		\centering
		\scalebox{0.7}{\begin{tikzpicture}[scale = 0.7,,shorten >=1pt,auto,node distance=1cm,thick,main node/.style={circle,fill=white!20,draw,font=\sffamily\Large\bfseries}]
		
		\draw [draw=black, fill=black,fill opacity=0.05, draw opacity = 0.5,rounded corners= 2pt]
		(-0.6,-0.6) rectangle (5,8.6) ;

		\draw[red] (0,2) .. controls (-1.5,2.1) and (-1.5,2.9) .. (0,3);
		\draw[red] (0,4) .. controls (-1.5,4.1) and (-1.5,4.9) .. (0,5);
		
		\draw[blue] (0,3) .. controls (3,2) and (3,5) .. (0,4);
		\draw[blue] (0,5) .. controls (2.7,5) and (2.7,6) .. (0,6);
		\draw[blue] (0,6) .. controls (1.5,6) and (1.5,7) .. (0,7);
		\draw[blue] (0,7) .. controls (2,7) and (2,8) .. (0,8);
		\draw[blue,rounded corners=10 pt] (0,8) -- (2.15,8.05) -- (3.2,7.4) -- (4.14,6.5) -- (3.75,3) -- (0,2);

		\filldraw[black] (0,0) circle (2pt);
		\filldraw[black] (0,1) circle (2pt);
		\filldraw[black] (0,2) circle (2pt);
		\filldraw[black] (0,3) circle (2pt);
		\filldraw[black] (0,4) circle (2pt);
		\filldraw[black] (0,5) circle (2pt);
		\filldraw[black] (0,6) circle (2pt);
		\filldraw[black] (0,7) circle (2pt);
		\filldraw[black] (0,8) circle (2pt);
		
		\filldraw[black] (3.5,3) circle (2pt);
		\filldraw[black] (2,5.5) circle (2pt);
		\filldraw[black] (2,4) circle (2pt);
		\filldraw[black] (2,4) circle (2pt);
		\filldraw[black] (1.5,7.5) circle (2pt);
		\filldraw[black] (2,8) circle (2pt);
		\filldraw[black] (4,6.5) circle (2pt);

		\filldraw[black] (3,7.5) circle (2pt);
		\filldraw[black] (2,3) circle (2pt);

		\draw[black,rounded corners= 2pt] (-.5,5.6) rectangle (0.5,8.5);
		\draw[black,rounded corners= 2pt] (-.5,1.6) rectangle (0.5,5.4);
		\draw[black,rounded corners= 2pt] (-.5,-0.5) rectangle (0.5,1.4);
		\node[] at (0,9.5) {\Large $X$};
		\node[white!50!black] at (3,9.5) {\Large $V(H)$};

		\node[red] at (-2,3) {\Large $M$};
		\node[] at (-1,7) {\Large $V_0$};
		\node[] at (-1,3.5) {\Large $V_1$};
		\node[] at (-1,0.5) {\Large $V_2$};
		\node[blue] at (4.5,4) {\Large $S$};

\end{tikzpicture}}
		\captionof{figure}{Example compliant partial solution $S$ for a configuration $c=(V_0,V_1,V_2,M) \in \conf(X)$.}
		\label{fig:compliant}
	\end{minipage}
\end{figure}

We can use configurations to tell whether two partial solutions together form a Hamiltonian cycle. 
Let $H$ be a subgraph of $G$ and let $X \subseteq V(H)$. 
Assume that there exists a partial solution $S$ that visits only vertices from $(V(G)\setminus V(H))\cup X$, where every vertex of $V(G)\setminus V(H)$ is visited exactly twice. Then we only need to know $c_X(S)$ to determine which partial solutions $S'\subseteq E(H)$ would combine with $S$ to a Hamiltonian cycle in $G$. We say that any such partial solution is \emph{compliant} with $c_X(S)$, as expressed formally in the next definition.

\begin{definition}[Compliant partial solution]
	Let $H$ be a subgraph of $G$ and let $X \subseteq V(H)$. A configuration $c = (V_0,V_1,V_2,M) \in \conf(X)$ and a partial solution $S \subseteq E(H)$ are \emph{compliant} if $S\cap M=\emptyset$ and $S \cup M$ forms a Hamiltonian cycle on $V(H) \setminus V_2$.
\end{definition}

See Figure~\ref{fig:compliant} for an example of a compliant partial solution.

In the sequel we will be trying to argue that some weight function $\omega$ is isolating the family of Hamiltonian cycles in the given graph $G$ with high probability. In all cases this will be done by induction on larger and larger subgraphs of $G$, where at each point we argue that a suitable family of partial solutions is isolated with high probability. The following definition facilitates this discussion.

\begin{definition}[Minimum weight compliant partial solution]
	Let $H$ be a subgraph of $G$, $X \subseteq V(H)$, $c \in \conf(X)$, and let $\omega\colon E(G) \to \N$ be a weight function on the edges of $G$. Then we define the set $\Min(\omega,H,c)$ of {\em{minimum weight partial solutions compliant with $c$}} as the set of those partial solutions $S\subseteq E(H)$ that
	\begin{itemize}[nosep]
	 \item are compliant with $c$, and
	 \item subject to the above, have the smallest possible weight $\omega(S)$.
	\end{itemize}
\end{definition}

\subsection{Rank-based approach}
We will use the {\em{rank-based approach}}, introduced by Cygan et al. in~\cite{CyganKN18}, as a tool in our analysis of isolation schemes. Let $X$ be a set of vertices. Then define the \emph{compatibility matrix} $\Hh_X$ as the matrix with entries indexed by $\Hh_X[M_1,M_2]$ for $M_1, M_2$ perfect matchings on $X$, where
$$\Hh_X[M_1,M_2]= \begin{cases}1 &\text{ if } M_1 \cup M_2 \text{ is a simple cycle,}\\ 0 &\text{ otherwise.} \end{cases}$$
Note that $\Hh_X[M_1,M_2]$ has $2^{\Oh(|X|\log |X|)}$ rows and columns. The crux of the rank-based approach is that in spite of that, this matrix has a small rank over the two-element field $\mathbb{F}_2$.

\begin{theorem}[Rank-based approach,\cite{CyganKN18}]\label{thm:RankBased}
For any set $X$, the rank of $\mathcal{H}_X$ over $\mathbb{F}_2$ is equal to $2^{|X|/2-1}$.	
\end{theorem}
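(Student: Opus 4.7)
The plan is to prove matching inequalities $\mathrm{rank}_{\mathbb{F}_2}(\mathcal{H}_X) \leq 2^{n-1}$ and $\mathrm{rank}_{\mathbb{F}_2}(\mathcal{H}_X) \geq 2^{n-1}$, where $n \coloneqq |X|/2$ (the statement is vacuous when $|X|$ is odd, as there are then no perfect matchings on $X$). The numerical value $2^{n-1}$ strongly suggests that each perfect matching $M$ on $X$ should be encoded by a summary vector $\phi(M) \in \mathbb{F}_2^{2^{n-1}}$, with $\mathcal{H}_X[M_1, M_2]$ expressible as an $\mathbb{F}_2$-bilinear function of $\phi(M_1)$ and $\phi(M_2)$.

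For the upper bound, I would aim for an explicit factorization $\mathcal{H}_X = A \cdot B^{\top}$ over $\mathbb{F}_2$ with inner dimension $2^{n-1}$. A natural inductive setup is to fix a vertex $x_0 \in X$ and partition the perfect matchings by the identity of the vertex matched with $x_0$. For matchings $M_1$ pairing $x_0$ with $x_j$ and $M_2$ pairing $x_0$ with $x_{j'}$, one checks that $M_1 \cup M_2$ is a single Hamiltonian cycle iff $j \neq j'$ and the two residual matchings on $X \setminus \{x_0\}$ assemble into a single Hamiltonian path from $x_j$ to $x_{j'}$. I would first prove, by induction on $n$, an analogous rank bound for a companion family of \emph{path matrices} (indexed by perfect matchings of $X\setminus\{x_0, x_j\}$ for varying $j$), losing a factor of two per induction step; accounting for all choices of $j$ and $j'$ simultaneously, and exploiting $\mathbb{F}_2$-cancellations to avoid a naive $(2n-1)$-fold blow-up, should recover the desired $2^{n-1}$ bound on the cycle matrix. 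An alternative route avoiding induction is to exhibit directly a canonical reference family $\mathcal{C}$ of size $2^{n-1}$ (for instance, matchings obtained from a fixed Hamiltonian path on $X$ by independent local twists) and define $\phi(M)(N)$ for $N \in \mathcal{C}$ to record a parity invariant of $M \cup N$ that suffices to reconstruct $\mathcal{H}_X[M, \cdot]$.

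For the lower bound, I would construct an explicit family of $2^{n-1}$ perfect matchings $\{M_b : b \in \{0,1\}^{n-1}\}$ whose induced submatrix of $\mathcal{H}_X$ is non-singular over $\mathbb{F}_2$. A concrete candidate: arrange the vertices on a cycle $v_0, v_1, \ldots, v_{2n-1}$ and, for each $b \in \{0,1\}^{n-1}$, build $M_b$ from $n-1$ independent local patterns on disjoint blocks of four consecutive vertices, each pattern controlled by one bit of $b$. The resulting submatrix should decompose as a Kronecker product of $n-1$ copies of a $2 \times 2$ block obtained from the $n=2$ base case; verifying that this block is invertible over $\mathbb{F}_2$ is a direct calculation on four matchings.

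The main obstacle is the upper bound: pinpointing the correct reference family $\mathcal{C}$ of size exactly $2^{n-1}$ is nontrivial, and the required parity identity exploits a cancellation specific to characteristic $2$ that is not apparent over other fields. The lower bound, by contrast, reduces to a finite combinatorial verification on an explicit family, and the base-case calculations on small $|X|$ already exhibit the key pattern one needs to generalize.
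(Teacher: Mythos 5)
The paper does not prove Theorem~\ref{thm:RankBased}; it is imported from~\cite{CyganKN18} as a black box, so there is no in-paper proof to compare your sketch against. Judging the sketch on its own, the two halves fare quite differently.

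The upper bound is the crux and, as you acknowledge, the sketch does not get there. Fixing $x_0$ and splitting on its partner $x_j$ produces $2n-1$ "path matrices," and the straightforward recursion multiplies the rank bound by a factor on the order of $n$ per level rather than by $2$; the alleged "$\mathbb{F}_2$-cancellations to avoid a naive $(2n-1)$-fold blow-up" is precisely the nontrivial content of the theorem, and no mechanism for this cancellation is proposed. Note also that your path matrices have rows and columns indexed by matchings on \emph{different} vertex sets ($X\setminus\{x_0,x_j\}$ for varying $j$), so even formulating the inductive invariant cleanly is already an issue. The argument in~\cite{CyganKN18} does not proceed by this induction: it exhibits an explicit set of $2^{n-1}$ "basis" matchings built from local choices threaded along a fixed spanning path of $X$, gives an explicit $\mathbb{F}_2$-linear expansion of every row of $\mathcal{H}_X$ in terms of the basis rows (this is the real work), and then checks linear independence of those rows directly.

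Your lower-bound candidate has a concrete bug. With $2n$ vertices on a cycle, $n-1$ \emph{disjoint} blocks of four consecutive vertices do not fit (they would require $4(n-1)>2n$ vertices once $n\geq 3$), so the count is internally inconsistent. More fundamentally, if each $M_b$ restricts to a perfect matching on each of several disjoint blocks, then so does $M_b\cup M_{b'}$, which therefore decomposes into at least one cycle per block and is never a single spanning cycle; the induced submatrix of $\mathcal{H}_X$ would be identically zero. A correct family must thread a single global spanning structure through all the local choices --- which is exactly how the explicit basis of~\cite{CyganKN18} is built --- so the Kronecker intuition is sound in spirit but cannot be realized via independent, disjoint blocks.
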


We use Theorem~\ref{thm:RankBased} to prove that the total number of minimum weight compliant solutions is always relatively small, no matter what the weight function is. The following statement will be reused several times in the sequel. Note that a trivial cardinality argument would yield an upper bound of the form $2^{\Oh(|X|\log |X|)}$; the point of the rank-based approach is to reduce this to $2^{\Oh(|X|)}$.

\begin{theorem}\label{thm:rankbased} 
Let $G$ be a graph, $X\subseteq V(G)$, and $\omega\colon V(G) \to \N$ be a weight function such that for all $c \in \conf(X)$, we have $|\Min(\omega,G,c)|\le 1$. Then $$\left|\bigcup_{c \in \conf(X)}\Min(\omega,G,c)\right|\le 2^{\Oh(|X|)}.$$ 
\end{theorem}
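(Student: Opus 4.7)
The plan is to reduce the count to an application of the rank-based theorem (Theorem~\ref{thm:RankBased}) by partitioning $\Ss \coloneqq \bigcup_{c\in \conf(X)} \Min(\omega,G,c)$ according to the degree pattern that each solution induces on $X$. Observe first that for any partial solution $S$, the triple $(W_2,W_1,W_0)$, where $W_i$ denotes the set of vertices of $X$ incident to exactly $i$ edges of $S$, is determined by $S$ alone; moreover, if $S$ is compliant with some $c=(V_0,V_1,V_2,M)$, then necessarily $(V_0,V_1,V_2)=(W_2,W_1,W_0)$. So I would group the elements of $\Ss$ by this triple, obtaining at most $3^{|X|}$ subclasses $\Ss_{V_0,V_1,V_2}$; it then suffices to show $|\Ss_{V_0,V_1,V_2}|\le 2^{|V_1|/2}$, yielding $|\Ss| \le 3^{|X|}\cdot 2^{|X|/2}=2^{\Oh(|X|)}$.

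Fix such a triple. Each $S\in \Ss_{V_0,V_1,V_2}$ is a disjoint union of paths covering $V(G)\setminus V_2$ whose endpoints are exactly $V_1$, so it naturally defines a perfect matching $\pi(S)$ on $V_1$ pairing the two endpoints of each path. With the degree pattern on $X$ already enforced by the fixed triple, compliance of $S$ with $c=(V_0,V_1,V_2,M)$ then reduces to the statement that $M\cup \pi(S)$ is a single cycle on $V_1$, that is, $\Hh_{V_1}[M,\pi(S)]=1$. Thus, writing $v_S$ for the row of $\Hh_{V_1}$ indexed by $\pi(S)$, the matchings $M$ for which $S$ is compliant are exactly the support of $v_S$.

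The crux of the argument is to prove that the family $\{v_S : S\in \Ss_{V_0,V_1,V_2}\}$ is linearly independent over $\mathbb{F}_2$, which by Theorem~\ref{thm:RankBased} forces $|\Ss_{V_0,V_1,V_2}|\le 2^{|V_1|/2-1}$. I would enumerate $\Ss_{V_0,V_1,V_2}=\{S_1,S_2,\ldots\}$ with $\omega(S_1)\le \omega(S_2)\le \cdots$, breaking ties arbitrarily, and show inductively that $v_{S_i}$ is not an $\mathbb{F}_2$-linear combination of $\{v_{S_j} : j<i\}$. To see this, pick a matching $M_{S_i}$ such that $S_i\in \Min(\omega,G,(V_0,V_1,V_2,M_{S_i}))$. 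Evaluating a hypothetical relation $v_{S_i}=\sum_{j\in J} v_{S_j}$, with $J\subseteq \{1,\ldots,i-1\}$, at the coordinate $M_{S_i}$ gives $1=\Hh_{V_1}[M_{S_i},\pi(S_i)]$ on the left, so some $j\in J$ must satisfy $\Hh_{V_1}[M_{S_i},\pi(S_j)]=1$, meaning $S_j$ is also compliant with $(V_0,V_1,V_2,M_{S_i})$; together with $\omega(S_j)\le \omega(S_i)$ and $S_j\neq S_i$, this contradicts the uniqueness hypothesis $|\Min(\omega,G,(V_0,V_1,V_2,M_{S_i}))|\le 1$. I expect the main conceptual step to be spotting that $\pi(S)$ is precisely the right invariant for transferring information about $S$ into the compatibility matrix $\Hh_{V_1}$; once this bridge is in place, the minimum-weight uniqueness hypothesis slides cleanly into the linear-independence argument, and the remaining bookkeeping is routine.
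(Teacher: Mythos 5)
Your proof is correct and follows essentially the same approach as the paper's: both partition the set of minimum-weight compliant solutions by the degree pattern induced on $X$, apply the $2^{|X|/2-1}$ rank bound on the compatibility matrix $\Hh_{V_1}$ from Theorem~\ref{thm:RankBased}, and use the uniqueness hypothesis $|\Min(\omega,G,c)|\le 1$ to establish linear independence of the rows indexed by the matchings $\pi(S)$ over $\mathbb{F}_2$. The paper encodes the degree-pattern partition implicitly through a block-diagonal matrix $\widehat{\Hh}$ on configuration pairs and proves independence by contradiction on a hypothetical nontrivial linear dependency evaluated at the maximizer of weight, while you order the solutions by weight and argue inductively; these are cosmetic reformulations of the same argument.
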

\begin{proof}
	Let $K \coloneqq \bigcup_{c \in \conf(X)}\Min(\omega,G,c)$ and let $C \coloneqq \{c(S)\colon S \in K\}$. 
	
	We first verify that $|C|=|K|$.
	By construction, we have $|C|\le |K|$. Assume for contradiction that $|C| < |K|$. Then there are two different partial solutions $S_1, S_2 \in K$ such that $c(S_1)=c(S_2)$. By construction and the assumptions, there are two different configurations $d_1,d_2\in \conf(X)$ such that $\Min(\omega,G,d_1)=\{S_1\}$ and $\Min(\omega,G,d_2)=\{S_2\}$. However, since $c(S_1)=c(S_2)$, it follows that for any configuration $d\in \conf(X)$, $S_1$~is compliant with $d$ if and only if $S_2$ is compliant with $d$. In particular, $S_1$ is compliant with $d_2$ and $S_2$ is compliant with $d_1$. This implies that $\omega(S_1)=\omega(S_2)$ and $S_2\in \Min(\omega,G,d_1)$ and $S_1\in \Min(\omega,G,d_2)$, a contradiction. Hence $|C|=|K|$. 
	
	Define a matrix $\widehat{\Hh}$ with both coordinates indexed by $\conf(X)$ such that for $c,c' \in \conf(X)$, where $c=(V_0,V_1,V_2,M)$ and $c'=(V_0',V_1',V_2',M')$:
	$$\widehat{\Hh}[c,c'] = \begin{cases}1 &\text{ if } V_0 = V_2',\, V_2 = V_0' \text{, and }M\cup M'\text{ is a simple cycle,}\\ 0 &\text{ otherwise.} \end{cases}$$  
	Notice that if we sort the indices of $\widehat{\Hh}$ by the partitions $(V_0,V_1,V_2)$, then $\widehat{\Hh}$ can be seen as a block diagonal matrix with one block for each partition, and this block is a compatibility matrix on $V_1$. That is, $$\widehat{\Hh} = \bigoplus_{V_0 \uplus V_1 \uplus V_2 = X}\Hh_{V_1},$$
	where $\bigoplus$ denotes the operator of combining several matrices into a single block diagonal matrix.
	By Theorem~\ref{thm:RankBased}, the rank over $\mathbb{F}_2$ of each of these blocks is bounded by $2^{|X|/2 -1}$, hence the rank over $\mathbb{F}_2$ of $\widehat{\Hh}$ is bounded by  $2^{|X|/2 -1} \cdot 3^{|X|} \le 2^{\Oh(|X|)}$.
	
	\newcommand{\mx}{\mathsf{max}}
	
	Next, we claim that the set of rows of $\widehat{\Hh}$ corresponding to the configurations of $C$ is linearly independent over $\mathbb{F}_2$. Assume not, hence there is a nonempty set of configurations $D\subseteq C$ such that $$\sum_{d\in D} \widehat{\Hh}[d, \cdot] = \mathbf{0},$$
	where $\mathbf{0}$ is the all-zero vector (all computations are performed in $\mathbb{F}_2$).
	For each $d\in D$ there is some $S_d\in K$ such that $d=c(S_d)$. Let $d_{\mx}$ be a configuration of $D$ for which $\omega(S_{d_{\mx}})$ is the largest possible. Since $d_{\mx} \in C$, we have that $\Min(\omega,G,c)=\{S_{d_{\mx}}\}$ for some $c \in \conf(X)$ and hence $\widehat{\Hh}[d_{\mx},c]=1$. However, as $\sum_{d\in D} \widehat{\Hh}[d, \cdot] = \mathbf{0}$, there must be another $d'\in D$, $d'\neq d_{\mx}$, such that also $\widehat{\Hh}[d',c] =1$.
	This means that $d'$ is compliant with $c$, which implies that $\omega(S_{d'})>\omega(S_{d_{\mx}})$ by $\Min(\omega,G,c)=\{S_{d_{\mx}}\}$. This contradicts the maximality of $\omega(S_{d_{\mx}})$.
	
	We conclude that the set of rows of $\widehat{\Hh}$ corresponding to $C$ are indeed linearly independent over $\mathbb{F}_2$. Therefore, $|K|=|C|$ is upper bounded by the rank of $\widehat{\Hh}$ over $\mathbb{F}_2$, which is at most $2^{\Oh(|X|)}$. 
\end{proof}

\subsection{Hamiltonian cycles in general graphs using $\Oh(n)$ random bits} \label{sec:HCgeneral}

We now use the tools prepared so far to prove Theorem~\ref{thm:hc-gen}.
The goal is to isolate all Hamiltonian cycles in an undirected graph $G=(V,E)$ using $\Oh(n)$ random bits, where $n$ is the vertex count. First we give the isolation procedure. Then we analyze the probability of isolating all Hamiltonian cycles using configurations, compliant partial solutions, and the rank-based approach (through Theorem~\ref{thm:rankbased}). Throughout the subsection we assume without loss of generality that $\log n$ is an integer. 

As usual with isolation schemes, we assume that the vertex set of the considered graph $G$ is $V=[n]$. We will apply induction on specific subgraphs of $G$ called {\em{intervals}}.
\begin{definition}[Interval of $G$]
For integers $1\leq s\leq t\leq n$ and $1\leq s'\leq t'\leq n$, the {\em{interval}} $G\angles{s,t,s',t'}$ is the graph $(V',E')$, where 
$$V'\coloneqq \{s,\dots,t\} \cup \{s',\dots,t'\}\qquad\textrm{and}\qquad E' \coloneqq \{uv\colon u\in \{s,\dots,t\}, v\in \{s',\dots,t'\}, uv\in E\}.$$ By $V\angles{s,t,s',t'}$ we denote the vertex set $V'$ of the interval $G\angles{s,t,s',t'}$. 
\end{definition}
Note that $G\angles{s,t,s,t}$ is just the subgraph of $G$ induced by $\{s,\dots,t\}$. On the other hand, if $\{s,\dots,t\} \cap \{s',\dots,t'\} = \emptyset$, then $G\angles{s,t,s',t'}$ is a bipartite graph, with $\{s,\dots,t\}$ and $\{s',\dots,t'\}$ being the sides of the bipartition.

\paragraph{Isolation scheme.}
We first present the isolation scheme. Let $\id\colon E(G) \to \{1,\dots,|E(G)|\}$ be any bijection that assigns to each edge $e \in E(G)$ its unique \emph{identifier} $\id(e)$. Let $C$ be some large enough constant, to be chosen later. Then independently at random sample $1+\log n$ primes 
$p_0,p_1,\dots,p_{\log n}$
so that $p_i$ is sampled uniformly among primes in the range $\{1,\ldots,M_i\}$, where $M_i \coloneqq 2^{C(\log n + 2^i)}$. Note that choosing each $p_i$ requires $C(\log n + 2^i)$ random bits, hence we have used $\Oh(n)$ random bits in total. 

Next, we inductively define weights functions $\omega_0,\dots,\omega_{\log n}$ on $E(G)$ as follows:
\begin{itemize}
	\item Set $\omega_0(e) \coloneqq 2^{\id(e)} \bmod p_0$ for all $e \in E(G)$. 
	\item For each $e \in E(G)$ and $i=1,\dots,\log n$, set 
	$$ \omega_i(e) \coloneqq M_{i-1}n \cdot \omega_{i-1}(e) + \left( 2^{\id(e)}\bmod p_i \right).  $$
\end{itemize}
Let $\omega \coloneqq \omega_{\log n}$ and observe that $\omega$ assigns weights bounded by $2^{\Oh(n)}$, as required.


\paragraph*{Analysis.} We will prove the following statement for all $0\le i \le \log n$ using induction on $i$.
	
	\IH{With probability at least $\left(1 - \frac{1}{n^{2}} \right)^{i+1}$, for all intervals $G\langle s,t,s',t' \rangle$ s.t. $t-s \le 2^i$ and $t'-s' \le 2^i$ and for each configuration $c \in \conf(V\angles{s,t,s',t'})$, there is at most one minimum weight (w.r.t. $\omega_i$) compliant partial solution, i.e. $|\Min({\omega_i},{G\langle s,t,s',t' \rangle},c)|\le 1$.}\\
	
	For $i=\log n$, the induction hypothesis gives us that for the complete interval $G=G\angles{1,1,n,n}$ and for the configuration $c = (\emptyset,\emptyset,V(G),\emptyset)$, there is at most one minimum weight compliant partial solution w.r.t. $\omega$. In other words, w.r.t. $\omega$ there is at most one minimum weight Hamiltonian cycle in $G$. This happens with probability at least $\left(1  - \frac{1}{n^{2}} \right)^{\log n +1} \ge 1  - \frac{1}{n}$. So it remains to perform the induction.
	
    \paragraph{Base step.} For $i=0$, we have $t-s \le 1$ and $t'-s' \le 1$. Hence each such interval $G\angles{s,t,s',t'}$ has at most $4$ edges. 
	Let $$Y \coloneqq \bigcup_{\substack{t-s\leq 1\\ t'-s'\leq 1}} 2^{E(G\angles{s,t,s',t'})}$$ and for each $S \in Y$, let $$x_S \coloneqq \sum_{e\in S}2^{\id(e)}.$$
	Observe that since the identifiers assigned to the edges are unique, the numbers $x_S$ are also pairwise different. 
	Also, note that $|Y|\le 16n^2$ as there are at most $n^2$ intervals
    considered, and for each of them there are at most $16$ possible subsets of
    the at most four edges. Recall that $M_0 = 2^{C(\log n +1)}$ and $p_0$ is
    drawn uniformly at random among the primes in the range $\{1,\dots,M_0\}$.
    Therefore, from Lemma~\ref{FKS} we can conclude that with probability at least 
	$$\left(1 -\frac{(n^2+1)(16n^2)^2}{2^{(C/2)(\log n +1)}}\right) \ge \left(1-\frac{1}{n^2}\right)$$ all the numbers $\{x_S\colon S \in Y\}$ have pairwise different remainders modulo $p_0$; here the last inequality holds for a large enough constant $C$.
	Since $\omega_0(S) \equiv x_S \bmod p_0$, this means that with probability at least $\left(1 -\frac{1}{n^2} \right)$, all $S \in Y$ receive pairwise different weights with respect to $\omega_0$. 
	Therefore, the induction hypothesis is true for $i=0$.

    \paragraph{Induction step.} Assume the induction hypothesis is true for all intervals $G\langle s,t,s',t' \rangle$ such that $t-s\le2^{i-1}$ and $t'-s'\le2^{i-1}$. 
	Let 
	\[
		Y' \coloneqq \bigcup_{\substack{t-s\leq 2^{i-1}\\ t'-s'\leq 2^{i-1}}}\  \bigcup_{c\in \conf(V\angles{s,t,s',t'})}\  \Min(\omega_{i-1},G\angles{s,t,s',t'},c)
	\]
		be the set of all the minimal partial solutions for those intervals. 
	Further, let $$Y \coloneqq \{ S_1 \cup S_2 \cup S_3 \cup S_4 \colon S_1, S_2, S_3, S_4 \in Y'\}$$ 
	be the set containing all combinations of four such partial solutions. The strategy is as follows. We first prove in Claim~\ref{claim:1B} that any relevant minimum weight compliant partial solution should be in $Y$.
	Then Claim~\ref{claim:1A} says that with hight probability, all partial solutions $S \in Y$ have pairwise different weights with respect to $\omega_i$. Hence, proving these two claims will be sufficient to make the induction hypothesis go through.
		
	\begin{claim}\label{claim:1B}
	Let $1\leq a\leq b\leq n$ and $1\leq a'\leq b'\leq n$ be such that $b-a\le 2^{i}$ and $b'-a' \le 2^{i}$, and let $c \in \conf(a,b,a',b')$. Then $\Min({\omega_{i}},{G\langle a,b,a',b' \rangle},c)\subseteq Y$.
	\end{claim}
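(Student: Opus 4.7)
My plan is to partition $G\angles{a,b,a',b'}$ into at most four ``quarter-intervals'' of halved dimensions, decompose $S$ accordingly, and argue that each piece is an $\omega_{i-1}$-minimum-weight compliant partial solution for its own local configuration. I bisect $\{a,\ldots,b\}$ at $m\coloneqq a+\lfloor(b-a)/2\rfloor$ and $\{a',\ldots,b'\}$ at $m'\coloneqq a'+\lfloor(b'-a')/2\rfloor$, and form
\[
    H_1\coloneqq G\angles{a,m,a',m'},\ H_2\coloneqq G\angles{a,m,m'+1,b'},\ H_3\coloneqq G\angles{m+1,b,a',m'},\ H_4\coloneqq G\angles{m+1,b,m'+1,b'}.
\]
Each $H_j$ has both dimensions at most $2^{i-1}$; degenerate quarters arising when $m=b$ or $m'=b'$ are dropped, with the empty partial solution (which lies in $Y'$) serving as padding. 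The edge sets $E(H_j)$ partition $E(G\angles{a,b,a',b'})$, so setting $S_j\coloneqq S\cap E(H_j)$ yields the decomposition $S = S_1\cup S_2\cup S_3\cup S_4$.

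Let $c_j\coloneqq c_{V(H_j)}(S_j)$ denote the configuration of $S_j$ on the full vertex set of $H_j$. The core claim is $S_j\in\Min(\omega_{i-1},H_j,c_j)$ for every $j$, which immediately gives $S_j\in Y'$ and hence $S\in Y$. Compliance of $S_j$ with $c_j$ is immediate from the definition of $c_j$. I prove minimality by contradiction: suppose some $S'_j\subseteq E(H_j)$ compliant with $c_j$ satisfies $\omega_{i-1}(S'_j) < \omega_{i-1}(S_j)$, and consider $S'\coloneqq(S\setminus S_j)\cup S'_j$. Since $S_j$ and $S'_j$ share the configuration $c_j$, the vertex degrees on $V(H_j)$ and the pairing of sub-path endpoints inside $H_j$ are invariant under the swap. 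Combined with the fact that $S$ and $S'$ agree on all the other quarters, this ensures that the global configuration of $S'$ on $V(G\angles{a,b,a',b'})$ equals $c$; moreover, sub-paths of $S'$ concatenate across vertices shared between the quarters in an identical pattern to those of $S$, so the single Hamiltonian-cycle structure of $S\cup M$ transfers to $S'\cup M$, witnessing that $S'$ is compliant with $c$.

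To close the argument, set $r_i(X)\coloneqq\sum_{e\in X}(2^{\id(e)}\bmod p_i)$, so that the recurrence defining $\omega_i$ gives
\[
    \omega_i(S) - \omega_i(S') = M_{i-1}n\cdot(\omega_{i-1}(S)-\omega_{i-1}(S')) + (r_i(S)-r_i(S')).
\]
The choice of parameters, together with $|S|,|S'|\le n$, makes the primary term strictly dominate the residual difference, hence $\omega_i(S') < \omega_i(S)$---contradicting $S\in\Min(\omega_i,G\angles{a,b,a',b'},c)$. The main conceptual obstacle will be the compliance-preservation step: one must argue carefully that $c_j$ fully encodes the ``interface'' of $S_j$ with the rest of $S$, so that replacing $S_j$ by any $c_j$-compliant $S'_j$ leaves the global path-endpoint matching $M$ and the single-cycle property of $S\cup M$ intact.
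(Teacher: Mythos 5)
Your overall outline (quarter the interval, decompose $S$ accordingly, argue each piece is a minimum-weight compliant solution, then lift from $\omega_{i-1}$ to $\omega_i$) matches the paper's strategy, but there is a genuine error in the central step: you define $c_j\coloneqq c_{V(H_j)}(S_j)$, the configuration \emph{of} $S_j$, and then assert ``Compliance of $S_j$ with $c_j$ is immediate.'' This is false. In the paper's formalism, the configuration a partial solution is compliant with is a description of the \emph{complement}: $S$ is compliant with $c=(V_0,V_1,V_2,M)$ when $S\cup M$ is a Hamiltonian cycle on $V(H)\setminus V_2$. If, say, $S_j$ is a single path on three vertices $1$--$2$--$3$, then $c_{V(H_j)}(S_j)=(\emptyset,\{1,3\},\{2\},\{\{1,3\}\})$, and $S_j\cup\{\{1,3\}\}$ is a triangle on $\{1,2,3\}$, not a cycle on $V(H_j)\setminus\{2\}=\{1,3\}$; so $S_j$ is \emph{not} compliant with its own configuration. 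The paper instead constructs $c_1$ from the configuration of $R\coloneqq S_2\cup S_3\cup S_4\cup M$ on the relevant boundary (and then folds the outer $V_2$ into the inner $V_2$), which is precisely the ``complementary'' object you would need.

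This error propagates into your minimality argument. You say that ``since $S_j$ and $S'_j$ share the configuration $c_j$, the vertex degrees \ldots{} and the pairing of sub-path endpoints inside $H_j$ are invariant under the swap.'' Two solutions compliant with the same configuration $c_j$ need not have the same internal path-endpoint matching: if $c_j=(V_0,V_1,V_2,M)$, any $S'_j$ with matching $M''$ such that $M''\cup M$ is a single cycle is compliant, and there can be several such $M''$ (indeed, this flexibility is exactly what the rank-based bound in Theorem~\ref{thm:rankbased} capitalizes on). What actually rescues the argument in the paper is not that $S'_j$ mimics $S_j$'s internal structure, but that $c_1$ is taken \emph{from the complement} $R$; compliance with $c_1$ is then, by definition, exactly the condition ensuring $S'_1\cup S_2\cup S_3\cup S_4$ is still compliant with $c$. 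Your closing remark correctly identifies that the ``interface'' must be encoded faithfully, but the chosen $c_j$ does not do this: you need the complementary configuration, not $c_{V(H_j)}(S_j)$. (A smaller point: with your choice of $m,m'$ the four edge sets $E(H_j)$ may overlap when $\{a,\ldots,b\}$ and $\{a',\ldots,b'\}$ intersect, so $S_j\coloneqq S\cap E(H_j)$ need not be a disjoint decomposition of $S$; the paper therefore phrases this as ``select'' a disjoint $S_1,\ldots,S_4$.)
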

\begin{proof}
	Take any $S \in \Min({\omega_{i}},{G\langle a,b,a',b' \rangle},c)$. 
	Let 
	$$r=\lceil (a+b)/2\rceil\qquad\textrm{and}\qquad r' =\lceil (a'+b')/2\rceil$$
	and let us select
	\begin{eqnarray*}& S_1\subseteq E(G\langle a,r-1,a',r'-1\rangle),&\qquad S_2 \subseteq E(G\langle a,r-1,r',b'\rangle),\\ & S_3 \subseteq E(G \langle r,b,a',r'-1\rangle),&\qquad S_4 \subseteq E(G\langle r,b,r',b'\rangle)
	\end{eqnarray*}
  so that $S_1,S_2,S_3,S_4$ are disjoint and $S = S_1 \cup S_2 \cup S_3 \cup S_4$. See Figure~\ref{fig:intervals} for an example.
  
	We argue that $S_1 \in \Min(\omega_{i-1},G\angles{a,r-1,a',r'-1},c_1)$ for some $c_1\in \conf(V\langle a,r-1,a',r'-1\rangle)$. Let $c=(V_0,V_1,V_2,M)$. Since $S\cup M$ is a simple cycle that visits all vertices of $V\angles{a,b,a',b'}$, we see that $R\coloneqq S_2\cup S_3\cup S_4\cup M$ is a partial solution in the graph $G\angles{a,b,a',b'}$ with the edges of $M$ added. Letting $(V_0',V_1',V_2',M')\coloneqq c_{V\angles{a,r-1,b,r-1}}(R)$, it follows that $S_1$ is compliant with the configuration
	$$c_1\coloneqq (V_0'\setminus (V_2\cap V\angles{a,r-1,b,r-1})),V_1',V_2'\cup (V_2\cap V\angles{a,r-1,b,r-1}),M').$$
	Moreover, that $S\in \Min(\omega_i,G\angles{a,b,a',b'},c)$ implies that $S_1\in \Min(\omega_i,G\angles{a,r-1,a',r'-1},c_1)$, for otherwise $S_1$ could be replaced in $S$ with a smaller-weight partial solution $S_1'$ that would be still compliant with $c_1$, and this would turn $S$ into a smaller-weight partial solution $S'=S_1'\cup S_2\cup S_3\cup S_4$ that would be still compliant with $c$. Finally, by the construction of $\omega_i$, $S_1\in \Min(\omega_i,G\angles{a,r-1,a',r'-1},c_1)$ entails $S_1\in \Min(\omega_{i-1},G\angles{a,r-1,a',r'-1},c_1)$.
	
	Therefore $S_1\in Y'$. Analogously we argue that $S_2,S_3,S_4 \in Y'$, hence we conclude that $S \in Y$.
\end{proof}

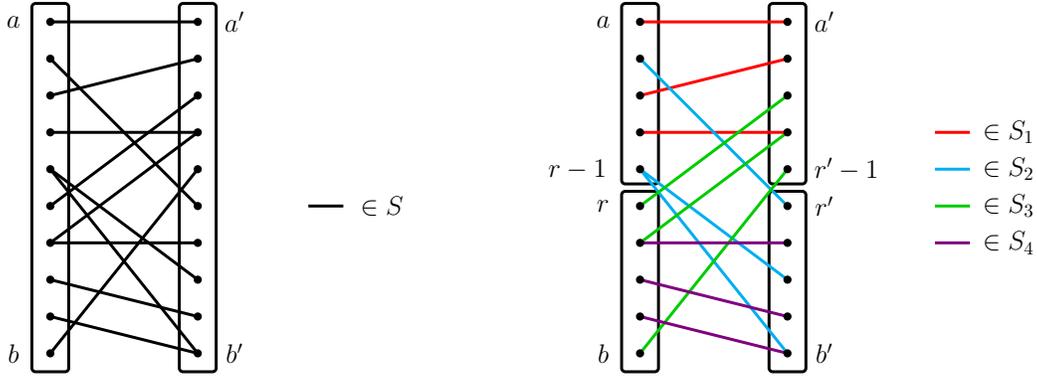
\begin{figure}[ht!]
	\centering
	\scalebox{0.7}{\begin{tikzpicture}[scale = 0.7,shorten >=1pt,auto,node distance=1cm,
	ultra thick,main node/.style={circle,fill=white!20,draw,font=\sffamily\Large\bfseries}]

	\draw[black,rounded corners=2pt] (-6.5,-0.5) rectangle (-5.5,9.5);
	\draw[black,rounded corners=2pt] (-2.5,-0.5) rectangle (-1.5,9.5);

	\node[] at (-7,9) {\Large $a$};
	\node[] at (-7,0) {\Large $b$};
	\node[] at (-1,9) {\Large $a'$};
	\node[] at (-1,0) {\Large $b'$};
	
	\draw[black] (-6,9) -- (-2,9);
	\draw[black] (-6,7) -- (-2,8);
	\draw[black] (-6,6) -- (-2,6);
	
	\draw[black] (-6,8) -- (-2,4);
	\draw[black] (-6,5) -- (-2,2);
	\draw[black] (-6,5) -- (-2,0);
	
	\draw[black] (-6,4) -- (-2,7);
	\draw[black] (-6,0) -- (-2,5);
	\draw[black] (-6,3) -- (-2,6);
	
	\draw[black] (-6,3) -- (-2,3);
	\draw[black] (-6,2) -- (-2,1);
	\draw[black] (-6,1) -- (-2,0);
	
	\filldraw[black] (-6,0) circle (2pt);
	\filldraw[black] (-6,1) circle (2pt);
	\filldraw[black] (-6,2) circle (2pt);
	\filldraw[black] (-6,3) circle (2pt);
	\filldraw[black] (-6,4) circle (2pt);
	\filldraw[black] (-6,5) circle (2pt);
	\filldraw[black] (-6,6) circle (2pt);
	\filldraw[black] (-6,7) circle (2pt);
	\filldraw[black] (-6,8) circle (2pt);
	\filldraw[black] (-6,9) circle (2pt);

	\filldraw[black] (-2,0) circle (2pt);
	\filldraw[black] (-2,1) circle (2pt);
	\filldraw[black] (-2,2) circle (2pt);
	\filldraw[black] (-2,3) circle (2pt);
	\filldraw[black] (-2,4) circle (2pt);
	\filldraw[black] (-2,5) circle (2pt);
	\filldraw[black] (-2,6) circle (2pt);
	\filldraw[black] (-2,7) circle (2pt);
	\filldraw[black] (-2,8) circle (2pt);
	\filldraw[black] (-2,9) circle (2pt);
	
	\draw[black] (1,4) -- (2,4);
	\node[] at (3,4) {\Large $\in S$};

	
	\draw[black,rounded corners=2pt] (9.5,-0.5) rectangle (10.5,4.4);
	\draw[black,rounded corners=2pt] (9.5,4.6) rectangle (10.5,9.5);
	\draw[black,rounded corners=2pt] (13.5,-0.5) rectangle (14.5,4.4);
	\draw[black,rounded corners=2pt] (13.5,4.6) rectangle (14.5,9.5);

	\node[] at (9,9) {\Large $a$};
	\node[] at (9,0) {\Large $b$};
	\node[] at (9,4) {\Large $r$};
	\node[] at (8.3,5) {\Large $r-1$};
	\node[] at (15,9) {\Large $a'$};
	\node[] at (15,0) {\Large $b'$};
	\node[] at (15,4) {\Large $r'$};
	\node[] at (15.6,5) {\Large $r'-1$};
	
	\draw[red] (10,9) -- (14,9);
	\draw[red] (10,7) -- (14,8);
	\draw[red] (10,6) -- (14,6);
	
	\draw[cyan] (10,8) -- (14,4);
	\draw[cyan] (10,5) -- (14,2);
	\draw[cyan] (10,5) -- (14,0);
	
	\draw[green!80!black] (10,4) -- (14,7);
	\draw[green!80!black] (10,0) -- (14,5);
	\draw[green!80!black] (10,3) -- (14,6);
	
	\draw[violet] (10,3) -- (14,3);
	\draw[violet] (10,2) -- (14,1);
	\draw[violet] (10,1) -- (14,0);
	
	\filldraw[black] (10,0) circle (2pt);
	\filldraw[black] (10,1) circle (2pt);
	\filldraw[black] (10,2) circle (2pt);
	\filldraw[black] (10,3) circle (2pt);
	\filldraw[black] (10,4) circle (2pt);
	\filldraw[black] (10,5) circle (2pt);
	\filldraw[black] (10,6) circle (2pt);
	\filldraw[black] (10,7) circle (2pt);
	\filldraw[black] (10,8) circle (2pt);
	\filldraw[black] (10,9) circle (2pt);

	\filldraw[black] (14,0) circle (2pt);
	\filldraw[black] (14,1) circle (2pt);
	\filldraw[black] (14,2) circle (2pt);
	\filldraw[black] (14,3) circle (2pt);
	\filldraw[black] (14,4) circle (2pt);
	\filldraw[black] (14,5) circle (2pt);
	\filldraw[black] (14,6) circle (2pt);
	\filldraw[black] (14,7) circle (2pt);
	\filldraw[black] (14,8) circle (2pt);
	\filldraw[black] (14,9) circle (2pt);
	
	\draw[red] (18,6) -- (19,6);
	\node[] at (20,6) {\Large $\in S_1$};
	\draw[cyan] (18,5) -- (19,5);
	\node[] at (20,5) {\Large $\in S_2$};	
	\draw[green!80!black] (18,4) -- (19,4);
	\node[] at (20,4) {\Large $\in S_3$};
	\draw[violet] (18,3) -- (19,3);
	\node[] at (20,3) {\Large $\in S_4$};
\end{tikzpicture}}
	\caption{Example of splitting a partial solution $S \in E(G\angles{a,b,a',b'}$ into four partial solutions $S_1,S_2,S_3,S_4$, where $S_1 \subseteq E(G\langle a,r-1,a',r'-1\rangle)$, $S_2 \subseteq E(G\langle a,r-1,r',b'\rangle)$, $S_3 \subseteq E(G \langle r,b,a',r'-1\rangle)$ and $S_4 \subseteq E(G\langle r,b,r',b'\rangle)$ with $r=\lceil (a+b)/2\rceil$ and $r'=\lceil (a'+b')/2\rceil$.}
	\label{fig:intervals}
\end{figure}

\begin{claim} \label{claim:1A}
	The following event happens with probability at least $\left(1-  \frac{1}{n^2} \right)^{i+1}$: for all different $S,S' \in Y$, it holds that $\omega_i(S) \neq \omega_i(S')$.		
\end{claim}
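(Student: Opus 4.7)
The proof naturally splits into three steps: bounding $|Y|$ via the induction hypothesis and the rank-based theorem, applying the FKS hashing lemma with the random prime $p_i$, and then converting the resulting residue distinctness into distinctness of the integer values $\omega_i(S)$ themselves. The relevant high-probability factor $(1-1/n^2)^{i+1}$ will arise as the product of the IH event (over $p_0,\ldots,p_{i-1}$) and a single new FKS event for $p_i$.

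First I would bound $|Y|$. Conditioning on the induction hypothesis event at level $i-1$, which holds with probability at least $(1-1/n^2)^i$ over the independent choices of $p_0,\ldots,p_{i-1}$, every interval $J$ appearing in the definition of $Y'$ satisfies $|\Min(\omega_{i-1}, J, c)|\le 1$ for all $c\in\conf(V(J))$. Theorem~\ref{thm:rankbased} then yields $|\bigcup_c \Min(\omega_{i-1},J,c)|\le 2^{\Oh(|V(J)|)} = 2^{\Oh(2^i)}$. Since there are at most $n^4$ choices of intervals $J$ of size at most $2^{i-1}$ (given by four integer parameters), we get $|Y'|\le n^4\cdot 2^{\Oh(2^i)}$, and consequently $|Y|\le |Y'|^4 \le n^{\Oh(1)}\cdot 2^{\Oh(2^i)}$.

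Second, I would apply the FKS hashing lemma. For each $S\in Y$ define $x_S\coloneqq\sum_{e\in S}2^{\id(e)}$; by uniqueness of binary representations these are pairwise distinct positive integers bounded by $2^{|E(G)|+1}$. Applying Lemma~\ref{FKS} to $\{x_S : S \in Y\}$ with $p_i$ drawn uniformly from the primes in $\{1,\ldots,M_i\}$, where $M_i = 2^{C(\log n+2^i)}$, the failure probability is at most $|E(G)|\cdot |Y|^2/\sqrt{M_i}$; plugging the bounds $|E(G)|\le n^2$ and $|Y|^2\le n^{\Oh(1)}\cdot 2^{\Oh(2^i)}$ shows that this is at most $1/n^2$ once $C$ is chosen as a sufficiently large universal constant. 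Hence, conditional on the IH event, with probability at least $1-1/n^2$ all residues $x_S\bmod p_i$ are pairwise distinct over $S\in Y$.

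Third, I would convert this into distinctness of the integers $\omega_i(S)$ themselves. By construction, $\omega_i(S) = M_{i-1}n\cdot\omega_{i-1}(S)+\psi_i(S)$ where $\psi_i(S)\coloneqq\sum_{e\in S}(2^{\id(e)}\bmod p_i)\equiv x_S\pmod{p_i}$. Thus any equality $\omega_i(S)=\omega_i(S')$ would imply, modulo $p_i$, the congruence $M_{i-1}n\cdot(\omega_{i-1}(S)-\omega_{i-1}(S'))\equiv x_{S'}-x_S\pmod{p_i}$, whose right-hand side is nonzero by the FKS event; the layered structure of $\omega_i$ and the design of the multiplier $M_{i-1}n$ then rule out any compensating cancellation and yield a contradiction. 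Combining the IH event and the FKS event (which are independent, as $p_i$ is sampled after $p_0,\ldots,p_{i-1}$) gives the claim with probability at least $(1-1/n^2)^{i+1}$.

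I expect the main obstacle to be the third step, because the multiplier $M_{i-1}n$ does not straightforwardly dominate $\psi_i(S)$ (which may be as large as $nM_i\gg M_{i-1}n$), so a naive positional-decomposition of $\omega_i(S)$ into its ``high'' part $\omega_{i-1}(S)$ and its ``low'' part $\psi_i(S)$ is not available. A careful modular argument over $p_i$, combined with the fact that all elements of $Y$ come from the highly structured family produced by the induction hypothesis, is needed to rule out the pathological pairs and close the argument.
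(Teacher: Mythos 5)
Your steps (1) and (2) --- bounding $|Y|$ via the induction hypothesis together with Theorem~\ref{thm:rankbased}, and then applying Lemma~\ref{FKS} to the integers $x_S=\sum_{e\in S}2^{\id(e)}$ with the prime $p_i$ --- coincide exactly with the paper's argument. The obstacle you flag in step (3) is genuine, and your proposal does not close it. As you observe, the new low-order sum $\sum_{e\in S}(2^{\id(e)}\bmod p_i)$ can approach $n^2 p_i\le n^2 M_i$, which dwarfs the stated multiplier $M_{i-1}n$; there is therefore no positional decomposition in which $\omega_{i-1}(S)$ occupies the ``high bits,'' and FKS alone gives no control over the term $M_{i-1}n\cdot\left(\omega_{i-1}(S)-\omega_{i-1}(S')\right)\bmod p_i$, which can absorb the $x_S-x_{S'}$ contribution. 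The ``careful modular argument over $p_i$'' you hope for does not materialize from the stated ingredients. (The paper's own proof also passes over this step with a bare ``as a consequence,'' and the same issue appears at the end of the proof of Claim~\ref{claim:1B}, where it is asserted that being a minimizer for $\omega_i$ entails being a minimizer for $\omega_{i-1}$.)

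The correct repair is to the weight recurrence rather than to the modular argument: replace the multiplier $M_{i-1}n$ by something exceeding the maximum possible new low-order sum, e.g.\ set $\omega_i(e)\coloneqq M_i n^2\cdot\omega_{i-1}(e)+\left(2^{\id(e)}\bmod p_i\right)$. Then $\sum_{e\in S}(2^{\id(e)}\bmod p_i)<n^2 M_i$ for every $S\subseteq E(G)$, and step (3) closes by a two-case positional argument. If $\omega_{i-1}(S)\neq\omega_{i-1}(S')$, then $M_i n^2\cdot\left|\omega_{i-1}(S)-\omega_{i-1}(S')\right|\geq M_i n^2$ strictly exceeds either low-order sum, so $\omega_i(S)\neq\omega_i(S')$. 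If $\omega_{i-1}(S)=\omega_{i-1}(S')$, then $\omega_i(S)-\omega_i(S')$ equals the difference of the low-order sums, which is congruent to $x_S-x_{S'}\not\equiv 0\pmod{p_i}$ under the FKS event, hence nonzero. The enlarged multiplier still yields weights bounded by $2^{\Oh(n)}$, so the rest of the induction and the claimed parameters are unaffected. In short, your diagnosis of the obstacle is correct; the resolution is not a cleverer modular argument but a modification of the multiplier so that the na\"ive positional decomposition you attempted becomes available.
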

\begin{proof}
	For each $S \in Y$, let $$x_S \coloneqq \sum_{e \in S} 2^{\id(e)}.$$ 
	Observe that since identifiers assigned to the edges are unique, the numbers $x_S$ are pairwise different. The induction hypothesis gives us that the following event $A_{i-1}$ happens with probability at least $\left(1-\frac{1}{n^{2}}\right)^i$: for all $1\leq s\leq t\leq n$ and $1\leq s'\leq t'\leq n'$ with $t-s\le 2 ^{i-1}$ and $t'-s'\le 2^ {i-1}$, and all $c\in \conf(V\angles{s,t,s',t'})$, we have $|\Min(\omega_{i-1},G\langle s,t,s',t' \rangle,c)| \le 1$. Assuming now that $A_{i-1}$ indeed happens, by Theorem~\ref{thm:rankbased} we conclude that for every fixed choice of $s,t,s',t'$ as above, we have
	$$\left|\bigcup_{c \in \conf(V\angles{s,t,s',t'})} \Min({\omega_{i-1}},{G\langle s,t,s',t' \rangle},c)\right|\le 2^{\Oh(2^{i-1})}.$$
	Since there are at most $n^4$ choices of $s,t,s',t'$, this implies that
	$$|Y|\leq |Y'|^4\leq 2^{\Oh(2^{i-1})}\cdot n^{16}.$$
	Since $M_{i} = 2^{C(\log n + 2^i)}$ and $p_{i}$ is drawn uniformly at random
    among the primes in the range $\{1,\dots,M_i\}$, from Lemma~\ref{FKS} we can conclude that, for large enough $C$, with probability at least 
	\begin{align*}
		\left(1- \frac{(n^2+1)\left(n^{16}2^{\Oh(2^{i-1})}\right)^2}{2^{(C/2)(\log n + 2^i)}} \right) \cdot \left(1 - \frac{1}{n^{2}}\right)^i \ge \left(1 - \frac{1}{n^{2}}\right)^{i+1},
	\end{align*}
	all the numbers $\{x_S\colon S \in Y\}$ have pairwise different remainders modulo $p_i$; here, the term $(1-\frac{1}{n^2})^i$ corresponds to the probability that $A_i$ happens. As a consequence, with the same probability we have that $\omega_{i}(S) \neq \omega_{i}(S')$ for all different $S,S'\in Y$. 
\end{proof}
Now the induction step follows directly from combining Claim~\ref{claim:1B} with Claim~\ref{claim:1A}.

\subsection{Hamiltonian cycles in graphs of bounded treewidth}\label{sec:ham-tw}

We will now use the same approach to give a proof of Theorem~\ref{thm:hc-tree}. More precisely, assume we are given a graph $G$ of treewidth at most $k$. Our goal is to isolate the family of Hamiltonian cycles in $G$ using $\Oh(k \log n + \log^2 n)$ random bits.

The proof follows the same structure as that of Theorem~\ref{thm:hc-gen}.
We first describe the isolation scheme and then analyze the scheme using a tree decomposition $\treedecomp = (T,\beta)$ of $G$ of width at most $k$. Note that the actual decomposition is not needed for the isolation procedure, and is only used as a tool in the analysis.

\paragraph{Isolation scheme.}
We first present the isolation scheme. As before, we assume that $V(G)=[n]$ and $n$ is a power of $2$. Let $\id\colon E(G) \to \{1,\dots,|E(G)|\}$ be any bijection that assigns to each edge $e\in E(G)$ its unique \emph{identifier} $\id(e)$. Let $C$ be some large enough constant, to be chosen later. Then we independently sample $3\log n$ primes 
$p_1,\dots,p_{3\log n}$ so that each $p_i$ is sampled uniformly among all primes in the interval $\{1,\dots,M\}$, where $M=2^{C(k\log n)}$. Note that choosing each $p_i$ requires $C(k + \log n)$ random bits, hence we have used $\Oh(k \log n + \log^2 n)$ random bits in total, as required.

Next, we inductively define weights functions $\omega_0,\dots,\omega_{3\log n}$ on $E(G)$ as follows:
\begin{itemize}
	\item Set $\omega_0(e) \coloneqq 0$ for all $e \in E(G)$.
	\item For each $ e \in E(G)$ and $i=1,\dots,3\log n$, set
	$$\omega_i(e) \coloneqq Mn \cdot \omega_{i-1}(e) + \left( 2^{\id(e)}\bmod p_i\right).$$
\end{itemize} 
We let $\omega \coloneqq \omega_{3\log n}$ and we observe that $\omega$ assigns weights bounded by $2^{\Oh(k\log n + \log^2 n)}$.

\paragraph{Analysis.} Let $\treedecomp = (T,\beta)$ be a tree decomposition of $G$ of width at most $k$. It is well-known that $T$ can be chosen so that it has at most $n$ nodes. Further, let $\eta\coloneqq E(G)\to V(T)$ be any function that assigns to each edge $e$ of $G$ any node $x$ of $T$ such that $e\subseteq \beta(x)$. In the sequel we will assume that $\eta$ is injective. This can be achieved by adding, for each node $x\in V(T)$, $|\eta^{-1}(x)|-1$ new nodes with the same bag and adjacent only to $x$, and appropriately distributing the images of edges of $\eta^{-1}(x)$ among the new nodes. Note that after this modification, the number of nodes of $T$ is bounded by $\binom{k+1}{2}\cdot n\leq n^3$.

Compared to the proof of Theorem~\ref{thm:hc-gen}, instead of intervals we will use {\em{segments}} in the tree $T$ underlying the tree decomposition $\treedecomp$. Recall that segments have been defined and discussed in Section~\ref{sec:prelim}. We first observe that there are only few segments.

\begin{claim} \label{Claim:fewIntervals}
There are at most $n^9$ segments of $T$.
\end{claim}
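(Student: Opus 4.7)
The plan is to establish the bound by direct combinatorial counting after ensuring the underlying tree $T$ has bounded degree. The first step would be to preprocess the tree decomposition so that $T$ has maximum degree at most $3$: each high-degree node $x$ of $T$ is replaced by a path of copies of $x$, all carrying the same bag $\beta(x)$, with each copy taking on exactly one of $x$'s original neighbors. This preserves validity and width of the decomposition while increasing $|V(T)|$ by at most a constant factor, so the bound $|V(T)| = \Oh(n^3)$ is maintained (the edge-injectivity of $\eta$ is obviously preserved as well).

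Once $T$ has degree at most $3$, I would count segments by casework on $|\bnd I|$. The unique segment with $|\bnd I| = 0$ is $I = T$ itself. For each vertex $v \in V(T)$, a segment with $\bnd I = \{v\}$ corresponds to choosing a proper subset of the subtrees of $T - v$ to include in $I$ (at least one must be excluded so that $v$ has a neighbor outside $I$); since $v$ has at most $3$ neighbors, this gives at most $2^{\deg(v)} - 1 \le 7$ segments per $v$, so $\Oh(|V(T)|)$ in total. For each unordered pair of distinct vertices $\{u,v\}$, a segment with $\bnd I = \{u,v\}$ must contain the unique $u$--$v$ path in $T$ together with every subtree of $T - \{u,v\}$ attached to an internal vertex of that path (else some internal path vertex would acquire a neighbor outside $I$, contradicting $\bnd I = \{u,v\}$). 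The only freedom is in choosing which of the at most $\deg(u) - 1$ and $\deg(v) - 1$ remaining subtrees attached to $u$ and $v$ to include, with at least one excluded on each side, yielding at most $(2^{\deg(u)-1}-1)(2^{\deg(v)-1}-1) \le 9$ segments per pair and $\Oh(|V(T)|^2)$ in total. Summing the three cases gives $\Oh(|V(T)|^2) = \Oh(n^6) \le n^9$ for $n$ sufficiently large, which absorbs all hidden constants.

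The main obstacle I anticipate is justifying the bounded-degree assumption on $T$, since the claim as stated quantifies over \emph{all} segments of $T$ with no a priori structural restriction. Without bounded degree, a tree containing a vertex $v$ of arbitrarily large degree would admit exponentially many segments with $\bnd I = \{v\}$, one for each proper subset of the subtrees at $v$. The preprocessing step is standard and entirely local to the tree decomposition, and it neither affects $G$ nor changes any width parameter, so it fits seamlessly before the counting argument. After that, the proof is a routine application of the simple observation that, once the path between the (at most two) boundary vertices is forced to lie inside $I$ along with all its internal branches, only a constant number of in/out decisions remain at the boundary vertices.
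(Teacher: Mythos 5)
Your proof is correct, and it actually takes a more careful route than the paper's one-line argument. The paper asserts that a segment $I$ is uniquely determined by the at most two vertices of $\bnd I$ together with one vertex of $V(I)\setminus\bnd I$, and concludes there are at most $|V(T)|^3\leq n^9$ segments. That determination argument is not sound in general: if $T$ contains a high-degree node $c$ (and nothing in the paper bounds the degree of $T$ --- indeed the preprocessing that makes $\eta$ injective attaches many pendant nodes to a single node, which only increases degrees), then for every proper nonempty subset of the components of $T-c$ one gets a distinct segment with boundary $\{c\}$, and many of these share an interior vertex, so the ``specification'' is far from unique. Your bounded-degree preprocessing is exactly what is needed to make the counting legitimate, and your casework on $|\bnd I|$ is correct: once $\Delta(T)\leq 3$ you get $\Oh(1)$ segments per boundary set, hence $\Oh(|V(T)|^2)=\Oh(n^6)\leq n^9$. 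The preprocessing is also compatible with everything else in the proof of Theorem~\ref{thm:hc-tree} --- it keeps bag sizes, $\eta$-injectivity, and $|V(T)|=\Oh(n^3)$, and Lemma~\ref{lemma:5SubTrees} applies to any tree --- so inserting it before the claim is legitimate. In short: your approach is different, more explicit, and in fact closes a gap in the paper's argument rather than merely paralleling it.
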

\begin{proof}
 Note that a segment $I$ in $T$ can be uniquely determined by specifying the at most two vertices of $\bnd I$ and any vertex of $V(I)\setminus \bnd I$, provided there exists any. Since $T$ has at most $n^3$ nodes, there are at most $n^9$ choices for such a specification.
\end{proof}

For a set of nodes $Z\subseteq V(T)$, we write $\beta(Z)\coloneqq \bigcup_{z\in Z}\beta(z)$. Further, for a segment $I$ of $T$ we consider the graph
$$G\angles{I}\coloneqq \left(\beta(V(I)),\eta^{-1}(V(I))\right).$$
Usually when speaking about partial solutions in $G\angles{I}$, we consider their configurations on the vertex subset $\beta(\bnd I)$. Note that $G\angles{T}=G$.

We proceed to the induction. We will prove the following statement for all $0\le i \le \log n$.

\IH{With probability at least $\left( 1 - \frac{1}{n^{2}} \right)^i$, for all segments $I$ of $T$ of size at most $2^i$ and for each configuration $c \in \conf(\beta(\bnd I))$, there is at most one minimum weight (w.r.t. $\omega_i$) compliant partial solution in $G\angles{I}$, i.e. $|\Min({\omega_i},{G\angles{I}},c)|\le 1$.}\\

	Note that since $|V(T)|\leq n^3$, for $i = 3\log n$ the induction hypothesis gives that for $G\angles{T}=G$, there is at most one Hamiltonian cycle that has the minimum weight w.r.t. $\omega$ with probability at least $\left( 1-\frac{1}{n^{2}}  \right)^{3\log n} \ge \left( 1-\frac{1}{n} \right)$.
	
    \paragraph{Base step.} For $i=0$, we take segments of size at most $1$, i.e. we prove the induction hypothesis for every segment $I$ of $T$ that has either one or two nodes. More precisely, we have to prove that (with suitably large probability), for every such segment $I$ and configuration $c \in \conf(\beta(\bnd I))$, we have $|\Min(\omega_0,G\angles{I},c)|\leq 1$. Note that since $I$ has at most two nodes and $\eta$ is injective, the edge set $E(G\angles{I})$ consists of at most two edges. Moreover, it cannot be that two different edge subsets $E_1,E_2\subseteq E(G\angles{I})$ are simultaneously compliant with the same configuration $c\in \conf(\beta(\bnd I))$. It follows that sets $\Min(\omega_0,G\angles{I},c)$ have sizes at most $1$ always, so the induction hypothesis for $i=0$ is true.
	
    \paragraph{Induction step.}
	Assume the induction hypothesis is true for all segments of size at most $2^{i-1}$. Let
	$$
	Y'\coloneqq \bigcup_{I\colon \textrm{segment of size }\leq 2^{i-1}} \ \bigcup_{c\in \conf(\beta(\bnd I))}\  \Min({\omega_{i-1}},{G\angles{I}},c).$$
	 be the set of all minimum weight partial solutions for segments of size at most $2^{i-1}$. Further, let $$Y \coloneqq \{\, S_1 \cup S_2 \cup S_3 \cup S_4 \cup S_5\ \colon\ S_1, S_2, S_3, S_4, S_5 \in Y'\,\}$$ 
	 be the set comprising all combinations of five such partial solutions. 
	
	We first prove with Claim~\ref{claim:2A} that every relevant minimum weight compliant edge is contained in $Y$. Then Claim~\ref{claim:2B} says that with high probability, all $S\in Y$ receive pairwise different weights with respect to $\omega_i$. The induction hypothesis will follow directly from combining these two claims.
	
	\begin{claim}
		\label{claim:2A}
	Let $I$ be any segment of size at most $2^{i}$ and let $c \in \conf(\beta(\bnd I))$. 
	Then $\Min({\omega_{i}},{G\angles{I}},c)\subseteq Y$. 	
	\end{claim}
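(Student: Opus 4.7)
The plan follows the proof of Claim~\ref{claim:1B}, with Lemma~\ref{lemma:5SubTrees} replacing the bisection of intervals. We split on the size of $I$.

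\emph{Case 1: $|E(I)| \leq 2^{i-1}$.} The induction hypothesis gives $|\Min(\omega_{i-1}, G\angles{I}, c)| \leq 1$. The key observation is that $\omega_i$ refines $\omega_{i-1}$ on partial solutions: writing $\omega_i(S) = Mn \cdot \omega_{i-1}(S) + \rho_i(S)$ with $\rho_i(S) = \sum_{e \in S}(2^{\id(e)} \bmod p_i)$, we have $\rho_i(S) < nM$ for any partial solution (at most $n$ edges, each contributing less than $M$). Hence $\omega_{i-1}(S') < \omega_{i-1}(S)$ forces $\omega_i(S') < \omega_i(S)$, so $\Min(\omega_i, G\angles{I}, c) \subseteq \Min(\omega_{i-1}, G\angles{I}, c) \subseteq Y' \subseteq Y$; the last inclusion uses $S = S \cup S \cup S \cup S \cup S$.

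\emph{Case 2: $2^{i-1} < |E(I)| \leq 2^{i}$.} Apply Lemma~\ref{lemma:5SubTrees} to partition $E(I)$ into the edge sets of $t \leq 5$ segments $I_1, \ldots, I_t$, each of size at most $|E(I)|/2 \leq 2^{i-1}$. Fix any $S \in \Min(\omega_i, G\angles{I}, c)$. For each $e \in S$, pick one index $j$ with $\eta(e) \in V(I_j)$ (which exists since $V(I) = \bigcup_j V(I_j)$), and place $e$ into $S_j$. The resulting $S_1, \ldots, S_t$ are pairwise disjoint, $S = S_1 \cup \cdots \cup S_t$, and $S_j \subseteq E(G\angles{I_j})$. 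For each $j$, define the configuration $c_j \in \conf(\beta(\bnd I_j))$ to encode the behavior of $(S \setminus S_j) \cup M$ at the boundary $\beta(\bnd I_j)$: for each $v \in \beta(\bnd I_j)$, record how many of its at most two Hamiltonian-cycle edges lie outside $S_j$, together with the matching of path endpoints induced by the external portion of the cycle. A direct check, using that $S \cup M$ is a Hamiltonian cycle on $V(G\angles{I}) \setminus V_2$, then confirms that $S_j$ is compliant with $c_j$ inside $G\angles{I_j}$.

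We claim $S_j \in \Min(\omega_{i-1}, G\angles{I_j}, c_j)$, which would place $S_j \in Y'$ and thus $S \in Y$ (padding $t$ up to $5$ by repeating some $S_j$ if necessary). Suppose not: some $\widetilde{S}_j \subseteq E(G\angles{I_j})$ is compliant with $c_j$ and satisfies $\omega_{i-1}(\widetilde{S}_j) < \omega_{i-1}(S_j)$. Set $S' \coloneqq (S \setminus S_j) \cup \widetilde{S}_j$. Since $c_j$ was defined exactly to capture the interface of $S \setminus S_j$ at $\beta(\bnd I_j)$, and $\widetilde{S}_j$ realizes the same interface, the Hamiltonian-cycle structure on $V(G\angles{I}) \setminus V_2$ is preserved, so $S'$ is compliant with $c$. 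But then $\omega_{i-1}(S') < \omega_{i-1}(S)$ and, by the refinement argument from Case~1, $\omega_i(S') < \omega_i(S)$, contradicting $S \in \Min(\omega_i, G\angles{I}, c)$.

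The main obstacle is the careful definition of $c_j$ and the verification that compliance transfers under substitution. Conceptually, this is the finite-state property of Hamiltonian cycles with respect to small separators: a partial solution's contribution to the global cycle is fully captured by its configuration on the boundary, so a locally-compliant replacement does not disturb the global structure. A mild subtlety is that the same edge of $G$ may lie in several $E(G\angles{I_j})$ when $\eta(e)$ is a shared boundary node, but since each edge of $S$ is assigned to exactly one $S_j$ and $\eta$ is injective, the swap argument remains consistent.
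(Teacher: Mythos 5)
Your proposal follows the same route as the paper: decompose $I$ with Lemma~\ref{lemma:5SubTrees}, split $S$ across the resulting segments, and replay the exchange argument from Claim~\ref{claim:1B} to show each piece lands in $Y'$. The explicit Case~1 for $|E(I)|\le 2^{i-1}$ is a tidy addition (the paper invokes Lemma~\ref{lemma:5SubTrees} without noting its $\ell\ge 2$ precondition), and the refinement observation $\omega_{i-1}(S')<\omega_{i-1}(S)\Rightarrow\omega_i(S')<\omega_i(S)$ is exactly the one the paper uses inside Claim~\ref{claim:1B}.

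The one concrete defect is in your stated rule for $c_j$. Writing $c=(V_0,V_1,V_2,M)$, consider a vertex $v\in V_2\cap\beta(\bnd I_j)$. Such a $v$ is not visited by the Hamiltonian cycle $S\cup M$ at all, so the count "how many of its Hamiltonian-cycle edges lie outside $S_j$" is zero, and your rule would place $v$ in the $V_0$-component of $c_j$. But compliance of $S_j$ with that $c_j$ would then force $v$ to be visited twice by $S_j\cup M_j$, which cannot happen: $v$ meets no edge of $S_j$ nor of $M_j$. The correct $c_j$ must place such $v$ in its $V_2$-component; this is precisely the explicit adjustment the paper makes at the end of the proof of Claim~\ref{claim:1B}, where it takes $(V_0',V_1',V_2',M')$ to be the configuration of $(S\setminus S_j)\cup M$ and then sets $c_1=\bigl(V_0'\setminus(V_2\cap\cdots),\,V_1',\,V_2'\cup(V_2\cap\cdots),\,M'\bigr)$. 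With that correction, and a somewhat more careful check that the substitution $S'=(S\setminus S_j)\cup\widetilde S_j$ remains compliant when some edge $e$ has $\eta(e)$ lying in several $V(I_{j'})$, your argument matches the paper's.
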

	\begin{proof}
            Consider any $S\in \Min({\omega_{i}},{G\angles{I}},c)$.
			By Lemma~\ref{lemma:5SubTrees}, there exist segments $I_1,\ldots,I_t$ ($t\leq 5$), each of size at most $2^{i-1}$, such that $E(I)$ is the disjoint union of $E(I_1),\ldots,E(I_t)$. For each $j\in \{1,\ldots,t\}$ choose $S_j\in E(G\angles{I_j})$ so that $S$ is the disjoint union of $S_1,\ldots,S_t$. The same argument as that was used in the proof of Claim~\ref{claim:1B} shows that there exists $c_j\in \conf(\beta(\bnd I_j))$ such that $S_j\in  \Min(\omega_{i-1}, G\angles{I_j}, c_j)$. Hence $S_j \in Y'$ for all $j\in \{1,\dots,t\}$, so it follows that $S \in Y$.
	\end{proof}

	\begin{claim}
		\label{claim:2B}
		The probability of the following event is at least $\left( 1 - \frac{1}{n^2}\right)^i$: for all different $S,S' \in Y$, it holds that $\omega_i(S) \neq \omega_i(S')$.
	\end{claim}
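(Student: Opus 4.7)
The plan is to mirror the template of Claim~\ref{claim:1A}, adapted to segments of the tree decomposition. For each $S\in Y$, set $x_S\coloneqq \sum_{e\in S}2^{\id(e)}$; since $\id$ is injective these integers are pairwise distinct and bounded by $2^{|E(G)|+1}$, which is at most $2^{\Oh(kn)}$ as $G$ has treewidth $k$. The key step is then to condition on the event $A_{i-1}$ that the induction hypothesis at step $i-1$ holds. This event depends only on the previously drawn primes $p_1,\dots,p_{i-1}$ and has probability at least $\left(1-\tfrac{1}{n^2}\right)^{i-1}$.

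Under $A_{i-1}$, I would bound $|Y'|$ via Theorem~\ref{thm:rankbased}. For every segment $I$ of size at most $2^{i-1}$, the hypothesis $|\Min(\omega_{i-1},G\angles{I},c)|\le 1$ for all $c\in\conf(\beta(\bnd I))$ lets us apply the theorem with $X=\beta(\bnd I)$. Since $|\bnd I|\le 2$ and each bag has at most $k+1$ vertices, $|X|\le 2(k+1)$, so $\bigcup_{c}\Min(\omega_{i-1},G\angles{I},c)$ has size at most $2^{\Oh(k)}$. Combined with Claim~\ref{Claim:fewIntervals} (at most $n^9$ segments), this yields $|Y'|\le n^9\cdot 2^{\Oh(k)}$, and hence $|Y|\le |Y'|^5\le n^{45}\cdot 2^{\Oh(k)}$.

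Next, since $p_i$ is sampled independently of $p_1,\dots,p_{i-1}$, I apply Lemma~\ref{FKS} to the set $\{x_S\colon S\in Y\}$. The lemma bounds the probability that two of these distinct integers collide modulo $p_i$ by $\Oh(kn)\cdot |Y|^2/\sqrt{M}$. With $M=2^{\Theta(k+\log n)}$ and a sufficiently large constant $C$ hidden in the scheme's choice of $M$, this is at most $\tfrac{1}{n^2}$. Multiplying by the probability of $A_{i-1}$, with probability at least $\left(1-\tfrac{1}{n^2}\right)^i$ the numbers $x_S$ are pairwise distinct modulo $p_i$. Finally, the recurrence $\omega_i(e)=Mn\cdot \omega_{i-1}(e)+(2^{\id(e)}\bmod p_i)$ gives
$$\omega_i(S)=Mn\cdot \omega_{i-1}(S)+r_S\qquad\text{with}\qquad r_S\coloneqq \sum_{e\in S}(2^{\id(e)}\bmod p_i)<Mn,$$
and $r_S\equiv x_S\pmod{p_i}$. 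If $x_S\not\equiv x_{S'}\pmod{p_i}$ then $r_S\neq r_{S'}$, and since $|r_S-r_{S'}|<Mn$, this forces $\omega_i(S)\neq \omega_i(S')$, as desired.

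The delicate point I expect is the coordination of probabilities across levels: the event $A_{i-1}$ lives on the primes chosen at earlier levels while the FKS argument for level $i$ uses only $p_i$, and the independence of draws is what permits multiplying the two bounds. The other place where care is needed is the size bound on $|Y'|$, where the rank-based approach is essential: a naïve count would give $|\conf(X)|\le 2^{\Oh(k\log k)}$ per segment and would cost an extra $\log k$ factor in the random bits, breaking the target of $\Oh(k\log n+\log^2 n)$.
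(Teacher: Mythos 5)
Your proof is correct and follows essentially the same approach as the paper: condition on the induction hypothesis holding for level $i-1$ (an event depending only on $p_1,\dots,p_{i-1}$), bound $|Y|$ by $n^{45}\cdot 2^{\Oh(k)}$ via the rank-based Theorem~\ref{thm:rankbased} combined with Claim~\ref{Claim:fewIntervals}, apply Lemma~\ref{FKS} to the independently drawn prime $p_i$, and multiply the two probability bounds. You also spell out the last step (that distinct remainders of $x_S$ modulo $p_i$ propagate to distinct values of $\omega_i(S)$ via the recurrence $\omega_i = Mn\cdot\omega_{i-1}+(\cdot\bmod p_i)$), which the paper leaves implicit.
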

\begin{proof}
	For each $S \in Y$ let us define
	$$x_S = \sum_{e\in S}2^{\id(e)}.$$
	Observe that since the identifiers assigned to the edges are unique, the numbers $x_S$ are pairwise different. By the induction hypothesis, the following event $A_{i-1}$ happens with probability at least $\left( 1 - \frac{1}{n^{2}}\right)^{i-1}$: for every segment $I$ of size at most $2^{i-1}$ and each configuration $c\in \conf(\beta(\bnd I))$, we have $|\Min({\omega_{i-1}},{G\angles{I}},c)| \le 1$. By Theorem~\ref{thm:rankbased} it follows that provided $A_{i-1}$ happens, for every fixed segment $I$ of size at most $2^{i-1}$ we have
	$$\left|\bigcup_{c\in \conf(\beta(\bnd I))} \Min({\omega_{i-1}},{G\angles{I}},c)\right| \le 2^{\Oh(|\beta(\bnd I)|)}\leq 2^{\Oh(k)}.$$
	By Claim~\ref{Claim:fewIntervals} there are at most $n^9$ different segments, hence this implies that
	$$ |Y| \le |Y'|^5 \le 2^{\Oh(k)}\cdot n^{45}.$$
	Recall now that $M = 2^{C(k+ \log n)}$ and $p_i$ is drawn uniformly at
    random among the primes in the range $\{1,\dots,M\}$. Hence, from Lemma~\ref{FKS} we can conclude that, for large enough $C$, with probability at least  
	\begin{align*}
		\left( 1 - \frac{(n^2+1)\left(2^{\Oh(k)}\cdot n^{45}\right)^2}{2^{(C/2)(k+\log n)}} \right)\cdot \left( 1 - \frac{1}{n^{2}} \right)^{i-1} \ge \left( 1 - \frac{1}{n^{2}} \right)^{i},
	\end{align*}
	all the numbers in $\{x_S\colon S \in Y\}$ have pairwise different remainders modulo $p_i$. Here, the factor $(1-\frac{1}{n^2})^{i-1}$ corresponds to the probability that $A_{i-1}$ happens. As a consequence, with the same probability for all different $S, S' \in Y$ we have $\omega_i(S) \neq \omega_i(S')$. 
\end{proof}

The induction step now follows directly from combining Claims~\ref{claim:2A} and~\ref{claim:2B}.

\subsection{Separable graph classes}\label{sec:HCdecomp}

In this section we use our understanding of isolation schemes for Hamiltonian cycles in decomposable graphs to design such isolation schemes for {\em{separable}} graph classes, that is, classes of graphs that admit small balanced separators. More precisely, we will prove a generalization of Theorem~\ref{thm:planar}. First, we need to establish certain terminology and decomposition results.

\subsubsection{Definitions and a decomposition theorem}

A {\em{graph class}} is a (possibly infinite) set of graphs that is closed under taking isomorphisms. A graph class is {\em{hereditary}} if it is closed under taking induced subgraphs. The following notion of separability expresses the condition that graphs from a given class can be broken in a balanced way using small  separators.

\begin{definition}\label{def:separable}
 A graph class $\Cc$ is {\em{separable}} with {\em{degree}} $\alpha\in (0,1)$ if for every graph $G\in \Cc$, say on $n$ vertices, and vertex subset $S\subseteq V(G)$, there exists a set $X\subseteq V(G)$ such that $|X|\leq \Oh(n^{\alpha})$ and every connected component of $G-X$ contains at most $|S|/2$ vertices of $S$. Class $\Cc$ is {\em{separable}} if it is separable with some degree $\alpha\in (0,1)$.
\end{definition}

It is well-known that planar graphs~\cite{LiptonT80} and, more generally, $H$-minor-free graphs~\cite{AlonST90} for every fixed $H$ are separable with degree $\frac{1}{2}$. However, this notion is more general. For instance, the class of {\em{$1$-planar graphs}} --- graphs that admit a planar embedding where every edge has at most one crossing --- is also separable with degree $\frac{1}{2}$~\cite{DujmovicEW17}. More generally, every graph class of {\em{polynomial expansion}} is separable~\cite{PlotkinRS94} (see also the discussion in~\cite[Sections~16.3 and~16.4]{sparsity}). Examples here would include intersection graphs of bounded-ply families of fat objects in Euclidean spaces of fixed dimension~\cite{Har-PeledQ17}. In fact, subject to technical details, the notions of polynomial expansion and of separability coincide~\cite{PlotkinRS94,EsperetR18}.

Our isolation schemes will work on any graph class that is hereditary and separable. That is, we will prove the following generalization of Theorem~\ref{thm:planar}.

\begin{theorem}\label{thm:separable-iso}
 Let $\Cc$ be a hereditary class of graphs that is separable with degree $\alpha\in (0,1)$. Then there is an isolation scheme for Hamiltonian cycles in graphs from $\Cc$ that uses $\Oh(n^{\alpha})$ random bits and assigns weights upper bounded by $2^{\Oh(n^{\alpha})}$.
\end{theorem}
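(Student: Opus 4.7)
The proof strategy closely follows that of Theorem~\ref{thm:hc-tree}, with the key new ingredient being that separator sizes will shrink geometrically as we descend in the decomposition. This allows the per-level budget of random bits to shrink geometrically as well, summing to only $\Oh(n^\alpha)$ total rather than the $\Oh(n^\alpha \log n)$ one would get from naively replacing $t$ by $n^\alpha$ in Theorem~\ref{thm:hc-tree}.

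First, I would invoke a decomposition theorem (the forthcoming Theorem~\ref{thm:separable}): given $G\in \Cc$ on $n$ vertices, produce a tree decomposition $\treedecomp=(T,\beta)$ of $G$ together with a hierarchical structure in which $T$ admits a notion of segments at levels $0,1,\ldots,\lceil\log n\rceil$, such that each segment $I$ at level $i$ covers at most $2^i$ vertices of $G$ and has $|\beta(\bnd I)|=\Oh(2^{i\alpha})$. This is built by recursively applying the separator property: given a subgraph $G'$ on $m$ vertices (which belongs to $\Cc$ since $\Cc$ is hereditary), use separability to find a set $X\subseteq V(G')$ of size $\Oh(m^\alpha)$ whose removal leaves components with at most $m/2$ vertices each, and recurse inside each component. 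An analogue of Lemma~\ref{lemma:5SubTrees} then guarantees that each level-$i$ segment can be edge-partitioned into $\Oh(1)$ level-$(i-1)$ sub-segments whose boundaries lie in $\beta(\bnd I)$ together with one newly added separator of size $\Oh(2^{i\alpha})$.

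Next, the isolation scheme mimics the one from Section~\ref{sec:ham-tw}, but with per-level prime ranges that scale with the separator size at each level. Sample $\lceil\log n\rceil+1$ independent primes $p_0,p_1,\ldots,p_{\lceil \log n\rceil}$, where $p_i$ is drawn uniformly from primes in $\{1,\ldots,M_i\}$ for $M_i\coloneqq 2^{C\cdot 2^{i\alpha}}$ with $C$ a sufficiently large constant. Sampling $p_i$ costs $\Oh(2^{i\alpha})$ random bits; since $\sum_{i=0}^{\lceil\log n\rceil} 2^{i\alpha}$ is a geometric series dominated by its last term $\Oh(n^\alpha)$, the total is $\Oh(n^\alpha)$. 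Define $\omega_0(e)\coloneqq 0$ and inductively $\omega_i(e)\coloneqq M_{i-1}n\cdot \omega_{i-1}(e)+(2^{\id(e)}\bmod p_i)$, then set $\omega\coloneqq\omega_{\lceil\log n\rceil}$. The maximum weight is bounded by $\prod_i M_i\cdot n^{\Oh(\log n)}=2^{\Oh(n^\alpha)}$, as required.

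The correctness argument proceeds by induction on $i$, exactly as in Section~\ref{sec:ham-tw}: at step $i$, with probability at least $(1-n^{-2})^i$, for every segment $I$ at level at most $i$ and every configuration $c\in\conf(\beta(\bnd I))$ we have $|\Min(\omega_i,G\angles{I},c)|\leq 1$. The inductive step is the direct analogue of Claims~\ref{claim:2A} and~\ref{claim:2B}: a level-$i$ segment splits into $\Oh(1)$ level-$(i-1)$ sub-segments, so any minimum-weight compliant partial solution is a union of $\Oh(1)$ level-$(i-1)$ ones. The rank-based bound of Theorem~\ref{thm:rankbased} caps the number of minimum-weight survivors per level-$(i-1)$ segment by $2^{\Oh(2^{(i-1)\alpha})}$, and since the universe of candidate subsets has size at most $\poly(n)\cdot 2^{\Oh(2^{i\alpha})}$, applying the FKS hashing lemma (Lemma~\ref{FKS}) with prime $p_i$ drawn from a range of size $2^{C\cdot 2^{i\alpha}}$ ensures pairwise distinct residues with probability at least $1-n^{-2}$, closing the induction.

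The main obstacle is the decomposition theorem itself: simultaneously ensuring that (i) $T$ admits a segment/level structure compatible with Definition~\ref{def:segment}; (ii) level-$i$ segments have boundary of size $\Oh(2^{i\alpha})$; and (iii) the balanced separator property can be repeatedly applied to the subgraphs encountered during the recursion, which is precisely where hereditariness of $\Cc$ is used. Once this decomposition is in place, the remainder of the proof is a routine adaptation of the treewidth-based analysis, the only substantive change being that the geometric decay of boundary sizes matches the geometric growth of the prime ranges $M_i$ level by level.
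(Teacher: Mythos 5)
Your high-level strategy --- shrink the separator/boundary sizes geometrically with depth so that the prime ranges can shrink in tandem, telescoping the random-bit count to $\Oh(n^\alpha)$ --- is correct and matches the paper's. Where your sketch genuinely breaks down is the decomposition theorem, which you rightly flag as the main obstacle. You propose to build it by recursively applying a single balanced separator (``find $X$ of size $\Oh(m^\alpha)$ whose removal leaves components with at most $m/2$ vertices, and recurse inside each component''). This controls the \emph{interior} size per level but does nothing to control the \emph{external boundary} of the recursion piece: the old boundary $N_G(A)$, minus whatever the new separator absorbs, is simply inherited by every child, and the separators introduced at depths $0,1,\ldots,i-1$ accumulate. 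As a result the boundary of a depth-$i$ subtree stays $\Theta(n^\alpha)$ at every level instead of shrinking to $\Oh((n/2^i)^\alpha)$; with a stationary boundary, every prime range must be $2^{\Omega(n^\alpha)}$, and you land back at $\Oh(n^\alpha\log n)$ bits. The paper's Theorem~\ref{thm:separable} fixes this by applying Definition~\ref{def:separable} \emph{twice} at every recursion step --- once with weight set $A$ (giving $X$) and once with weight set $N_G(A)$ (giving $Y$) --- and then using a seven-way partition lemma (Claim~\ref{cl:partition}) to group the components of $H'-(X\cup Y)$ into at most seven children balanced with respect to \emph{both} the interior weight and the boundary weight. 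Only this double-balancing makes the induction $|N_G(A)|\leq L\cdot (n/2^i)^\alpha$ close with a constant $L$ independent of the depth.

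Two smaller points. First, the paper's formalization uses a \emph{generalized elimination forest} $(F,\eta)$ rather than a tree decomposition with levelled segments, and the scheme is two-phased: primes $p_0,\ldots,p_{\log n^\alpha+C}$ (via Theorem~\ref{thm:H_x}, which reuses the general-graph scheme) isolate partial solutions inside each single-node graph $H_x$, which can itself have $\Theta(n^\alpha)$ vertices, and then primes $q_0,\ldots,q_{\log n}$ of ranges $N_i=2^{C(\log n+(n/2^i)^\alpha)}$ drive the bottom-up induction over the forest. Your single-phase version is not implausible but is a real departure and would still rest on the double-balanced decomposition above. Second, your prime ranges $M_i=2^{C\cdot 2^{i\alpha}}$ drop a necessary $\log n$ term: with $\poly(n)$ candidate segments and $2^{\Oh(2^{i\alpha})}$ minimum-weight survivors per segment by the rank bound, the universe whose residues must be distinguished has size $\poly(n)\cdot 2^{\Oh(2^{i\alpha})}$, and Lemma~\ref{FKS} therefore forces $M_i=2^{\Oh(\log n+2^{i\alpha})}$. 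For small $i$ your $M_i$ is constant while the universe is already polynomial, so the FKS bound is vacuous there. The fix adds only $\Oh(\log^2 n)$ bits in total, which is subsumed by $\Oh(n^\alpha)$, but it must be there.
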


An important ingredient in the proof of Theorem~\ref{thm:separable-iso} is a decomposition theorem for graphs from a fixed separable class. Intuitively, the decomposition is obtained by recursively breaking the graphs by extracting small balanced separators. The shape of the decomposition will be captured by the following generalization of the notion of an elimination forest.

\begin{definition}
 A {\em{generalized elimination forest}} of a graph $G$ is a rooted forest $F$ together with a mapping $\eta\colon V(G)\to V(F)$ satisfying the following property: for every edge $uv\in E(G)$, we have $\eta(u)\preceq \eta(v)$ or $\eta(u)\succeq \eta(v)$ (note that it is possible that $\eta(u)=\eta(v)$). The {\em{topological height}} of $(F,\eta)$ is simply the height of $F$, while the {\em{height}} of $(F,\eta)$ is equal to
 $$\max_{x\in V(F)}\  \sum_{y\preceq x}\, |\eta^{-1}(y)|.$$
\end{definition}

It is easy to see that if a graph $G$ admits a generalized elimination forest $(F,\eta)$ of height $d$, then it also admits an elimination forest of height $d$: for every node $x$ of $F$, replace $x$ with a path consisting of vertices of $\eta^{-1}(x)$, in any order. Thus, the intuition is that a generalized elimination forest is a compressed representation of an elimination forest, where some sets of interchangable vertices --- the preimages $\eta^{-1}(x)$ for $x\in V(F)$ --- are grouped together in single nodes. The quality of this compression is measured by the parameter topological height.

In the sequel, we will use the following decomposition theorem that ties together separable graph classes and generalized elimination forests. We are not aware of the existence of this particular formulation in the literature, however the proof relies on rather standard techniques.

\begin{theorem} \label{thm:separable}
 Let $\Cc$ be a graph class that is hereditary and separable with degree $\alpha\in (0,1)$. Then every graph $G\in \Cc$, say on $n$ vertices, admits a generalized elimination forest $(F,\eta)$ satisfying the following conditions:
 \begin{enumerate}[label=(C\arabic*),ref=(C\arabic*),leftmargin=*]
  \item\label{c:children} $F$ has one root and every node of $F$ has at most seven children.
  \item\label{c:height} $(F,\eta)$ has topological height at most $1+\log_2 n$ and height $\Oh(n^{\alpha})$.
  \item\label{c:size} For every node $x$ of $F$, if $i$ is the depth of $x$ in $F$, then
  \begin{itemize}
  \item $|\eta^{-1}(x)|\leq \Oh\left(\left(n/2^i\right)^{\alpha}\right)$,\item $|\eta^{-1}(\subtree[x])|\leq n/2^i$, and
  \item $|N_G(\eta^{-1}(\subtree[x]))|\leq \Oh\left(\left(n/2^i\right)^{\alpha}\right)$.
  \end{itemize}
 \end{enumerate}
\end{theorem}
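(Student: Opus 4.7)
The plan is to build $(F,\eta)$ by a top-down recursion on subproblems of the form $(V_x, B_x)$, where $V_x\subseteq V(G)$ is intended to become $\eta^{-1}(\subtree[x])$ and $B_x := N_G(V_x)$ is the current boundary. I would initialize the root $x_0$ with $V_{x_0} := V(G)$ and $B_{x_0} := \emptyset$, and stop at $x$ as a leaf whenever $|V_x|\le 1$, setting $\eta^{-1}(x) := V_x$. Otherwise, I would consider the induced subgraph $G' := G[V_x\cup B_x]$, which belongs to $\Cc$ by hereditariness, and apply the separability property of $\Cc$ twice to $G'$: once with $S:=V_x$ yielding $Y_1$ of size $\Oh(|V(G')|^\alpha)$ such that every component of $G'-Y_1$ meets $V_x$ in at most $|V_x|/2$ vertices, and once with $S:=B_x$ yielding $Y_2$ that balances $B_x$ analogously. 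Putting $Y := Y_1\cup Y_2$ and defining $\eta^{-1}(x) := Y\cap V_x$, every component $C$ of $G'-Y$ then satisfies $|C\cap V_x|\le |V_x|/2$ and $|C\cap B_x|\le |B_x|/2$.

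The children of $x$ would be obtained by bundling the components of $G'-Y$ into groups: items have $V_x$-weight at most $|V_x|/2$ and total $V_x$-weight at most $|V_x|$, so a first-fit procedure (refined to also track $B_x$-weight) packs them into at most seven bundles, each of $V_x$-weight at most $|V_x|/2$, matching \ref{c:children}. For each bundle, the corresponding child $y$ receives $V_y$ equal to the union of the $V_x$-parts of its components and $B_y := N_G(V_y)\setminus V_y$. A short check would confirm that the produced forest is a generalized elimination forest: for any edge $uv$ with $u\in V_y$ and $v\notin V_y$, either $v\in B_x$ (hence already assigned to a strict ancestor of $x$), or $v\in V_x\setminus V_y$, in which case $v$ and $u$ would otherwise lie in the same component of $G'-Y$, forcing $v\in Y\cap V_x = \eta^{-1}(x)$.

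The hard part will be verifying the boundary bound $|N_G(\eta^{-1}(\subtree[x]))| = |B_x| = \Oh((n/2^i)^\alpha)$; the other two bullets of \ref{c:size} are immediate from $|\eta^{-1}(x)|\le |Y|$ and the halving of $|V_x|$. I plan to establish the stronger invariant $|B_x|\le C(n/2^i)^\alpha$ for some constant $C$ depending only on $\Cc$, by induction on depth. The inductive step would give
\[
|B_y| \le |\eta^{-1}(x)| + |B_x|/2 \le 2C_0(|V_x|+|B_x|)^\alpha + \tfrac{C}{2}(n/2^i)^\alpha \le \bigl(2C_0(1+C)^\alpha + \tfrac{C}{2}\bigr)(n/2^i)^\alpha,
\]
where $C_0$ denotes the hidden constant from the separability assumption. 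To upgrade this to $|B_y|\le C\cdot(n/2^{i+1})^\alpha = 2^{-\alpha}C(n/2^i)^\alpha$ one needs $2C_0(1+C)^\alpha \le C(2^{-\alpha}-2^{-1})$. The bracketed factor on the right is strictly positive precisely because $\alpha<1$, and $(1+C)^\alpha/C\to 0$ as $C\to\infty$ (again using $\alpha<1$), so choosing $C$ sufficiently large in terms of $C_0$ and $\alpha$ suffices. This strict contraction driven by $\alpha<1$ is the crux of the argument.

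The remaining conditions will then follow by bookkeeping. Since $|V_x|$ halves at each step of the recursion, it terminates after at most $\log_2 n$ levels, giving topological height at most $1+\log_2 n$, as required by \ref{c:height}. Along any root-to-leaf path, one would sum
\[
\sum_{x'}|\eta^{-1}(x')| \le \sum_{i=0}^{\log_2 n}\Oh\bigl((n/2^i)^\alpha\bigr) = \Oh(n^\alpha),
\]
by convergence of the geometric series $\sum_i 2^{-\alpha i}$ for $\alpha>0$, which yields the height bound in \ref{c:height} and closes the proof.
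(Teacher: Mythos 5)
Your construction is essentially the paper's: recurse on pairs $(V_x, B_x)$, take $G' = G[V_x\cup B_x]$, apply separability twice (balancing $V_x$ and $B_x$), group the components of $G'-Y$ into at most seven bundles satisfying both halving constraints, and establish the invariant $|B_x|\le C(n/2^i)^\alpha$ by induction, with the contraction driven by $\alpha<1$ and the height bound coming from $\sum_i 2^{-\alpha i}<\infty$. Two small points deserve attention. First, the intermediate inequality $|B_y|\le|\eta^{-1}(x)|+|B_x|/2$ is not quite right: a vertex of $B_y$ may lie in $Y\cap B_x$, which is disjoint from $\eta^{-1}(x)=Y\cap V_x$; the correct bound is $|B_y|\le|Y|+|B_x|/2$, which is what your subsequent estimate $2C_0(|V_x|+|B_x|)^\alpha$ actually controls, so the chain of inequalities still closes. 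Second, and more substantively, you assert without proof that a ``first-fit'' procedure yields at most seven bundles under two simultaneous weight constraints. That is not obvious: in a two-dimensional setting, first-fit does not transparently give a constant bound of seven, and a direct analysis of first-fit seems to give a weaker bound. The paper instead proves a clean combinatorial claim by iterative merging: start from singleton parts and, as long as there are at least eight parts, note that at most three parts can have $\omega_1>1/4$ and at most three can have $\omega_2>1/4$, so at least two parts are small in both coordinates and may be merged while preserving the $\le 1/2$ bound; this terminates with at most seven parts. You should replace the first-fit appeal with such an argument (or supply a proof that a suitably ordered first-fit attains seven).
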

\begin{proof}
 Let $K$ be the constant hidden in the $\Oh(\cdot)$ notation in the bound on the sizes of balanced separators in graphs from $\Cc$, as prescribed by the definition of the separability of $\Cc$. We will use the following simple claim.
 
 \begin{claim}\label{cl:partition}
  Suppose $\Omega$ is a finite set and there are two weight functions $\omega_1,\omega_2\colon \Omega\to \mathbb{R}_{\geq 0}$ satisfying the following conditions:
  \begin{itemize}[nosep]
   \item $\omega_1(\Omega)\leq 1$ and $\omega_2(\Omega)\leq 1$; and
   \item for each $e\in \Omega$, $\omega_1(e)\leq 1/2$ and $\omega_2(e)\leq 1/2$.
  \end{itemize}
  Then there exists a partition ${\cal P}$ of $\Omega$ into at most seven parts such that for each $P\in {\cal P}$, we have $\omega_1(P)\leq 1/2$ and $\omega_2(P)\leq 1/2$.
 \end{claim}
 \begin{proof}
 Let ${\cal P}_0$ be the partition of $\Omega$ that puts every element of $\Omega$ into a separate part. By the second assumed condition, ${\cal P}_0$ respects the following assertion ($\star$): for each part $P$, we have $\omega_1(P)\leq 1/2$ and $\omega_2(P)\leq 1/2$. We will gradually transform ${\cal P}_0$ into a partition ${\cal P}$ consisting of at most seven parts while preserving assertion ($\star$).
 
 More precisely, starting with ${\cal P}_0$ we define a sequence of partitions ${\cal P}_0,{\cal P}_1,{\cal P}_2,\ldots$ that finishes at the first partition ${\cal P}_i$ that has at most seven parts; then we set ${\cal P}\coloneqq {\cal P}_i$. Each partition ${\cal P}_{i+1}$ is obtained from ${\cal P}_i$ by merging two parts as follows. Note that ${\cal P}_i$ has at least eight parts, for otherwise the construction should have already finished. Call a part $P$ of ${\cal P}_i$ {\em{bad}} if $\omega_1(P)>1/4$ or $\omega_2(P)>1/4$. Clearly, there can be at most six bad parts (at most three due to the first reason, and at most three due to the second reason), which leaves us with at least two parts that are not bad. Then construct ${\cal P}_{i+1}$ from ${\cal P}_i$ by merging any two not bad parts. It is clear that in this way, assertion ($\star$) is preserved during the construction and we are done.
 \end{proof}
 
 We proceed to the construction of the generalized elimination forest $(F,\eta)$, which will be done by means of a recursive procedure. For a nonempty subset of vertices $A\subseteq V(G)$, the procedure constructs a generalized elimination forest $(F_A,\eta_A)$ of $H\coloneqq G[A]$ as follows.
 \begin{itemize}
  \item Let $H'\coloneqq G[N_G[A]]$. Note that since $\Cc$ is hereditary, we have $H'\in \Cc$.
  \item By the separability of $\Cc$, there are vertex subsets $X,Y\subseteq V(H')$, each of size at most $K\cdot |V(H')|^\alpha$, such that
  \begin{itemize}
  \item every connected component of $H'-X$ contains at most $|A|/2$ vertices of $A$; and
  \item every connected component of $H'-Y$ contains at most $|S|/2$ vertices of $S$, where $S\coloneqq N_G(A)$.
  \end{itemize}
  \item Let $Z\coloneqq X\cup Y$. For every connected component $C$ of $H'-Z$, let
  $$\omega_1(C)\coloneqq \frac{|V(C)\cap A|}{|A|}\qquad\textrm{and}\qquad \omega_2(C)\coloneqq \frac{|V(C)\cap S|}{|S|}.$$
  In case $S=\emptyset$, we set $\omega_2(C)\coloneqq 0$.
  \item Noting that the set of connected components of $H'-Z$ with weight functions $\omega_1$ and $\omega_2$ satisfy the prerequisites of Claim~\ref{cl:partition}, we can group the connected components of $H'-Z$ into at most seven graphs, say with vertex sets $B_1,\ldots,B_t$ ($t\leq 7$), such that
  \begin{equation}\label{eq:halving}
  |B_i\cap A|\leq |A|/2 \qquad\textrm{and}\qquad |B_i\cap S|\leq |S|/2\qquad\textrm{for each }i\in [t].\end{equation}
  \item Recursively apply the procedure to the sets $B_1\cap A,\ldots,B_t\cap A$, thus obtaining generalized elimination forests $(F_{1},\eta_{1}),\ldots,(F_{7},\eta_{7})$ of graphs $G[B_1\cap A],\ldots,G[B_t\cap A]$, respectively.
  \item Construct the generalized elimination forest $(F_A,\eta_A)$ of $H$ by taking the union of $(F_{j},\eta_{j})$ for $j\in [t]$, adding a single root node $r$ with $\eta_A^{-1}(r)=Z\cap A$, and making the roots of forests $F_j$ into children of~$r$.
 \end{itemize}
 It is clear that $(F_A,\eta_A)$ constructed in this manner is a generalized elimination forest of $G[A]$. 
 We construct the generalized elimination forest $(F,\eta)$ of $G$ by applying the procedure to $A=V(G)$. 
 Condition~\ref{c:children} is clear from the construction, hence we need to verify conditions~\ref{c:height} and~\ref{c:size}.
 
 We start with condition~\ref{c:size}. Observe that it suffices to prove that for every recursive call of the construction procedure, say at recursion depth $i$, it holds that
 $$|Z|\leq \Oh\left(\left(n/2^i\right)^{\alpha}\right),\qquad |A|\leq n/2^i\qquad\textrm{and}\qquad |N_G(A)|\leq \Oh\left(\left(n/2^i\right)^{\alpha}\right),$$
 where $Z$ is as defined in the construction procedure.
 The second bound follows from a straightforward induction on $i$ using the first part of~\eqref{eq:halving}. For the first and third bound, we shall prove by induction on $i$ that
 \begin{equation}\label{eq:ind-hyp}
 |N_G(A)|\leq L\cdot \left(n/2^i\right)^\alpha,  
 \end{equation}
 where
 $$L\coloneqq \max\left(1,\left(\frac{8K}{1-\frac{1}{2^{1-\alpha}}}\right)^{\frac{1}{1-\alpha}}\right).$$
 Note that in the base step, for $i=0$, we have $N_G(A)=N_G(V(G))=\emptyset$. During the proof of~\eqref{eq:ind-hyp} we will argue that in the considered recursive call of the construction procedure, it holds that
 \begin{equation}\label{eq:Z-bound}
|Z|\leq 4KL^\alpha\cdot \left(n/2^i\right)^\alpha.
 \end{equation}
 Thus,~\eqref{eq:ind-hyp} and~\eqref{eq:Z-bound} imply the first and the third bound of condition~\ref{c:size}.

 Assume then that~\eqref{eq:ind-hyp} holds for the call on a vertex subset $A$. We need to prove that, assuming the notation from the description of the procedure, for each subsequent call on a subset $B_j\cap A$, $j\in [t]$, we have $|N(B_j\cap A)|\leq L\cdot \left(n/2^{i+1}\right)^\alpha$. Observe that
 $$|V(H')|=|A|+|N_G(A)|\leq n/2^i+L\cdot \left(n/2^i\right)^\alpha\leq 2L\cdot \left(n/2^i\right),$$
 hence
 $$|X|\leq K\cdot |V(H')|^\alpha\leq 2KL^\alpha\cdot \left(n/2^i\right)^\alpha,$$
 and similarly
 $$|Y|\leq 2KL^\alpha\cdot \left(n/2^i\right)^\alpha.$$
 Therefore,
 $$|Z|=|X\cup Y|\leq 4KL^\alpha \cdot \left(n/2^i\right)^\alpha,$$
 which in particular proves~\eqref{eq:Z-bound}.
 Now observe that for each $j\in [t]$, we have
 $$N_G(B_j\cap A)\subseteq Z\cup (B_j\cap S).$$
 Hence, using the second part of~\eqref{eq:halving}, we conclude that
 \begin{eqnarray*}
 |N_G(B_j\cap A)| & \leq & |Z|+|B_j\cap S| \\
                & \leq & |Z|+|S|/2 \\
                & =    & |Z|+|N_G(A)|/2 \\
                & \leq & 4KL^\alpha \cdot \left(n/2^i\right)^\alpha + L/2\cdot \left(n/2^i\right)^\alpha \\
                & =    & (4KL^\alpha+L/2)\cdot \left(n/2^i\right)^\alpha \\
                & \leq & (8KL^\alpha+2^{\alpha-1}L)\cdot \left(n/2^{i+1}\right)^\alpha \\
                & \leq & L\cdot \left(n/2^{i+1}\right)^\alpha.
 \end{eqnarray*}
 Here, the last inequality follows from the choice of $L$. This concludes the inductive proof of~\eqref{eq:ind-hyp} and finishes the proof of condition~\ref{c:size}.
 
 We are left with showing condition~\ref{c:height}. The first assertion --- that the topological height of $(F,\eta)$ is bounded by $1+\log_2 n$ --- follows immediately from the first assertion of condition~\ref{c:size}. For the second assertion, observe that in the proof of condition~\ref{c:size} we argued that $|Z|\leq 4KL^\alpha \cdot \left(n/2^i\right)^\alpha$ for calls at recursion depth $i$. This implies that whenever $x$ is a node of $F$ at depth $i$, we have $|\eta^{-1}(x)|\leq 4KL^\alpha \cdot \left(n/2^i\right)^\alpha$. Therefore, the height of $(F,\eta)$ is bounded by
 $$\sum_{i=0}^{\lfloor \log_2 n\rfloor} 4KL^\alpha \cdot \left(n/2^i\right)^\alpha\leq 4KL^\alpha \cdot n^\alpha \cdot \sum_{i=0}^\infty \frac{1}{(2^\alpha)^i}\leq \Oh(n^\alpha).$$
 Here, the last inequality is implied by the convergence of the geometric series in question.
\end{proof}

\subsubsection{Isolation scheme}

With Theorem~\ref{thm:separable} established, we can proceed to the proof of Theorem~\ref{thm:separable-iso}. Let us then fix a hereditary graph class $\Cc$ that is separable with degree $\alpha\in (0,1)$, and an $n$-vertex graph $G\in \Cc$.
We first present the isolation scheme. Let $\id\colon E(G) \to \{1,\ldots,|E(G)|\}$ be any bijection that assigns to each edge $e\in E(G)$ its unique \emph{identifier} $\id(e)$. 
Let $C$ be some large enough constant, to be determined later. We independently select $1 + \log n^\alpha+C$ primes 
$$ p_0, p_1,\ldots,p_{\log n^\alpha+C}$$
so that each $p_i$ is sampled uniformly among primes in the range $\{1,\ldots,M_i\}$, where $M_i\coloneqq 2^{C(\log n^\alpha +2^i)}$. Further, we independently sample $1+\log n$ primes
$$ q_0, q_1,\ldots,q_{\log n}$$
so that each $q_i$ is sampled uniformly among primes in the range $\{1,\ldots,N_i\}$, where $N_i\coloneqq 2^{C(\log n + (n/2^i)^\alpha)}$. Note that choosing each $p_i$ requires $C(\log n^\alpha + 2^i)$ random bits and choosing each $q_i$ requires $C(\log n + (n/2^i)^\alpha)$ random bits. Hence we have used $\Oh(n^\alpha)$ random bits in total, as required. 

Next, we inductively define weight functions $\omega_0,\dots,\omega_{\log n^\alpha+C}$ and $\xi_0,\dots,\xi_{\log n}$ on $E(G)$ as follows:

\begin{itemize}
	\item Set $\omega_0(e) = 2^{\id(e)} \bmod p_0$ for all $e \in E(G)$.
	\item For $e \in E(G)$ and $i = 1,\dots,\log n^\alpha+C$, set 
	$$\omega_i (e) = M_{i-1}n \cdot \omega_{i-1}(e) + \left(2^{\id(e)} \bmod p_i \right).$$
	\item Then, let for all $e \in E(G)$, set: 
	$$\xi_{\log n}(e) = M_{\log n^\alpha + C} n \cdot \omega_{\log n^\alpha+C}(e) + \left(2^{\id(e)} \bmod q_{\log n} \right)$$ 
	\item Finally, for all $e \in E(G)$ and $i = \log n -1,\dots,0$, set
	$$\xi_i (e) = N_{i+1}n \cdot \xi_{i+1}(e) + \left(2^{\id(e)} \bmod q_i \right).$$
\end{itemize}
Note that the weight functions $\xi_i$ depend on the weight functions $\omega_i$. Furthermore, the order of defining the weight functions $\xi_i$ might seem inverse to what one might expect. Intuitively, this corresponds to proving a suitable isolation property by a bottom-up induction on the generalized elimination forest provided by Theorem~\ref{thm:separable}. Finally, we define $\omega \coloneqq \omega_{\log n^\alpha+C}$ and $\xi \coloneqq \xi_0$.  Then we return $\xi$ as the output weight function. Note that $\xi$ assigns weights upper bounded by $2^{\Oh(n^\alpha)}$, as promised. Hence, it remains to prove that $\xi$ isolates the family of Hamiltonian cycles in $G$ with high probability.

The proof of isolation is done in two steps. First we show that the weight function $\omega = \omega_{\log n^\alpha+C}$ isolates (with high probability) partial solutions on all graphs that intuitively correspond to single nodes of the generalized elimination forest $F$ provided by Theorem~\ref{thm:separable}. The second step is to use this knowledge to perform a bottom up induction on $F$, using weight functions $\xi_{\log n},\ldots,\xi_0$ for the consecutive steps.

\subsubsection{Isolation of partial solutions in single nodes}

Let $(F,\eta)$ be a generalized elimination forest of $G$ provided by Theorem~\ref{thm:separable}. We may assume that $G$ is connected (as otherwise there are no Hamiltonian cycles in $G$), hence $F$ is a tree. We may assume that $\eta^{-1}(x)\neq \emptyset$ for every leaf $x$ of $F$ (otherwise $x$ can be disposed of), hence $F$ has at most $(1+\log_2 n)\cdot n\leq n^2$ nodes.

We first show that the weight function $\omega$, which uses the prime numbers $p_0,\ldots,p_{\log n^\alpha+C}$, is enough to isolate all relevant partial in graphs $H_x$ for $x\in V(F)$ defined as follows:
\begin{definition}[Graph $H_x$]
	For each node $x$ in $F$, define $H_x\coloneqq (V_x,E_x)$, where
	$$V_x\coloneqq N_G[\eta^{-1}(x)] \cap \eta^{-1}(\tail[x])\qquad\textrm{and}\qquad E_x\coloneqq \{uv\colon uv\in E(G), u \in \eta^{-1}(x),\textrm{ and } v\in V_x\}.$$
\end{definition}  

In other words, $H_x$ is the subgraph of $G$ whose vertex set consists of all the vertices of $\eta^{-1}(x)$ and their neighbors that are mapped to a node of $\tail[x]$ by $\eta$. Among the edges with endpoints in this vertex set we keep only those whose at least one endpoint belongs to  $\eta^{-1}(x)$.


The following statement is a generalization of Theorem~\ref{thm:hc-gen}, where the identifiers come from a larger codomain and we assert a stronger isolation property. The proof follows from a straightforward adjustment of the proof of Theorem~\ref{thm:hc-gen}, so we only sketch it.

\begin{theorem} \label{thm:H_x}
	Let $G$ be a graph with $n$ vertices and let $\id\colon E(G) \to \{1,\ldots,Z\}$ be an injective function. Choose prime numbers $p_0,\ldots,p_{\log n}$ independently at random so that $p_i$ is chosen uniformly among the primes in the range $\{1,\dots,M_i\}$, where $M_i = 2^{C(\log Z + \log n+ 2^i)}$ for some large constant $C$. Then with probability at least $1- \frac{1} {Z\cdot \rho(n)}$, for all configurations $c \in \conf(V(G))$ we have $|\Min(\omega,G,c)|\leq 1$, where $\omega$ is defined as in Subsection~\ref{sec:HCgeneral} and $\rho(n)$ is any fixed polynomial. 
\end{theorem}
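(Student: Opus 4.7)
The plan is to mimic the proof of Theorem~\ref{thm:hc-gen} almost verbatim, adjusting only the probability bookkeeping to accommodate the enlarged identifier codomain $\{1,\ldots,Z\}$ and the stronger success probability $1-\tfrac{1}{Z\rho(n)}$. First I would impose an arbitrary labelling $V(G)=[n]$, reinstate the interval notation $G\langle s,t,s',t'\rangle$, and set up the same inductive scheme: prove by induction on $i\in\{0,1,\ldots,\log n\}$ that with probability at least $(1-\delta)^{i+1}$, for every interval $G\langle s,t,s',t'\rangle$ with $t-s\le 2^i$ and $t'-s'\le 2^i$ and every $c\in\conf(V\langle s,t,s',t'\rangle)$ one has $|\Min(\omega_i,G\langle s,t,s',t'\rangle,c)|\le 1$. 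The top-level instance $i=\log n$, $s=s'=1$, $t=t'=n$ then yields the conclusion of the theorem. The intended choice is $\delta=\tfrac{1}{(\log n+1)\,Z\,\rho(n)}$, so that the bound after $\log n+1$ levels is at least $1-\tfrac{1}{Z\rho(n)}$.

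Two adaptations to the applications of Lemma~\ref{FKS} are needed. First, the sums $x_S=\sum_{e\in S}2^{\id(e)}$ now lie in $\{0,1,\ldots,2^{Z+1}\}$, so the bit-length parameter fed into FKS is $\Oh(Z)$ rather than $\Oh(n^2)$ as in the original argument. Second, each prime $p_i$ is drawn from the enlarged range $\{1,\ldots,M_i\}$ with $M_i=2^{C(\log Z+\log n+2^i)}$; the extra additive $\log Z$ in the exponent compensates for the increase in bit-length. With these two adjustments in place, the combinatorial content of the base step and the inductive step is identical to that of Theorem~\ref{thm:hc-gen}: Claim~\ref{claim:1B}, which splits any minimum-weight compliant partial solution into four sub-solutions on smaller intervals, is purely syntactic and needs no change; and Claim~\ref{claim:1A} uses Theorem~\ref{thm:rankbased} in exactly the same way to bound the total number of relevant minimum-weight representatives across all intervals of a given scale by $|Y|\le|Y'|^4\le 2^{\Oh(2^{i-1})}\cdot n^{16}$.

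The only place where genuine care is required is the probability arithmetic. At step $i$, the FKS failure probability is bounded by
\[
\frac{\Oh(Z)\cdot\bigl(2^{\Oh(2^{i-1})}\cdot n^{16}\bigr)^2}{2^{(C/2)(\log Z+\log n+2^i)}},
\]
and this must be at most $\delta=\tfrac{1}{(\log n+1)\,Z\,\rho(n)}$. Because $\rho$ is a fixed polynomial, this reduces to choosing $C$ large enough so that the factor $Z^{C/2}\cdot n^{C/2}\cdot 2^{(C/2)\cdot 2^i}$ in the denominator dominates simultaneously the $Z$ from the bit-length, the $\rho(n)(\log n+1)$ from the error budget, the $n^{32}$ from $|Y|^2$, and the $2^{\Oh(2^i)}$ from the rank-based bound on $|Y|$. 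A union bound over the $\log n+1$ levels then gives the claimed overall probability at least $1-\tfrac{1}{Z\rho(n)}$. The main obstacle is therefore the calibration of $C$ in terms of the degree of $\rho$, but this is a routine calculation and presents no conceptual difficulty beyond what is already done in Section~\ref{sec:HCgeneral}.
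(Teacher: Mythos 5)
Your proposal is correct and follows essentially the same route as the paper: you reuse the interval-based induction from Theorem~\ref{thm:hc-gen} verbatim (Claims~\ref{claim:1B} and~\ref{claim:1A} carry over unchanged), you account for the larger identifier codomain by noting $x_S \le 2^{Z+1}$ and enlarging the prime range by the additive $\log Z$ in the exponent, and you recalibrate the per-level error budget so that the product over $\log n + 1$ levels meets the target $1 - \tfrac{1}{Z\rho(n)}$. Your explicit choice $\delta = \tfrac{1}{(\log n + 1)Z\rho(n)}$ differs superficially from the paper's $\tfrac{1}{Zn\rho(n)}$, but both are routine calibrations of the same bookkeeping and lead to the same conclusion for a large enough constant $C$.
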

\begin{proof}
	The proof follows the exact same reasoning as the proof of Theorem~\ref{thm:hc-gen}. The induction hypothesis then becomes:
	 
	\IH{With probability at least $\left(1 - \frac{1}{Z \cdot n\rho(n)} \right)^{i+1}$, for all intervals $G \langle s,t,s',t' \rangle $ such that $t-s \le 2^i$ and $t'-s' \le 2^i$ and for each configuration $c \in \conf(V\angles{s,t,s',t'})$, there is at most one minimum weight (with respect to $\omega_{i}$) compliant partial solution, i.e. $|\Min({\omega_{i}},{G\langle s,t,s',t' \rangle},c)|\le 1$.} \smallskip
	
	The only major change is that after replacing the codomain of the identifier function with $\{1,\ldots,Z\}$, we now have $\{x_S\colon S \in Y\} \subseteq \{1,\dots,2^{Z+1}\}$ instead of $\{x_S\colon S \in Y\} \subseteq \{1,\dots,2^{n^2+1}\}$, and therefore we need to choose each prime $p_i$ among primes in the range $\{1,\dots,2^{C(\log Z + \log n +2)}\}$. Hence, the success probability accordingly also changes. Note that the constant $C$ will need to be larger, but will remain a constant.
	Finally, similarly as argued in the proof of Theorem~\ref{thm:hc-gen}, the probability that for each configuration $c\in \conf(V(G)$ we have $|\Min(\omega,G,c)|\leq 1$ is at least $\left(1 - \frac{1}{Z \cdot n\rho(n)} \right)^{\log n} \ge \left(1 - \frac{1}{Z \cdot \rho(n)} \right)$.
\end{proof}

We now use Theorem~\ref{thm:H_x} to argue the following.

\begin{lemma}\label{lem:allMx}
	Assuming $C$ is chosen large enough, the following event happens with probability at least $1 - \frac{1} {n^2}$: for all $x\in V(F)$ and all $c \in \conf(V_x)$, we have $|\Min(\omega,{H_x},c)| \le 1$.
\end{lemma}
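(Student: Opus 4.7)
The plan is to apply Theorem~\ref{thm:H_x} separately to each graph $H_x$ for $x \in V(F)$, and then union bound over the at most $n^2$ nodes of $F$. Two preliminaries are needed: a bound $|V(H_x)| = \Oh(n^\alpha)$, and a verification that our sampled primes meet the theorem's prerequisites for every $H_x$ simultaneously.

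For the size bound, I would observe the inclusion $V_x \subseteq \eta^{-1}(x) \cup N_G(\eta^{-1}(\subtree[x]))$: any vertex $v \in V_x$ either lies in $\eta^{-1}(x)$, or is a $G$-neighbor of some $u \in \eta^{-1}(x) \subseteq \eta^{-1}(\subtree[x])$ while being mapped by $\eta$ to a strict ancestor of $x$, and hence lies in $N_G(\eta^{-1}(\subtree[x]))$. Both of these sets have size $\Oh((n/2^i)^\alpha) \leq \Oh(n^\alpha)$ by condition~\ref{c:size} of Theorem~\ref{thm:separable}, where $i$ is the depth of $x$ in $F$.

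For the parameter check, applying Theorem~\ref{thm:H_x} to $H_x$ with $Z = |E(G)| \leq n^2$ and some polynomial $\rho$ of our choice requires $\log |V(H_x)| + 1 \leq \log n^\alpha + \Oh(1)$ primes sampled from ranges of the form $2^{C'(\log Z + \log |V(H_x)| + 2^i)}$, where $C'$ is a constant depending on $\rho$. Since $\log Z + \log |V(H_x)| = \Oh(\log n^\alpha)$, by choosing $C$ large enough in terms of $\alpha$ and $C'$, our sampled primes $p_0,\ldots,p_{\log n^\alpha + C}$ with ranges $M_i = 2^{C(\log n^\alpha + 2^i)}$ meet the theorem's prerequisites. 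The fact that our scheme's recursive definition of $\omega$ uses the multiplier $M_{i-1}n$ rather than $M_{i-1}|V(H_x)|$ only enlarges the weights and does not affect the isolation argument, since the argument of Theorem~\ref{thm:hc-gen} (which underlies Theorem~\ref{thm:H_x}) only requires the multiplier to be sufficiently large.

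Choosing $\rho$ of sufficiently high degree then yields per-node failure probability at most $1/n^4$, so a union bound over $|V(F)| \leq n^2$ gives overall failure probability at most $1/n^2$, as required. The main subtlety I anticipate is reconciling the parameters uniformly across all $x$: the same sample of random primes, drawn based on the global parameter $n$, must satisfy the theorem's prerequisites simultaneously for every local graph $H_x$. In particular, for very small $H_x$ the factor $\rho(|V(H_x)|)$ in the theorem's failure bound is weakest, and may require direct treatment (noting that isolation is trivial when $|V(H_x)|$ is a small constant, since then there are only constantly many partial solutions to consider) or absorption into the constants by picking $C$ accordingly.
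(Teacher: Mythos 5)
Your proposal matches the paper's proof exactly: apply Theorem~\ref{thm:H_x} to each $H_x$ with the identifier function inherited by restriction from $E(G)$, observe $|V_x|=\Oh(n^\alpha)$ via condition~\ref{c:size}, check that the globally-sampled primes $p_0,\ldots,p_{\log n^\alpha + C}$ have ranges large enough to serve every $H_x$ simultaneously, and union bound over the at most $n^2$ nodes. The paper is terser (it states $|V_x|\leq\Oh(n^\alpha)$ and the parameter match without elaboration), while you correctly spell out the inclusion $V_x\subseteq\eta^{-1}(x)\cup N_G(\eta^{-1}(\subtree[x]))$ and flag the genuine subtleties — that the same primes must satisfy the prerequisites for all $x$ at once, and that for small $H_x$ the stated bound $1/(Z\rho(|V(H_x)|))$ requires either looking inside the proof of Theorem~\ref{thm:H_x} or picking $C$ to absorb the slack — both of which the paper glosses over with ``choosing $C$ appropriately large.''
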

\begin{proof}
	Apply Theorem~\ref{thm:H_x} on each graph $H_x$ for $x\in V(F)$, where each time we let the identifier function on $E_x$ be the identifier function on $E(G)$, restricted to $E_x$. Note that $|V_x| \le \Oh(n^\alpha)$ for each $x$, for large enough $C$ hence we can use the primes $p_0,p_1,\dots,p_{\log n^\alpha+C}$ in each of these applications, and therefore obtain the weights function $\omega_0,\dots,\omega_{\log n^\alpha}$ defined in the same way on all the graphs $H_x$. By choosing $C$ appropriately large we can guarantee that for every fixed $x\in V(F)$, with probability at least $1-\frac{1}{n^4}$ we have $|\Min(\omega,{H_x},c)| \le 1$ for all $c\in \conf(V_x)$. Since $|V(F)|\leq n^2$, it now follows from union bound that this assertion holds for all $x\in F$ simultaneously with probability at least $1-\frac{1}{n^2}$.
\end{proof}
%
%
%

\subsubsection{Isolation partial solutions in subtrees}

Our goal now is to extend the conclusion of Lemma~\ref{lem:allMx} from graphs $H_x$ that are associated with single nodes $x$ of $F$ to graphs $G_x$ that reflect the whole subtree of $F$ comprising the descendants of $x$.

\begin{definition}[Graph $G_x$]
	For each node $x$ in $F$, define $G_x=(V_{x\downarrow},E_{x\downarrow})$, where 
	$$V^{\downarrow}_x\coloneqq N_G[\eta^{-1}(\subtree[x])]\qquad\textrm{and}\qquad E^\downarrow_x\coloneqq \{uv\colon uv\in E(G), u \in \eta^{-1}(\subtree[x]),\textrm{ and } v\in V^\downarrow_x\}.$$
\end{definition}  

In other words, $G_x$ is a subgraph of $G$, but now its vertex set comprises all the vertices that are mapped to nodes of $\subtree[x]$ by $\eta$ and their neighbors. Among edges with both endpoints in this vertex set we keep only those with at least one endpoint in $\eta^{-1}(\subtree[x])$. Observe that actually,
$$V^{\downarrow}_x=\bigcup_{y\in \subtree[x]} V_y\qquad\textrm{and}\qquad E^{\downarrow}_x=\bigcup_{y\in \subtree[x]} E_y.$$
Also, denote
$$W_x\coloneqq N_G(\eta^{-1}(\subtree[x])).$$
    As explained before, we will perform a bottom-up induction on $F$ to prove that for each node $x$, the relevant partial solutions in the graph $G_x$ are appropriately isolated. This will be done under that condition that the event $A$ described in Lemma~\ref{lem:allMx} holds: for all $x\in V(F)$ and all $c \in \conf(V_x)$, we have $|\Min(\omega,{H_x},c)| \le 1$.
    Formally we will prove the following induction hypothesis for all $i=\log n,\ldots,0$, starting with $i=\log n$ and decreasing $i$ at each step.
    
	\IH{Conditioned on $A$ happening, the following event happens with probability at least $\left(1 - \Oh\left(\frac{1}{n^{2}}\right) \right)^{\log n-i}$: for all nodes $x$ at level $i$ in $F$ and for any configuration $c \in \conf(W_x)$, we have $|\Min(\xi_i,{G_x},c)|\le 1$.}\\
	
	Note that if the induction hypothesis is true for $i=0$, that is, for the unique root node $r$ of $F$, then $\xi$ isolates the family of all Hamiltonian cycles in $G_r=G$ with probability at least
	$\left(1- \frac{1}{n^2}\right)\cdot \left(1- \frac{1}{n^2}\right)^{\log n}\geq 1-\frac{1}{n}$; here, the first factor is the lower bound on the probability of $A$ provided by Lemma~\ref{lem:allMx}. Therefore, it remains to perform the induction.
	
\paragraph{Base step.} For $i=\log n$, every node $x$ of $F$ at level $\log n$ is a leaf with $|\eta^{-1}(x)|=1$, say $\eta^{-1}(x)=\{v\}$. Hence we have to prove that for any configuration $c \in \conf(N_G(v))$, we have $|\Min(\xi_{\log n}),G_x,c)|\leq 1$. Notice that $G_x$ only contains edges between $v$ and its neighbors, hence for every configuration $c\in \conf(N_G(v))$ there is at most one partial solution in $G_x$ that is compliant with $c$. So for $i=\log n$ the induction hypothesis is true.
	
\paragraph{Induction step.}
	Assume the induction hypothesis is true for all nodes $x$ at level $i+1$. Let 
	$$Y' \coloneqq \bigcup_{y\colon \textrm{node at level }i+1}\ \bigcup_{c\in \conf(W_y)}\ \Min(\xi_{i+1},{G_y},c)$$ 
	be the set of all relevant minimum weight partial solutions in graphs $G_x$ for $x$ at level $i+1$.
	Furthermore, let
	\[
	Z \coloneqq \bigcup_{x\colon \textrm{node at level }i}\ \bigcup_{c\in \conf(V_x)}\ \Min(\omega,{H_x},c)
	\]
	be the set of all minimum weight compliant partial solutions for configurations on graphs $H_x$ for $x$ at depth $i$.  
	Finally, let 
	$$Y \coloneqq \{\,S_0\cup S_1 \cup \ldots \cup S_7\ \colon\ S_0\in Z\textrm{ and }S_1,\ldots,S_7 \in Y'\,\}$$ be the set comprising all combinations of $7$ partial solutions from $Y'$ and a single partial solution from $Z$. 
	
	We first prove with Claim~\ref{claim:3A} that every relevant minimum weight compliant partial solution is included in $Y$. Then Claim~\ref{claim:3B} says that with high probability, all partial solutions in $Y$ have pairwise different weights with respect to $\xi_i$. Hence, proving these two claims is sufficient to make the induction step go through.
	
	\begin{claim}
		\label{claim:3A}
		Let $x$ be a node of depth $i$ and let $c \in \conf(W_x)$. Then $\Min(\xi_i,{G_x},c)\subseteq Y$.
	\end{claim}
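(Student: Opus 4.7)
The plan is to replicate the decomposition-and-swap scheme used in Claims~\ref{claim:1B} and~\ref{claim:2A}, now applied at a node $x$ of the generalized elimination forest $(F,\eta)$ provided by Theorem~\ref{thm:separable}. Given $S \in \Min(\xi_i, G_x, c)$, I would decompose $S$ into the ``local'' part $S_0 \subseteq E_x$ (the edges of $H_x$) together with at most seven ``subtree'' parts $S_j \subseteq E^\downarrow_{y_j}$, one for each child $y_j$ of $x$ (using that $x$ has $t \leq 7$ children by condition~\ref{c:children} of Theorem~\ref{thm:separable}). Then I would argue that $S_0 \in Z$ and that each $S_j \in Y'$.

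The first step is verifying the edge partition $E^\downarrow_x = E_x \sqcup E^\downarrow_{y_1} \sqcup \cdots \sqcup E^\downarrow_{y_t}$. By the defining property of a generalized elimination forest, every edge $uv \in E^\downarrow_x$ has $\eta(u)$ and $\eta(v)$ comparable in $F$; hence either at least one endpoint lies in $\eta^{-1}(x)$ --- giving $uv \in E_x$ --- or both $\eta$-images lie in $\subtree[y_j] \cup \tail(x)$ for a unique child $y_j$ --- giving $uv \in E^\downarrow_{y_j}$, once one checks that an ``ancestor'' endpoint automatically lies in $V^\downarrow_{y_j} = N_G[\eta^{-1}(\subtree[y_j])]$ by adjacency to the subtree endpoint. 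I would set $S_0 \coloneqq S \cap E_x$ and $S_j \coloneqq S \cap E^\downarrow_{y_j}$, obtaining the disjoint decomposition $S = S_0 \cup S_1 \cup \cdots \cup S_t$.

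Next, I would read off from $S$ the induced configurations $c_0 \in \conf(V_x)$, compliant with $S_0$ in $H_x$, and $c_j \in \conf(W_{y_j})$, compliant with $S_j$ in $G_{y_j}$, by examining degrees and matching pairs on the respective boundaries exactly as in the previous claims. For each $j \geq 1$ the standard swap argument would yield $S_j \in \Min(\xi_{i+1}, G_{y_j}, c_j) \subseteq Y'$: any compliant $S_j'$ with strictly smaller $\xi_{i+1}$-weight could be substituted into $S$ to yield a partial solution still compliant with $c$ and with strictly smaller $\xi_i$-weight, thanks to the construction $\xi_i(e) = N_{i+1} n \cdot \xi_{i+1}(e) + (2^{\id(e)} \bmod q_i)$ in which the $\xi_{i+1}$-term dominates the differences arising on a single piece, contradicting the minimality of $S$. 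An analogous swap argument for $S_0$, now exploiting that $\omega$ is the ultimate high-order key of $\xi_i$ through the cascade $\xi_i \to \xi_{i+1} \to \cdots \to \xi_{\log n} \to \omega$, gives $S_0 \in \Min(\omega, H_x, c_0) \subseteq Z$. If $t < 7$, I would pad with $S_{t+1} \coloneqq \cdots \coloneqq S_7 \coloneqq S_t$, which preserves the union and lies in $Y'$, so that $S \in Y$.

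The main obstacle will be the careful verification of the edge decomposition: I must ensure both that edges whose endpoints lie strictly inside the subtree of a single child $y_j$ and ``mixed'' edges with one endpoint in $\eta^{-1}(\subtree(y_j))$ and the other in $W_x \subseteq \eta^{-1}(\tail(x))$ are routed to the correct $E^\downarrow_{y_j}$, and that the boundary configurations $c_0$ and $c_j$ read off from $S$ are compatible with each other in such a way that replacing a single piece respects compliance with $c$. Once the decomposition is in place, the weight-domination arguments behind the swaps are routine adaptations of the corresponding steps in Claims~\ref{claim:1B} and~\ref{claim:2A}.
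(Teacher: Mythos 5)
Your proposal takes essentially the same route as the paper: decompose $S$ via the intersections $S_0 \coloneqq S \cap E_x$ and $S_j \coloneqq S \cap E^\downarrow_{y_j}$ for the at most seven children $y_j$ of $x$, then use the standard exchange argument (from Claim~\ref{claim:1B}) together with the nested construction of $\xi_i$ to conclude $S_0 \in \Min(\omega, H_x, c_0) \subseteq Z$ and $S_j \in \Min(\xi_{i+1}, G_{y_j}, c_j) \subseteq Y'$. This is precisely the paper's proof, which also leaves the verification of the edge partition implicit.

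However, your sketch of why the partition $E^\downarrow_x = E_x \sqcup E^\downarrow_{y_1} \sqcup \cdots \sqcup E^\downarrow_{y_t}$ holds contains a case-analysis slip. You assert that an edge $uv$ with at least one endpoint in $\eta^{-1}(x)$ lands in $E_x$. This is false for edges where the other endpoint $v$ is mapped to a strict descendant of $x$: the definition $E_x = \{uv : u \in \eta^{-1}(x),\ v \in V_x\}$ with $V_x = N_G[\eta^{-1}(x)] \cap \eta^{-1}(\tail[x])$ requires $\eta(v)$ to be an ancestor of $x$ (or $x$ itself), never a descendant. Such an edge is instead assigned to $E^\downarrow_{y_j}$ for the child $y_j$ whose subtree contains $\eta(v)$ (indeed, recall $E^\downarrow_x = \bigcup_{y \in \subtree[x]} E_y$, and an edge with $\eta(u) \preceq \eta(v)$ belongs to $E_{\eta(v)}$, the \emph{deeper} endpoint's node). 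Correspondingly, this edge also fails to fall under your second case, since $\eta(u) = x \notin \subtree[y_j] \cup \tail(x)$, so your dichotomy leaves it unrouted. Fortunately, you define $S_0$ and the $S_j$ directly as intersections with $E_x$ and $E^\downarrow_{y_j}$, so the resulting decomposition of $S$ is nonetheless correct and the subsequent swap argument goes through unchanged; the flaw is confined to the explanation of why the edge sets partition and does not propagate.
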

	\begin{proof}
	    Take any $S\in \Min(\xi_i,{G_x},c)$.
		Let $y_1,\ldots,y_t$ ($t\leq 7$) be the (at most) seven child nodes of $x$ at depth $i+1$. Let $S_0\coloneqq S\cap E_x$ and $S_j\coloneqq S\cap E^{\downarrow}_{y_j}$ for $j\in \{1,\ldots,t\}$. Note that the partial solutions $S_0,S_1,\ldots,S_t$ are pairwise disjoint and their union is equal to $S$. Further, since $S\in \Min(\xi,G_x,c)$, an argument analogous to the one used in the proof of Claim~\ref{claim:1B} shows that
		\begin{itemize}[nosep]
		 \item $S_0\in \Min(\omega,H_x,c_0)$ for some $c_0\in \conf(V_x)$; and
		 \item for each $j\in \{1,\ldots,t\}$, $S_j\in \Min(\xi_{i+1},G_{y_j},c_j)$ for some $c_j\in \conf(W_{y_j})$.
		\end{itemize}
        This means that $S_0\in Z$ and $S_1,\ldots,S_t\in Y'$, implying that $S\in Y$. 
	\end{proof}
	
	\begin{claim}
		\label{claim:3B}
		Conditioned on $A$, the probability of the following event is at least $\left(1 - \frac{1}{n^{2}} \right)^{\log n -i}$:
		for all different $S, S' \in Y$, we have $\xi_i(S) \neq \xi_i(S')$.
	\end{claim}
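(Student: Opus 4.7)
The plan is to mimic the structure of Claims~\ref{claim:1A} and~\ref{claim:2B}: associate to every $S \in Y$ the distinguishing integer $x_S \coloneqq \sum_{e \in S} 2^{\id(e)}$, bound $|Y|$ from above, and then apply the FKS hashing lemma (Lemma~\ref{FKS}) to the prime $q_i$. Since the identifiers are pairwise distinct, the integers $x_S$ for $S \in Y$ are pairwise distinct, and hashing them modulo $q_i$ gives $\omega_{\log n^\alpha + C}$-preserving behaviour, which combined with the definition of $\xi_i$ will yield $\xi_i(S) \neq \xi_i(S')$ for all distinct $S, S' \in Y$.

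The main work is to bound $|Y| \le |Z| \cdot |Y'|^7$ with high probability. First, by the induction hypothesis, conditioned on $A$, with probability at least $(1-\frac{1}{n^2})^{\log n - i - 1}$ we have $|\Min(\xi_{i+1}, G_y, c)| \le 1$ for every node $y$ at level $i+1$ and every $c \in \conf(W_y)$. Plugging this into Theorem~\ref{thm:rankbased} applied to $G_y$ with vertex subset $W_y$ gives
\[ \Big|\bigcup_{c \in \conf(W_y)} \Min(\xi_{i+1}, G_y, c)\Big| \le 2^{\Oh(|W_y|)} = 2^{\Oh((n/2^{i+1})^\alpha)}, \]
where the second bound is condition~\ref{c:size} of Theorem~\ref{thm:separable}. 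Since $F$ has at most $n^2$ nodes, $|Y'| \le n^2 \cdot 2^{\Oh((n/2^{i+1})^\alpha)}$. Similarly, under event $A$ we apply Theorem~\ref{thm:rankbased} to each $H_x$ at depth $i$, using $|V_x| \le \Oh((n/2^i)^\alpha)$ to conclude $|Z| \le n^2 \cdot 2^{\Oh((n/2^i)^\alpha)}$. Consequently $|Y| \le n^{16} \cdot 2^{\Oh((n/2^i)^\alpha)}$.

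With this bound in hand I invoke Lemma~\ref{FKS} on the set $\{x_S : S \in Y\}$, which lies in $\{0, 1, \ldots, 2^{|E(G)|+1}\} \subseteq \{0, \ldots, 2^{n^2+1}\}$. Since $N_i = 2^{C(\log n + (n/2^i)^\alpha)}$ and $q_i$ is a uniformly chosen prime from $\{1, \ldots, N_i\}$, Lemma~\ref{FKS} yields, for $C$ chosen sufficiently large,
\[ \Pr\bigl[\,x_S \not\equiv x_{S'} \bmod q_i \text{ for all distinct } S, S' \in Y\,\bigr] \ge 1 - \frac{(n^2+1)(n^{16} \cdot 2^{\Oh((n/2^i)^\alpha)})^2}{2^{(C/2)(\log n + (n/2^i)^\alpha)}} \ge 1 - \frac{1}{n^2}. \]
Combining with the conditional probability of the inductive hypothesis, the total conditional probability is at least $(1 - \tfrac{1}{n^2})^{\log n - i}$. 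Finally, the recursive definition $\xi_i(e) = N_{i+1} n \cdot \xi_{i+1}(e) + (2^{\id(e)} \bmod q_i)$ together with $\xi_{i+1}(S) \le N_{i+1} \cdot n$ for all $S$ in the range of interest ensures that distinct hash values modulo $q_i$ translate directly into distinct $\xi_i$-weights, completing the step.

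The only step that requires some care is verifying that the recursion collapses properly: we must check that the inherited weight function $\xi_{i+1}$ never exceeds $N_{i+1} \cdot n$ in the range where it matters, so that adding the $q_i$-residue in the lowest-order positions genuinely separates values that differ modulo $q_i$. This is routine given the definitions, but is the only point where the constants and ranges must be tracked carefully; the rest of the proof is a direct adaptation of Claims~\ref{claim:1A} and~\ref{claim:2B}.
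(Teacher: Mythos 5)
Your proof proposal is correct and follows essentially the same approach as the paper: you define the distinguishing integers $x_S = \sum_{e\in S} 2^{\id(e)}$, bound $|Y|\le |Z|\cdot |Y'|^7$ by applying Theorem~\ref{thm:rankbased} to the graphs $G_y$ at level $i+1$ (using the induction hypothesis) and to the graphs $H_x$ at level $i$ (using event $A$), together with condition~\ref{c:size} of Theorem~\ref{thm:separable}, and then apply Lemma~\ref{FKS} with the prime $q_i$ drawn from $\{1,\dots,N_i\}$. Your closing remark about verifying that $\xi_{i+1}$ stays below $N_{i+1}\cdot n$ so that the base-$(N_{i+1}n)$ concatenation in the definition of $\xi_i$ genuinely preserves distinctions is a legitimate point that the paper leaves implicit (it is the same carry-free composition argument used in Claims~\ref{claim:1A} and~\ref{claim:2B}), but it does not change the structure of the argument.
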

	\begin{proof}
        For each $S \in Y$ let us define
		$$ x_S \coloneqq \sum_{e \in S}2^{\id (e)}.$$
		Observe that since the identifiers assigned to the edges are unique, the numbers $x_S$ are pairwise different. The induction hypothesis gives us that conditioned on $A$, the probability of the following event is at least $\left(1 - \frac{1}{n^{2}} \right)^{\log n -(i+1)}$: for every node $y$ of $F$ at level $i+1$ and every $c \in \conf(W_y)$, we have $|\Min(\xi_{i+1},G_y,c)| \le 1$. We may then use Theorem~\ref{thm:rankbased} to conclude that for each such $y$,
		$$\left|\bigcup_{c \in \conf(W_y)} \Min(\xi_{i+1},{G_y},c)\right| \le 2^{\Oh(|W_y|)}\leq 2^{\Oh((n/2^{i+1})^\alpha)}.$$
		Since we assume that $A$ happens, we have $|\Min(\omega,H_x,c)|\leq 1$ for each node $x$ at level $i$. We can use Theorem~\ref{thm:rankbased} again to infer that for each such $x$,
		$$\left|\bigcup_{c \in \conf(V_x)} \Min(\omega,{H_x},c)\right| \le 2^{\Oh(|V_x|)}\le 2^{\Oh((n/2^{i})^\alpha)}.$$
		Since $F$ has at most $n^2$ nodes in total, the above bounds imply that
		$$|Y| \le |Z|\cdot |Y'|^7\leq 2^{\Oh((n/2^i)^\alpha)})\cdot n^{16}.$$ 
		Since $N_i = 2^{C(\log n + (n/2^i)\alpha)}$, and prime $q_i$ is sampled
        uniformly at random among the primes in the range $\{1,\ldots,N_i\}$,
        from Lemma~\ref{FKS} we can conclude that, provided $C$ is chosen large enough, with probability at least 
		\begin{align*}
			\left(1 - \frac{(n^2+1)\left( 2^{\Oh((n/2^i)^\alpha)}\cdot n^{16} \right)^2}{2^{(C/2)(\log n + (n/2^i)\alpha)}}\right)\cdot \left(1 - \frac{1}{n^{2}}\right)^{\log n -(i+1)} \ge \left(1 - \frac{1}{n^{2}} \right)^{\log n -i},
		\end{align*} 
		all the numbers in $\{x_S\colon S \in Y\}$ have pairwise different remainders modulo $q_i$. As a consequence, with at least the same probability we have $\xi_i(S)\neq \xi_i(S')$ for all distinct for $S,S' \in Y$.
	\end{proof}
	
	As argued, the induction step follows directly from combining Claims~\ref{claim:3A} and~\ref{claim:3B}.
	

\section{Deterministic algorithm for Hamiltonian cycle in separable classes}
\label{sec:algo}

A graph class $\Cc$ shall be called {\em{efficiently separable}} with degree $\alpha$ if it is separable with degree $\alpha$ in the sense of Definition~\ref{def:separable}, and moreover given $G\in \Cc$ and a vertex subset $S\subseteq V(G)$, a suitable balanced separator $X$ witnessing separability can be computed in polynomial time. In this section we prove the following result.

\begin{theorem}\label{thm:HamCycleSeparable}
 Let $\Cc$ be a hereditary graph class that is efficiently separable with degree $\alpha\in (0,1)$. Then there is an algorithm for {\sc{Hamiltonian Cycle}} on graphs from $\Cc$ that runs in deterministic time $2^{\Oh(n^\alpha)}$ and uses polynomial space.
\end{theorem}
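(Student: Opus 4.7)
The plan is to combine three ingredients: the isolation scheme from Theorem~\ref{thm:separable-iso}, the decomposition from Theorem~\ref{thm:separable}, and the weighted polynomial-space algorithm for Hamiltonian cycle parameterized by treedepth from~\cite{wg2020}. Concretely, first I would invoke Theorem~\ref{thm:separable} on the input graph $G\in\Cc$ to produce a generalized elimination forest $(F,\eta)$ of topological height $\Oh(\log n)$ and height $\Oh(n^\alpha)$. Since efficient separability allows a balanced separator to be computed in polynomial time, the recursive construction underlying Theorem~\ref{thm:separable} can be carried out in polynomial time as well. By replacing each node $x$ of $F$ by a path on the vertices of $\eta^{-1}(x)$ in arbitrary order, we obtain an ordinary elimination forest of $G$ of height $d=\Oh(n^\alpha)$; this certifies that $G$ has treedepth $\Oh(n^\alpha)$.

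Next, I would apply Theorem~\ref{thm:separable-iso} to produce the family of weight functions $\omega_1,\dots,\omega_\ell\colon E(G)\to\nat$ with $\log \ell = \Oh(n^\alpha)$ and with maximum weight $W = 2^{\Oh(n^\alpha)}$. Each $\omega_i$ can be written down in polynomial space, and the index $i$ may be iterated through $\{1,\dots,\ell\}$ using a counter of $\Oh(n^\alpha)$ bits. For each such $\omega_i$, I would invoke the algorithm of~\cite{wg2020} for weighted Hamiltonian cycle detection on graphs of bounded treedepth, which, as noted in the excerpt just before Theorem~\ref{thm:alg}, runs in time $2^{\Oh(d)}\cdot(W+n)^{\Oh(1)}$ and polynomial space when given an elimination forest of height $d$ and an edge weight function bounded by $W$. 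If this algorithm reports that $G$ has a Hamiltonian cycle for some $i$, we return YES; otherwise we return NO.

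For correctness, observe that if $G$ has any Hamiltonian cycle, then for at least half of the indices $i$ the weight function $\omega_i$ isolates the family of Hamiltonian cycles, and in particular for such $i$ the algorithm of~\cite{wg2020} will detect one and return YES. If $G$ has no Hamiltonian cycle, then obviously every invocation returns NO. For the complexity, plugging $d=\Oh(n^\alpha)$ and $W=2^{\Oh(n^\alpha)}$ into the guarantee of~\cite{wg2020} yields running time $2^{\Oh(n^\alpha)}$ per invocation, and $\ell = 2^{\Oh(n^\alpha)}$ iterations give a total running time of $2^{\Oh(n^\alpha)}$. Space usage is polynomial throughout: the elimination forest, the current weight function, and the iteration counter all fit into polynomial space, and each call to the subroutine of~\cite{wg2020} uses only polynomial space.

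The only non-routine step is verifying that the weight functions $\omega_i$ from Theorem~\ref{thm:separable-iso} are explicitly computable in polynomial time and space from the index $i$ and the edge identifier, which I expect to be the main bookkeeping obstacle. Inspecting the scheme, each $\omega_i$ is determined by a list of $\Oh(\log n)$ primes drawn from ranges of size $2^{\Oh(n^\alpha)}$; the primes themselves can be reconstructed deterministically from $i$ by brute force (using polynomially many bits of working memory to test primality), and the recursive definition of $\omega_i$ on a given edge from these primes involves only arithmetic on $\Oh(n^\alpha)$-bit integers. Therefore the enumeration is genuinely polynomial-space, and the composition yields the claimed deterministic $2^{\Oh(n^\alpha)}$-time, polynomial-space algorithm, completing the proof of Theorem~\ref{thm:HamCycleSeparable}.
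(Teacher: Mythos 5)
Your high-level architecture matches the paper's proof exactly: build an elimination forest of height $\Oh(n^\alpha)$, enumerate the weight functions of Theorem~\ref{thm:separable-iso} using polynomial working space, and for each one run a polynomial-space weighted Hamiltonian-cycle algorithm on the treedepth decomposition. The genuine gap is in the step you describe as routine. You cite the claim just before Theorem~\ref{thm:alg} that the algorithm of~\cite{wg2020}, extended to weights bounded by $W$, runs in time $2^{\Oh(d)}(W+n)^{\Oh(1)}$ and polynomial space. But the footnote attached to that very claim defers its justification to Section~\ref{sec:algo} --- the section whose content you are trying to reconstruct --- so the citation is circular, and the claim is in fact \emph{not} a black box you can invoke.

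The difficulty is concrete: with the weights $W=2^{\Oh(n^\alpha)}$ from Theorem~\ref{thm:separable-iso}, the dynamic-programming polynomials of~\cite{wg2020} have degree $\Theta(nW)$ in the weight-tracking variable $\alpha$. Storing the coefficient list of even one such polynomial would take space $\Theta(Wn^2)=2^{\Theta(n^\alpha)}$, which is exponential, not polynomial. The paper resolves this in Lemma~\ref{lem:wg2020}: rather than computing the whole polynomial, it fixes a target weight $t$, evaluates the polynomials at points of a finite field $\mathbb{F}_p$ for $p = \Theta(Wn\log n)$, and extracts the single coefficient of $\alpha^t\beta^n\gamma^n$ via the discrete Fourier inversion formula (Theorem~\ref{thm:DFT}), reconstructing the true integer coefficient by CRT over $\Theta(n)$ primes. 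Consequently the algorithm one actually invokes requires a target integer $t$ as input and is only guaranteed correct when there is at most one Hamiltonian cycle of weight exactly $t$; your outer loop must therefore iterate not only over the index $i$ of the weight function but also over all $t\leq 2^{\Oh(n^\alpha)}$, reporting YES if any pair $(i,t)$ succeeds. This extra loop, and the DFT-based coefficient extraction that makes each call polynomial-space, constitute the technical core of the paper's Section~\ref{sec:algo} and are absent from your proposal.
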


It is well-known that for every fixed $H$, the class of $H$-minor-free graphs is efficiently separable with degree $\frac{1}{2}$~\cite{AlonST90,KawarabayashiR10}. Hence, Theorem~\ref{thm:HamCycleSeparable} implies Theorem~\ref{thm:alg}.

The first step towards the proof of Theorem~\ref{thm:HamCycleSeparable} is to revisit the approach of~\cite{wg2020} and extend it to obtain the following result.
\begin{lemma}
    \label{lem:wg2020}
    There is a deterministic algorithm that takes as input an undirected graph
    $G=(V,E)$ along with an elimination forest of height at most $d$, a
    weight function $\omega\colon E\rightarrow \{1,\ldots,W\}$, and a target integer~$t$. The algorithm runs in time $5^dW (n \log W)^{\Oh(1)}$, uses space that is polynomial in $n$ and $\log W$, and detects whether $G$ has a Hamiltonian cycle $C$ satisfying $\omega(C)=t$, provided there is at most one such $C$.
\end{lemma}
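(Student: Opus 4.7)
The plan is to extend the polynomial-space algorithm of~\cite{wg2020} for the unweighted Hamiltonian cycle problem on graphs of bounded treedepth to the setting with edge weights and a prescribed target weight $t$. The algorithm of~\cite{wg2020} computes, modulo a small prime, a cut-and-count type sum over an elimination forest of height $d$; the sum is evaluated by a depth-first recursion of branching factor at most $5$ and depth $d$, where at each vertex the recursion enumerates one of at most five local ``cut-and-count states'' (roughly: unused vertex, or degree-$1$ or degree-$2$ endpoint of a partial solution, on each of the two cut sides). Under an isolation assumption, a unique Hamiltonian cycle contributes a nonzero term while all other terms cancel, making the total sum a reliable witness of its existence.

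To incorporate edge weights, I would replace the Boolean variable associated with each selected edge $e$ by the monomial $z^{\omega(e)}$ in a formal variable $z$. The cut-and-count expression thereby becomes a polynomial $P(z)\in \mathbb{F}_2[z]$ of degree at most $nW$, whose coefficient of $z^t$ counts modulo $2$ the Hamiltonian cycles of total weight exactly $t$. By the uniqueness hypothesis, this coefficient is $1$ if such a cycle exists and $0$ otherwise. Crucially, for any fixed $\alpha$ in an extension field $\mathbb{F}_{2^k}$ with $2^k>nW$ and $k=\Oh(\log(nW))$, the value $P(\alpha)$ can be computed by running the same recursion as in~\cite{wg2020}, except that each edge now contributes the scalar $\alpha^{\omega(e)}$ (obtained in $\mathrm{poly}(\log W)$ time by fast exponentiation in $\mathbb{F}_{2^k}$) instead of a Boolean indicator. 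A single evaluation thus takes $5^d\cdot (n\log W)^{\Oh(1)}$ time and polynomial space.

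To extract the coefficient of $z^t$ in polynomial space, I would apply Lagrange interpolation at $nW+1$ distinct evaluation points $\alpha_0,\ldots,\alpha_{nW}$ in $\mathbb{F}_{2^k}$: the coefficient of $z^t$ is a fixed $\mathbb{F}_{2^k}$-linear combination $\sum_i c_i P(\alpha_i)$ whose scalars $c_i$ depend only on the chosen points and are recomputable on the fly in $(\log(nW))^{\Oh(1)}$ time per index. By processing the evaluation points one at a time and maintaining a single running accumulator, the space stays polynomial in $n$ and $\log W$. The overall running time is $(nW+1)$ evaluations of cost $5^d\cdot (n\log W)^{\Oh(1)}$ each, totalling $5^d\cdot W\cdot (n\log W)^{\Oh(1)}$ as promised.

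The main obstacle is verifying that the cut-and-count identity from~\cite{wg2020} still holds after replacing Boolean edge variables by $z^{\omega(e)}$ and lifting arithmetic from $\mathbb{F}_2$ to $\mathbb{F}_{2^k}$. Intuitively this is immediate: the pairwise cancellation of non-Hamiltonian contributions in~\cite{wg2020} is purely structural and pairs contributions that correspond to the \emph{same} selected edge set, hence the same $z$-monomial; the cancellation therefore survives verbatim in the polynomial (and, by evaluation, in the field) setting. The remaining details --- that arithmetic in $\mathbb{F}_{2^k}$ fits into polynomial space, and that the Lagrange scalars can be produced one by one --- are standard.
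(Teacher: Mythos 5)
Your high-level strategy matches the paper's: replace the Boolean edge indicators by formal monomials tracking weight, evaluate the resulting polynomial at many field points using the $5^d$-branching recursion over the elimination forest, and recover the coefficient of the target degree $t$. Your observation that the cut-and-count cancellation over $\mathbb{F}_2$ survives the lift to $\mathbb{F}_2[z]$ is correct: cancellation pairs up (cut, edge-set) contributions \emph{for the same edge set}, hence the same monomial $z^{\omega(X)}$, so $[z^t]P$ is indeed the parity of the number of Hamiltonian cycles of weight $t$.

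The gap is in the coefficient-extraction step. You assert that each Lagrange scalar $c_i=[z^t]L_i(z)$ at $nW+1$ evaluation points in $\mathbb{F}_{2^k}$ can be ``recomputed on the fly in $(\log(nW))^{\Oh(1)}$ time per index.'' This is not justified and, for generic (or even geometric) evaluation points, is false: $[z^t]L_i(z)$ is, up to a normalizing denominator, the $(nW-t)$-th elementary symmetric polynomial in the $nW$ remaining points, which a priori requires $\Omega(nW)$ arithmetic operations. There is no closed form for $\prod_{j}(z-\alpha_j)$ unless the $\alpha_j$ form a complete set of $N$-th roots of unity (giving $z^{N}-1$), and the evaluation points you specify are not of that form. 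Processing points in a streaming fashion to keep the space down does not rescue the per-scalar cost, because each $c_i$ depends on all the other points. So as written, the interpolation step blows up either the time (to $\Theta((nW)^2)$, i.e.\ quadratic in $W$) or the space (storing all $nW+1$ evaluations, which is not polynomial in $n,\log W$ when $W$ is exponential).

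The paper avoids this by using discrete Fourier inversion, not generic interpolation: it works over $\mathbb{F}_p$ for a prime $p\in\Theta(N)$ with $N=Wn\log n$, fixes a generator $\rho$ of $\mathbb{F}_p^*$ (a primitive $(p-1)$-th root of unity, found deterministically by factoring $p-1$), and uses $c_t=\tfrac{1}{N}\sum_{i}\rho^{-it}\,P(\rho^i)$, where each scalar $\rho^{-it}$ takes only $\polylog(N)$ time. Since the cut-and-count polynomial is defined over $\mathbb{Z}$, the paper then recovers the integer coefficient by repeating this for $\Theta(n)$ primes and applying the Chinese Remainder Theorem, and finally reads off its parity. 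Your choice of $\mathbb{F}_{2^k}$ can in fact be made to work with the same fix: $|\mathbb{F}_{2^k}^*|=2^k-1$ is odd (hence invertible in characteristic $2$), so a generator $\rho$ of $\mathbb{F}_{2^k}^*$ is a primitive $(2^k-1)$-th root of unity and the DFT-inversion formula applies directly with $N=2^k-1$ evaluations; choosing $k=\lceil\log_2(nW+2)\rceil$ makes $N=\Theta(nW)$ and removes the need for CRT. But you would need to replace ``Lagrange interpolation at $nW+1$ distinct points'' by this size-$(2^k-1)$ DFT for the stated time and space bounds to hold.
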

The extension is similar to that performed by Bj\"orklund in~\cite{bjorklund10}, where he extended his $\Oh(1.66^n)$-time algorithm for {\sc{Hamiltonian Cycle}} to an $1.66^n W (n \log W)^{\Oh(1)}$ time algorithm for the {\sc{Traveling Salesman}} problem on $n$ cities. Therefore, we sketch here the extension assuming (but recalling) the basic understanding of the approach of~\cite{wg2020}.

\begin{definition}
	Suppose $\mathbb{F}$ is a finite field. An element $\rho \in \mathbb{F}$ is a \emph{primitive $N$-root of unity} if $\rho^N=1$ and for every $0< N'< N$ it holds that $\rho^{N'}\neq 1$.
\end{definition}

It is well known that the multiplicative group of every finite field is cyclic, that is, there is a generator $g \in \mathbb{F}$ such that $\{g^0,g^1,\ldots,g^{|\mathbb{F}|-1} \}$ are all the elements of field. Then we must have $g^{-1}=g^{|\mathbb{F}|-1}$. So $g^{|F|}=1$, and therefore $g$ is a primitive $(|\mathbb{F}|-1)$-root of unity.
We will work with the field $\mathbb{F}_p$, for some prime $p$.
First we address the issue of finding a generator of $\mathbb{Z}^*_p$, the multiplicative group of $\mathbb{F}_p$:
\begin{lemma}
    A generator of $\mathbb{Z}^*_p$ can be found in $\Oh(p \cdot  \polylog(p))$
    deterministic time and $\Oh(\polylog(p))$ space.
\end{lemma}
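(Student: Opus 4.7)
The plan is to implement the classical order-testing algorithm, but take care that both the factorization of $p-1$ and the primitive-root search are carried out within the required space bound.

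First I would factor $p-1$ by trial division: iterate a divisor $d$ from $2$ up to $\lfloor\sqrt{p-1}\rfloor$, and whenever $d \mid m$ (where $m$ is the unfactored remainder of $p-1$) record $d$ as a prime factor and divide it out of $m$ as many times as possible; at the end, any remaining $m > 1$ is itself a prime factor. This yields the set of distinct prime divisors $q_1,\ldots,q_k$ of $p-1$, with $k \le \log_2 p$. Trial division runs in $\Oh(\sqrt{p}\cdot\polylog(p))$ time, and only needs to keep the current divisor, the current remainder, and the (at most $\log p$) prime factors found so far, each of size $\Oh(\log p)$ bits; so space usage is $\Oh(\polylog(p))$.

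Next I would iterate $g = 2, 3, 4, \ldots$ and for each $g$ test whether $g$ is a generator of $\mathbb{Z}_p^\ast$ by checking that
\[
g^{(p-1)/q_i} \not\equiv 1 \pmod p \qquad \text{for every } i \in \{1,\ldots,k\}.
\]
This test is correct: the order of $g$ in $\mathbb{Z}_p^\ast$ divides $p-1$, and it equals $p-1$ precisely when it fails to divide any proper divisor of the form $(p-1)/q_i$ (any proper divisor of $p-1$ divides at least one such $(p-1)/q_i$). Since $\mathbb{Z}_p^\ast$ is cyclic, a generator exists, so the loop terminates at some $g \le p-1$. Each test uses $k \le \log p$ modular exponentiations, each of which runs in $\Oh(\polylog(p))$ time by repeated squaring using only $\Oh(\polylog(p))$ bits of workspace. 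Since we try at most $p-1$ candidates, the total running time is $\Oh(p \cdot \polylog(p))$, and the space used at any moment — the factorization, the current $g$, and the scratch for one modular exponentiation — is $\Oh(\polylog(p))$.

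The main obstacle to watch out for is the space bound: a naive implementation that, say, builds up the factorization or the list of candidates in one pass could easily overshoot. Both are easily avoided by streaming: process trial divisors one at a time, discarding them once handled, and reuse the same scratch variables across the modular exponentiations for each successive $g$. With these cares taken, the stated bounds follow.
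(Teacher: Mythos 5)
Your proposal is correct and follows essentially the same route as the paper: factor $p-1$, then iterate over candidates $g$ and apply the classical criterion $g^{(p-1)/q_i}\not\equiv 1\pmod p$ for each prime factor $q_i$ of $p-1$. You are in fact slightly more careful than the paper's sketch in spelling out that trial-division factoring and the exponentiation tests can be streamed within $\Oh(\polylog(p))$ space, a point the paper glosses over.
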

\begin{proof}
 First, find the prime factors $p_1,\ldots,p_\ell$ using any deterministic $\Oh(p)$-time algorithm. Note that we have $\ell\leq \log p$.
 Next we rely on the the well-known fact that an element $e$ is a generator of
 $\mathbb{Z}^*_p$ if and only if for every $i=1,\ldots,\ell$ it holds that
 $e^{(p-1)/p_i} \not\equiv 1\ (\text{mod } p)$. This fact follows from
 Lagrange's theorem (see for example the discussion preceding~\cite[Theorem
 14.16]{MotwaniR95}, where this fact was used in a similar way to find
 generators probabilistically). By this fact, we can check whether $e$ is a
 generator or not in $\Oh(\polylog(p))$ time. Thus we can find a generator by simply iterating over all elements $e \in \mathbb{Z}_p$ until the check succeeds.
\end{proof}

 We will use the following well-known statement about discrete Fourier transform in finite fields (see e.g.~\cite[Equation 30.11]{cormenetal}).
 
\begin{theorem}[Discrete Fourier Inversion]\label{thm:DFT}
	Let $\mathbb{F}$ be a finite field, let $\rho\in \mathbb{F}$ be a primitive $N$-root of unity, and let $P(x) \in \mathbb{F}[x]$ be a polynomial of degree at most $N-1$ in $x$ with coefficients from $\mathbb{F}$. If $P(x)=\sum_{i=0}^{N-1}c_i x^i$, then for every $0\leq t \leq N-1$ it holds that
	\[
	c_t = \frac{1}{N}\sum_{i=0}^{N-1} \rho^{-it} P(\rho^{i}).
	\]
\end{theorem}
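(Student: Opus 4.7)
The plan is to prove the inversion formula by direct expansion and interchanging the order of summation, reducing everything to a single "orthogonality" identity for powers of the primitive root of unity $\rho$.

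First, I would establish the key orthogonality relation: for every integer $j$,
\[
\sum_{i=0}^{N-1} \rho^{ij} =
\begin{cases}
N & \text{if } N \mid j,\\
0 & \text{otherwise.}
\end{cases}
\]
If $N \mid j$, each summand equals $1$, so the sum equals $N$. Otherwise, $\rho^j \neq 1$ by primitivity of $\rho$, and I would apply the geometric series formula $\sum_{i=0}^{N-1}\rho^{ij} = \frac{\rho^{Nj}-1}{\rho^j-1}$; the numerator vanishes because $\rho^N = 1$, while the denominator is nonzero in $\mathbb{F}$, so the sum is $0$. Here I must briefly justify that $\rho^j - 1$ is a valid element to invert, which follows from $0 < j < N$ together with the primitive root assumption.

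Next, I would plug $P(\rho^i) = \sum_{k=0}^{N-1} c_k \rho^{ik}$ into the right-hand side and swap summations:
\[
\frac{1}{N}\sum_{i=0}^{N-1}\rho^{-it}P(\rho^i)
= \frac{1}{N}\sum_{k=0}^{N-1} c_k \sum_{i=0}^{N-1}\rho^{i(k-t)}.
\]
Since $0 \le k,t \le N-1$, we have $-N < k-t < N$, so $N \mid k-t$ iff $k = t$. By the orthogonality relation, the inner sum is $N$ when $k=t$ and $0$ otherwise, leaving exactly $c_t$.

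The one subtlety, which I expect to be the main (very minor) obstacle, is verifying that $N$ is invertible in $\mathbb{F}$, since otherwise the factor $1/N$ is undefined. I would argue this as follows: if $\mathrm{char}(\mathbb{F}) = p$ divides $N$, write $N = p^a m$ with $\gcd(p,m) = 1$; in characteristic $p$ the map $x \mapsto x^{p^a}$ is injective, so $\rho^N = (\rho^m)^{p^a} = 1$ forces $\rho^m = 1$, contradicting primitivity since $m < N$. Hence $p \nmid N$, so $N \neq 0$ in $\mathbb{F}$ and $1/N$ is well-defined. This completes the proof.
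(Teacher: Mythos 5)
Your proof is correct. The paper does not actually prove this theorem; it is stated as a standard fact and the authors simply cite CLRS (Equation 30.11 of \cite{cormenetal}), so there is no paper proof to compare against. Your argument is the standard one --- expand $P(\rho^i)$, swap summations, and invoke the orthogonality relation $\sum_{i=0}^{N-1}\rho^{ij}=N\cdot[N\mid j]$ --- and the details are all right, including the geometric-series justification that the inner sum vanishes when $0<|k-t|<N$. You are also right to flag that $N$ must be a unit in $\mathbb{F}$ for $1/N$ to make sense, and your Frobenius argument ($\rho^N=(\rho^m)^{p^a}=1$ forcing $\rho^m=1$ with $m<N$, contradicting primitivity) correctly shows $\operatorname{char}(\mathbb{F})\nmid N$. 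This point is genuinely relevant here because the paper applies the theorem over $\mathbb{F}_p$ with $\rho$ a primitive $(p-1)$-root of unity, where $N=p-1$ is coprime to $p$, so the hypothesis is satisfied --- but it is good that you did not take invertibility for granted.
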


Let $N\coloneqq W \cdot n \cdot \log n$ and consider the field $\mathbb{F}\coloneqq \mathbb{F}_{p}$, where $p$ is a prime satisfying $p \in \Theta(N)$. Such a prime can be deterministically found in time $\Oh(p)$ and using $\polylog(p)$ space using brute-force and the polynomial-time deterministic primality testing algorithm~\cite{agrawal2004primes}.
By the above discussion, the field $\mathbb{F}$ has a
$(p-1)$-root of unity $\rho \in \mathbb{F}$,
and such a root can be found in time $\Oh(p\cdot \polylog(p))$ and space $(n \log W)^{\Oh(1)}$. Next, we continue with analysis of methods
presented in~\cite{wg2020}.

Recall that we are given a graph $G$ and an elimination forest $F$ of $G$ of height at most $d$. Since we are interested in Hamiltonian cycles in $G$, we may assume that $G$ is connected, hence $F$ is a tree.
The central objects studied in~\cite{wg2020} are polynomials $P[u,f],P(u,f) \in \mathbb{Z}[\alpha,\beta,\gamma]$, defined for each $u\in V$ and function $f\colon \tail[u]\to \{0,1_{\mathsf{L}},1_{\mathsf{R}},2_{\mathsf{L}},2_{\mathsf{R}}\}$ (or $f\colon \tail(u)\to \{0,1_{\mathsf{L}},1_{\mathsf{R}},2_{\mathsf{L}},2_{\mathsf{R}}\}$ in case of $P(u,f)$). Here, $\alpha,\beta,\gamma$ are formal variables. In~\cite{wg2020} it is shown how to compute those polynomials in a bottom-up manner over the given elimination forest $F$. Further, the parity of the number of Hamiltonian cycles of total weight $w$ can be inferred from the coefficient of the monomial $\alpha^w\beta^n\gamma^n$ in the polynomial $P(r,\emptyset)$, where $r$ is the root of~$F$. Therefore, the idea in~\cite{wg2020} was to use Isolation Lemma to ensure that provided the graph is Hamiltonian, with high probability there exists $w\in [N]$ for which there is exactly one Hamiltonian cycle of weight $w$. Then one explicitly computes all the polynomials $P[u,f],P(u,f)$ in a bottom-up manner over the tree $F$ in time $5^dW (n \log W)^{\Oh(1)}$. Finally, the existence of a Hamiltonian cycle can be inferred from the analysis of the coefficients of $P(r,\emptyset)$.

In the setting of Lemma~\ref{lem:wg2020} we can almost use the same strategy, however there is a caveat. Namely, the expansion of each polynomial $P(u,f)$ and $P[u,f]$ into a sum of monomials of the form $\alpha^a\beta^b\gamma^c$ may have length as large as $(N+1)(n+1)^2$, because the relevant values of $a,b,c$ are $a\in \{0,\ldots,N\}$ and $b,c\in \{0,\ldots,n\}$. Therefore, storing the coefficients of $P[u,f]$ explicitly would take at least space $\Oh(Nn^2)=\Oh(Wn^3)$, which is more than $(n+\log W)^{\Oh(1)}$ promised in the statement of Lemma~\ref{lem:wg2020}.

Therefore, the idea is not to compute the whole polynomial $P(r,\emptyset)$ explicitly, but evaluate the relevant coefficients of $P(r,\emptyset)$ one by one using Theorem~\ref{thm:DFT}. Precisely, let $Q=\sum_{w=0}^N c_{w,n,n}\cdot\alpha^w\in \mathbb{Z}[\alpha]$, where $c_{w,n,n}$ is the coefficient of $\alpha^w\beta^n\gamma^n$ in $P(r,\emptyset)$. After casting $Q$ as a polynomial $Q' \in \mathbb{F}_p[\alpha]$, we can use the method presented in~\cite{wg2020} to give an algorithm that evaluates $Q'(e)$ for a given $e\in \mathbb{F}_p$ in time $5^dW (n \log W)^{\Oh(1)}$ and using $(n \log W)^{\Oh(1)}$ space, because storing an element of $\mathbb{F}_p$ requires $ (n \log W)^{\Oh(1)}$ space. This is enough to compute the formula described in Theorem~\ref{thm:DFT} within the promised complexity guarantees.
This concludes the description of how to compute the coefficient $c'_{t,n,n}$ of $Q'[\alpha]$ within the stated resource bounds.

Now we can compute the matching coefficient $c_{t,n,n}$ of $Q[\alpha]$ by observing that $c'_{t,n,n}=c_{t,n,n} \bmod p$, applying the above step $\Theta(n)$ times for different primes $p$, and reconstructing $c_{t,n,n}$ with the Chinese Remainder Theorem. Here, it is important to note that the coefficient $c_{t,n,n}$ is of the order $2^{\Oh(n)}$, hence the information about $c_{t,n,n}$ modulo $\Theta(n)$ different primes is sufficient to reconstruct $c_{t,n,n}$ completely.
%
This concludes the sketch of the proof of Lemma~\ref{lem:wg2020}.
%
%

We now use Lemma~\ref{lem:wg2020} to prove Theorem~\ref{thm:HamCycleSeparable}.
\begin{proof}[Proof of Theorem~\ref{thm:HamCycleSeparable}]

    Let $G=(V,E)\in \Cc$ be the given graph, where $n\coloneqq |V|$.  By
    iteratively extracting balanced separators (cf., \cite[Theorem A.1]{wg2020})
    we can compute an elimination forest of $G$ of height $\Oh(n^{\alpha})$ in
    polynomial time. 
    
    Next we use Theorem~\ref{thm:separable-iso} that gives us a
    set of weight functions $\omega_1,\ldots,\omega_\ell \colon E \rightarrow
    [2^{\Oh(n^{\alpha})}]$ with $\ell = 2^{\Oh(n^{\alpha})}$ such that at least
    half of the functions $\omega_i$ isolate the family of Hamiltonian cycles in
    $G$. 
    
    It can be easily seen by inspecting the construction of the isolation scheme
    of Theorem~\ref{thm:separable-iso} that the functions
    $\omega_1,\ldots,\omega_\ell$ can be enumerated one by one using polynomial
    working space and $2^{\Oh(n^{\alpha})}$ time. Namely, we simply need to iterate
    over every tuple of $\log{n}+1$ primes $p_0 \in [M_0],\ldots,p_{\log{n}} \in [M_{\log{n}}]$. To
    achieve that, we can iterate over every prime number $p_i \in [M_i]$ in time $M_i^{\Oh(1)}$
    (just iterate through $[M_i]$ and deterministicaly check
    for primality in $M_i^{\Oh(1)}$ time with a brute-force algorithm). Therefore,
    enumerating all weight functions $\omega_1,\ldots,\omega_\ell$ can be done in
    $\text{poly}(M_1 \cdot \ldots \cdot M_{\log{n}}) \le 2^{\Oh(n^\alpha)}$
    additional time and polynomial space.
    Theorem~\ref{thm:separable-iso} guarantees that among the enumerated weight functions $\omega_1,\ldots,\omega_\ell$, there is at least one (and even half of them) that isolates the family of Hamiltonian cycles in $G$.

    Therefore, it remains to apply the algorithm of 
    Lemma~\ref{lem:wg2020} to each consecutive function $\omega_i$ and each possible minimum weight $t \leq 2^{\Oh(n^\alpha)}$, and report a positive outcome if any of these applications finds a Hamiltonian cycle in $G$. The time complexity is bounded by $2^{\Oh(n^\alpha)}$ and the space complexity is polynomial in $n$ and $\log 2^{\Oh(n^\alpha)}=\Oh(n)$.
\end{proof}

\newcommand{\inc}{\mathsf{inc}}
\newcommand{\ind}{\mathsf{ind}}
\newcommand{\Gb}{\mathbf{G}}
\newcommand{\forget}{\mathsf{forget}}
\newcommand{\select}{\mathsf{select}}
\newcommand{\Formulas}{\mathsf{Formulas}}
\newcommand{\sgm}[2]{#1\langle #2\rangle}
\newcommand{\asm}{($\bigstar$)}

\section{MSO-definable problems}
\label{sec:mso}

\subsection{Definitions}

\paragraph*{$\CMSOtwo$ Logic.}
We work with problems definable in logic $\CMSOtwo$, which stands for Monadic Second-Order logic on graphs with modular counting predicates and quantification over edge subsets. Recall that in this logic we have variables for individual vertices, individual edges, sets of vertices, and sets of edges; the latter two kinds are called {\em{monadic variables}}. The basic constructs in $\CMSOtwo$ are {\em{atomic formulas}} of the following forms:
\begin{itemize}[nosep]
 \item {\em{Equality:}} $x=y$, checking equality of $x$ and $y$;
 \item {\em{Membership:}} $x\in X$, checking that $x$ belongs to $X$;
 \item {\em{Incidence:}} $\inc(u,e)$, checking that vertex $u$ is incident on the edge $e$; and
 \item {\em{Congruence:}} $|X|\equiv a\bmod p$, where $a,p$ are constants, with the expected semantics.
\end{itemize}
$\CMSOtwo$ formulas can be constructed from atomic formulas using standard boolean connectives, negation, and quantification over variables of each of the four kinds (both existential and universal). Note that a $\CMSOtwo$ formula can have {\em{free variables}} that are not bound by any quantification. A formula can be applied on a graph supplied with a valuation of the free variables. For example, the formula
\begin{equation}\label{eq:indset}
\varphi(X)=\big[\forall_u\forall_v \left(u\in X\wedge v\in X\wedge u\neq v\right)\implies \left(\neg\exists_e\, \inc(u,e)\wedge \inc(v,e)\right)\big], 
\end{equation}
when applied on a graph $G$ and a vertex subset $A$, checks whether $A$ is an independent set in $G$. If this is the case, we write $G\models \varphi(A)$.

Let $\varphi(X)$ be a $\CMSOtwo$ formula with one free vertex set variable $X$. For a graph $G$, we define
$$\select_\varphi(G)\coloneqq \{S\subseteq V(G)~|~G\models\varphi(S)\}.$$
For example, if $\varphi(X)$ is the formula presented in~\eqref{eq:indset}, then $\select_\varphi(G)$ consists of all independent sets in~$G$. If $X$ is an edge set variable, then $\select_\varphi(G)$ is defined analogously: it comprises all subsets $S$ of edges of $G$ for which $\varphi(S)$ is satisfied. Thus, if $\varphi(X)$ is a $\CMSOtwo$ formula with a free monadic variable $X$, then $\select_\varphi$ is a vertex or edge selection problem, depending on whether $X$ is a vertex set or an edge set variable. A vertex/edge selection problem is {\em{$\CMSOtwo$-definable}} if it is of the form $\select_\varphi$ for a formula $\varphi(X)$ as above.

\paragraph*{Boundaried graphs.} Throughout this section we assume that all considered graphs have vertices from a fixed countable set $\Omega$. The reader may think that $\Omega=\mathbb{N}$.

A {\em{boundaried graph}} is a pair consisting of a graph $G$ and a subset of its vertices $B$, called the {\em{boundary}}. We have two natural operations on boundaried graphs:
\begin{itemize}[nosep]
 \item Suppose $\Gb_1=(G_1,B_1)$ and $\Gb_2=(G_2,B_2)$ are boundaried graphs such that $V(G_1)\cap V(G_2)\subseteq B_1\cup B_2$. Then the {\em{sum}} of $\Gb_1$ and $\Gb_2$ is the boundaried graph
 $$\Gb_1\oplus \Gb_2\coloneqq (G_1\cup G_2,B_1\cup B_2),$$
 where $G_1\cup G_2=(V(G_1)\cup V(G_2),E(G_1)\cup E(G_2))$. Note that the sum is not defined if the condition $V(G_1)\cap V(G_2)\subseteq B_1\cup B_2$ does not hold.
 \item Suppose $\Gb=(G,B)$ is a boundaried graph and $A\subseteq B$. Then the operation of {\em{forgetting}} $A$ in $\Gb$ yields the boundaried graph
 $$\forget_A(\Gb)\coloneqq (G,A).$$
 Note that for notational convenience, the set indicated in the subscript is the new boundary set, and not the set $B\setminus A$ of vertices that get forgotten, i.e., removed from the boundary.
\end{itemize}
The following standard lemma connects the operations on boundaried graphs with the concept of treewidth.

\begin{lemma}
 A graph $G$ has treewidth at most $k$ if and only if $(G,\emptyset)$ can be obtained from graphs on at most $k+1$ vertices by a repeated use of the sum and forget operations, where at each moment in the construction all boundaried graphs have boundaries of size at most $k+1$. 
 \end{lemma}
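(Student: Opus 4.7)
My plan is to establish both implications by induction, with tree decompositions as the mediating object. For the ``only if'' direction, starting from a nice tree decomposition $(T,\beta)$ of $G$ of width at most $k$, I would walk $T$ bottom-up and produce at each node $x$ a boundaried graph $\Gb_x = (G_x, \beta(x))$, where $G_x$ is the subgraph of $G$ spanned by the vertices and edges introduced in the subtree rooted at $x$. Each kind of node in the nice decomposition corresponds to an operation from our toolkit: introduce-vertex and introduce-edge nodes correspond to summing with the obvious one- or two-vertex boundaried graph; forget nodes to the $\forget$ operation with $A = \beta(x)$; join nodes to the sum $\Gb_{x_1}\oplus \Gb_{x_2}$ of the two children, which is well defined because the corresponding subtrees share vertices exactly in $\beta(x)$. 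Since each intermediate boundary coincides with some bag $\beta(x)$, all boundaries have size at most $k+1$, and at the root (empty bag) we obtain $(G,\emptyset)$.

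For the ``if'' direction I would induct on the number of operations in the construction, maintaining the strengthened invariant that whenever $\Gb=(G,B)$ is constructible, there is a tree decomposition of $G$ of width at most $k$ with a distinguished node $r$ satisfying $\beta(r) = B$. The base case of a base graph with $|V(G)|\le k+1$ is handled by a two-bag decomposition with one bag equal to $V(G)$ adjacent to a node $r$ with $\beta(r)=B$. A forget step $\Gb = \forget_A(\Gb')$ is handled by appending a new distinguished node with bag $A$ to the previous distinguished node; bag sizes and vertex-subtree connectivity are preserved because $A \subseteq B'$.

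The heart of the argument is the sum step $\Gb_1\oplus \Gb_2$. The natural construction is to create a new distinguished node $r$ with $\beta(r) = B_1 \cup B_2$ attached to the old distinguished nodes $r_1,r_2$. The delicate point is that a vertex $v \in V(G_1)\cap V(G_2) \subseteq B_1\cup B_2$ lying in only one of the two boundaries, say $v\in B_1\setminus B_2$, may occur in the second tree decomposition at bags disjoint from $r_2$, and then the naive merge violates the vertex-subtree axiom. The remedy is to preprocess the second tree decomposition so that $v \in \beta(r_2)$: enlarging $\beta(r_2)$ to $B_2 \cup (V(G_1)\cap V(G_2))$ keeps this bag of size at most $|B_1\cup B_2|\le k+1$, and one reroutes $v$'s occurrences to meet $r_2$ by a standard tree-decomposition surgery that inserts a short path of auxiliary bags carrying $v$. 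The principal technical obstacle is doing this simultaneously for all shared vertices without ever exceeding the width bound; I would handle it by strengthening the inductive invariant to demand that the tree decomposition produced at each intermediate step has the property that every vertex of $V(G)$ that could still be ``reactivated'' in a future sum is already present at the distinguished bag $r$, so that the sum step requires no further surgery.
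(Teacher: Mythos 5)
Your treatment of the ``only if'' direction via a nice tree decomposition is fine. The ``if'' direction, however, has a genuine gap, and you have in fact put your finger on it exactly: with the sum operation as defined in the paper --- which only requires $V(G_1)\cap V(G_2)\subseteq B_1\cup B_2$ --- a shared vertex $v$ may lie in $B_1\setminus B_2$ while still belonging to $V(G_2)$, and then the naive merge of the two tree decompositions breaks the vertex-subtree axiom. Your proposed remedy does not close this gap. The strengthened invariant (``every vertex that could still be reactivated in a future sum is already present at the distinguished bag'') is not a property of the boundaried graph $(G,B)$ at all: it refers to operations that have not yet occurred, so it cannot serve as a bottom-up inductive invariant over the construction. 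And the surgery you sketch, enlarging $\beta(r_2)$ and routing $v$ through a path of auxiliary bags, runs squarely into the width-overflow problem you yourself acknowledge, with no escape.

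In fact, no invariant can rescue the argument, because with the $\cup$-precondition the ``if'' direction is simply false. Take $k=1$: start from $(\{1,2\},\{1,2\})$ carrying the edge $12$, forget to $(\{1,2\},\{2\})$, sum with $(\{2,3\},\{2,3\})$ carrying edge $23$ (precondition $\{2\}\subseteq\{2,3\}$ holds) to get the path $1$--$2$--$3$ with boundary $\{2,3\}$, forget to boundary $\{3\}$, and sum with $(\{1,3\},\{1,3\})$ carrying edge $13$: the precondition $\{1,3\}\subseteq\{3\}\cup\{1,3\}$ holds, every base graph has at most $2$ vertices, every intermediate boundary has size at most $2$, yet the result is $K_3$, which has treewidth $2$. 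The discrepancy vanishes if the sum's precondition is the stronger $V(G_1)\cap V(G_2)\subseteq B_1\cap B_2$ --- shared vertices must lie in \emph{both} boundaries --- which is surely the intended reading given that the paper labels this ``standard.'' Under that condition every shared vertex already lies in $\beta(r_1)$, in $\beta(r_2)$, and in $\beta(r)=B_1\cup B_2$, so the naive merge immediately yields a valid width-$k$ tree decomposition; no surgery and no strengthening of the invariant are needed, and the proof you sketched collapses to a clean induction.
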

 
\paragraph*{Configuration schemes.} We now present the formalism of configuration schemes for selection problems. The notational layer is directly taken from the recent work of Chen et al.~\cite{Chenetal20}. However, in general, the algebraic approach to graphs of bounded treewidth and recognizability of their properties dates back to the foundational work of Courcelle and others done in the 90s. See the book of Courcelle and Engelfriet for an introduction to the area~\cite{CourcelleE12}.

For concreteness we focus on edge selection problems.
Adjusting the definitions to vertex selection problems is straightforward.

A {\em{configuration scheme}} is a pair of functions $(\conf,c)$ with the following properties:
\begin{itemize}[nosep]
 \item $\conf$ assigns to each finite subset $B\subseteq \Omega$ a finite {\em{configuration set}} $\conf(B)$. 
 We require that the cardinality of the configuration set is uniformly and effectively bounded in the size of $B$, that is, there exists a computable function $g$ such that $|\conf(B)|\leq g(|B|)$ for each finite $B\subseteq \Omega$.
 \item For every boundaried graph $\Gb=(G,B)$ and a subset of edges $S\subseteq E(G)$, $c$ maps the pair $(\Gb,S)$ to a configuration $c(\Gb,S)\in \conf(B)$. 
\end{itemize}
We say that a configuration scheme $(\conf,c)$ is {\em{compositional}} if the following two conditions hold:
\begin{itemize}
 \item For every pair of boundaried graphs $\Gb_1=(G_1,B_1)$ and $\Gb_2=(G_2,B_2)$ (with defined sum), and subsets of edges $S_1\subseteq E(G_1)$ and $S_2\subseteq E(G_2)$, the configuration
 $$c(\Gb_1\oplus \Gb_2,S_1\cup S_2)$$
 depends only on the pair of configurations
 $$c(\Gb_1,S_1)\qquad\textrm{and}\qquad c(\Gb_2,S_2).$$
 \item For every boundaried graph $\Gb=(G,B)$, $A\subseteq B$, and a subset of edges $S\subseteq E(G)$, the configuration
 $$c(\forget_A(\Gb),S)$$
 depends only on the configuration
 $$c(\Gb,S).$$
\end{itemize}
In other words, the first condition means that we can endow the set $\conf(B_1)\times \conf(B_2)$ with a sum operation $\oplus$ so that the operators $\oplus$ and $c(\cdot,\cdot)$ commute. Similarly, the second condition means that $\conf(B)$ can be endowed with an operation $\forget_A(\cdot)$ so that the operators $\forget_A(\cdot)$ and $c(\cdot,\cdot)$ commute. We will therefore use the operators $\oplus$ and $\forget_A$ also as operators defined on configuration sets provided by $\conf$. Note here that since $\oplus$ is commutative on boundaried graphs, the corresponding operator $\oplus$ on configurations can also be chosen to be commutative.

Suppose $\Pp$ is an edge selection problem, that is, a function that with each graph $G$ associates a family of edge subsets $\Pp(G)\subseteq 2^{E(G)}$. We say that a configuration scheme $(\conf,c)$ {\em{recognizes}} $\Pp$ if there exists a set of {\em{final configurations}} $F\subseteq \conf(\emptyset)$ such that for every graph $G$ and a subset of edges $S\subseteq E(G)$, 
$$S\in \Pp(G)\qquad\textrm{if and only if}\qquad c((G,\emptyset),S)\in F.$$
The next lemma follows from well-known compositionality properties of the $\CMSOtwo$ logic. We sketch the proof for completeness, but we remark that an essentially the same sketch was also provided in~\cite{Chenetal20}.

\begin{lemma}\label{lem:mso-conf}
 Every $\CMSOtwo$-definable edge selection problem is recognized by a compositional configuration scheme. 
\end{lemma}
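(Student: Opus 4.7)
The plan is to build the configuration scheme directly from the logic. Let $\Pp=\select_\varphi$, where $\varphi(X)$ is a $\CMSOtwo$ formula with one free edge-set variable, and let $q$ denote its quantifier rank. To a boundaried graph $\Gb=(G,B)$ together with an edge subset $S\subseteq E(G)$ I would associate the $\CMSOtwo$-type of quantifier rank $q$ of the enriched structure $\Gb[S]$ obtained from $G$ by adding one unary vertex predicate naming each vertex of $B$ and one unary edge predicate naming $S$. Here ``type'' means the equivalence class under indistinguishability by $\CMSOtwo$ sentences of quantifier rank at most $q$ in this enriched signature. A classical argument (via Ehrenfeucht-Fra\"iss\'e games, or through a syntactic Hintikka-style normal form) shows that up to logical equivalence there are only finitely many such sentences, so the set $\conf(B)$ of these types is finite, with size uniformly and effectively bounded by a computable function of $|B|$. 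The map $c$ then sends $(\Gb,S)$ to the type of $\Gb[S]$.

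I would take the set of final configurations to be $F\coloneqq\{\tau\in\conf(\emptyset)\colon \varphi(X)\in\tau\}$, where $\varphi$ is read as a sentence in the enriched signature that interprets $X$ as the distinguished edge predicate. By construction $c((G,\emptyset),S)\in F$ if and only if $G\models\varphi(S)$, i.e.\ if and only if $S\in\Pp(G)$, so $F$ witnesses that the scheme recognizes $\Pp$.

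The heart of the argument will be to verify the two compositionality conditions. For the sum $\oplus$, I would invoke the Feferman-Vaught style composition theorem for $\CMSOtwo$ (see Courcelle and Engelfriet~\cite{CourcelleE12}): the $\CMSOtwo$-type of quantifier rank $q$ of a structure obtained from two substructures by identifying a distinguished set of named elements is determined by the corresponding types of the two substructures together with the identification pattern. In our setting the identification pattern is forced by the common names in $\Omega$ of boundary vertices in $B_1\cap B_2$, so this theorem immediately produces the desired commutative operation $\oplus$ on $\conf(B_1)\times \conf(B_2)$ such that $c(\Gb_1\oplus\Gb_2,S_1\cup S_2)$ depends only on $c(\Gb_1,S_1)$ and $c(\Gb_2,S_2)$. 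For $\forget_A$, it suffices to observe that removing the unary predicates naming vertices of $B\setminus A$ makes the type of the forgotten structure a syntactic projection of the original type onto the smaller signature, which yields the required operator on configurations.

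The main obstacle I anticipate is bookkeeping rather than conceptual: one must be careful that the enriched signature still has size bounded by a function of $|B|$ (so that $\conf(B)$ remains finite), and that the composition theorem is applied with the identification pattern uniquely and canonically determined by names in $\Omega$. A minor point to handle cleanly is also the commutativity of $\oplus$, which is inherited from the symmetric treatment of the two substructures in the composition theorem. Once these details are set, the statement reduces to a standard fact from the algebraic theory of $\CMSOtwo$ on bounded-treewidth structures, and no further combinatorial input is required.
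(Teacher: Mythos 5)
Your construction is essentially the paper's: it also encodes $c(\Gb,S)$ as the quantifier-rank-$q$ $\CMSOtwo$-type of the boundaried graph with the boundary and the selected edge set marked, bounds the number of types by a computable function of $|B|$, and derives compositionality of $\oplus$ and $\forget_A$ from Feferman--Vaught / Ehrenfeucht--Fra\"iss\'e-style composition, with the same choice of final configurations. The only cosmetic difference is that the paper treats boundary vertices as constants and takes $\conf(B)$ to be the full powerset of a set of representative formulas $\Formulas^q(B)$, whereas you restrict to realizable types via unary naming predicates; this does not change the argument in any material way.
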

\begin{proof}[Proof sketch]
 Let $\Pp=\select_\varphi$ be the problem in question, where $\varphi(X)$ is a $\CMSOtwo$ formula with a free edge set variable $X$. Let $q$ be the quantifier rank of $\varphi$, that is, the maximum number of nested quantifiers in $\varphi$.
 
 Consider a finite set $B\subseteq \Omega$ and all $\CMSOtwo$ formulas $\psi(X)$
 of quantifier rank at most $q$, where $X$ is an edge set variable, which can
 also use the elements of $B$ as constants (formally, the signature additionally
 contains every element of $B$ as a constant). It is well-known that there are
 only finitely many pairwise non-equivalent such formulas, in the sense that
 $\psi$ and $\psi'$ are equivalent if for every boundaried graph $\Gb=(G,B)$ and
 $S\subseteq E(G)$, we have $\Gb\models \psi(S)$ if and only if $\Gb\models
 \psi'(S)$. Moreover, the number of equivalence classes is bounded by a
 computable function of $|B|$. Then, let $\Formulas^q(B)$ be a set comprised of
 one arbitrarily selected representative from each equivalence class.

 We now define the configuration scheme $(\conf,c)$ as follows:
 \begin{itemize}
  \item For each finite $B\subseteq \Omega$, we set
  $$\conf(B)\coloneqq 2^{\Formulas^q(B)},$$
  that is, $\conf(B)$ is the powerset of $\Formulas^q(B)$.
  \item For each boundaried graph $\Gb=(G,B)$ and an edge subset $S\subseteq E(G)$, we set
  $$c(\Gb,S)\coloneqq \{\psi(X)\in \Formulas^q(B)~|~\Gb\models \psi(S)\}.$$
 \end{itemize}
 A standard argument involving Ehrenfeucht-Fra\"isse games shows that this configuration scheme is compositional. It remains to observe that $(\conf,c)$ recognizes $\select_\varphi$ by taking
 $$F\coloneqq \{\Delta\subseteq \Formulas^q(\emptyset)~|~\varphi\in \Delta\}.\qedhere$$
 \end{proof}

\subsection{Isolation}



The remainder of this section is devoted to the proof of Theorem~\ref{thm:mso}. We remark that the same technique can be used to prove the same result also for vertex selection problems. We leave the straightforward modifications to the reader.

\medskip

By Lemma~\ref{lem:mso-conf}, $\Pp$ is recognized by a compositional configuration scheme $(\conf,c)$. Let $F\subseteq \conf(\emptyset)$ be the set of final configurations, as in the definition of recognizing $\Pp$. Recall that there exists a computable function $g\colon \N\to \N$ such that
$$|\conf(B)|\leq g(|B|)\qquad\textrm{for each finite }B\subseteq \Omega.$$
We may assume that the configuration scheme $(\conf,c)$ satisfies the following assertion for every boundaried graph $\Gb=(G,B)$ and $S,S'\subseteq E(G)$:
\begin{equation}\label{eq:boundary-fixed}
 \textrm{if }c(\Gb,S)=c(\Gb,S'),\quad\textrm{then }S\cap \binom{B}{2}=S'\cap \binom{B}{2}.
\end{equation}
Indeed, if assertion~\eqref{eq:boundary-fixed} is not satisfied, then we can instead use the configuration scheme $(\conf',c')$ defined as
$$\conf'(B)\coloneqq \conf(B)\times 2^{\binom{B}{2}}\qquad\textrm{and}\qquad c'(\Gb,S)=\left(c(\Gb,S),S\cap \binom{B}{2}\right),$$
which already satisfies assertion~\eqref{eq:boundary-fixed}. Note here that $(\conf',c')$ still recognizes $\Pp$ (by taking $F'\coloneqq F\times \{\emptyset\}$) and $|\conf'(B)|$ is still bounded by a computable function of $|B|$. We remark, however, that the configuration scheme provided by Lemma~\ref{lem:mso-conf} actually already satisfies~\eqref{eq:boundary-fixed}, provided the quantifier rank $q$ is positive.

To prove Theorem~\ref{thm:mso}, it suffices to give an isolation scheme for $\Pp$ on graphs of treewidth at most $k$ that uses at most
$\Oh(\log \Gamma\log n+\log^2 n)$ random bits, where
$$\Gamma\coloneqq \max\left(g(2k+2)^5,2^{\binom{2k+2}{2}}\right),$$
and assigns weights that are at most exponential in the number of random bits. From now on let us fix $k$, the graph $G$ given to the isolation scheme on input, and a tree decomposition $\treedecomp=(T,\beta)$ of $G$ of width at most $k$. We may assume that $T$ has at most $n$ nodes, where $n\coloneqq |V(G)|$ is the vertex count of $G$.

\paragraph*{Isolation scheme.} We first present the isolation scheme, which is just a general version of the scheme for {\sc{Hamiltonian Cycle}} presented in Section~\ref{sec:ham-tw} (but without the refinement of using the rank-based approach). First, let $\id\colon E(G)\to \{0,\ldots,|E(G)|-1\}$ be any bijection that assigns to each edge $e\in E(G)$ its unique {\em{identifier}} $\id(e)$. Then choose $1+\log n$ primes $$p_0,p_1,\ldots,p_{\log n}\in \{1,\ldots,M\}$$
uniformly and independently at random among primes in $\{1,\ldots,M\}$,
where
$$M\coloneqq \Gamma^4\cdot n^{14}.$$
Note that selecting one prime $p_i$ requires $\Oh(\log M)=\Oh(\log \Gamma+\log n)$ random bits, hence we have used $\Oh(\log \Gamma\log n+\log^2 n)$ random bits in total.

Next, we inductively define weight functions $\omega_0,\omega_1,\ldots,\omega_{\log n}$ on $E(G)$ as follows:
\begin{itemize}[nosep]
 \item Set $\omega_0(e)=2^{\id(e)}\bmod p_0$ for all $e\in E(G)$.
 \item For $e\in E(G)$ and $i=1,2,\ldots,\log n$, set
 $$\omega_i(e)\coloneqq Mn^2\cdot \omega_{i-1}(e)+\left (2^{\id(e)}\bmod p_i\right).$$
\end{itemize}
Let
$$\omega\coloneqq \omega_{\log n}$$
and observe that $\omega$ assigns weights upper bounded by $2\cdot (Mn^2)^{\log n}=2^{\Oh(\Gamma\log n+\log^2 n)}$, as required. 

It remains to verify that $\omega$ isolates $\Pp(G)$ with high probability. We do this by induction. Recall that a {\em{segment}} in the tree $T$ is a connected subtree $I$ of $T$ such that the boundary $\bnd I$ --- the set of nodes of $I$ with neighbors outside of $I$ --- has size at most $2$. For a segment $I$ we define the boundaried graph
$$\sgm{\Gb}{I}\coloneqq \left(G\left[\bigcup_{x\in I}\beta(x)\right],\bigcup_{x\in \bnd I} \beta(x)\right).$$
Note that this definition is slightly different than the one used in Section~\ref{sec:hamcycle}, as $\sgm{\Gb}{I}$ contains all the edges of $G$ with both endpoints in $\bigcup_{x\in I}\beta(x)$.
We will also write $\bnd \sgm{\Gb}{I}\coloneqq \bigcup_{x\in \bnd I} \beta(x)$ for the boundary of $\sgm{\Gb}{I}$.
For $\gamma\in \conf(\bnd \sgm{\Gb}{I})$, let us define the set of {\em{partial solutions}} in $\sgm{\Gb}{I}$ that yield configuration $\gamma$ as: 
$$\Ss(I,\gamma)\coloneqq \{S\subseteq E(\sgm{\Gb}{I})~|~c(\sgm{\Gb}{I},S)=\gamma\}.$$
We will prove the following statement by induction on $i$.

\begin{claim}\label{cl:mso-induction}
 For each $i\in \{0,1,2,\ldots,\log n\}$ and each segment $I$ of $T$ with at most $2^i$ edges, the probability that both of the following events happen is at least $1-\frac{8^{i}}{n^5}$: 
 \begin{itemize}
  \item $\omega_i$ isolates $\Ss(I,\gamma)$ for each $\gamma\in \conf(\bnd \sgm{\Gb}{I})$ for which $\Ss(I,\gamma)\neq \emptyset$; and
  \item $\min \omega_i(\Ss(I,\gamma))\neq \min \omega_i(\Ss(I,\gamma'))$ for all $\gamma,\gamma'\in \conf(\bnd \sgm{\Gb}{I})$ such that $\gamma\neq \gamma'$, $\Ss(I,\gamma)\neq \emptyset$, and $\Ss(I,\gamma')\neq \emptyset$.
 \end{itemize}
\end{claim}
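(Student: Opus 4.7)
\medskip
\noindent\textbf{Proof plan.} The plan is to prove Claim~\ref{cl:mso-induction} by induction on $i$, mirroring the structure of the proof of Theorem~\ref{thm:hc-tree} in Section~\ref{sec:ham-tw}, but with the abstract configuration scheme $(\conf,c)$ replacing the concrete Hamiltonian-cycle configurations. As there, the statement will be read globally: with the stated probability, both properties hold simultaneously for every segment of size at most $2^i$.

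For the base case $i=0$, I would take any segment $I$ consisting of at most one edge of $T$. Then $\sgm{\Gb}{I}$ has at most $2k+2$ vertices and hence at most $\binom{2k+2}{2}$ edges, so $\sgm{\Gb}{I}$ admits at most $2^{\binom{2k+2}{2}} \leq \Gamma^{1/5}$ edge subsets. Summing over the $n^{\Oh(1)}$ segments gives a collection $Y_0$ of $n^{\Oh(1)}\cdot \Gamma^{1/5}$ candidate subsets. I would then apply the FKS hashing lemma (Lemma~\ref{FKS}) to the injective map $S \mapsto x_S := \sum_{e\in S} 2^{\id(e)}$ modulo the random prime $p_0$ with modulus $M = \Gamma^4 n^{14}$, concluding that with probability at least $1 - 1/n^5$ all values $\omega_0(S)$ for $S\in Y_0$ are pairwise distinct. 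Pairwise distinctness immediately implies both isolation properties for every base-case segment (unique minima within each $\Ss(I,\gamma)$, and separation of minima across distinct $\gamma$'s).

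For the inductive step, I would fix a segment $I$ with at most $2^i$ edges and apply Lemma~\ref{lemma:5SubTrees} to edge-partition $I$ into at most $t\leq 5$ subsegments $I_1,\ldots,I_t$ of size at most $2^{i-1}$ each. The key structural claim is that if $S \in \Ss(I,\gamma)$ attains $\min \omega_i(\Ss(I,\gamma))$, then writing $S_j := S \cap E(\sgm{\Gb}{I_j})$, each $S_j$ attains $\min \omega_{i-1}(\Ss(I_j,\gamma_j))$ where $\gamma_j := c(\sgm{\Gb}{I_j},S_j)$. This is where compositionality of $(\conf,c)$ is essential: replacing any $S_j$ with a strictly lighter $S_j'$ realizing the same $\gamma_j$ produces, by iterated application of the $\oplus$ and $\forget$ operations on configurations, another element of $\Ss(I,\gamma)$ of strictly smaller $\omega_i$-weight, contradicting minimality of $S$. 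Property~\eqref{eq:boundary-fixed} guarantees that such a swap is internally consistent on shared boundaries (i.e.\ boundary edges are determined by $\gamma_j$).

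Having established this, I would let $Y$ be the set of all unions $S_1 \cup \ldots \cup S_t$ (with $t\leq 5$) ranging over subsegments of size $\leq 2^{i-1}$ and all $\gamma_j$-minimum realizations guaranteed by the inductive hypothesis. Using $|\conf(\bnd \sgm{\Gb}{I_j})| \leq g(2k+2) \leq \Gamma^{1/5}$ and the polynomial bound on the number of segments, one obtains $|Y| \leq n^{\Oh(1)}\cdot \Gamma$. A final application of the FKS lemma to $p_i$ (again with modulus $M = \Gamma^4 n^{14}$) shows that the values $x_S$ for $S\in Y$ are pairwise distinct modulo $p_i$, which by the recurrence $\omega_i(e) = Mn^2\cdot \omega_{i-1}(e) + (2^{\id(e)}\bmod p_i)$ and the inductive hypothesis forces pairwise distinct $\omega_i$-weights across $Y$, yielding both required properties. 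The failure probabilities combine via a union bound to stay within $8^i/n^5$.

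The main obstacle I anticipate is the bookkeeping needed to verify the $8^i/n^5$ bound precisely and to argue that the structural compositional claim is watertight even when a subsegment $I_j$ has its boundary partially overlapping $\bnd I$ (so that $\gamma_j$ is partially constrained). The commutativity and associativity of $\oplus$, the behavior of $\forget$, and in particular property~\eqref{eq:boundary-fixed}, should handle this, but it will require careful unfolding of how $\gamma$ on $\bnd \sgm{\Gb}{I}$ is derived from the tuple $(\gamma_1,\ldots,\gamma_t)$ by a fixed sequence of $\oplus$ and $\forget$ operations that does not depend on the swapped subsolution.
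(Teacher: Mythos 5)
Your proposal misreads the quantifier structure of the claim, and this misreading is fatal for the stated modulus $M$. The claim is \emph{per-segment}: it asserts that for each $i$ and each segment $I$, the probability of the event \emph{for that fixed $I$} is at least $1-8^i/n^5$. You read it globally (``with the stated probability, both properties hold simultaneously for every segment of size at most $2^i$''), which is a genuinely different and much stronger statement. The paper's inductive step fixes one segment $I$, decomposes it via Lemma~\ref{lemma:5SubTrees} into $t\le 5$ subsegments, and performs a union bound \emph{only over those five subsegments}, giving failure probability at most $5\cdot 8^{i-1}/n^5$; adding the failure of one FKS application at most $1/n^5$ then yields $8^i/n^5$ since $5\cdot 8^{i-1}+1\le 8^i$. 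This is precisely what explains the $8^i$ factor in the bound. Your global reading cannot explain that factor, and moreover breaks the FKS accounting: the set $\Mm$ that the paper feeds to Lemma~\ref{FKS} in the induction step has size at most $|\Lambda|\leq g(2k+2)^5\leq\Gamma$, so with $M=\Gamma^4 n^{14}$ the FKS bound evaluates to $1-\binom{n}{2}\Gamma^2/(\Gamma^2 n^7)\ge 1-1/n^5$. Under your global reading the FKS set has cardinality of order $n^{\Oh(1)}\cdot\Gamma$ (with $\Oh(1)$ around $45$ once one counts unions of five minimizers over all $n^9$ segments), and the resulting FKS estimate $1-\binom{n}{2}(n^{\Oh(1)}\Gamma)^2/(\Gamma^2 n^7)$ is vacuous --- the error term is a large positive power of $n$, not $1/n^5$. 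You could salvage a global argument by choosing a much larger polynomial in $n$ inside $M$, but then you would be proving an isolation property for a different weight function, not the $\omega_i$ fixed by the paper's scheme that the claim is about.

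Two smaller issues: first, your inequality $2^{\binom{2k+2}{2}}\le \Gamma^{1/5}$ is false in general; the definition $\Gamma\coloneqq\max\left(g(2k+2)^5,\,2^{\binom{2k+2}{2}}\right)$ only gives $2^{\binom{2k+2}{2}}\le\Gamma$. Second, writing $S_j\coloneqq S\cap E(\sgm{\Gb}{I_j})$ does not give a well-defined partition, because an edge of $G$ whose endpoints lie in shared boundary bags may appear in several $\sgm{\Gb}{I_j}$; the paper works instead with \emph{some} partition $\{S_1,\ldots,S_t\}$ of $S$ with $S_j\subseteq E(\sgm{\Gb}{I_j})$, and uses the compatibility/partition formalism $\Rr_{\bar\gamma}$ together with assertion~\eqref{eq:boundary-fixed} to keep swaps consistent on shared boundary edges. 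You anticipated this complication, but it needs to be resolved by the partition formalism rather than set intersection. The compositionality mechanism you sketch for the exchange argument (swap a subsolution with a lighter realization of the same subconfiguration) is otherwise the right idea and matches Claim~\ref{cl:unique-minimizers} in the paper.
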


Note that
$$\Pp(G)=\bigcup_{\gamma\in F} \Ss(T,\gamma).$$
Hence, since $T$ has less than $n=2^{\log n}$ edges, Claim~\ref{cl:mso-induction} for $i=\log n$ and $I=T$ implies that $\omega$ isolates $\Pp(G)$ with probability at least $1-\frac{8^{\log n}}{n^5}=1-\frac{1}{n^2}$. So it remains to prove Claim~\ref{cl:mso-induction}, which we do by induction on $i$.

\paragraph*{Base step.} For $i=0$ we have that segment $I$ has at most one edge, so it has at most two nodes. Then $\sgm{\Gb}{I}$ is a boundaried graph on at most $2k+2$ vertices, hence it has at most $\binom{2k+2}{2}$ edges. For each $S\subseteq E(\sgm{\Gb}{I})$, let $$x_S=\sum_{e\in S} 2^{\id(e)}.$$
Observe that since the identifiers assigned to edges are pairwise different, the numbers $x_S$ for $S\subseteq E(\sgm{\Gb}{I})$ are also pairwise different. 
Since those numbers are upper bounded by $2^{\binom{n}{2}}$ and there are at most
$2^{\binom{2k+2}{2}}$ of them, $M\geq 2^{4\binom{2k+2}{2}}\cdot n^{14}$ and
$p_0$ is drawn uniformly at random among primes in the range $\{1,\ldots,M\}$,
from Lemma~\ref{FKS} we can conclude that with probability at least $1-\frac{1}{n^5}$, all the numbers $\{x_S\colon S\subseteq E(\sgm{\Gb}{I})\}$ have pairwise different remainders modulo $p_0$. Since $\omega_0(S)\equiv x_S\bmod p_0$, this means that with probability at least $1-\frac{1}{n^5}$, all subsets of $E(\sgm{\Gb}{I})$ receive pairwise different weights with respect to~$\omega_0$. This implies the conclusion of Claim~\ref{cl:mso-induction} for the segment $I$.

\paragraph*{Induction step.} Consider now any $i\geq 1$ and a segment $I$ in $T$ that has more than one but at most $2^i$ edges. By Lemma~\ref{lemma:5SubTrees}, segment $I$ can be partitioned into at most five segments $I_1,\ldots,I_t$ ($t\leq 5$), each with at most $2^{i-1}$ edges, so that
\begin{equation}\label{eq:addI}
\sgm{\Gb}{I}=\forget_{\bnd\sgm{\Gb}{I}}\left(\sgm{\Gb}{I_1}\oplus\ldots\oplus \sgm{\Gb}{I_t}\right).
\end{equation}
By induction assumption, for each segment $I_j$, $j\in \{1,\ldots,5\}$, the property described in Claim~\ref{cl:mso-induction} holds with probability at least $1-\frac{8^{i-1}}{n^5}$. By union bound, this property holds for all $I_j$, $j\in \{1,\ldots,5\}$, simultaneously with probability at least $1-\frac{5\cdot 8^{i-1}}{n^5}$. We proceed under the assumption that this is the case; we call this supposition~\asm.

We define
$$\Lambda\coloneqq \conf(\bnd \sgm{\Gb}{I_1})\times \ldots \times \conf(\bnd \sgm{\Gb}{I_t}).$$
Consider any $S\subseteq E(\sgm{\Gb}{I})$. We shall say that $S$ is {\em{compatible}} with $(\gamma_1,\ldots,\gamma_t)\in \Lambda$ if there exists a partition $\{S_1,\ldots,S_t\}$ of $S$ such that $$S_j\subseteq E(\sgm{\Gb}{I_j})\quad\textrm{and}\quad S_j\in \Ss(I_j,\gamma_j)\quad\textrm{for each }j\in \{1,\ldots,t\}.$$ (Note that $S$ can be simultaneously compatible with several elements of $\Lambda$, as edges of $S$ that belong to several of the graphs $\sgm{\Gb}{I_j}$ may be placed in different parts $S_j$.) For $\bar\gamma\in \Lambda$, we denote
$$\Rr_{\bar \gamma}\coloneqq \{S\subseteq E(\sgm{\Gb}{I})~|~S\textrm{ is compatible with }\bar \gamma\}.$$
We claim the following.

\begin{claim}\label{cl:unique-minimizers}
 For each $\bar\gamma\in \Lambda$ the weight function $\omega_{i-1}$ isolates the family $\Rr_{\bar \gamma}$, provided this family is nonempty.  
\end{claim}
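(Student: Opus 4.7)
The plan is to exhibit the unique $\omega_{i-1}$-minimizer of $\Rr_{\bar\gamma}$ explicitly. Supposition \asm\ guarantees that whenever $\Ss(I_j,\gamma_j)$ is nonempty, it has a unique $\omega_{i-1}$-minimum element $S_j^\star$. Since $\Rr_{\bar\gamma}\neq\emptyset$ forces each $\Ss(I_j,\gamma_j)$ to be nonempty, every $S_j^\star$ is well-defined, and I claim that $S^\star \coloneqq S_1^\star\cup\cdots\cup S_t^\star$ is the unique $\omega_{i-1}$-minimizer on $\Rr_{\bar\gamma}$.

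First I would record a short structural observation: for $j\neq k$, every edge $e\in E(\sgm{\Gb}{I_j})\cap E(\sgm{\Gb}{I_k})$ is a boundary edge of each side, i.e., $e\in\binom{\bnd\sgm{\Gb}{I_j}}{2}\cap\binom{\bnd\sgm{\Gb}{I_k}}{2}$. This follows from the fact that two edge-disjoint subtrees of the tree $T$ share at most one node, which must lie in $\bnd I_j\cap\bnd I_k$, combined with subtree-connectivity of bags in $\treedecomp$ to push both endpoints of $e$ into $\bnd\sgm{\Gb}{I_j}\cap\bnd\sgm{\Gb}{I_k}$.

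The main obstacle, and the crux of the argument, is verifying that $S_1^\star,\ldots,S_t^\star$ are pairwise disjoint, so that together they genuinely partition $S^\star$ and witness $S^\star\in\Rr_{\bar\gamma}$. To do this I would fix any $S\in\Rr_{\bar\gamma}$ together with its partition $(T_1,\ldots,T_t)$, $T_j\in\Ss(I_j,\gamma_j)$, and invoke assertion~\eqref{eq:boundary-fixed} to conclude that $T_j$ and $S_j^\star$ agree on $\binom{\bnd\sgm{\Gb}{I_j}}{2}$, since both carry configuration $\gamma_j$. If some edge $e$ lay in $S_j^\star\cap S_k^\star$ with $j\neq k$, the structural observation would force $e$ onto the boundary of both segment-graphs, and the agreement would then place $e\in T_j\cap T_k$, contradicting that $\{T_1,\ldots,T_t\}$ is a partition.

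Once disjointness is secured, the proof wraps up routinely: $S^\star$ lies in $\Rr_{\bar\gamma}$ with weight $\sum_j\omega_{i-1}(S_j^\star)$, and for any $S\in\Rr_{\bar\gamma}$ with witnessing partition $(T_1,\ldots,T_t)$ we have $\omega_{i-1}(S)=\sum_j\omega_{i-1}(T_j)\geq\sum_j\omega_{i-1}(S_j^\star)=\omega_{i-1}(S^\star)$, with equality forcing each $T_j=S_j^\star$ by the uniqueness clause of \asm, and hence $S=S^\star$. Thus $\omega_{i-1}$ isolates $\Rr_{\bar\gamma}$.
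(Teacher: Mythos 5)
Your proof is correct, and it is essentially the contrapositive dual of the paper's argument: the paper supposes two minimizers $S \neq S'$ of $\Rr_{\bar\gamma}$, shows via an exchange argument that each part $S_j$ (resp.\ $S_j'$) of a witnessing partition must itself be the unique $\omega_{i-1}$-minimizer of $\Ss(I_j,\gamma_j)$ granted by \asm, and concludes $S_j = S_j'$ for all $j$; you instead build the minimizer $S^\star$ up front from those unique pieces and verify directly that nothing else can match it. Where you add genuine value is in spelling out the structural observation that shared edges between $E(\sgm{\Gb}{I_j})$ and $E(\sgm{\Gb}{I_k})$ are necessarily boundary edges on both sides: the paper invokes \eqref{eq:boundary-fixed} to claim that after an exchange one still has a partition, but it leaves implicit exactly this fact about where $E(\sgm{\Gb}{I_j})$ and $E(\sgm{\Gb}{I_k})$ can overlap. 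Your constructive route needs one extra verification that the paper's contradiction route avoids — that the $S_j^\star$ are pairwise disjoint so that $S^\star$ actually lies in $\Rr_{\bar\gamma}$ — but that is precisely the content you supply with the boundary-edge observation plus \eqref{eq:boundary-fixed}, so the extra step is well-spent. Both arguments ultimately consume exactly the same ingredients (\asm\ for uniqueness in each $\Ss(I_j,\gamma_j)$, and \eqref{eq:boundary-fixed} to control behavior across shared boundary edges), so neither is more general; yours is arguably cleaner to read because it makes the needed geometric fact about segment overlaps explicit rather than folded into an offhand remark.
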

\begin{proof}
 Let $\bar \gamma=(\gamma_1,\ldots,\gamma_t)$.
 Suppose there exist two edge subsets $S$ and $S'$ that are both compatible with $\bar \gamma$ and have the same minimum weight with respect to $\omega_{i-1}$. Let $\{S_1,\ldots,S_t\}$ and $\{S_1',\ldots,S_t'\}$ be partitions of $S$ and~$S'$, respectively, that witness compatibility. Observe that for each $j\in \{1,\ldots,t\}$, $S_j$ must be an element of minimum weight with respect to $\omega_{i-1}$ in the family $\Ss(I_j,\gamma_j)$. Indeed, otherwise we could replace $S_j$ with a partial solution $\widehat{S}_j\in \Ss(I_j,\gamma_j)$ with $\omega_{i-1}(\widehat{S}_j)<\omega_{i-1}(S_j)$, thus obtaining an edge subset $\widehat{S}\coloneqq (S\setminus S_j)\cup \widehat{S}_j$ of weight smaller than that of $S$ that is also compatible with $\bar \gamma$. Note here that assumption~\eqref{eq:boundary-fixed} implies that $\{S_1,\ldots,S_{j-1},\widehat{S}_j,S_{j+1},\ldots,S_t\}$ is still a partition of $S$. The same observation applies also to~$S_j'$. Assumption \asm{} implies that $\omega_{i-1}$ isolates $\Ss(I_j,\gamma_j)$, which means that $S_j=S_j'$. This applies to each $j\in \{1,\ldots,t\}$, hence $S=S'$.
\end{proof}

For $\gamma\in \conf(\bnd \sgm{\Gb}{I})$, let
$\Lambda_\gamma$ be the set of all those $\bar\gamma=(\gamma_1,\ldots,\gamma_t)\in \Lambda$ for which
\begin{itemize}[nosep]
 \item $\forget_{\bnd\sgm{\Gb}{I}}\left(\gamma_1\oplus\ldots\oplus \gamma_t\right)=\gamma$ and
 \item the family $\Rr_{\bar \gamma}$ is nonempty.
\end{itemize}
Note that sets $\{\Lambda_\gamma\colon \gamma\in \conf(\bnd \sgm{\Gb}{I})\}$ are pairwise disjoint. For each $\bar \gamma\in \Lambda_\gamma$, let $S_{\bar \gamma}$ be the element of $\Rr_{\bar \gamma}$ that has the minimum weight with respect to~$\omega_{i-1}$. By Claim~\ref{cl:unique-minimizers}, there is a unique such element. Let now
$$\Mm_\gamma\coloneqq \{S_{\bar \gamma}~|~\bar\gamma\in \Lambda_\gamma\textrm{ and }\omega_{i-1}(S_{\bar \gamma})=\min \{\omega_{i-1}(S_{\bar \delta})\colon \bar \delta\in \Lambda_\gamma\}\}.$$
In other words, $\Mm_\gamma$ comprises sets $S_{\bar \gamma}$ of minimium weight with respect to $\omega_{i-1}$ among $\{S_{\bar \delta}\colon \bar \delta\in \Lambda_\gamma\}$. Let us observe the following.

\begin{claim}\label{cl:inMm}
 Suppose $S\in \Ss(I,\gamma)$ is such that $\omega_i(S)=\min \omega_i(\Ss(I,\gamma))$. Then $S\in \Mm_\gamma$.
\end{claim}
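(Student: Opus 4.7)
}
My plan is to decompose the given $S$ along the partition $I_1,\ldots,I_t$, identify the natural element $\bar\gamma\in \Lambda_\gamma$ that $S$ realizes, and then use the lexicographic structure of $\omega_i$ (heavy term driven by $\omega_{i-1}$, light correction driven by $p_i$) to argue that $S$ must already coincide with the unique minimizer $S_{\bar\gamma}$ and that $\omega_{i-1}(S_{\bar\gamma})$ is minimal over all of $\Lambda_\gamma$.

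Concretely, I would first choose any partition $\{S_1,\ldots,S_t\}$ of $S$ with $S_j\subseteq E(\sgm{\Gb}{I_j})$, set $\gamma_j\coloneqq c(\sgm{\Gb}{I_j},S_j)$, and let $\bar\gamma\coloneqq (\gamma_1,\ldots,\gamma_t)$. Then $S\in \Rr_{\bar\gamma}$ by construction, and the compositionality of $(\conf,c)$ together with~\eqref{eq:addI} yields
$$c(\sgm{\Gb}{I},S)=\forget_{\bnd\sgm{\Gb}{I}}(\gamma_1\oplus\cdots\oplus\gamma_t),$$
which equals $\gamma$ since $S\in \Ss(I,\gamma)$. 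Hence $\bar\gamma\in \Lambda_\gamma$, so $S_{\bar\gamma}$ is well-defined and $\omega_{i-1}(S_{\bar\gamma})\leq \omega_{i-1}(S)$.

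The key quantitative step is to exploit that $\omega_i(e)=Mn^2\cdot\omega_{i-1}(e)+(2^{\id(e)}\bmod p_i)$, so the residue part of $\omega_i(T)$ for any $T\subseteq E(G)$ is bounded by $|E(G)|\cdot(p_i-1)<Mn^2/2$, while $\omega_i$ otherwise scales $\omega_{i-1}$ by $Mn^2$. Consequently any strict improvement in $\omega_{i-1}$ translates into a strict improvement in $\omega_i$. Thus, if $\omega_{i-1}(S_{\bar\gamma})<\omega_{i-1}(S)$, then also $\omega_i(S_{\bar\gamma})<\omega_i(S)$; combined with the fact that $S_{\bar\gamma}$ again lies in $\Ss(I,\gamma)$ (by compositionality applied to any compatibility partition of $S_{\bar\gamma}$), this contradicts the $\omega_i$-minimality of $S$. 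Hence $\omega_{i-1}(S_{\bar\gamma})=\omega_{i-1}(S)$, and by the uniqueness from Claim~\ref{cl:unique-minimizers} we get $S=S_{\bar\gamma}$.

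Finally, by the very same lexicographic argument, if some $\bar\delta\in \Lambda_\gamma$ had $\omega_{i-1}(S_{\bar\delta})<\omega_{i-1}(S_{\bar\gamma})=\omega_{i-1}(S)$, then $S_{\bar\delta}\in \Ss(I,\gamma)$ would witness $\omega_i(S_{\bar\delta})<\omega_i(S)$, again contradicting the minimality of $S$. Therefore $\omega_{i-1}(S_{\bar\gamma})=\min\{\omega_{i-1}(S_{\bar\delta})\colon \bar\delta\in\Lambda_\gamma\}$, so $S=S_{\bar\gamma}\in \Mm_\gamma$. The only part requiring care is the arithmetic bookkeeping ensuring that the chosen scaling factor $Mn^2$ strictly dominates the residue term summed over any edge subset; this is the main (though routine) technical point.
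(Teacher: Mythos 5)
Your proof is correct and follows essentially the same route as the paper: decompose $S$ along $I_1,\ldots,I_t$ to obtain $\bar\gamma\in\Lambda_\gamma$, use the magnitude bound on $\sum_{e\in S}(2^{\id(e)}\bmod p_i)$ to transfer $\omega_i$-minimality to $\omega_{i-1}$-minimality, and then invoke the uniqueness from Claim~\ref{cl:unique-minimizers}. The only cosmetic difference is that the paper establishes $\omega_{i-1}(S)=\min\omega_{i-1}(\Ss(I,\gamma))$ once up front and reads off both conclusions directly, whereas you re-derive the lexicographic contradiction twice; the substance is identical.
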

\begin{proof}
 We first observe that
 \begin{equation}\label{eq:magnitude}
\sum_{e\in S} \left(2^{\id(e)}\bmod p_i\right)\leq p_i\cdot |S|<Mn^2.  
 \end{equation}
 Therefore, by the definition of $\omega_i$, assertion $\omega_i(S)=\min \omega_i(\Ss(I,\gamma))$ implies also that
 \begin{equation}\label{eq:om-prev}
 \omega_{i-1}(S)=\min \omega_{i-1}(\Ss(I,\gamma)).
 \end{equation}
 Consider any partition $\{S_1,\ldots,S_t\}$ of $S$ such that $S_j\subseteq E(\sgm{\Gb}{I_j})$, for each $j\in \{1,\ldots,t\}$. Let $\gamma_j\coloneqq c(\sgm{\Gb}{I_j},S_j)$. Clearly, partition $\{S_1,\ldots,S_t\}$ witnesses that $S$ is compatible with $\bar \gamma\coloneqq(\gamma_1,\ldots,\gamma_t)$, hence $S\in \Rr_{\bar \gamma}$. Further, by~\eqref{eq:addI} we infer that $\forget_{\bnd\sgm{\Gb}{I}}\left(\gamma_1\oplus\ldots\oplus \gamma_t\right)=\gamma$. We conclude that $\bar \gamma \in \Lambda_\gamma$. As $\Rr_{\bar \gamma}\subseteq \Ss(I,\gamma)$,~\eqref{eq:om-prev} implies that $S=S_{\bar \gamma}$. We can now use~\eqref{eq:om-prev} again to conclude that $S\in \Mm_\gamma$. 
\end{proof}

Let now
$$\Mm\coloneqq \bigcup_{\gamma\in \conf(\bnd \sgm{\Gb}{I})} \Mm_{\gamma}.$$
Note that
$$|\Mm|\leq |\Lambda|\leq g(2k+2)^5\leq \Gamma.$$
Since $p_i$ is chosen uniformly at random among primes in the range
$\{1,\ldots,M\}$, where $M=\Gamma^4\cdot n^{14}$, from Lemma~\ref{FKS} we infer that the following event happens with probability at least $1-\frac{1}{n^{5}}$ (conditioned on \asm{} happening):
\begin{equation}\label{eq:FKScor}
\left(\sum_{e\in S} 2^{\id(e)}\right)\bmod p_i \neq \left(\sum_{e\in S'} 2^{\id(e)}\right)\bmod p_i\qquad\textrm{for all }S,S'\in \Mm, S\neq S'.
\end{equation}
As argued in~\eqref{eq:magnitude}, we have $\sum_{e\in S} \left(2^{\id(e)}\bmod p_i\right)<Mn^2$, hence event~\eqref{eq:FKScor} happening entails that the mapping
$$S\mapsto (\omega_i(S)\bmod Mn^2)\bmod p_i$$
is injective on $\Mm$. In particular, the elements of $\Mm$ receive pairwise different weights w.r.t.~$\omega_i$.

All in all, we conclude that the probability that \asm{} happens and that the elements of $\Mm$ receive pairwise different weights w.r.t.~$\omega_i$ is at least
$$\left(1-\frac{5\cdot 8^{i-1}}{n^5}\right)\left(1-\frac{1}{n^{5}}\right)\geq 1-\frac{8^{i}}{n^5}.$$
It now suffices to observe that Claim~\ref{cl:inMm} together with injectivity of $\omega_i$ on $\Mm$ directly imply the conclusion of Claim~\ref{cl:mso-induction} for segment $I$.

\newcommand{\wt}[1]{\widetilde{#1}}

\section{Lower Bounds}\label{sec:lbs}

\subsection{Unconditional Lower Bounds}

In this subsection we present several lower bounds against the existence of (oblivious) isolation schemes
for $\mathsf{NP}$-hard problems on graphs with constant treedepth or pathwidth.
We start with a standard information-theoretical lower bound on the number of random bits needed for construction of a weight function. A similar statement can be found
in~\cite{isolation-lemma2}

\begin{lemma}[cf. proof of Theorem 1 in~\cite{isolation-lemma2}]
    \label{pigeonhole}
    For every $\beta>0$ there exists $\alpha>0$ such that the following holds. 
    Suppose $n,W\in \nat$ are large enough (depending on $\beta$) and $\omega_1,\ldots,\omega_t \colon [n] \rightarrow [W]$ are weight assignments, where $t\leq \alpha\cdot \frac{n}{\log (nW)}$. Then there exist two different subsets $A,B \subseteq [n]$ such that
    \begin{itemize}[nosep]
        \item $A\cup B=[n]$,
        \item $|A\setminus B| = |B \setminus A| \le \beta n$, and
        \item for every $i \in [t]$ it holds that $\omega_i(A) = \omega_i(B)$.
    \end{itemize}
\end{lemma}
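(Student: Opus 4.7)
The plan is a clean pigeonhole argument after a small bookkeeping reduction. First I would reduce the sought condition to a symmetric ``collision'' statement: it suffices to find two distinct, disjoint subsets $C, D \subseteq [n]$ with $|C|=|D|\leq \beta n$ and $\omega_i(C)=\omega_i(D)$ for every $i\in [t]$. Indeed, given such $C,D$, the pair $A\coloneqq [n]\setminus D$ and $B\coloneqq [n]\setminus C$ satisfies $A\cup B=[n]$, $A\setminus B=C$, $B\setminus A=D$, and $\omega_i(A)=\omega_i([n])-\omega_i(D)=\omega_i([n])-\omega_i(C)=\omega_i(B)$ for each $i$.

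To produce such $C$ and $D$, set $k\coloneqq \lfloor \min(\beta,1/2)\cdot n\rfloor$ and look at all $\binom{n}{k}$ subsets of $[n]$ of size exactly $k$. To each such subset $S$ assign its \emph{signature} $\sigma(S)\coloneqq (\omega_1(S),\ldots,\omega_t(S))$. Since each coordinate of $\sigma(S)$ lies in $\{0,1,\ldots,nW\}$, the number of possible signatures is at most $(nW+1)^t$. If we can force
\[
\binom{n}{k} > (nW+1)^t,
\]
then by pigeonhole two distinct $k$-subsets $S_1\neq S_2$ satisfy $\sigma(S_1)=\sigma(S_2)$; cancelling the common part $S_1\cap S_2$ from each weighted sum and setting $C\coloneqq S_1\setminus S_2$, $D\coloneqq S_2\setminus S_1$ yields a pair with the properties required by the reduction (in particular $|C|=|D|\leq k\leq \beta n$ and $C\neq D$ because $S_1\neq S_2$).

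It remains to verify that the inequality $\binom{n}{k}>(nW+1)^t$ is satisfied whenever $t\leq \alpha\cdot \tfrac{n}{\log(nW)}$ for a suitable $\alpha=\alpha(\beta)>0$ and $n,W$ are large enough. Using the standard lower bound $\binom{n}{k}\geq 2^{(H(k/n)-o(1))\,n}$, where $H(\cdot)$ is the binary entropy, we get $\log\binom{n}{k}\geq (H(\min(\beta,1/2))-o(1))\cdot n$; on the other hand $\log((nW+1)^t)\leq t(\log(nW)+1)$. Choosing any $\alpha<H(\min(\beta,1/2))$ makes the left-hand side strictly dominate for sufficiently large $n,W$, which closes the argument.

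No step is genuinely difficult; the only thing requiring a touch of care is the bookkeeping for the reduction (ensuring that $A\neq B$, which follows from $C\neq D$, and that the size bound transfers correctly). The asymptotic comparison in the last paragraph is routine, and the constant $\alpha$ can be taken to be, say, $\alpha\coloneqq \tfrac12\,H(\min(\beta,1/2))$.
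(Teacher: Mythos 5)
Your proof is correct and essentially the same as the paper's: both are the same pigeonhole argument on $k$-subsets (with $k\approx\min(\beta,\tfrac12)n$), the same count of at most $(nW)^{\Oh(t)}$ possible signature vectors, and the same bookkeeping to convert a collision of two equal-size sets into the desired pair $A,B$ with $A\cup B=[n]$ (your $A=[n]\setminus D$, $B=[n]\setminus C$ is literally the paper's $A=A'\cup([n]\setminus(A'\cup B'))$, $B=B'\cup([n]\setminus(A'\cup B'))$ after cancelling the common part $S_1\cap S_2$).
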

\begin{proof}
    We may assume that $\beta\leq \frac{1}{2}$.
    Observe that then there exists $\alpha>0$ such that
    \begin{equation}\label{eq:alpha-choice}
        \binom{n}{\lfloor\beta n\rfloor}>1+2^{\alpha n}\qquad\textrm{for all }n\in \nat\textrm{ large enough.}
    \end{equation}
    We verify that the statement of the lemma holds for $\alpha$ chosen as above.

    Consider a function $\Phi \colon 2^{[n]} \rightarrow \nat^t$ defined as:
    \begin{displaymath}
        \Phi(X) \coloneqq (\omega_1(X),\omega_2(X),\ldots,\omega_t(X)).
    \end{displaymath}
    Since for every nonempty $X \subseteq [n]$ and $i \in [t]$ we have $1\leq \omega_i(X)\leq nW$, function $\Phi$ can take at most $1+(nW)^t$ different values. As we assumed that $t\leq \alpha\cdot \frac{n}{\log (nW)}$, by~\eqref{eq:alpha-choice} we have 
    $$1+(nW)^t \leq 1+2^{\alpha n} <\binom{n}{\lfloor \beta n\rfloor}.$$
    By pigeonhole principle it follows that there exist two
    different sets $A',B' \subseteq [n]$, each of size exactly $\lfloor\beta n\rfloor$, that have the same value assigned by
    $\Phi$. This means that $\omega_i(A')=\omega_i(B')$ for all $i\in [t]$. Let now
    $$A\coloneqq A'\cup ([n]\setminus (A'\cup B'))\qquad\textrm{and}\qquad B\coloneqq B'\cup ([n]\setminus (A'\cup B')).$$
    Note that for each $i\in [t]$, we have
    $$\omega_i(A)=\omega_i(A')+\omega_i([n]\setminus (A'\cup B'))=\omega_i(B')+\omega_i([n]\setminus (A'\cup B'))=\omega_i(B).
    $$
    Further,
    $$|A\setminus B|=|A'\setminus B'|\leq |A'|\leq \beta n,$$
    and similarly $|B\setminus A|\leq \beta n$.
    Thus, sets $A$ and $B$ have the desired property.
\end{proof}

\begin{figure}[ht!]
    \centering
    \includegraphics[width=\textwidth]{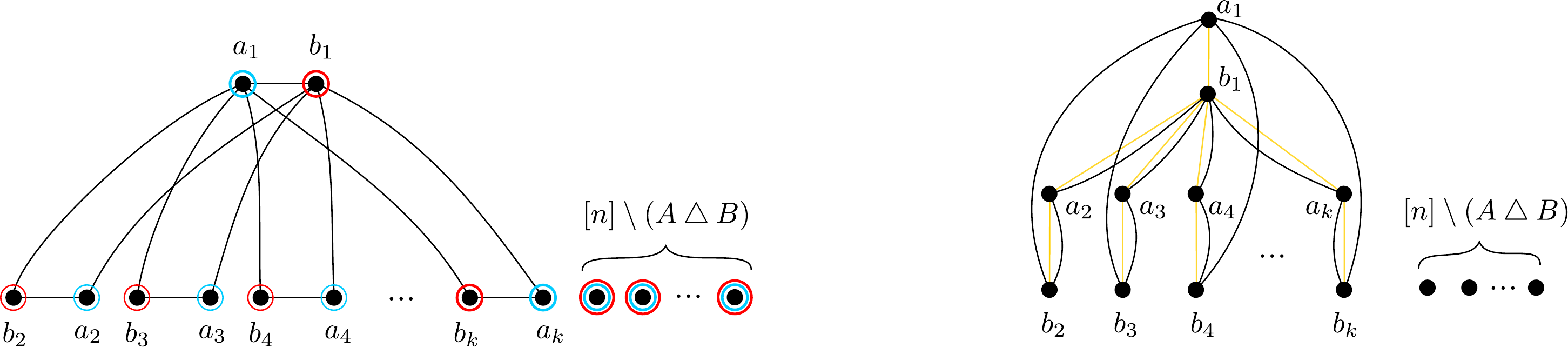}
    \caption{The left panel presents the construction of the graph $G$. The
    vertices of the only two maximum-size independent sets of $G$ are marked blue and red, respectively. The right panel presents an elimination forest of $G$ of height $4$.}
\label{fig:lower-bound-mis}
\end{figure}

We now use Lemma~\ref{pigeonhole} to prove the following statement, which will be used for establishing a lower bound against isolation schemes for maximum-size independent sets.

\begin{lemma}
    \label{mis-lowerbound}
    There exists $\alpha>0$ such that the following holds. 
    Suppose $n,W \in \nat$ are large enough and $\omega_1,\ldots,\omega_t \colon [n] \rightarrow [W]$ are weight
    assignments, where $t\leq \alpha\cdot \frac{n}{\log (nW)}$. Then there exists a graph $G$ on vertex set $[n]$ and such that
    \begin{itemize}[nosep]
     \item the treedepth of $G$ is at most $4$, and
     \item $G$ has exactly two different maximum-size independent sets $A$ and $B$, which moreover satisfy $\omega_i(A)=\omega_i(B)$ for all $i\in [t]$.
    \end{itemize}
\end{lemma}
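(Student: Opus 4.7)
The plan is to feed Lemma~\ref{pigeonhole} two sets $A, B$ that agree under every $\omega_i$, and then to design a graph of treedepth at most $4$ whose only two maximum independent sets are precisely $A$ and $B$. First I would fix an arbitrary $\beta \in (0, 1)$, say $\beta = 1/2$, and let $\alpha$ be the constant returned by Lemma~\ref{pigeonhole} for this $\beta$. For $t \leq \alpha n / \log(nW)$ the lemma then supplies distinct $A, B \subseteq [n]$ with $A \cup B = [n]$, $|A \setminus B| = |B \setminus A| = k \geq 1$, and $\omega_i(A) = \omega_i(B)$ for every $i$. Automatically $|A| = |B|$, which is necessary if both are to be maximum independent sets.

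Write $A \setminus B = \{a_1, \ldots, a_k\}$ and $B \setminus A = \{b_1, \ldots, b_k\}$. The key step is to construct on $A \triangle B$ an asymmetric ``double-spider'' gadget: take the edges $a_1 b_j$ for every $j \in [k]$, the edges $b_1 a_j$ for every $j \in \{2,\ldots,k\}$, and the matching edges $a_i b_i$ for every $i \in \{2,\ldots,k\}$; every vertex of $A \cap B$ is left isolated. No edge lies inside $A$ or inside $B$, so both sets are independent of the common size $|A \cap B| + k$. The shape of the gadget is tailored to admit an elimination forest of height $4$: place $a_1$ at the root, $b_1$ immediately below it, then $a_2, \ldots, a_k$ as children of $b_1$, each carrying its partner $b_i$ as a leaf (vertices of $A \cap B$ form singleton trees). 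Every edge of $G$ is then an ancestor-descendant pair, so the treedepth of $G$ is at most $4$.

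The remaining step is to show that $A$ and $B$ are the only maximum independent sets. Since the vertices of $A \cap B$ are isolated they lie in every maximum independent set $I$, so I would analyse $I' \coloneqq I \cap (A \triangle B)$ by cases: if $a_1 \in I'$ then $a_1$'s adjacencies to all $b_j$ exclude the entire $B \setminus A$ from $I'$, forcing $I' = A \setminus B$ by maximality; symmetrically if $b_1 \in I'$ then $I' = B \setminus A$; and if neither $a_1$ nor $b_1$ lies in $I'$ then $I'$ is contained in the disjoint union of the $k-1$ matching edges $a_i b_i$ for $i \geq 2$, and therefore has size at most $k-1 < k$, ruling out this case.

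The main conceptual obstacle I anticipate is precisely the bounded-treedepth constraint: the most natural gadgets with exactly two maximum independent sets --- such as the even cycle $C_{2k}$ or $K_{k,k}$ possibly minus a perfect matching --- contain long induced paths and hence have treedepth $\Omega(\log k)$, incompatible with the required bound $4$. The double-spider sidesteps this by routing all interactions through the two centres $a_1$ and $b_1$, simultaneously realising the forced binary choice between the $a$-side and the $b$-side and keeping the elimination forest shallow.
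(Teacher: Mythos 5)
Your proposal matches the paper's proof essentially verbatim: the same invocation of Lemma~\ref{pigeonhole} with $\beta=1/2$, the same ``double-spider'' graph (your edge set $\{a_1b_j\}_{j\in[k]}\cup\{a_jb_1\}_{j\geq 2}\cup\{a_jb_j\}_{j\geq 2}$ is exactly the paper's $\{a_ib_i\}_{i\in[k]}\cup\{a_ib_1\}_{i\geq 2}\cup\{a_1b_i\}_{i\geq 2}$), the same uniqueness argument driven by the forced choice between $a_1$ and $b_1$, and the same shallow elimination forest rooted at $a_1$. You even spell out the elimination forest explicitly where the paper defers to a figure, which is a small improvement in self-containment, but the argument is the same.
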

\begin{proof}
    By Lemma~\ref{pigeonhole} applied for $\beta=\frac{1}{2}$, we can choose $\alpha$ small enough so that the assertion $t\leq \alpha\cdot \frac{n}{\log (nW)}$ implies the existence of two different sets $A,B \subseteq [n]$ such that 
    $|A\setminus B|=|B\setminus A|$, $A\cup B=[n]$, and
    $\omega_i(A)=\omega_i(B)$ for all $i\in [t]$.
    We are going to construct a graph $G$ on vertex sets $[n]$ of treedepth at most~$4$ such that $A$ and $B$ are the only two
    maximum-size independent sets in $G$. Let $k\coloneqq |A\setminus B|$.

    Let us arbitrarily enumerate $A\setminus B$ as $\{a_1,\ldots,a_k\}$ and $B\setminus A$ as $\{b_1,\ldots,b_k\}$. The edge set of $G$ consists of the following edges, see the left panel of Figure~\ref{fig:lower-bound-mis}:
    $$\{a_ib_i\colon i\in [k]\}\cup \{a_i b_1\colon i\in [k]\setminus \{1\}\} \cup \{a_1 b_i\colon i\in [k]\setminus \{1\}\}.$$
    This concludes the construction of $G$. Note that vertices of $A\cap B$ are isolated in $G$.
    It is easy to see that $G$ admits an elimination forest of height $4$, see the right panel of Figure~\ref{fig:lower-bound-mis}.
    
    It remains to prove that $A$ and $B$ are the only two maximum-size
    independent sets in $G$. Note first that $A$ and $B$ are indeed independent and there are no larger independent sets, because every independent set in $G$ contains at most one endpoint from each edge of the matching $M\coloneqq \{a_ib_i\colon i\in [k]\}$. Therefore, if $I$ is a maximum-size independent set in $G$, then $I$ needs to contain all the (isolated) vertices from $A\cap B$, and one endpoint of each edge of $M$. Now if $I$ contains the endpoint $a_1$ of the edge $a_1b_1$, then $I$ cannot contain any of the vertices $b_i$ for $i\in [k]$, because all these vertices are adjacent to $a_1$. Hence $I$ must contain all vertices $a_i$ for $i\in [k]$, implying that $I=A$. Analogously, if $I$ contains $b_1$, then $I=B$.
\end{proof}

\begin{corollary}[Lower bound for maximum-size independent sets]\label{cor:indset}
 There does not exist an isolation scheme for maximum-size independent sets on graphs of treedepth at most $4$ that would use $o(\log n)$ random bits and assign polynomially bounded weights.
\end{corollary}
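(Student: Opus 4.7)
The plan is to derive the corollary directly from Lemma~\ref{mis-lowerbound} by a simple parameter-chasing argument. Suppose towards a contradiction that an isolation scheme exists for maximum-size independent sets on graphs of treedepth at most $4$, using $r(n)=o(\log n)$ random bits and weights bounded by $W=W(n)=n^{\Oh(1)}$. By definition of an isolation scheme, for every $n\in \nat$ this gives a family of $\ell \leq 2^{r(n)}=n^{o(1)}$ weight functions $\omega_1,\ldots,\omega_\ell\colon [n]\to [W]$ such that for \emph{every} graph $G$ on vertex set $[n]$ belonging to the class (here: treedepth $\leq 4$), at least one $\omega_i$ isolates the family of maximum-size independent sets in $G$.

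Next, I would verify that the parameter regime falls within the scope of Lemma~\ref{mis-lowerbound}. Let $\alpha>0$ be the constant provided by that lemma. Since $W=n^{\Oh(1)}$, we have $\log(nW)=\Oh(\log n)$, hence
\[
\ell \cdot \log(nW) \leq n^{o(1)}\cdot \Oh(\log n) = o(n).
\]
In particular, for all $n$ sufficiently large the inequality $\ell \leq \alpha\cdot \frac{n}{\log(nW)}$ is satisfied. Fix such an $n$.

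Applying Lemma~\ref{mis-lowerbound} to $\omega_1,\ldots,\omega_\ell$ then produces a graph $G$ on vertex set $[n]$ whose treedepth is at most $4$ and which has exactly two maximum-size independent sets $A\neq B$ that satisfy $\omega_i(A)=\omega_i(B)$ for every $i\in [\ell]$. But then for every $i\in [\ell]$, the weight function $\omega_i$ assigns the same weight to two distinct members of the family of maximum-size independent sets of $G$; in particular, $\omega_i$ cannot isolate this family (regardless of which value is the minimum over the family). This contradicts the defining property of the isolation scheme, and so no such scheme can exist.

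The argument is essentially a one-liner once Lemma~\ref{mis-lowerbound} is in hand, and there is no real obstacle — the only thing to watch is the bookkeeping that $o(\log n)$ random bits together with polynomial weights really do satisfy the hypothesis $t\leq \alpha\cdot n/\log(nW)$, which is immediate since then $t=n^{o(1)}$ and $\log(nW)=\Theta(\log n)$.
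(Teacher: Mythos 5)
Your proof is correct and follows essentially the same argument as the paper: both invoke Lemma~\ref{mis-lowerbound} to conclude that any isolation scheme must use $\Omega(n/\log n)$ weight functions, hence $\Omega(\log n)$ random bits. You have merely spelled out the parameter-chasing that the paper's one-sentence proof leaves implicit.
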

\begin{proof}
 By Lemma~\ref{mis-lowerbound}, such an isolation scheme would need to produce $\Omega(n/\log n)$ different weight assignments on $n$-vertex graphs, for otherwise there would exist a graph of treedepth at most $4$ where the only two maximum-size independent sets receive the same weights in all possible weight assignments. It follows that the isolation scheme in question needs to use $\Omega(\log n)$ random bits. 
\end{proof}

We now present similar constructions for three other types of objects: minimum Steiner trees, minimum maximal matchings, and Hamiltonian cycles. In each case we first give a lemma presenting the construction, which is followed by a corollary stating the lower bound. In each case, the corollary follows from the same argument as that used in the proof of Corollary~\ref{cor:indset}.

\begin{figure}[ht!]
    \centering
    \includegraphics[width=\textwidth]{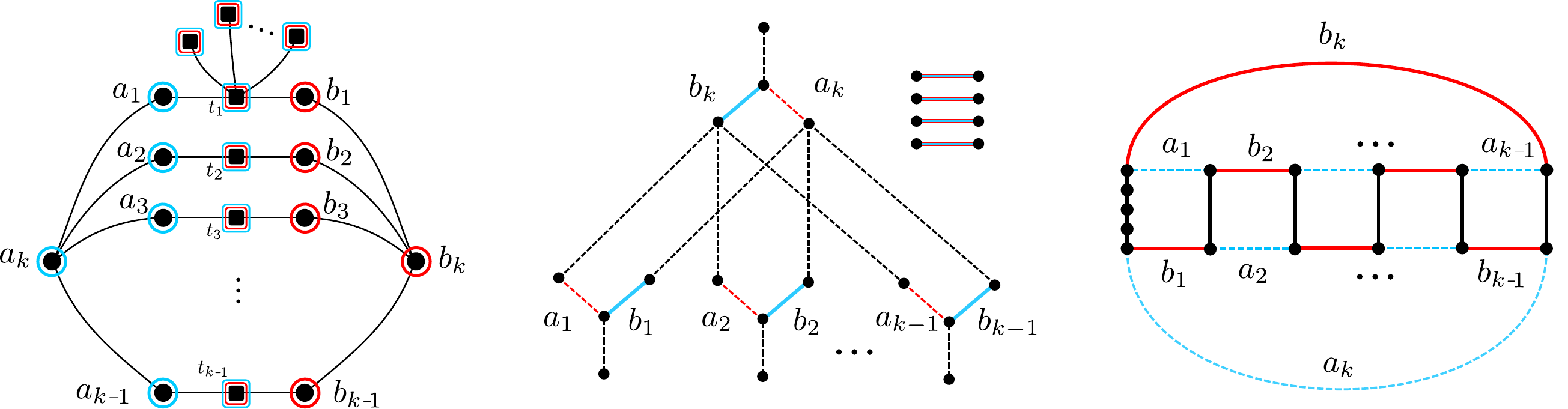}
    \caption{Construction used in the proofs of
    Lemmas~\ref{steiner-tree-lowerbound},~\ref{min-max-matching-lowerbound}, and~\ref{hamiltonian-cycle-lower-bound}, respectively.}
\label{fig:lower-bound2}
\end{figure}


For a graph $G$ and a set of terminals $T\subseteq V(G)$, a {\em{minimum Steiner tree}} is a minimum-size set of vertices $S\subseteq V(G)$ such that $T\subseteq S$ and $G[S]$ is connected.

\begin{lemma}
    \label{steiner-tree-lowerbound}
    There exists $\alpha>0$ such that the following holds. 
    Suppose $n,W \in \nat$ are large enough and $\omega_1,\ldots,\omega_t \colon [n] \rightarrow [W]$ are weight
    assignments, where $t\leq \alpha\cdot \frac{n}{\log (nW)}$. Then there exists a graph $G$ on vertex set $[n]$ and a set of terminals $T\subseteq [n]$ such that
    \begin{itemize}[nosep]
     \item the treedepth of $G$ is at most $4$, and
     \item there are exactly two different minimum Steiner trees for $G$ and $T$, say $A$ and $B$, and they satisfy $\omega_i(A)=\omega_i(B)$ for all $i\in [t]$.
    \end{itemize}
\end{lemma}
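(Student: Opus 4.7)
The plan is to mirror the proof of Lemma~\ref{mis-lowerbound} exactly, replacing only the combinatorial gadget. First I would apply Lemma~\ref{pigeonhole} with any fixed $\beta < 1/3$ to obtain, for $t$ small enough, two distinct sets $A, B \subseteq [n]$ with $A \cup B = [n]$, $|A \setminus B| = |B \setminus A| = k$ for some $k \leq \beta n$, and $\omega_i(A) = \omega_i(B)$ for every $i \in [t]$; the choice $\beta < 1/3$ guarantees $|A \cap B| \geq k + 1$, which leaves enough room in $A \cap B$ to build the gadget. Enumerate $A \setminus B = \{a_1, \ldots, a_k\}$, $B \setminus A = \{b_1, \ldots, b_k\}$, pick distinct $c_2, \ldots, c_k \in A \cap B$, and let $C^\star \subseteq A \cap B$ consist of the remaining vertices of $A \cap B$ (so $|C^\star| \geq 2$).

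I would then build $G$ using $a_1$ and $b_1$ as two ``hubs'': route every other $a_i$ through $a_1$, every other $b_i$ through $b_1$, and hang a witness $c_i$ off each pair $(a_i, b_i)$. Concretely, the edge set of $G$ would be
\[
\{a_i a_1,\ b_i b_1,\ a_i c_i,\ b_i c_i : 2 \leq i \leq k\}\ \cup\ \{c a_1,\ c b_1 : c \in C^\star\},
\]
and the terminal set would be $T \coloneqq A \cap B$. A direct check shows that $G[A]$ and $G[B]$ are both connected and contain $T$, so both $A$ and $B$ are Steiner trees of size $|A \cap B| + k$.

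The main step is to argue that these are the only two minimum Steiner trees. For each $i \geq 2$ the only neighbors of $c_i$ are $a_i$ and $b_i$, so any feasible tree $S$ must contain one of them; similarly any feasible $S$ must include $a_1$ or $b_1$ to connect the terminals in $C^\star$. This forces $|S| \geq |A \cap B| + k$, with equality only when $S$ picks exactly one representative from each of $\{a_1, b_1\}$ and $\{a_i, b_i\}$ for $i \geq 2$. The hard part will be to verify that no ``mixed'' selection is feasible: if $S$ contains $a_1$ but some $b_j$ for $j \geq 2$, then the only neighbors of $b_j$ in $G$ are $b_1 \notin S$ and $c_j$, whose only other neighbor $a_j$ is also not in $S$, so $G[S]$ is disconnected. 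By symmetry, the only consistent choices are $A$ and $B$.

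Finally, to bound the treedepth I would exhibit an elimination forest of height $4$: take $a_1$ as the root, $b_1$ as its unique child, attach every vertex of $C^\star$ and every $c_i$ (for $i \geq 2$) as a child of $b_1$, and give each such $c_i$ its two children $a_i, b_i$. Removing $\{a_1, b_1\}$ from $G$ splits it into singletons $\{c\}$ for $c \in C^\star$ and three-vertex paths $a_i, c_i, b_i$ for $i \geq 2$, which confirms the forest is valid.
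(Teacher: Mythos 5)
Your proof is correct and follows the paper's approach exactly: invoke Lemma~\ref{pigeonhole} to obtain $A,B\subseteq[n]$ with $A\cup B=[n]$ and identical weight vectors, then build a treedepth-$4$ gadget with terminal set $A\cap B$ so that $A$ and $B$ are the unique minimum Steiner trees. Your gadget is a cosmetic variant of the paper's (you reuse $a_1,b_1$ both as the chain hubs and as the stars' centers for the leftover terminals in $C^\star$, where the paper uses $a_k,b_k$ as chain hubs and a distinguished terminal $t_1$ as the star center), which is slightly more symmetric and in fact dispenses with the paper's side assumption that $k\geq 2$; your verification that no mixed selection yields a connected $G[S]$ fills in a step the paper leaves to the reader.
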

\begin{proof}
    As in the proof of Lemma~\ref{mis-lowerbound}, we can use Lemma~\ref{pigeonhole} for $\beta=\frac{1}{3}$ to make sure that provided $\alpha$ is chosen small enough, there are two different sets $A,
    B \subseteq [n]$ such that $A\cup B=[n]$, $|A\setminus B|=|B\setminus A|\leq n/3$, and $\omega_i(A)=\omega_i(B)$ for all $i\in [t]$. It is easy to modify the proof of Lemma~\ref{pigeonhole} so that the following property is also guaranteed: if $k\coloneqq |A \setminus B|$, then $k\geq 2$.
    We are going to construct a graph $G$ on vertex set $[n]$ together with a set of terminals $T\subseteq V(G)$ so that $A$ and $B$ are the only two
    minimum Steiner trees for $G$ and $T$, and $G$ has treedepth at most $4$.

    Let us arbitrarily enumerate $A\setminus B$ as $\{a_1,\ldots,a_k\}$ and $B\setminus A$ as $\{b_1,\ldots,b_k\}$. Note that $k\leq n/3$ and $|A\cap B|\geq n/3$, hence $k\leq |A\cap B|$. We set $T\coloneqq A\cap B$ to be the terminals. Further, let $T' \subseteq T$ be any subset
    of $k-1$ terminals and let us arbitrarily enumerate $T'$ as $\{t_1,\ldots,t_{k-1} \}$. First, we make every terminal in $T\setminus T'$ adjacent to the terminal $t_1$. Next for every $i \in
    [k-1]$, we add edges $t_ia_i$, $t_ib_i$, $a_ia_k$ and $b_ib_k$. This concludes the construction of $G$. See
    the left panel of Figure~\ref{fig:lower-bound2} for a visualization.

    Observe that each connected component of the graph $G-\{a_k,b_k\}$ is a star, and hence has treedepth at most $2$. It follows that $G$ has treedepth at most $4$. That $A$ and $B$ are the only two minimum Steiner trees for $G$ and $T$ is straightforward; we leave the verification to the reader.
\end{proof}

\begin{corollary}[Lower bound for minimum Steiner trees]\label{cor:st-tree}
 There does not exist an isolation scheme for minimum Steiner trees on graphs of treedepth at most $4$ that would use $o(\log n)$ random bits and assign polynomially bounded weights.
\end{corollary}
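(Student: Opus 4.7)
The plan is to mirror the argument used in the proof of Corollary~\ref{cor:indset}, replacing the invocation of Lemma~\ref{mis-lowerbound} with Lemma~\ref{steiner-tree-lowerbound}. The core observation is that a valid isolation scheme must in particular discriminate between the two minimum Steiner trees produced by the lemma's construction, and this forces a lower bound on the number of weight assignments, hence on the number of random bits.

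More concretely, suppose for contradiction that there exists an isolation scheme for minimum Steiner trees on graphs of treedepth at most $4$ that uses $r(n)=o(\log n)$ random bits and assigns weights bounded by some polynomial $W(n)=n^{\Oh(1)}$. By the definition of isolation scheme (Definition~\ref{def:isolation-scheme}, adapted to the setting where an instance is a graph together with a terminal set, which can be encoded by e.g.\ doubling the vertex set), for each $n$ this produces a family of weight functions $\omega_1,\ldots,\omega_\ell \colon [n]\to \N$ with $\ell \le 2^{r(n)} = 2^{o(\log n)} = n^{o(1)}$, such that for every graph $G$ (with terminal set) on vertex set $[n]$ of treedepth at most $4$, at least half of the $\omega_i$ isolate the family of minimum Steiner trees.

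Now I would take $\alpha$ as given by Lemma~\ref{steiner-tree-lowerbound} and note that for all sufficiently large $n$,
\[
  \ell \;\leq\; n^{o(1)} \;\leq\; \alpha\cdot \frac{n}{\log(nW(n))},
\]
since the denominator grows only polylogarithmically while $n^{o(1)}$ grows subpolynomially, so this is strict for large $n$. Applying Lemma~\ref{steiner-tree-lowerbound} to the assignments $\omega_1,\ldots,\omega_\ell$ yields a graph $G$ on $[n]$ of treedepth at most $4$ and a terminal set $T \subseteq [n]$ with exactly two minimum Steiner trees $A$ and $B$ satisfying $\omega_i(A)=\omega_i(B)$ for every $i\in[\ell]$. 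Consequently, no $\omega_i$ isolates the family of minimum Steiner trees of this instance, contradicting the defining property of the isolation scheme (namely, that at least half of the $\omega_i$ do). Hence no such isolation scheme can exist.

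The only point that requires mild care is handling the fact that a Steiner tree instance consists of both a graph and a terminal set, whereas Definition~\ref{def:isolation-scheme} is stated for vertex selection problems determined by the graph alone; this is a cosmetic issue since the terminal set can be encoded into the graph (for example by padding with isolated vertices representing non-terminals, preserving treedepth) without affecting the lower-bound construction of Lemma~\ref{steiner-tree-lowerbound}. No genuine obstacle arises, as the argument is a direct transcription of the proof of Corollary~\ref{cor:indset}.
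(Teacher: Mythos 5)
Your proposal is correct and matches the paper's intended argument: the paper explicitly states that Corollary~\ref{cor:st-tree} ``follows from the same argument as that used in the proof of Corollary~\ref{cor:indset},'' which is exactly what you carry out, substituting Lemma~\ref{steiner-tree-lowerbound} for Lemma~\ref{mis-lowerbound}. Your side remark about the terminal set being part of the instance is a reasonable clarification that the paper leaves implicit.
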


Next, recall that a matching in a graph is {\em{maximal}} if no strict superset of it is a matching, and it is moreover a {\em{minimum maximal matching}} if it has the smallest possible size among maximal matchings.

\begin{lemma}
    \label{min-max-matching-lowerbound}
    There exists $\alpha>0$ such that the following holds. 
    Suppose $m,W \in \nat$ are large enough and $\omega_1,\ldots,\omega_t \colon [m] \rightarrow [W]$ are weight
    assignments, where $t\leq \alpha\cdot \frac{m}{\log (mW)}$. Then there exists a graph $G$ with edge set $[m]$ such that
    \begin{itemize}[nosep]
     \item the treedepth of $G$ is at most $4$, and
     \item there are exactly two different minimum maximal matchings in $G$, say $A$ and $B$, and they satisfy $\omega_i(A)=\omega_i(B)$ for all $i\in [t]$.
    \end{itemize}
\end{lemma}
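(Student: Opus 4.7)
The plan is to follow the same template as Lemmas~\ref{mis-lowerbound} and~\ref{steiner-tree-lowerbound}: first invoke Lemma~\ref{pigeonhole} to pull out two edge subsets $A,B \subseteq [m]$ with $A \cup B = [m]$, $k \coloneqq |A \setminus B| = |B \setminus A|$, and $\omega_i(A) = \omega_i(B)$ for every $i$; then build a graph $G$ on edge set $[m]$ of treedepth at most $4$ whose only two minimum maximal matchings are $A$ and $B$. I would apply Lemma~\ref{pigeonhole} with $\beta = 1/3$ (and adjust the construction slightly to guarantee $k \ge 2$), so that $|A \cap B| \ge m/3$ leaves plenty of room for the structural edges we need in $A \cap B$. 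Because the graph is constructed \emph{after} seeing $A$ and $B$, we are free to assign labels to vertex pairs so that $A \setminus B = \{a_1,\ldots,a_k\}$ and $B \setminus A = \{b_1,\ldots,b_k\}$ are two families of pairwise vertex-disjoint edges; this makes each of $A$ and $B$ a matching by construction.

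For the gadget (middle panel of Figure~\ref{fig:lower-bound2}) I would place $a_i$ and $b_i$ so that they share a single middle vertex, yielding a local $P_3$ with two possible choices, and then couple all $k$ such $P_3$-gadgets through two pivot vertices $p^A,p^B$ playing the role analogous to $a_k,b_k$ in the Steiner-tree construction. The coupling uses pendant-type edges drawn from $A \cap B$: a pendant at $p^A$ ensures that $p^A$ must be matched, and its incidences are arranged so that a maximal matching of minimum size must match $p^A$ through the ``$a$-side'' of every gadget (and symmetrically for $p^B$ through the ``$b$-side''). The remaining $A \cap B$ edges sit outside the gadget as isolated $K_2$ components, hence must belong to every maximal matching. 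The claim is then that exactly two configurations of the gadget are minimum maximal matchings---the all-$a$ and all-$b$ choices---so together with the forced isolated edges these give precisely $A$ and $B$.

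The verification would proceed in three steps: (i) check that $A$ and $B$ are maximal matchings of size $|A \cap B| + k$; (ii) show that every maximal matching has size at least $|A \cap B| + k$ with equality only for $A$ or $B$, by arguing that any ``mixed'' choice of $a$'s and $b$'s either leaves some edge addable (failing maximality) or requires an extra pivot/pendant edge (exceeding the minimum); and (iii) exhibit an elimination forest of height $\le 4$ placing the pivots $p^A,p^B$ at the top two levels, the middle vertex of each local $P_3$ at level $3$, and all remaining vertices (including the endpoints of the isolated $K_2$'s) as leaves at level $4$. The weight equality $\omega_i(A) = \omega_i(B)$ is inherited directly from Lemma~\ref{pigeonhole}. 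The main obstacle is the design of the coupling: naive gadget layouts either leave the $k$ local choices independent (giving $2^k$ minima instead of $2$) or, once a central pivot of degree $\Theta(k)$ is added, produce $\Theta(k)$ or $\Theta(k \cdot 2^{k-1})$ minima by allowing $p^A$ to ``pick'' different gadgets to match through. The delicate point is to arrange the pendants so that leaving $p^A$ unmatched is strictly suboptimal \emph{only} when all $a_i$-choices are taken in concert, while keeping the whole structure within treedepth $4$.
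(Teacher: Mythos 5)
Your proposal correctly identifies the template (invoke Lemma~\ref{pigeonhole}, then build a treedepth-$4$ gadget whose only two minimum maximal matchings realize the two edge sets), and the pieces you name --- $P_3$-gadgets for the $a_i,b_i$ pairs, two pivot vertices, pendants to force coverage, isolated $K_2$'s for leftover $A\cap B$ edges --- all appear in the paper's construction in some form. However, you explicitly flag the ``coupling'' between the $k$ local gadgets as the delicate step and do not resolve it; that step is precisely where the actual proof lives, so this is a genuine gap rather than an omitted routine verification.

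The specific idea you float --- putting pendants at the pivots $p^A,p^B$ so that both must be matched, and somehow arranging incidences so that $p^A$ ``matches through the $a$-side of every gadget'' --- cannot work as stated, because a pivot vertex covered by a matching edge is covered once and cannot propagate a choice to $k$ gadgets. The paper's construction does the opposite: the pivots $v^a_k,v^b_k$ are \emph{not} given pendants. Instead the $k$-th local gadget itself sits between them ($a_k$ joins $v^a_k$ to $v^c_k$, $b_k$ joins $v^b_k$ to $v^c_k$), and the coupling edges $c_i=(v^b_k,v^a_i)$, $c_i'=(v^a_k,v^b_i)$ for $i<k$ are designed so that they can never belong to a minimum maximal matching (such a matching must consist of the forced isolated edges $K$ plus exactly one edge per $v^c_i$, and no $c_i,c_i'$ touches a $v^c_i$). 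Maximality of the matching then forces, for each $i<k$, that $a_k\in M$ or $b_i\in M$ (else $c_i'$ is addable), and symmetrically $b_k\in M$ or $a_i\in M$. Since $v^c_k$ can host at most one of $a_k,b_k$, this cascades to the all-$a$ matching $\wt A$ or the all-$b$ matching $\wt B$. Leaving the pivot unmatched is what makes the mechanism work, not a pendant forcing it to be matched.

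Two further corrections. First, the pigeonhole parameter: the gadget consumes $3k-2$ edges of $A\cap B$ (one pendant $d_i$ per $v^c_i$ plus $2(k-1)$ coupling edges), so you need $|A\cap B|\ge 3k-2$; with $\beta=1/3$ you only get $|A\cap B|\ge k$, which can fail, and the paper therefore takes $\beta=1/5$ so that $k\le m/5$ and $|A\cap B|\ge 3m/5\ge 3k$. Second, the two isolated matchings are not $A$ and $B$ themselves but $\wt A=(A\setminus B)\cup K$ and $\wt B=(B\setminus A)\cup K$, where $K=(A\cap B)\setminus\ol K$ is what remains after discarding the $3k-2$ structural edges $\ol K$; the weight equality $\omega_i(\wt A)=\omega_i(\wt B)$ still follows since both sides drop the same $\omega_i(\ol K)$. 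Your phrasing that $A$ and $B$ themselves are the two minimum maximal matchings would be false (the structural edges in $A\cap B$ share vertices with the $a_i$'s and $b_i$'s, so $A$ is not even a matching in the constructed graph).
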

\begin{proof}
    By applying Lemma~\ref{pigeonhole} for $\beta=\frac{1}{5}$, we can choose $\alpha$ small enough so that the assertion $t\leq \alpha\cdot \frac{m}{\log (mW)}$ implies that there exists a pair of different subsets $A,B\subseteq [m]$ such that $|A\setminus B|=|B\setminus A|\leq m/5$, $A\cup B=[m]$, and $\omega_i(A)=\omega_i(B)$ for all $i\in [t]$. Note that then $|A\cap B|\geq 3m/5$, hence $|A\setminus B|\leq 3|A\cap B|$. Let then $\ol{K} \subseteq A \cap B$ be any subset of $A \cap B$ of size exactly $3k-2$, where $k\coloneqq |A\setminus B|$, and let $K \coloneqq (A \cap B) \setminus \ol{K}$. As in Lemma~\ref{steiner-tree-lowerbound}, we may assume that $k\geq 2$.
    
    Let $\wt{A} \coloneqq (A \setminus B) \cup K$ and
    $\wt{B} \coloneqq (B \setminus A) \cup K$. Note that $\wt{A} \cap \wt{B} =
    K$. Further, for every $i\in [t]$ we have
    $$\omega_i(\wt{A})=\omega_i(A)-\omega_i(\ol{K})=\omega_i(B)-\omega_i(\ol{K})=\omega_i(\wt{B}).$$
    We now construct a graph $G$ with edge set $[m]$ and treedepth at most $4$
    which has exactly two different minimum maximal matching: $\wt{A}$ and
    $\wt{B}$. See the center panel of Figure~\ref{fig:lower-bound2} for the construction.

    Let us arbitrarily enumerate $A\setminus B$ as $\{a_1,\ldots,a_k\}$ and $B \setminus A$ as $\{b_1,\ldots,b_k\}$.
    Further, recalling that $|\ol{K}|=3k-2$, we arbitrarily enumerate $\ol{K}$ as $\{d_1,\ldots,d_k,
    c_1,\ldots,c_{k-1},c_1',\ldots,c_{k-1}'\}$. Let us create $4k$ vertices:
    $\{v^c_1,\ldots,v^c_k\}$,$\{v^a_1,\ldots,v^a_k\}$,$\{v^b_1,\ldots,v^b_k\}$,
    $\{v^d_1,\ldots,v^d_k\}$.
    For every $i \in [k]$ we connect:
    \begin{itemize}[nosep]
     \item vertices $v^c_i$ and $v^a_i$ using edge $a_i$;
     \item vertices $v^c_i$ and $v^b_i$ using edge $b_i$; and
     \item vertices $v^c_i$ and $v^d_i$ using edge $d_i$.
    \end{itemize}
    Next, for every $i \in [k-1]$ we connect:
    \begin{itemize}[nosep]
     \item vertices $v^b_{k}$ and $v^a_i$ using edge $c_i$; and
     \item vertices $v^a_{k}$ and $v^b_i$ using edge $c'_i$.
    \end{itemize}
    In this way we have defined the endpoints of all the edges of $[m]$ apart from the edges of $K$. To finish the construction, for each edge $e\in K$ we add two extra vertices and connect them using $e$. Thus, $K$ becomes a matching in $G$ consisting of isolated edges.
    
    This concludes the construction of $G$. Note that if from $G$ we remove $v_k^a$ and $v_k^b$, then each of the remaining connected components is a star, and hence has treedepth at most $2$. This proves that $G$ has treedepth at most $4$. 
    
    We are left with proving that $\wt{A}$ and $\wt{B}$ are the only two minimum maximal matchings in $G$. Clearly $\wt{A}$ and $\wt{B}$ are maximal matchings. Let $M$ be any maximal matching of $G$. Since for every $i \in [k]$, vertex $v_i^d$ has degree $1$ and $v_i^c$ is its only neighbor, it follows that each vertex $v_i^c$ needs to be incident to an edge of $M$. Clearly, $M$ also needs to contain each edge of $K$, as these edges are isolated in $G$. Since vertices $v_i^c$ are pairwise nonadjacent, we conclude that every maximal matching of $G$ has at least $k+|K|$ edges. 
    As $|\wt{A}|=|\wt{B}|=k+|K|$, this implies that both $\wt{A}$ and $\wt{B}$ are minimum maximal matchings in $G$.
    
    It remains to argue that there are no minimum maximal matchings other than $\wt{A}$ and $\wt{B}$. Let $M$ be any minimum maximal matching in $G$. As we argued, $M$ needs to contain $K$, one edge incident to $v_i^c$ for each $i\in [k]$, and no other edges. In particular, no edge $c_i$ or $c_i'$, for any $i\in [k-1]$, is contained in $M$. By the maximality of $M$ this means that for each $i\in [k-1]$, at least one of the edges $\{a_k,b_i\}$ is included in $M$, and at least one of the edges $\{b_k,a_i\}$ is included in $M$. If neither $a_k$ nor $b_k$ was included in $M$, then this would mean that both $a_1$ and $b_1$ necessarily belong to $M$, a contradiction. If now $a_k\in M$ and $b_k\notin M$, then $a_i\in M$ for all $i\in [k-1]$ and $M=\wt{A}$. Similarly, if $a_k\notin M$ and $b_k\in M$, then $M=\wt{B}$.
\end{proof}

\begin{corollary}[Lower bound for minimum maximal matchings]\label{cor:mmm-lb}
 There does not exist an isolation scheme for minimum maximal matchings on graphs of treedepth at most $4$ that would use $o(\log n)$ random bits and assign polynomially bounded weights.
\end{corollary}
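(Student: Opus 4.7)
The plan is to reproduce, essentially verbatim, the argument used in the proof of Corollary~\ref{cor:indset}, only substituting Lemma~\ref{min-max-matching-lowerbound} for Lemma~\ref{mis-lowerbound}. Suppose toward a contradiction that such an isolation scheme exists; then for $m$-edge graphs it provides weight functions $\omega_1,\ldots,\omega_\ell\colon [m]\to [W]$ with $W=m^{\Oh(1)}$ and $\log \ell = o(\log n)$, such that at least half of the $\omega_i$ isolate the family of minimum maximal matchings whenever the input graph has treedepth at most $4$.

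Next I would plug the parameters into Lemma~\ref{min-max-matching-lowerbound}. Since $W$ is polynomial and $\ell = 2^{o(\log n)} = n^{o(1)}$, for $n$ (equivalently $m$) large enough we have $\ell \le \alpha\cdot m/\log(mW)$, where $\alpha$ is the constant provided by the lemma. Applying the lemma to the whole collection $\omega_1,\ldots,\omega_\ell$ yields a graph $G$ with edge set $[m]$ and treedepth at most $4$ whose set of minimum maximal matchings equals $\{A,B\}$ for two distinct $A,B$ satisfying $\omega_i(A)=\omega_i(B)$ for all $i\in [\ell]$. Thus not a single one of the $\omega_i$ isolates the family of minimum maximal matchings in $G$, contradicting the defining property of an isolation scheme.

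The only point requiring care is the translation between the parameters $m$ (edge count, which is the size of the domain of the weight functions) and $n$ (vertex count, appearing in the statement of the corollary). Inspecting the construction in Lemma~\ref{min-max-matching-lowerbound}, the resulting graph $G$ has $4k$ ``structured'' vertices $v^a_i,v^b_i,v^c_i,v^d_i$ together with $2|K|$ endpoints forming the isolated edges of $K$, where $k\le m/5$ and $|K|=m-5k+2$; in particular $n = 2m-6k+\Oh(1) = \Theta(m)$. Hence $\log n = \Theta(\log m)$, the assumption $\log\ell = o(\log n)$ is equivalent to $\log\ell = o(\log m)$, and ``polynomially bounded in $n$'' is the same as ``polynomially bounded in $m$''; so the contradiction derived above is genuine.

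I do not foresee any real obstacle: the entire content of the argument is already packaged in Lemma~\ref{min-max-matching-lowerbound}, and the only verification needed is the routine $n=\Theta(m)$ calculation above.
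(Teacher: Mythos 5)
Your proposal is correct and follows exactly the route the paper intends: the paper explicitly says each of these corollaries follows by the same argument as Corollary~\ref{cor:indset}, substituting the relevant lemma. Your additional check that $n = \Theta(m)$ (needed to translate between the corollary's vertex-count parameter and the lemma's edge-count parameter) is a detail the paper glosses over, and you verify it correctly.
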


\begin{lemma}
    \label{hamiltonian-cycle-lower-bound}
    There exists $\alpha>0$ such that the following holds. 
    Suppose $m,W \in \nat$ are large enough and $\omega_1,\ldots,\omega_t \colon [m] \rightarrow [W]$ are weight
    assignments, where $t\leq \alpha\cdot \frac{m}{\log (mW)}$. Then there exists a graph $G$ with edge set $[m]$ such that
    \begin{itemize}[nosep]
     \item the pathwidth of $G$ is at most $4$, and
     \item there are exactly two different Hamiltonian cycles in $G$, say with edge sets $A$ and $B$, and they satisfy $\omega_i(A)=\omega_i(B)$ for all $i\in [t]$.
    \end{itemize}
\end{lemma}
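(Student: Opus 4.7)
The plan is to follow the template of the preceding lemmas. By Lemma~\ref{pigeonhole} applied with $\beta=1/5$, we may choose $\alpha>0$ such that whenever $t\le \alpha \cdot m / \log(mW)$ there exist subsets $A_0,B_0\subseteq[m]$ with $A_0\cup B_0=[m]$, $k\coloneqq |A_0\setminus B_0|=|B_0\setminus A_0|\le m/5$, and $\omega_i(A_0)=\omega_i(B_0)$ for every $i$. A mild strengthening of the pigeonhole argument, analogous to the one used in the proof of Lemma~\ref{steiner-tree-lowerbound}, moreover lets us assume $k\ge 2$; this is necessary because the case $k=1$ would force the symmetric difference $A_0\triangle B_0$ to consist of exactly two edges, which cannot form a disjoint union of even cycles in a simple graph and so cannot be the symmetric difference of two Hamiltonian cycles.

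The main step is to realise $A_0,B_0$ as the only two Hamiltonian cycles of some graph $G$ with $m$ edges and pathwidth at most $4$. First I would build a base graph $G_0$ on $n_0=3k-1$ vertices $\{1,2,\ldots,3k-1\}$ whose edge set is the union of the two Hamiltonian cycles
\[
\widetilde A = 1{-}2{-}3{-}\cdots{-}(3k{-}1){-}1, \qquad \widetilde B = 1{-}4{-}3{-}2{-}7{-}6{-}5{-}10{-}9{-}8{-}\cdots{-}(3k{-}1){-}1,
\]
where $\widetilde B$ is obtained from $\widetilde A$ by reversing each of the $k-1$ consecutive length-three segments $(3i{-}1,3i,3i{+}1)$ for $i=1,\ldots,k-1$. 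A direct computation shows that $|\widetilde A\cap \widetilde B|=2k-1$ (hence $G_0$ has $m_0=4k-1$ edges), the $k-1$ ``middle'' vertices $3,6,\ldots,3k-3$ have degree $2$ in $G_0$, and $\widetilde A\triangle \widetilde B$ is a single $2k$-cycle on the remaining $2k$ ``swap'' vertices. To raise the edge count from $m_0$ to $m$ (possible because $k\le m/5$ implies $m_0\le m$), I would repeatedly subdivide edges of $\widetilde A\cap\widetilde B$ other than the chord $1{-}(3k{-}1)$; each subdivision replaces one edge by a length-two path through a new degree-$2$ vertex and leaves $k$ unchanged while lengthening $\widetilde A$ and $\widetilde B$ by one edge each. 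The resulting graph $G$ has $m$ edges on $n=m-k$ vertices. A path decomposition of $G_0$ of width $3$---namely the bags $\{1,3k{-}1,2,4\}$, $\{3k{-}1,2,3,4\}$, $\{3k{-}1,2,4,5\}$, $\{3k{-}1,2,5,7\}$, $\{3k{-}1,5,6,7\}$, $\{3k{-}1,5,7,8\}$, $\{3k{-}1,5,8,10\},\ldots$ in which the vertex $3k-1$ stays in every bag while a window of three sweeps through the remaining vertices---witnesses pathwidth $3$ for $G_0$, and absorbing every subdivision vertex into a single bag gives a path decomposition of $G$ of width at most $4$.

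It remains to verify that $G$ has exactly two Hamiltonian cycles. Every vertex of degree $2$ (each middle vertex of a reversed segment, together with each subdivision vertex) forces its two incident edges into any Hamiltonian cycle, so all edges of $\widetilde A\cap\widetilde B$ other than the chord $1{-}(3k{-}1)$ are automatically present in every such cycle. A short case analysis then shows that the chord itself is also forced: if it were absent, vertex $1$ would have to use both $1{-}2$ and $1{-}4$, and iteratively choosing the unique remaining edge at each successive swap vertex would produce a $2$-regular subgraph that decomposes into several shorter disjoint cycles (roughly one per reversed segment), contradicting connectedness of a Hamiltonian cycle. Once the chord is pinned, the only binary choice at vertex $1$ is whether to pair it with $1{-}2$ (which propagates deterministically through the swap cycle to reproduce $\widetilde A$) or with $1{-}4$ (which propagates to $\widetilde B$). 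Finally, since $|\widetilde A|=|A_0|=m-k$ and $|\widetilde A\cap\widetilde B|=|A_0\cap B_0|=m-2k$, any bijection $\phi\colon E(G)\to[m]$ that sends $\widetilde A\setminus\widetilde B$ to $A_0\setminus B_0$, $\widetilde B\setminus\widetilde A$ to $B_0\setminus A_0$, and $\widetilde A\cap\widetilde B$ to $A_0\cap B_0$ relabels the two Hamiltonian cycles of $G$ as $A_0$ and $B_0$, and these have matching weight under every $\omega_i$. The main obstacle I anticipate in the formal write-up is the case analysis ruling out all ``mixed'' traversals of the swap cycle and showing that the chord $1{-}(3k{-}1)$ must lie in every Hamiltonian cycle of $G$.
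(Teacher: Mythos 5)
Your proof is correct, but it replaces the paper's gadget with a genuinely different one. The paper builds a circular ladder (prism) on vertices $u_1,\ldots,u_k,v_1,\ldots,v_k$ with rungs $c_i=u_iv_i$ and then subdivides the rung $c_1$ into a long path carrying the remaining labels of $A\cap B$; you instead start from the cycle $\tilde A$ visiting $1,2,\ldots,3k-1$ in order, define $\tilde B$ by locally reversing the $k-1$ length-three windows $(3i-1,3i,3i+1)$, set $G_0=\tilde A\cup\tilde B$, and raise the edge count by subdividing common edges other than the chord $\{1,3k-1\}$. Your route is arguably cleaner and more robust: the two Hamiltonian cycles of your graph are literally $\tilde A$ and $\tilde B$, so the identification with the pigeonhole sets is immediate, and the uniqueness argument (degree-two vertices force all common edges except the chord; the residual graph is the swap $2k$-cycle plus the chord, whose only two completions to a connected $2$-factor are its two alternating perfect matchings joined with the chord, since omitting the chord forces the short cycle through $1,2,3,4$) works uniformly for every $k\ge 2$. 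It also sidesteps a real subtlety in the prism gadget: for even $k$ a subdivided prism admits additional (zigzag) Hamiltonian cycles through the subdivided rung, and the literal sets $A,B$ (which contain every rung $c_i$) cannot be edge sets of cycles in a prism because their degree sequence is wrong, so the paper's assertion that the construction yields exactly two Hamiltonian cycles with edge sets $A$ and $B$ does not go through as written. The one step in your write-up that should be made precise is the pathwidth bound: rather than vaguely absorbing subdivision vertices, replace each bag $\{3k-1,3i-1,3i,3i+1\}$ by a sequence of bags of the form $\{3k-1,3i-1,3i+1,p_j,p_{j+1}\}$ sweeping a two-element window along the corresponding subdivision path; these bags have size five, which witnesses pathwidth at most~$4$ as you claim.
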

\begin{proof}
    We apply Lemma~\ref{pigeonhole} again, this time for $\beta=\frac{1}{3}$. Hence, by selecting $\alpha$ small enough we can ensure that there exists different sets $A,B\subseteq [m]$ such that $|A\setminus B|=|B\setminus A|\leq m/3$, $A\cup B=[m]$, and $\omega_i(A)=\omega_i(B)$ for all $i\in [t]$. Note that then $|A\cap B|\geq m/3$, hence $|A\cap B|\geq k$, where $k\coloneqq |A\setminus B|$. As before, we may assume that $k\geq 3$.
    
    Let us arbitrarily enumerate $A\setminus B$ and $B\setminus A$ as $\{a_1,\ldots,a_k\}$ and $\{b_1,\ldots,b_k\}$, respectively. Further, let $\{c_1,\ldots,c_k\}$ be $k$ arbitrary elements of $A \cap B$. 
    
    We now construct a graph $G$ with edge set $[m]$ as follows; see the right panel of Figure~\ref{fig:lower-bound2}. First, create $2k$ vertices $\{u_1,\ldots,u_k\}$ and $\{v_1,\ldots,v_k\}$.
    For every $i \in [k]$ we connect vertices $u_i$ and $u_{i+1}$ using edge
    $a_i$, where $u_{k+1}=u_1$. Similarly, for every $i \in [k]$ we connect vertices $v_i$ with
    $v_{i+1}$ with edge $b_i$, where $v_{k+1}=v_1$. Next, for every $i \in [k]$ we connect vertices
    $u_i$ to $v_i$ with the edge $c_i$. 
    So far we have defined the endpoints of all the edges apart from the edges of $(A\cap B)\setminus \{c_1,\ldots,c_k\}$. To accommodate the remaining edges, replace the edge $c_1$ with a path of length $|A\cap B|-k+1$ connecting $u_1$ and $v_1$, and let the edges of this path be $(A\cap B)\setminus \{c_2,\ldots,c_k\}$.
    
    Note that removing $u_1$ and $v_1$ turns $G$ into the union of a path and a $2\times (k-1)$ grid, which is a graph of pathwidth at most $2$. It follows that $G$ itself has pathwidth at most $4$. Finally, it is easy to see that $G$ has exactly two Hamiltonian cycles, with edge sets $A$ and $B$, respectively.
\end{proof}

\begin{corollary}[Lower bound for Hamiltonian cycles]\label{cor:hc-lb}
 There does not exist an isolation scheme for Hamiltonian cycles on graphs of pathwidth at most $4$ that would use $o(\log n)$ random bits and assign polynomially bounded weights.
\end{corollary}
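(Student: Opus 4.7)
My plan is to derive Corollary~\ref{cor:hc-lb} as an immediate consequence of Lemma~\ref{hamiltonian-cycle-lower-bound}, following exactly the template used for Corollary~\ref{cor:indset}. The argument will be by contradiction: assuming an isolation scheme with the stated parameters exists, the number of weight functions it produces on $m$-edge graphs is too small to evade the conclusion of Lemma~\ref{hamiltonian-cycle-lower-bound}, which then supplies a pathwidth-$4$ graph witnessing that the scheme fails.

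Concretely, suppose there is an isolation scheme for Hamiltonian cycles on pathwidth-$4$ graphs using $o(\log n)$ random bits and weights bounded by $n^{\Oh(1)}$. For each $m$ it produces a list $\omega_1,\ldots,\omega_\ell \colon [m] \to [W]$ with $W = m^{\Oh(1)}$ and $\ell = 2^{o(\log m)} = m^{o(1)}$, at least half of which must isolate the family of Hamiltonian cycles on every pathwidth-$4$ graph with edge set $[m]$. (Here I rely on the fact that the graphs constructed in Lemma~\ref{hamiltonian-cycle-lower-bound} have $n = \Theta(m)$, so $o(\log n)$ and $o(\log m)$ are interchangeable and similarly for $\poly(n)$ versus $\poly(m)$.) For $m$ sufficiently large, the bound $\ell = m^{o(1)}$ lies well below $\alpha \cdot m / \log(m W)$, where $\alpha$ is the constant provided by Lemma~\ref{hamiltonian-cycle-lower-bound}, so the hypothesis of that lemma is satisfied.

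Invoking the lemma yields a graph $G$ of pathwidth at most $4$ with edge set $[m]$ that has exactly two Hamiltonian cycles, with edge sets $A$ and $B$, satisfying $\omega_i(A) = \omega_i(B)$ for every $i \in [\ell]$. Since the family of Hamiltonian cycles in $G$ is exactly $\{A, B\}$ and the two elements are always tied in weight, no single $\omega_i$ isolates this family. This contradicts the defining property of an isolation scheme. There is no real obstacle in the argument: the combinatorial content is entirely absorbed into Lemma~\ref{hamiltonian-cycle-lower-bound}, and the only care needed is the routine translation between the vertex count $n$ and the edge count $m$, which is harmless because both are of the same order for the graphs we apply the lemma to.
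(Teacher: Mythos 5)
Your proof is correct and follows precisely the template the paper itself invokes: it instantiates the counting argument of Corollary~\ref{cor:indset}, with Lemma~\ref{hamiltonian-cycle-lower-bound} in place of Lemma~\ref{mis-lowerbound}, to conclude the scheme would need $\Omega(m/\log m)$ weight assignments. Your explicit note that $n=\Theta(m)$ for the graphs produced by the lemma — so that $o(\log n)$, $o(\log m)$, $\poly(n)$, and $\poly(m)$ are interchangeable — is a detail the paper leaves implicit but that is indeed needed for a clean reduction.
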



\subsection{Conditional Lower Bounds}

An important result in the complexity theory is the randomized reduction from
languages in $\mathsf{NP}$ to \textsc{Unique SAT} due to Valiant and Vazirani~\cite{usat}. This reduction
is the essential procedure in plenty fundamental results in the Computational
Complexity, most notably Toda's Theorem~\cite{toda}.

We would like to note, that the question of derandomization of this red was already subject to a rigorous
research. For example, \citet{dell13} showed that probably we cannot hope to improve the
probability of success of Valiant and Vazirani~\cite{usat} reduction to be $2/3$
unless $\mathsf{NP} \subseteq \mathsf{P/poly}$.
In the opposite success probability regime, Calabro et al.~\citet{calabro08}
gave a randomized polynomial time reduction from $k$-\textsc{SAT} to
\textsc{Unique $k$-SAT} which works with probability $2^{-\Oh(n \log^2{k}/k)}$. Their
bound was subsequently improved by~\cite{traxler08} to $2^{-\Oh(n \log{k}/k)}$
Very recently~\citet{vyas20} showed that if \textsc{Unique $k$-SAT} on $n$ variable
admits an $2^{n(1-f(k)/k)}$ time algorithm for some unbounded $f$, then \textsc{$k$-SAT}
is in $2^{n(1-f(k)(1-\eps)/k)}$ time for every $\eps > 0$. These reductions work
in exponential time.

The question of whether you can derandomize the
reduction of Valiant and Vazirani~\cite{usat} was already subject to a rigorous
research. Dell et al.~\cite{dell13} showed that improving the
success probability of Valiant and Vazirani~\cite{usat} reduction to $2/3$ is
not possible unless $\mathsf{NP} \subseteq \mathsf{P/poly}$. 
In the opposite success probability regime, Calabro et al.~\citet{calabro08}
gave a randomized polynomial time reduction from \textsc{$k$-SAT} to
\textsc{Unique $k$-SAT} which works with probability $2^{-\Oh(n \log^2{k}/k)}$. Their
bound was subsequently improved by~\cite{traxler08} to $2^{-\Oh(n \log{k}/k)}$
Very recently~\citet{vyas20} showed that if \textsc{Unique $k$-SAT} on $n$ variable
admits an $2^{n(1-f(k)/k)}$ time algorithm for some unbounded $f$, then \textsc{$k$-SAT}
is in $2^{n(1-f(k)(1-\eps)/k)}$ time for every $\eps > 0$. These reductions work
in exponential time.

Montoya and M{\"{u}}ller~\citet{parametereized-randomness} considered a parameterized version of
result of Valiant and Vazirani~\cite{usat} and showed that certain
parameterized problems are also as hard as their unique variants.

%
%
The lower bounds in this section build on the following working assumption that these types of randomized reductions cannot be derandomized in the following strong sense:
\begin{conjecture}[Linear-Random-Bits Conjecture]
    \label{lrbc}
    There is no randomized polynomial time reduction from \textsc{SAT} to \textsc{Unique SAT} that uses $o(n)$ random bits, where
    $n$ is the number of variables of the original instance.
\end{conjecture}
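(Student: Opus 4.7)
The Linear-Random-Bits Conjecture is stated here as a working hypothesis rather than a theorem, and to the best of my knowledge it is a genuine open problem in derandomization theory; my proposal will therefore sketch what a proof attempt must look like rather than a complete route to the conclusion. The natural starting point is a counting/entropy argument. A hypothetical randomized polynomial-time reduction $R$ that uses $r(n)=o(n)$ random bits maps each SAT input $\phi$ to at most $2^{o(n)}$ distinct Unique SAT outputs (one per seed). One would aim to exhibit a family $\{\phi_i\}$ of satisfiable SAT instances on $n$ variables for which no family of $2^{o(n)}$ Unique SAT instances can simultaneously isolate a unique witness for each $\phi_i$ with non-negligible probability; by averaging, this would contradict the existence of $R$. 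This is morally parallel to Lemma~\ref{pigeonhole}, which derives an $\Omega(n)$ lower bound for oblivious isolation schemes by exhibiting indistinguishable pairs of subsets, and one would like an analogous indistinguishability construction one level up.

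The principal difficulty in making this work is that the reduction is allowed to introduce polynomially many fresh variables, so the space of possible output formulas dwarfs the input space and naive pigeonhole over assignments of the original $n$ variables loses its bite. A refinement would try to combine this counting idea with the partial obstructions cited in the paper: Dell et al.~\cite{dell13} show that improving the Valiant--Vazirani success probability to $2/3$ forces $\mathsf{NP}\subseteq\mathsf{P/poly}$, and Calabro et al.~\cite{calabro08,traxler08} bound the isolation probability by $2^{-\Omega(n\log k/k)}$ for $k$-SAT. One plan would be to take a hypothetical $o(n)$-random-bit reduction, hardwire an optimal seed as polynomial nonuniform advice, and argue that the resulting deterministic nonuniform procedure contradicts one of these established barriers. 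An orthogonal angle is to use the parameterized version of Valiant--Vazirani studied by Montoya and M\"uller~\cite{parametereized-randomness} and transfer their limitations to the unbounded-variable setting.

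The hardest part, and in my view the fundamental obstacle, is that any unconditional proof of the conjecture essentially demands a new circuit or derandomization lower bound beyond all currently available techniques: the conjecture is of a similar flavor to the question of derandomizing $\mathsf{BPP}$ or separating $\mathsf{NP}$ from suitable nonuniform classes, and it inherits all the usual barriers (relativization, natural proofs, algebrization). The combinatorial toolkit developed elsewhere in this paper (FKS hashing, the rank-based approach, separator-based recursion) acts in the opposite direction, constructing efficient isolation, and does not seem to yield lower bounds against general randomized reductions. Consequently, a realistic research plan is not to attack the conjecture head-on but to reduce it to established complexity hypotheses and accumulate structural evidence in its favor, which is precisely the stance the authors adopt by using the conjecture as a black box to derive conditional lower bounds on their isolation schemes.
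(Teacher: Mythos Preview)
Your assessment is correct and matches the paper's treatment exactly: this is a conjecture, not a theorem, and the paper makes no attempt to prove it. The paper explicitly introduces it as a ``working assumption'' and uses it solely as a black-box hypothesis from which to derive conditional lower bounds (Theorem~\ref{thm:mis-lb2} and the subsequent remarks), precisely the stance you describe in your final paragraph.
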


Falsifying Conjecture~\ref{lrbc} would on its own contribute to
great progress in computational complexity and hopefully inspire novel ideas
in derandomization.

The Conjecture~\ref{lrbc} enables us to express the natural
barrier in currently known techniques. We hope to spark an interest in more
randomness efficient reduction to \textsc{Unique SAT}, as it would hopefully lead to
better Isolation Schemes.  All our isolation schemes can be
even implemented in the restricted NC setting. We could even focus on weaker and more
believable Conjecture~\ref{lrbc} to exclude NC-reductions from \textsc{SAT} to \textsc{Unique SAT}
with $o(n)$ random bits. This weaker conjecture would enable us to exclude
isolation schemes that can be computed with NC circuits only.

Moreover, note that the reductions proposed in previous work blow up the size of
an instance to be $n^{\Oh(1)}$. We do not restrict a size of the output \textsc{Unique SAT}
instance, outside the fact that reduction needs to be in polynomial time.

In this section we prove the following theorem.
\begin{theorem}[Conditional Lower Bound for \textsc{Maximum Independent Set}]
    \label{thm:mis-lb2}
    No isolation scheme that can be computed in polynomial time and uses $o(td)$
    random bits with polynomially bounded maximum weight exists for \textsc{Maximum Independent Set} unless
    Conjecture~\ref{lrbc} is false.
\end{theorem}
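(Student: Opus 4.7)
The plan is to assume the existence of a polynomial-time isolation scheme $\mathcal{I}$ for $\MIS$ that uses $o(\td)$ random bits on graphs of treedepth $\td$ with polynomially bounded weights, and to derive from it a randomized polynomial-time reduction from \textsc{SAT} to \textsc{Unique SAT} using only $o(n)$ random bits, directly contradicting Conjecture~\ref{lrbc}.

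First, I would use a treedepth-friendly variant of the textbook reduction from $3$-\textsc{SAT} to \textsc{Independent Set}. Given a $3$-CNF formula $\varphi$ with $n$ variables and $m$ clauses, I construct a graph $G$ as follows: for each variable $x_i$ add a variable gadget consisting of two vertices $T_i$ and $F_i$ joined by an edge; for each clause $c$ add a clause gadget consisting of three vertices forming a triangle (one per literal occurrence in $c$); and for each literal occurrence $\ell \in c$ add an edge from the corresponding clause vertex to its \emph{contradiction} endpoint in the variable gadget, namely $F_i$ if $\ell = x_i$ and $T_i$ if $\ell = \neg x_i$. A standard case analysis shows that $G$ has an independent set of size $n + m$ if and only if $\varphi$ is satisfiable. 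The crucial point is that $\td(G) = \Oh(n)$: arrange $T_1, F_1, T_2, F_2, \ldots, T_n, F_n$ as a single chain of height $2n$ in an elimination forest, and attach each clause gadget as a length-$3$ tail below the deepest of its three contradiction endpoints; this makes every clause vertex a descendant of all of its variable-gadget neighbors, so the elimination forest has height at most $2n + 3$.

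Next, I would apply the hypothetical isolation scheme $\mathcal{I}$ to $G$. Since $\td(G) = \Oh(n)$, the scheme consumes $o(n)$ random bits and, with probability at least $1/2$, outputs a weight function $\omega \colon V(G) \to [W]$ with $W = n^{\Oh(1)}$ that isolates the family $\MIS(G)$ of maximum independent sets. Using $\Oh(\log n)$ additional random bits, sample a target weight $w$ uniformly from $\{0, 1, \ldots, (n+m)W\}$, and output in polynomial time a CNF formula $\psi$ encoding the statement ``there exists an independent set of size exactly $n + m$ in $G$ whose $\omega$-weight equals $w$''. This is a standard construction since both the weights and the cardinality target are polynomially bounded, and the satisfying assignments of $\psi$ are in bijection with such independent sets.

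For correctness, if $\varphi$ is unsatisfiable then $G$ has no independent set of size $n + m$, so $\psi$ is unsatisfiable for every choice of $\omega$ and $w$. If $\varphi$ is satisfiable then with probability at least $1/2$ the scheme $\mathcal{I}$ produces an $\omega$ that isolates $\MIS(G)$ at a unique minimum weight $w^\star \in \{0, \ldots, (n+m)W\}$, and independently the uniformly sampled $w$ equals $w^\star$ with probability $1/\poly(n)$; in that event $\psi$ is uniquely satisfiable. All told, this is a polynomial-time randomized reduction from \textsc{SAT} to \textsc{Unique SAT} of success probability $1/\poly(n)$ using $o(n) + \Oh(\log n) = o(n)$ random bits, contradicting Conjecture~\ref{lrbc}. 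The main subtle point I anticipate is the treedepth analysis: the clause gadgets introduce edges that cross between variable gadgets, and one must ensure the elimination forest keeps every such crossing on a single root-to-leaf path, which is exactly why the variable gadgets are arranged in a single chain rather than a balanced structure.
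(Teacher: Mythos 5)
Your overall architecture — reduce SAT to \textsc{Maximum Independent Set} via a variable/clause gadget construction of treedepth $\Oh(n)$, apply the hypothetical $o(d)$-bit isolation scheme, guess the isolated weight with $\Oh(\log n)$ further bits, and reduce back to a uniquely satisfiable CNF — is essentially the paper's proof. The one genuine gap is that you start from a \textbf{3-CNF} formula and build triangle clause gadgets, but Conjecture~\ref{lrbc} concerns general CNF \textsc{SAT}. This matters: the standard reduction from CNF-\textsc{SAT} to $3$-\textsc{SAT} introduces fresh auxiliary variables (one per literal beyond the third in each long clause), so it can blow the variable count from $n$ up to $\Theta(|\varphi|)$, after which ``$o(n')$ random bits on the 3-SAT instance'' no longer implies ``$o(n)$ random bits on the original SAT instance''. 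As written, your argument therefore only contradicts a 3-SAT variant of the conjecture, not Conjecture~\ref{lrbc} itself. The fix is cheap and is implicitly what the paper does: keep the same $T_i,F_i$ variable gadgets, but for a clause $C$ of arbitrary width $|C|$ use a $|C|$-clique rather than a triangle, again connecting each clique vertex to its contradiction endpoint. Since $|C|\leq 2n$, the elimination forest you describe (a single chain $T_1,F_1,\ldots,T_n,F_n$ of height $2n$, with each clause clique hung as a chain of length $\leq 2n$ below it) still has height $\Oh(n)$, and the rest of your analysis goes through unchanged.

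One smaller presentational difference worth noting: to turn the isolated, weighted, cardinality-constrained MIS instance into a \textsc{Unique SAT} instance, you directly write a Tseitin-style CNF $\psi$ for ``$S$ is independent, $|S|=n+m$, $\omega(S)=w$'' and invoke the fact that gate variables are uniquely determined. The paper instead first converts the weighted instance to an \emph{unweighted} MIS instance by replacing each vertex $v_i$ with an independent blow-up of size $2NM+w_i$, and then applies an off-the-shelf parsimonious reduction from unweighted MIS to SAT. Both routes are sound; yours is more direct, while the paper's leans on a black-box parsimonious reduction and avoids having to argue the parsimoniousness of the weight/cardinality encoding from scratch. If you keep your version, it is worth spelling out that the Tseitin auxiliary variables are functionally determined by the vertex-indicator variables, since that is exactly where the bijection (and hence uniqueness) is earned.
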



To prove Theorem~\ref{thm:mis-lb2} we use the fact that \textsc{Maximum Independent Set}
is reducible to \textsc{SAT} with a \emph{parsimonious} reduction (cf.,~\cite[Exercise
2.30 and 2.28]{computational-complexity}). Parsimonious reduction is a reduction
that preservers number of solutions~\cite{computational-complexity}. In fact,
all known natural reductions between $\mathsf{NP}$-complete problems are parsimonious or
can be easily modified to be
parsimonious~\cite[Section 6.2.1]{computational-complexity}.

\begin{figure}[ht!]
    \centering
    \includegraphics[width=\textwidth]{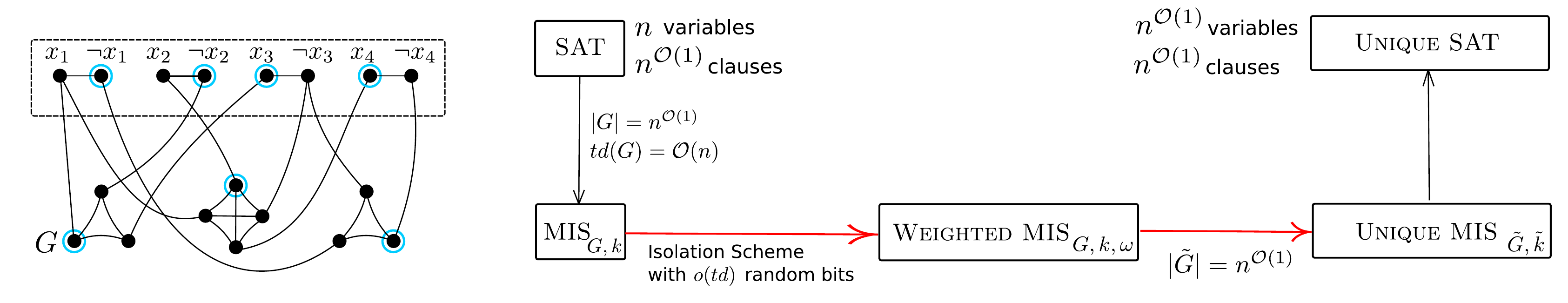}
    \caption{The left scheme presents the example construction of the graph $G$ in the
        reduction from \textsc{SAT} to \textsc{Maximum Independent Set}. Doted vertices are part
        of the variable clauses. Note that after removing them, the connected
        components are $\Oh(n)$ therefore treedepth of $G$ is $\Oh(n)$. The
    right scheme presents the schematic view of proof of
    Theorem~\ref{thm:mis-lb2}. We use the fact that reductions behind black arrows
    are known. In the proof we present red reductions.}
\label{fig:lower-bound-mis3}
\end{figure}

\begin{proof}
    First, we reduce \textsc{SAT} on $n$ variables to \textsc{Maximum Independent Set} problem on
    $N = n^{\Oh(1)}$ vertices by a standard reduction~\cite{Karp72}. The reduction
    gives us graph $G$ and number $k \in \nat$, such that there is no
    independent set of size greater than $k$ on $G$ and $G$ has maximum
    independent set of size exactly $k$ if the original formula was satisfiable.
    Moreover graph $G$ has treedepth $\Oh(n)$ because after removing vertices
    responsible for variable vertices graph $G$ has connected components of size
    $\Oh(n)$ and there are only $\Oh(n)$ variable vertices. See
    Figure~\ref{fig:lower-bound-mis3} for example construction of graph $G$ in
    reduction. Note that the reduction~\cite{Karp72} is not 
    parsimonious and we do not need it to be at this point.

    Now, we assume, that there exists an isolation scheme for \textsc{Maximum Independent Set}
    on $N$-vertex graphs with $d$ bounded treedepth that use $o(d)$ random bits and maximum
    weight is $M = n^{\Oh(1)}$. 

    Now, we invoke our isolation scheme on $N$-vertices graph $G$ and select a
    random integer $W \in_R [N M]$. We get a graph $G$ with weights
    $\omega_1,\ldots,\omega_{N} \in [n^{\Oh(1)}]$ of vertices and the property
    that with $1/n^{\Oh(1)}$ probability there exists exactly one independent
    set in $G$ of size exactly $k$ and weight exactly $W$ (if the original
    formula was satisfiable). 

    We modify our graph as follows. For every vertex $v_i \in V(G)$ the graph
    $\wt{G}$ have $\ell_i := 2 N M + w_i$ vertices $\{v^i_1,\ldots,v^i_{\ell_i}\}$.
    For each edge $(v_i,v_j) \in V(G)$ we add an edge $(v^i_a,v^j_b) \in V(\wt{G})$
    for every $a \in [\ell_i]$ and $b \in [\ell_j]$. Set $\wt{k} = 2 k N M + W$. The
    transformed graph $\wt{G}$ has the property:

    \begin{enumerate}[align=left, font=\normalfont, label=(\roman*)]
        \item If every independent set of $G$ is $<k$, then
            every independent set of $\wt{G}$ is $<\wt{k}$, and
        \item If $G$ has independent set of size $k$, then with $1/n^{\Oh(1)}$
            probability $\wt{G}$ has a \emph{unique} independent set of size
            $\wt{k}$.
    \end{enumerate}

    Graph $\wt{G}$ is an instance of unweighted \textsc{Maximum Independent Set} with the property that
    it has a unique independent set of size $\wt{k}$ if the original formula was
    satisfiable. Finally, we reduce our maximum independent set instance $\wt{G}$
    to the \textsc{SAT} with parsimonious reduction~\cite{computational-complexity}. This
    guarantees that with inversely polynomial probability the final instance of
    \textsc{SAT} has unique solution (if the original formula was a yes-instance).

    Because graph $G$ has treedepth $\Oh(n)$, our isolation scheme uses
    $o(n)$ random bits and therefore Conjecture~\ref{lrbc} is false. See
    Figure~\ref{fig:lower-bound-mis3} for a overall scheme of the reduction to
    \textsc{Unique SAT} and sizes and parameters of the produced instances. Note, that
    the final instance of \textsc{Unique SAT} may be polynomially larger than the
    original instance and it does not contradict Conjecture~\ref{lrbc}.
\end{proof}

Observe that  this lower-bounds framework works for many other $\mathsf{NP}$-complete
problems. Basically, all we need is (1) the reduction from \textsc{SAT} creates a graph
with $\Oh(n)$ treedepth/treewidth and (2) there is a polynomial time reduction
from weighted to unweighted problem. We can generalize our lower bound to work for
\textsc{Hamiltonian Cycle} in bounded treewidth graphs. First we reduce \textsc{SAT} to
\textsc{Hamiltonian Cycle} to the graph with $t = \Oh(n)$ and apply $o(t)$-random bits
isolation scheme.  We observe that we can subdivide
every edge of weight $\omega$ with a path of length $\omega$. This way we
construct an instance of \textsc{Subset TSP} on unweighted graphs (where terminals are vertices
of the original graphs) and use a parsimonious reduction from \textsc{Subset TSP} to
\textsc{SAT}.
We sum up this observation with the following remark.

\begin{remark}
    Assuming Conjecture~\ref{lrbc} no isolation scheme that can be computed in polynomial time and uses $o(t)$
    random bits with polynomially bounded maximum weight exists for \textsc{Hamiltonian
    Cycle} in graphs of treewidth bounded by $t$.
\end{remark}

For \textsc{Hamiltonian Cycle} in planar graphs, we can use the fact that
$\mathsf{NP}$-hardness
reduction~\cite{garey1976planar} from \textsc{SAT} on $m$-variables produces a graph $G$
with $\Oh(m^2)$ vertices. Similarly, we subdivide every edge of weight $\omega$
with a $\omega$-length path. We arrive at the instance of \textsc{Subset TSP} in
unweighted graphs and use a parsimonious reduction from \textsc{Subset TSP} to \textsc{SAT}.
Therefore any isolation scheme that needs $o(\sqrt{n})$ random bits and uses
polynomial weights would analogously contradict Conjecture~\ref{lrbc}.

\begin{remark}
    Assuming Conjecture~\ref{lrbc} no isolation scheme that can be computed in
    polynomial time and uses $o(\sqrt{n})$
    random bits with polynomially bounded maximum weight exists for \textsc{Hamiltonian
    Cycle} in planar graphs (where $n$ is the number of vertices of the graph).
\end{remark}

\section{Isolation of local vertex selection problems}
\label{sec:indepedent_set}

Recall that an independent set in a graph is a set of pairwise nonadjacent vertices, and an independent set is {\em{maximum}} if it has the largest possible cardinality.
In this section we prove Theorem~\ref{thm:mis}, which in plain words can be restated as follows. For a graph $G$, let
$\MIS(G)$ denote the set of all maximum independent sets in $G$. Suppose we
consider a graph $G\in \Gg_d$, say on vertex set $[n]$, given together with an
elimination forest $F$ of height at most $d$. The isolation scheme of
Theorem~\ref{thm:mis} is a family of $\ell=2^{\Oh(d)}$ weight functions $\omega_1,\ldots,\omega_\ell\colon [n]\times [d]\to [W]$, where $W=\Oh(n^6)$. Function $\omega_i$ assigns to each $v\in [n]$ the weight $\omega_i(v,\lvl_F(v))$. The requirement is that for all $G$ and $F$ as above, at least half of the functions $\omega_1,\ldots,\omega_\ell$ isolates $\MIS(G)$.

In the following, when discussing level-aware isolation schemes, we consider all weight functions as acting on a single argument: the vertex in question. The level of this vertex is supplied implicitly, as there is always some elimination forest of the graph fixed in the context.

In Section~\ref{sec:maximum-matching-treedepth} we show how to extend the
reasoning from this section to an example edge selection problem --- selection of maximum matchings.

\subsection{Exchange property}

In fact, we will prove a more general result than Theorem~\ref{thm:mis}. Namely, we isolate an abstract property of a vertex selection problem, which we call the {\em{exchange property}}, which is enjoyed by $\MIS(\cdot)$ and which is sufficient for our technique to work. There are multiple other problems of local nature that also have this property, hence the approach is indeed more general.

We will rely on the following two definitions.

\begin{definition}[Pivotal vertex]
    Let $G$ be a graph and $F$ be an elimination forest of $G$.
    For a family $\Ff \subseteq 2^{V(G)}$ and two different sets of vertices $A,B \in \Ff$, we say that
    a vertex $u$ is \emph{pivotal} for $A$ and $B$ in $F$ if $u \in A\triangle B$ and $\tail_F(u) \cap A = \tail_F(u) \cap B$.
\end{definition}

\begin{definition}[Exchange property]
    \label{exchange-prop}
    We say that a vertex selection problem $\Pp$ has the \emph{exchange property} if for every graph $G$, elimination forest $F$ of $G$, and
    weight function $\omega \colon V(G) \rightarrow \nat$ the following holds: if there exist two
    different $A,B \in \Pp(G)$ that are minimizers of $\omega$ on $\Pp(G)$, then there also exist $A',B' \in \Pp(G)$ that are minimizers of $\omega$ on $\Pp(G)$ such that there is only one pivotal vertex for $A'$ and~$B'$ in $F$.
\end{definition}

\begin{figure}[ht!]
    \centering
    \includegraphics[width=0.6\textwidth]{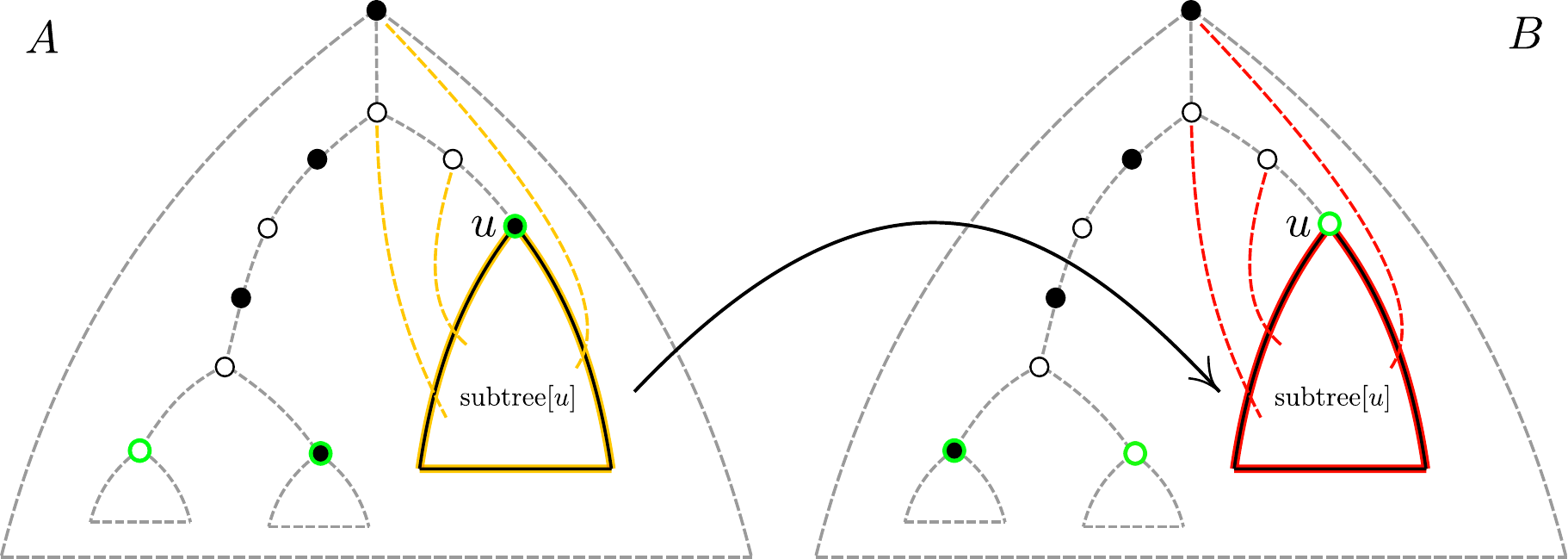}
    \caption{Schematic figure of the exchange argument used in the proof of Lemma~\ref{mis-has-exch-prop}. Both figures present the
    elimination forest $F$ of the given graph $G$. Black-filled vertices on the left are in
    $A$. Black-filled vertices on the right are in $B$. The pivotal vertices are
    green-stroked. Vertex $u \in A \setminus B$ is a chosen pivotal vertex. Because
    $A$ and $B$ are both minimizers of $\omega$, we can exchange the red
    subtree with the yellow one and construct a valid minimizer with one less pivotal vertex. }
\label{fig:definitions_layer}
\end{figure}


In the next sections we will focus on proving the following result.

\begin{theorem}
    \label{thm:exchange-iso}
    Let $\Pp$ be a vertex selection problem that enjoys the exchange property.
    Then for every $d \in \nat$, there is a level-aware isolation scheme for $\Pp$ on graphs of treedepth at most $d$ that uses $\Oh(d)$ random bits assigns weights bounded by $\Oh(n^6)$.
\end{theorem}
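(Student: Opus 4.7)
The plan has four steps, and the novelty is concentrated in how the exchange property tames the combinatorics of minimizer pairs in a height-$d$ elimination forest. First I would fix a scheme: using $\Oh(d)$ random bits, build a family of $2^{\Oh(d)}$ weight functions $\omega\colon[n]\times[d]\to[\Oh(n^6)]$ by adapting the isolation construction of Chari et al.~\cite{isolation-lemma2} to the level-aware setting, so that for each $(v,\ell)$ the value $\omega(v,\ell)$ is a polynomial-weight hash of $v$ that also depends deterministically on $\ell$.

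Next I would use the exchange property for the reduction. Assume some sampled $\omega$ fails to isolate $\Pp(G)$: then there are distinct $A,B\in\Pp(G)$ of the same minimum weight, and by the exchange property I may replace them by $(A',B')$ admitting a unique pivotal vertex $u$ in $F$. The key structural observation, immediate from the definition of ``pivotal,'' is that $u$ is the unique root of the induced subforest of $F$ on $A'\triangle B'$; consequently $A'\triangle B'\subseteq\subtree_F[u]$ and, more importantly, $u$ is the only vertex of $A'\triangle B'$ at level $\ell\coloneqq\lvl_F(u)$, since every other element of the symmetric difference is a strict descendant of $u$ and therefore lies at level strictly greater than $\ell$.

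Writing $\sigma(v)\in\{\pm 1\}$ for the signed indicator of $A'\triangle B'$, I would then decompose
\[
\omega(A')-\omega(B')\;=\;\sigma(u)\,\omega(u,\ell)\;+\;\Delta,
\]
where $\Delta\coloneqq\sum_{v\in(A'\triangle B')\setminus\{u\}}\sigma(v)\,\omega(v,\lvl_F(v))$ is a signed sum of weights coming only from levels strictly above $\ell$. Conditioning on the random bits that determine weights at those higher levels fixes $\Delta$ to an integer with $|\Delta|\le n\cdot\Oh(n^6)=\Oh(n^7)$, so the bad event $\omega(A')=\omega(B')$ reduces to a single scalar equation $\omega(u,\ell)=-\sigma(u)\Delta$. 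Across all pivotal candidates $u\in V(G)$ and all admissible values of $\Delta$, this is a polynomial number of scalar constraints, namely $\Oh(n^8)$ in total; by instantiating the Chari et al.\ hash family so that any single such constraint is violated with probability at least $1-1/\poly(n)$, a union bound yields isolation with probability at least $1/2$.

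The main obstacle is precisely this last union bound: a direct enumeration over minimizer pairs is hopeless because $\Pp(G)$ can be exponentially large (for instance $|\MIS(G)|=2^{n/2}$ for a disjoint union of $K_2$'s, whose treedepth is only $2$). The technical contribution inherited from Chari et al., combined with the pivotal observations above, avoids enumerating pairs: what truly matters is the number of distinct integer values of the shift $\Delta$, which is polynomially bounded because $\Delta$ is itself a bounded integer. This is exactly what allows the Chari et al.\ polynomial-weight machinery to be reused with only $\Oh(d)$ random bits while keeping the final weights bounded by $\Oh(n^6)$.
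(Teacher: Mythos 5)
Your structural observation about the unique pivotal vertex $u$ (that $A'\triangle B'\subseteq\subtree_F[u]$ and all other elements lie at strictly higher levels) is correct and matches the paper. But the way you propose to exploit it has a genuine gap. Your argument conditions on ``the random bits that determine weights at levels $>\ell$'' and then treats $\omega(u,\ell)$ as residual fresh randomness against which a single scalar constraint is tested. For that to make sense, the randomness at level $\ell$ must be independent of the randomness at higher levels. But there are $d$ levels, and to resist any $\poly(n)$ union bound each level would have to carry $\Omega(\log n)$ fresh bits, which forces $\Omega(d\log n)$ bits total --- exactly the ``warm-up'' scheme of Section~8.3, not the claimed $\Oh(d)$. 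To get down to $\Oh(d)$ bits the paper shares a single random value $\rho_e\in[\Oh(n^5)]$ across a block of $\lceil\log n\rceil$ consecutive levels, and distinguishes levels inside a block \emph{deterministically} by multiplying by $2^{f}$ where $f$ is the offset within the block; the resulting weight functions are then genuine linear forms $\phi^X(\ol\rho)=\sum_{v\in X}\rho_{e(v)}2^{f(v)}$ over only $\Oh(d/\log n)$ variables. The fact that the pivotal vertex is alone at its level is then used not to isolate a fresh random scalar, but in a $2$-adic argument: the coefficient of $\rho_{e(u)}$ in $\phi^{A'}$ and $\phi^{B'}$ differs because $u$ contributes $2^{f(u)}$ to exactly one of them while every other contribution inside the block is divisible by $2^{f(u)+1}$. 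That coefficient-distinctness is what feeds into the Chari et al.\ lemma (Proposition~2 of~\cite{isolation-lemma2}), which bounds the collision probability for a \emph{set of distinct linear forms} directly without a union bound over the forms.

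Independently of the above, your union bound is quantitatively inconsistent with the stated weight bound: you enumerate $\Oh(n^8)$ constraints (over $n$ choices of $u$ and $\Oh(n^7)$ values of $\Delta$) and ask each to fail with probability at most $1/\poly(n)$, which for the bound to close requires probability $\le 1/(2n^8)$ per constraint, hence a weight range of order $n^8$ --- contradicting the claimed $\Oh(n^6)$ maximum weight. The paper avoids this tension entirely: the linear-forms lemma gives failure probability $1/2$ uniformly over the whole (possibly exponential) family of forms, and the weight bound follows mechanically from the range $[32n^5]$ of the $\rho_e$'s multiplied by $2^{\lceil\log n\rceil}$, giving $\Oh(n^6)$.
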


Therefore, Theorem~\ref{thm:mis} follows by combining Theorem~\ref{thm:exchange-iso} with the following result.

\begin{lemma}
    \label{mis-has-exch-prop}
    The vertex selection problem $\MIS(\cdot)$ has the exchange property.
\end{lemma}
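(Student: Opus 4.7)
The plan is to argue by minimizing $|A'\triangle B'|$ over all pairs of distinct minimum-weight maximum independent sets. Given the hypothesized pair $A,B$, pick a pair $(A',B')$ of distinct minimizers of $\omega$ on $\MIS(G)$ realizing $|A'\triangle B'|$ minimum. I will show this pair must have exactly one pivotal vertex in $F$, which gives the exchange property.

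First, at least one pivotal vertex always exists: any vertex $w\in A'\triangle B'$ of minimum level in $F$ has no strict ancestor in $A'\triangle B'$, so $\tail_F(w)\cap A'=\tail_F(w)\cap B'$ and $w$ is pivotal. The substantive step is ruling out two pivotal vertices. Suppose $u,v$ were both pivotal; they must be $\preceq_F$-incomparable, since if $u\in\tail_F(v)$ then $u\in\tail_F(v)\cap(A'\triangle B')$, contradicting pivotality of $v$.

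The key construction is the \emph{subtree swap at $u$}: set $A''\coloneqq (A'\setminus\subtree_F[u])\cup(B'\cap\subtree_F[u])$ and $B''\coloneqq B'$. I will verify that $A''$ is still an MIS minimizer. For independence, any edge of $G$ between $B'\cap\subtree_F[u]$ and $A'\setminus\subtree_F[u]$ must, by the elimination-forest property, connect a descendant of $u$ to a strict ancestor of $u$; pivotality of $u$ gives $\tail_F(u)\cap A'=\tail_F(u)\cap B'$, so such an ancestor lies in $B'$, contradicting independence of $B'$ together with its endpoint in $B'\cap\subtree_F[u]$. For size and weight preservation, the symmetric swap $A'''\coloneqq(B'\setminus\subtree_F[u])\cup(A'\cap\subtree_F[u])$ is also independent by the same argument; then $|A''|,|A'''|\leq|A'|=|B'|$ together with $|A''|+|A'''|=|A'|+|B'|$ forces equality, and an identical telescoping on $\omega$ (using that $A',B'$ are minimizers, hence $\omega(A''),\omega(A''')\geq\omega(A')=\omega(B')$) forces $\omega(A'')=\omega(A')$. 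So $A''$ is a minimum-weight MIS, and $B''=B'$ is one by assumption.

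Finally, $A''\neq B''$: by incomparability, $v\notin\subtree_F[u]$, and outside $\subtree_F[u]$ the set $A''$ agrees with $A'$, so $v\in A'\triangle B'$ still lies in $A''\triangle B''$. On the other hand $A''\triangle B''=(A'\triangle B')\setminus\subtree_F[u]$ (the swap zeroes out the symmetric difference inside $\subtree_F[u]$), and this loses at least the element $u$, so $|A''\triangle B''|<|A'\triangle B'|$, contradicting the minimality of $|A'\triangle B'|$. The main obstacle is the independence check for $A''$, which relies essentially on the pivotality of $u$ to guarantee that the ``boundary'' $\tail_F(u)$ behaves identically in $A'$ and $B'$; once independence is secured, size-and-weight preservation is a short extremality argument via the symmetric swap.
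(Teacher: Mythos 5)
Your proof is correct and uses the same key construction as the paper: the subtree swap at a pivotal vertex, with independence of the swapped set established via the elimination-forest property and pivotality, and size/weight preservation via the complementary swap. The only difference is framing --- you run an extremal argument minimizing $|A'\triangle B'|$ and derive a contradiction from a second pivotal vertex, whereas the paper iteratively reduces the number of pivotal vertices by one until a single one remains; these are two standard presentations of the same descent.
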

\begin{proof}
    Let $G$ be a graph, $F$ be an elimination forest of $G$, and $\omega\colon V(G)\to \nat$ be a weight function.
    Assume that there exist
    two different $A, B \in \MIS(G)$ that are minimizers for $\omega$ and have
    more than one pivotal vertex. We will construct two different sets $A'$ and $B'$ that
    are also maximum independent sets in $G$ and have the same weight as
    $A$ and $B$, but have one less pivotal vertex. This is sufficient to prove the lemma, since we can repeat the
    construction to get a pair of minimizers with exactly one pivotal vertex, as required.

    Let $u \in V(G)$ be a pivotal vertex for $A$ and $B$. By symmetry, assume that $u\in B\setminus A$. We consider 
    $$A' \coloneqq A\qquad\textrm{and}\qquad B' \coloneqq (B \setminus \subtree[u]) \cup (\subtree[u] \cap A).$$
    It is easy to see that $A'$ and $B'$ have exactly one less pivotal vertex than $A$ and $B$: $u$ is pivotal for $A$ and $B$ but not for $A'$ and $B'$, and all other pivotal vertices are the same. Because $A'=A$, we only need to prove that $B' \in \MIS(G)$ and that $B'$ is a minimizer of $\omega$ on $\MIS(G)$. 

    First, note that $\tail(u) \cap A = \tail(u) \cap B$ because $u$ is pivotal of $A$ and $B$. Since the vertices of $\subtree[u]$ have only neighbors in $\tail(u)$, this implies that $B'$ is an independent set, and analogously the set $A''\coloneqq (A \setminus \subtree[u]) \cup (\subtree[u] \cap B)$ is an independent set as well. Since $|B'|+|A''|=|A|+|B|$, both $B'$ and $A''$ are independent sets, while $A$ and $B$ are maximum independent sets, it follows that both $B'$ and $A''$ must be maximum independent sets as well.
    Similarly, we have $\omega(B')+\omega(A'')=\omega(A)+\omega(B)$, so the assumption that $A$ and $B$ are both minimizers of $\omega$ on $\MIS(G)$ implies that $A''$ and $B'$ are also minimizers of $\omega$ on $\MIS(G)$.
\end{proof}

The same argument as the one used in the proof of Lemma~\ref{mis-has-exch-prop} can be also applied to other combinatorial objects in graphs, where validity of an object depends on checking the neighborhood of every vertex. For instance, it is easy to prove in this way that minimum vertex covers and minimum dominating sets also have the exchange property. Thus, Theorem~\ref{thm:exchange-iso} also applies to the corresponding vertex selection problems.

In the next sections we will work towards the proof of Theorem~\ref{thm:exchange-iso}. Therefore, let us fix a vertex selection problem $\Pp$ that has the exchange property.

\subsection{Warm-up: a deterministic isolation scheme}
\label{warmup}

Before commencing to the proof of Theorem~\ref{thm:mis}, we will show a
simple deterministic level-aware isolation scheme.
Observe that the Isolation Lemma in its most general form can be trivially
derandomized provided we allow the maximum weight to be $2^n$, where $n$ is the size of the universe. Namely, it is enough to select the
weight function $\omega(i) \coloneqq 2^i$. In general, allowing exponential weights is prohibitively expensive and not algorithmically useful, however considering this idea explains some intuition behind our techniques. As a warm-up, we now present a level-aware isolation scheme for the considered problem $\Pp$ on graphs of treedepth at most $d$ that is deterministic, but may use weights as large as $2^d$. The scheme is captured by the following lemma.

\newcommand{\detomega}{\omega_{\mathsf{det}}}
\newcommand{\rndomega}{\omega_{\mathsf{rnd}}}

\begin{lemma}[Exponential-weight deterministic isolation]
    \label{exp-mis}
    For every graph $G$ and an elimination forest $F$ of $G$, function $\detomega(v) \coloneqq 2^{\lvl_F(v)}$ isolates the family $\Pp(G)$.
\end{lemma}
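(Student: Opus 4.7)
The plan is a proof by contradiction, relying on the exchange property to reduce to a clean structural situation and then exploiting the dyadic weights.

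Suppose toward contradiction that $\detomega$ does not isolate $\Pp(G)$, so there are two distinct minimizers in $\Pp(G)$. By the exchange property, I may pass to two distinct minimizers $A', B' \in \Pp(G)$ that have exactly one pivotal vertex in $F$; call this vertex $u$.

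The key structural claim is that $A' \triangle B' \subseteq \subtree_F[u]$. To prove this, consider any $\preceq_F$-maximal element $v$ of $A' \triangle B'$ (such an element exists in every nonempty subset of a forest). Maximality means no strict ancestor of $v$ lies in $A' \triangle B'$, which is equivalent to $\tail_F(v) \cap A' = \tail_F(v) \cap B'$. Combined with $v \in A' \triangle B'$, this says $v$ is pivotal for $A', B'$. Since $u$ is the unique pivotal vertex, $u$ must be the unique $\preceq_F$-maximal element of $A' \triangle B'$, and therefore every element of $A' \triangle B'$ is a descendant of $u$.

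For the contradiction, write the weight difference explicitly. Without loss of generality assume $u \in A' \setminus B'$. Every $v \in A' \triangle B'$ with $v \neq u$ is a strict descendant of $u$, so $\lvl_F(v) \geq \lvl_F(u) + 1$ and $2^{\lvl_F(u)+1}$ divides $\detomega(v)$. Hence
\[
\detomega(A') - \detomega(B') = 2^{\lvl_F(u)} + \sum_{v \in (A'\setminus B')\setminus\{u\}} 2^{\lvl_F(v)} - \sum_{v \in B'\setminus A'} 2^{\lvl_F(v)} \equiv 2^{\lvl_F(u)} \pmod{2^{\lvl_F(u)+1}},
\]
which is nonzero. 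This contradicts the assumption that both $A'$ and $B'$ minimize $\detomega$ on $\Pp(G)$, completing the proof.

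The only real step to execute carefully is the structural claim that $A' \triangle B'$ sits inside $\subtree_F[u]$; this is the point where the exchange property earns its keep, since it guarantees we can reduce to a configuration with a unique pivotal vertex. Once that claim is in hand, the place-value argument on powers of two is immediate.
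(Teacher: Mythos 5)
Your proof is correct and takes essentially the same route as the paper: pass via the exchange property to two distinct minimizers $A',B'$ with a unique pivotal vertex $u$, argue that $A'\triangle B'\subseteq\subtree_F[u]$, and use the fact that $2^{\lvl_F(u)+1}$ divides $\detomega(v)$ for every strict descendant $v$ of $u$ to conclude $\detomega(A')\not\equiv\detomega(B')\pmod{2^{\lvl_F(u)+1}}$. One terminology slip: with the paper's convention that $y\preceq_F x$ means $y$ is an \emph{ancestor} of $x$, the element of $A'\triangle B'$ with no strict ancestor in $A'\triangle B'$ is $\preceq_F$-\emph{minimal}, not $\preceq_F$-\emph{maximal}; since you use ``maximal'' consistently to mean exactly this, the logic is internally coherent and goes through once the word is swapped. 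You are in fact slightly more careful than the paper at one point: the paper's proof asserts $A\setminus\subtree_F[u]=B\setminus\subtree_F[u]$ without justification, whereas you prove the underlying structural claim $A'\triangle B'\subseteq\subtree_F[u]$ explicitly from uniqueness of the pivotal vertex.
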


Note that as Lemma~\ref{exp-mis} involves only one weight function $\detomega$, it provides a deterministic isolation scheme. Also, provided the height of $F$ is at most $d$, the assigned weights are upper bounded by $2^d$. Further, this is a level-aware isolation scheme, because $\detomega$
is also supplied with the level of the vertex in the given
elimination forest. In Section~\ref{sec:lbs} we showed that this additional
information is really necessary, since without it any isolation scheme for maximum independent sets needs to use $\Omega(\log(n))$ random bits, even on graphs of treedepth at most $4$.

\begin{proof}[Proof of Lemma~\ref{exp-mis}]
    Assume for contradiction that there exist two different sets in $\Pp(G)$ that
    are both minimizers of $\detomega(v) \coloneqq 2^{\lvl_F(v)}$ on $\Pp(G)$. By
    Lemma~\ref{mis-has-exch-prop} we know that there also exist $A,B\in \Pp(G)$ that
    are also minimizers of $\detomega$ and have exactly one pivotal vertex. 
    
    Let $u$ be the only pivotal vertex of $A$ and $B$ in $F$. Without loss of generality we
    may assume that $u \in A\setminus B$. Let 
    $$R \coloneqq A \setminus \subtree[u] = B \setminus \subtree[u],\qquad S_{A} \coloneqq A \cap \subtree(u),\qquad\textrm{and}\qquad S_{B} \coloneqq B \cap \subtree(u).$$

    We know that $\detomega(A) = \detomega(B)$. Therefore,
    $$\detomega(R) + \detomega(u) + \detomega(S_{A}) = \detomega(R) + \detomega(S_{B}),$$
    hence 
    \begin{equation}\label{eq:divisibility}
      \detomega(u) + \detomega(S_{A}) = \detomega(S_{B}).  
    \end{equation}
    Let $\ell$ be the level of
    vertex $u$. We know that $\detomega(u) = 2^\ell$. Moreover, the level of every
    vertex in $S_{A}$ and $S_{B}$ is greater than $\ell$. Therefore $\detomega(S_{A})$
    and $\detomega(S_{B})$ are divisible by $2^{\ell+1}$. We conclude that the right hand side of~\eqref{eq:divisibility} is divisible by $2^{\ell+1}$, while the left hand side is not. This is a contradiction, hence $\detomega$ must have a unique minimizer on $\Pp(G)$.
\end{proof}

\subsection{Warm-up continued: a randomized isolation scheme}

There is also a relatively simple level-aware isolation scheme for $\Pp$ on graphs of treedepth at most $d$ that uses $\Oh(d \log(n))$ random bits and assigns weights bounded by $\Oh(n)$. Namely, for every
level $i \in [d]$ chose uniformly at random a number $r_i \in [Cn]$ for some constant $C$ large enough, and let
$\rndomega(v) \coloneqq r_{\lvl_F(v)}$, where $F$ is the given elimination forest. Clearly, this scheme uses $\Oh(d \log{n})$ random bits.
It is not hard to prove that the function $\rndomega$ isolates
$\Pp(G)$ with high probability. We leave the details to the reader, as the isolation scheme presented in the next section will supersede this one.

The idea behind our proof of Theorem~\ref{thm:exchange-iso} is to combine the
deterministic isolation scheme $\detomega$, presented in Lemma~\ref{exp-mis}, the with randomized scheme sketched above. This approach is inspired by the shifting idea presented in~\cite{ccc17,isolation-lemma2}, but the technique is adapted to the setting of problems on graphs of bounded treedepth.

\subsection{Proof of Theorem~\ref{thm:exchange-iso}}
\label{proofofmis}

Fix the considered graph $G=(V,E)$ and elimination forest $F$ of height at most $d$, where $V=[n]$. We may assume that $d\geq 5 \ceil{\log{n}}$, for otherwise the deterministic isolation scheme from 
Lemma~\ref{exp-mis} isolates $\Pp(G)$ and assigns weights upper bounded by $2^d \le \Oh(n^6)$. 

\paragraph{Isolation Procedure.}
We start with the definition of the isolation scheme. Similarly to the scheme considered in Lemma~\ref{exp-mis}, the weight assigned to a vertex will only depend on its level in $F$ (and the random bits).

For every $i\in [d]$, we can uniquely encode $i$ as a pair
of integers $(e(i),f(i))$ so that 
$$i = \ceil{\log{n}} \cdot e(i) + f(i),\quad\textrm{where } 0\leq f <
\ceil{\log{n}}.$$ Note that then $e(i)\leq \kappa$, where $\kappa \coloneqq \floor{\frac{d}{\ceil{\log{n}}}}$. As $d \ge 5 \ceil{\log{n}}$, we have $\kappa>0$. 

Next, for every $e \in \{0,1\ldots,\kappa\}$ choose an integer $r_e \in [32n^5]$ independently and uniformly
at random. Then $\ol{r} \coloneqq (r_0,r_1,\ldots,r_\kappa)$ is the vector of random integers used by our weight function. Note that choosing $\ol{r}$ requires $(\kappa+1)\cdot \Oh(\log n)=\Oh(d)$ random bits, as promised. In the following, we treat $r_0,\ldots,r_\kappa$ as random variables, thus $\ol{r}$ is a random vector that is uniformly distributed in $\Omega\coloneqq [32n^5]^{\kappa+1}$.

Finally, for a vertex $v$ and $\ol{\rho}\in \Omega$, we define
\begin{displaymath}
    \omega_{\ol{\rho}}(v) \coloneqq \rho_{e(i)} \cdot 2^{f(i)},
    \text{ where } i=\lvl(v).
\end{displaymath}
With this definition in place, our isolation scheme simply samples $\ol{r}$ as above and outputs the weight function $\omega_{\ol{r}}$. We are left with arguing that $\omega_{\ol{r}}$ isolates $\Pp(G)$ with probability at least $\frac{1}{2}$.



\newcommand{\excl}[2]{\ol{#1}_{-#2}}
\newcommand{\Excl}[2]{#1_{-#2}}


\paragraph{Analysis.} 
For a vertex $v$, we write $e(v)\coloneqq e(i)$ and $f(v)\coloneqq f(i)$, where $i$ is the level of $v$ in $F$.
For every set $X\subseteq V$, let us define a linear form $\phi^X\colon \Omega\to \nat$ as follows:
$$\phi^X(\ol{\rho})\coloneqq \sum_{v\in X} \rho_{e(v)}\cdot 2^{f(v)}.$$
Intuitively, we can think of $\varphi^X$ as a compressed version of $X$ from which we can compute $\omega_{\ol{\rho}}(X)$ once it is fixed.
Indeed, $\varphi^X$ can be thought of as a compressed version of $X$ since $|\Omega| \leq 2^{\Oh(d)}$.
In particular, note that we actually have
\begin{equation}\label{eq:bobr}\phi^X(\ol{\rho})=\omega_{\ol{\rho}}(X).
\end{equation}
Further, let 
$$\Phi\coloneqq \{\phi^X\colon X\in \Pp(G)\}.$$
The following lemma is the key step in the proof.

\begin{lemma}\label{lem:diff-forms}
 Suppose $\ol{\rho}\in \Omega$ is such that $\omega_{\ol{\rho}}$ does not isolate $\Pp(G)$. Then there are two different linear forms $\alpha,\beta\in \Phi$ such that $$\alpha(\ol{\rho})=\beta(\ol{\rho})=\min_{\gamma\in \Phi} \gamma(\ol{\rho}).$$
\end{lemma}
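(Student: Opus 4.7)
The plan is to use the exchange property to produce two minimizers $A, B$ of $\omega_{\ol{\rho}}$ on $\Pp(G)$ whose structure in $F$ is very rigid --- exactly one pivotal vertex --- and then exploit the binary expansion baked into the definition of $\omega_{\ol{\rho}}$ to conclude that the associated linear forms $\phi^A$ and $\phi^B$ are genuinely different. The conclusion of the lemma will then follow immediately from~\eqref{eq:bobr}, which identifies evaluations of $\phi^X$ at $\ol{\rho}$ with $\omega_{\ol{\rho}}(X)$.

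Since $\omega_{\ol{\rho}}$ fails to isolate $\Pp(G)$, there are two different minimizers of $\omega_{\ol{\rho}}$ on $\Pp(G)$, so by Definition~\ref{exchange-prop} I can pick $A, B \in \Pp(G)$ that are both minimizers and admit exactly one pivotal vertex $u$ in $F$. Without loss of generality assume $u \in A \setminus B$. A useful preliminary observation is that $A \triangle B \subseteq \subtree_F[u]$: any uppermost element $v$ of $A\triangle B$ in $F$ satisfies $\tail_F(v)\cap A = \tail_F(v)\cap B$ (its strict ancestors lie outside $A\triangle B$), so is pivotal, so must coincide with $u$; hence $u$ is an ancestor of every element of $A\triangle B$.

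The main step --- and the main obstacle --- is to verify that $\phi^A \neq \phi^B$ as linear forms. Writing $\phi^A - \phi^B = \sum_e c_e\cdot \rho_e$, each coefficient equals $c_e = \sum_{v\in A\triangle B,\,e(v)=e} \sigma(v)\cdot 2^{f(v)}$, where $\sigma(v)=+1$ if $v\in A\setminus B$ and $\sigma(v)=-1$ if $v\in B\setminus A$. I focus on $e=e(u)$: the vertex $u$ itself contributes $\pm 2^{f(u)}$, while every other vertex $v$ contributing to $c_{e(u)}$ is a strict descendant of $u$ (by the previous paragraph) with $e(v)=e(u)$; since $\lvl_F(v)>\lvl_F(u)$ and both levels share the same $e$-value, necessarily $f(v)>f(u)$. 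Hence all other contributions are divisible by $2^{f(u)+1}$, which gives $c_{e(u)}\equiv \pm 2^{f(u)}\pmod{2^{f(u)+1}}$, and in particular $c_{e(u)}\neq 0$. This shows $\phi^A\neq \phi^B$. Setting $\alpha\coloneqq \phi^A$ and $\beta\coloneqq \phi^B$ then yields two different elements of $\Phi$ that by~\eqref{eq:bobr} both evaluate at $\ol{\rho}$ to the common minimum $\omega_{\ol{\rho}}(A)=\omega_{\ol{\rho}}(B)=\min_{\gamma\in\Phi}\gamma(\ol{\rho})$, as required.
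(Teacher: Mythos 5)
Your proof is correct and follows essentially the same approach as the paper's: both invoke the exchange property, reduce to showing $\phi^A \neq \phi^B$, and verify this by examining the coefficient of $\rho_{e(u)}$ modulo $2^{f(u)+1}$. The only differences are cosmetic --- you work with the signed difference $\phi^A - \phi^B$ and explicitly justify the containment $A \triangle B \subseteq \subtree_F[u]$, whereas the paper compares the two coefficients by a direct contradiction and asserts the containment without elaboration.
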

\begin{proof}
 Since $\omega_{\ol{\rho}}$ does not isolate $\Pp(G)$, there are two different minimizers of $\omega_{\ol{\rho}}$ on $\Pp(G)$. By the exchange property, there are also two different minimizers $A$ and $B$ such that there exists exactly one pivotal vertex for $A$ and $B$, say $u$. Without loss of generality suppose that $u\in A\setminus B$. Since $A$ and $B$ are both minimizers, by~\eqref{eq:bobr} we have
 $$\phi^A(\ol{\rho})=\phi^B(\ol{\rho})=\min_{\gamma\in \Phi} \gamma(\ol{\rho}).$$
 Hence, it suffices to prove that $\phi^A\neq \phi^B$.
 
 Let $(e,f)\coloneqq (e(u),f(u))$. We claim that the coefficients of $\phi^A$ and $\phi^B$ standing by the variable $\rho_e$ are different. Letting $L\coloneqq \{v~|~e(v)=e\}$, we see that these coefficients are respectively equal to
 $$\sum_{v\in A\cap L} 2^{f(v)}\qquad \textrm{and}\qquad \sum_{v\in B\cap L} 2^{f(v)}.$$
 Suppose for contradiction that these coefficients are actually equal, that is,
 \begin{equation}\label{eq:wydra}
\sum_{v\in A\cap L} 2^{f(v)}=\sum_{v\in B\cap L} 2^{f(v)}.
 \end{equation}
 Recall that $u$ is the only pivotal vertex for $A$ and $B$, hence $A\setminus \subtree[u]=B\setminus \subtree[u]$. Therefore, from~\eqref{eq:wydra} we infer that
 \begin{equation}\label{eq:nutria}
\sum_{v\in A\cap L\cap \subtree[u]} 2^{f(v)}=\sum_{v\in B\cap L\cap \subtree[u]} 2^{f(v)}.  
 \end{equation}
 Now observe that $u\in A\cap L\cap \subtree[u]$, $u\notin B\cap L\cap \subtree[u]$, and $f(v)>f(u)$ for each $v\in L\cap \subtree(u)$. From this it follows that the right hand side of~\eqref{eq:nutria} is divisible by $2^{f(u)+1}$, while the left hand side is not. This is a contradiction.
\end{proof}

We now combine Lemma~\ref{lem:diff-forms} with the following result of Chari et al.~\cite{isolation-lemma2}.

\begin{lemma}[Proposition~2 of \cite{isolation-lemma2}]\label{lem:chari-forms}
 Let $\Phi$ be a set of linear forms over $t<N$ variables with coefficients belonging to $\{0,\ldots,N^2-1\}$. Let $\ol{r}$ be chosen from $[N^5]^t$ uniformly at random. Then the probability that there exist two different $\alpha,\beta\in \Phi$ such that $\alpha(\ol{r})=\beta(\ol{r})=\min_{\gamma\in \Phi}\gamma(\ol{r})$ is at most $\frac{1}{2}$.
\end{lemma}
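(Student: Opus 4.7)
The plan is to adapt the classical Mulmuley--Vazirani--Vazirani argument for the Isolation Lemma to the setting of linear forms. Call an index $i \in \{1,\ldots,t\}$ \emph{singular} (with respect to $\ol{r}$) if there exist two distinct $\alpha, \beta \in \Phi$ with $\alpha(\ol{r}) = \beta(\ol{r}) = \min_{\gamma\in\Phi}\gamma(\ol{r})$ and $\alpha_i \neq \beta_i$. Since distinct linear forms necessarily differ in at least one coordinate, the ``bad event'' of the lemma (two distinct forms achieving the overall minimum) immediately implies that at least one index is singular. It therefore suffices to bound the singularity probability of each fixed index by roughly $1/N^3$ and apply a union bound over $t<N$ indices, yielding a total failure probability well below $1/2$.

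To bound per-index singularity, I would fix $i$, condition on arbitrary values of $r_j$ for $j \neq i$, and treat $r_i$ as the only source of randomness. Group the forms of $\Phi$ by their $i$-th coefficient: for each $c \in \{0,\ldots,N^2-1\}$, set
$$
\mu_c(r_i) \coloneqq c\cdot r_i \;+\; \min\!\left\{\sum_{j\neq i}\alpha_j r_j \;:\; \alpha\in \Phi,\ \alpha_i = c\right\},
$$
with $\mu_c \equiv +\infty$ if no $\alpha \in \Phi$ has $\alpha_i = c$. Each $\mu_c$ is an affine function of $r_i$, and clearly $\min_{\gamma\in \Phi}\gamma(\ol{r}) = \min_{c} \mu_c(r_i)$. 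The index $i$ is singular exactly when $r_i$ is a breakpoint of the concave lower envelope of $\{\mu_c\}_c$, because at any non-breakpoint a unique $\mu_c$ (hence a unique $i$-th coefficient) strictly attains the minimum. Since the lower envelope of at most $N^2$ affine functions has at most $N^2-1$ breakpoints, at most $N^2$ of the $N^5$ possible values of $r_i$ make $i$ singular, giving conditional probability at most $N^2/N^5 = 1/N^3$.

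The subtlety to address carefully is the following: two distinct forms $\alpha \neq \beta$ sharing the same $i$-th coefficient but both achieving the minimum do \emph{not} by themselves render $i$ singular. This is harmless because such a pair must disagree in some other coordinate $i'$, which then becomes singular, and that case is absorbed by the union bound across all indices. Averaging the conditional estimate over the fixing of $r_{-i}$ and summing over $i \in \{1,\ldots,t\}$ yields
$$
\Pr[\exists\, i \text{ singular}] \;\le\; \frac{t\cdot N^2}{N^5} \;<\; \frac{N\cdot N^2}{N^5} \;=\; \frac{1}{N^2} \;\le\; \frac{1}{2},
$$
which completes the argument. The main obstacle is to pin down the breakpoint count cleanly: this reduces to the standard geometric fact that a concave piecewise-linear lower envelope of $k$ lines has at most $k-1$ breakpoints, which I would derive by a short induction on $k$ (each added line contributes at most one new segment to the envelope).
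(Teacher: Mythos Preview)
Your proof is correct. The paper does not actually prove this lemma; it cites it directly from Chari, Rohatgi and Srinivasan~\cite{isolation-lemma2}. Your argument is the standard Mulmuley--Vazirani--Vazirani reasoning transplanted to linear forms, which is essentially how the original proof in~\cite{isolation-lemma2} proceeds as well, so there is nothing to contrast. One minor remark: the inductive justification you sketch for the breakpoint bound (``each added line contributes at most one new segment'') is a bit delicate to make precise, since adding a line can also remove segments; the cleanest way to get the bound $k-1$ on breakpoints is to note that the lower envelope is concave, hence its slopes are strictly decreasing along the pieces, and since the $\mu_c$'s have pairwise distinct slopes~$c$, each can contribute at most one piece.
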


Now observe that the coefficients in the forms appearing in $\Phi$ are bounded by $n\cdot 2^{\lceil \log n\rceil}<(2n)^2$, because each coefficient is a sum of at most $n$ summands of the form $2^{f(v)}$, and each of these is bounded by $2^{\lceil \log n\rceil}$. Since $\ol{r}$ is chosen uniformly at random from $[(2n)^5]^{\kappa+1}$, by combining Lemmas~\ref{lem:diff-forms} and~\ref{lem:chari-forms}, where the latter is applied for $N\coloneqq 2n$, we conclude that $\omega_{\ol{r}}$ isolates $\Pp(G)$ with probability at least $\frac{1}{2}$.

\newcommand{\MM}{\mathsf{MM}}

\section{Isolation of local edge selection problems}
\label{sec:maximum-matching-treedepth}

In the Section~\ref{sec:indepedent_set}, we designed level-aware isolation schemes
exclusively for vertex selection problems. In this
section we demonstrate that our techniques can be also applied to edge-selection problems on the example of maximum matchings: formally, we consider the edge selection problem $\MM(\cdot)$ that maps every graph $G$ to the family $\MM(G)$ consisting of all maximum-size matchings in $G$. 


Suppose we are given a graph $G=(V,E)$ and an elimination forest $F$ of $G$.
The {\em{level}} of an edge $e=uv \in E$ in $F$ is defined as
$$\lvl(e) \coloneqq \min \{\lvl(u),\lvl(v)\}.$$  
In our level-aware isolation scheme, the weight function will be supplied with two parameters: an edge and its level.

We now introduce the analogues of pivotal vertices and the exchange property.

\begin{definition}[Edge-pivotal vertex]
    For a family $\Ff \subseteq 2^E$ and two different sets of vertices $A,B \in \Ff$, we say that
    a vertex $u$ is \emph{edge-pivotal} for $A$ and $B$ if:
    \begin{itemize}
        \item for every vertex $x \in \tail(u)$ it holds that $ux \in A$ if and only if $ux \in B$;
        \item there exists $x \in \subtree(u)$, such that $ux \in A\triangle B$;
        \item no vertex $u'\in \tail(u)$ has the two properties above.
    \end{itemize}
\end{definition}

\begin{definition}[Exchange property]
    \label{exchange-prop}
    We say that an edge selection problem $\Pp$ has the {\em{exchange property}} if for every graph $G=(V,E)$, elimination forest $F$ of $G$, and weight function $\omega\colon E\to \nat$, if there exist two different minimizers of $\omega$ on $\Pp(G)$, then there also exist two different minimizers of $\omega$ on $\Pp(G)$ for which there is exactly one edge-pivotal vertex.
\end{definition}

Using a reasoning similar to that from the proof of Lemma~\ref{mis-has-exch-prop}, we get the following.

\begin{lemma}\label{mm-has-exch-prop}
 The edge selection problem $\MM(\cdot)$ has the exchange property.
\end{lemma}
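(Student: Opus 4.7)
The plan is to mirror the strategy of Lemma~\ref{mis-has-exch-prop}: starting from two minimizers $A,B$ of $\omega$ on $\MM(G)$ with at least two edge-pivotal vertices, I will construct two different minimizers $A',B'$ with strictly fewer edge-pivotal vertices, and iterate. The key difference compared with the vertex case is that the natural unit of surgery is no longer an ancestor subtree of $F$, but rather a connected component of the symmetric difference $A\triangle B$ inside $G$.

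First I would collect two standard facts about pairs of maximum matchings. Fact~(i): $A\triangle B$ decomposes into vertex-disjoint alternating paths and alternating even cycles, and each such component $C$ satisfies $|A\cap C|=|B\cap C|$, for otherwise one of $A,B$ would admit an augmenting path. Fact~(ii): replacing $A\cap C$ by $B\cap C$ inside $A$ (respectively $B\cap C$ by $A\cap C$ inside $B$) yields another maximum matching of the same size. Combining~(ii) with the minimality of $\omega(A)$ and $\omega(B)$ immediately forces $\omega(A\cap C)=\omega(B\cap C)$ for every component $C$. Hence swapping any subcollection of components of $A\triangle B$ between $A$ and $B$ keeps us inside the set of $\omega$-minimizers of $\MM(G)$.

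Next I would analyse how the components of $A\triangle B$ sit inside the elimination forest $F$. By the standard property of elimination forests, every connected subgraph of $G$ has a unique $F$-ancestor of all of its vertices; call this the top of the component. Unrolling the definition, a vertex $u$ is edge-pivotal for $(A,B)$ precisely when $u$ is the top of some component of $A\triangle B$ and no strict $F$-ancestor of $u$ is itself the top of any other component. In particular, any two edge-pivotal vertices are $F$-incomparable. Now pick any edge-pivotal vertex $u$, let $\mathcal{C}_u$ be the family of components of $A\triangle B$ whose top lies in $\subtree_F[u]$, set $A'\coloneqq A$, and define $B'$ by replacing $B\cap C$ by $A\cap C$ in $B$ for every $C\in\mathcal{C}_u$. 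By the previous paragraph, $B'$ is a maximum matching of weight $\omega(B)$; and by construction $A'\triangle B'=(A\triangle B)\setminus\bigcup_{C\in\mathcal{C}_u}C$.

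The subtlest verification, and the step I expect to require the most care, is to show that this operation removes $u$ from the set of edge-pivotal vertices without creating any new ones. After the surgery, the set of tops of $A'\triangle B'$ equals the set of old tops lying outside $\subtree_F[u]$, because every component whose top lies in $\subtree_F[u]$ has been deleted entirely. The old edge-pivotal vertices other than $u$ are $F$-incomparable to $u$, hence lie outside $\subtree_F[u]$; they survive as tops and remain topmost. Conversely, a previously non-topmost top $w$ cannot become edge-pivotal, because any ancestor-top of $w$ either still exists (so $w$ is still not topmost) or was deleted, in which case it lay in $\subtree_F[u]$ and then $w$ itself lies in $\subtree_F[u]$ and has already been removed. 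Thus the number of edge-pivotal vertices drops by exactly one; since at least one edge-pivotal vertex of $(A,B)$ other than $u$ survives we also have $A'\neq B'$. Iterating the construction until exactly one edge-pivotal vertex remains completes the proof.
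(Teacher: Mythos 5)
Your proof is correct, but the construction you use is in a sense the mirror image of the paper's. The paper picks the edge $e=uv$ of $A\triangle B$ of \emph{minimum level} (with $u$ the upper endpoint), and performs a single swap that keeps $B$'s edges inside $\binom{\subtree[u]}{2}$ and copies $A$'s edges everywhere else; the minimum-level choice guarantees that no component of $A\triangle B$ straddles the boundary of $\subtree[u]$, so $A\triangle B'$ becomes exactly the union of components whose top is a descendant of $u$, and $u$ is then verified directly to be the unique edge-pivotal vertex in one step. You instead pick an \emph{arbitrary} edge-pivotal vertex $u$ and swap out precisely the components whose top lies in $\subtree[u]$ — the complement of the paper's operation — and you show this decreases the count of edge-pivotal vertices by exactly one, then iterate. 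Your approach also makes explicit two ingredients the paper leaves implicit: the decomposition of $A\triangle B$ into alternating paths and even cycles (which is what ensures each single-component swap preserves both maximality and $\omega$-minimality), and the characterization of edge-pivotal vertices as the $F$-minimal tops of components (which is a clean reformulation that makes the bookkeeping after the swap transparent). The trade-off is that the paper's one-shot argument is shorter but needs the careful choice of $u$ plus a ``one can easily see'' step about $B'$ being a matching, while yours is longer but each step is more elementary and works for any edge-pivotal $u$. Both are valid proofs.
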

\begin{proof}
 Let $G=(V,E)$ be a graph, $F$ be an elimination forest of $G$, and $\omega\colon E\to \nat$ be a weight function such that there are two different minimizers $A,B$ of $\omega$ on $\MM(G)$. Let $e$ be any edge of $A\triangle B$ with the minimum possible level. Say that $e=uv$, where $u$ is an ancestor of $v$. Let us define
 $$B'\coloneqq \left(B\cap \binom{\subtree[u]}{2}\right)\cup \left(A\setminus \binom{\subtree[u]}{2}\right).$$
 As in the proof of Lemma~\ref{mis-has-exch-prop}, using the assumptions that $A,B$ are maximum matching that are minimizers of $\omega$ on $\MM(G)$, and that $e$ is an edge of $A\triangle B$ of minimum possible level, we can easily see that $B'$ is also a maximum matching that is a minimizer of $\omega$ on $\MM(G)$. It now follows that $u$ is the only pivotal vertex for $A$ and $B'$.
\end{proof}

The remainder of this section is devoted to the 

\begin{theorem}
    \label{thm:mm-iso-edge}
 For every $d\in \nat$, there is a level-aware isolation scheme for maximum
 matchings on graphs of treedepth at most $d$ that uses $\Oh(d\log n)$ random bits assigns weights bounded by $n^{\Oh(1)}$.
\end{theorem}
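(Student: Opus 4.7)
The plan is to adapt the template of Theorem~\ref{thm:exchange-iso} to the edge setting. By Lemma~\ref{mm-has-exch-prop}, $\MM(\cdot)$ satisfies the edge-exchange property, so it suffices to design a random weight scheme under which, with probability $\geq \tfrac{1}{2}$, no two distinct maximum matchings $A, B$ that share a unique edge-pivotal vertex $u$ simultaneously minimize the weight. The structural content of the proof of Lemma~\ref{mm-has-exch-prop} gives $A \triangle B \subseteq \binom{\subtree[u]}{2}$, so every symmetric-difference edge has level at least $\lvl(u)$; moreover, since each vertex is incident to at most one matching edge, the restriction of $A \triangle B$ to level $\lvl(u)$ has cardinality either $1$ (exactly one of $A, B$ matches $u$ into $\subtree(u)$) or $2$ (both do, but to different descendants $v_1 \neq v_2$).

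The main obstacle, absent from the independent-set case, is this ``double pivot'' configuration. Under any weighting that depends only on the level of the edge, including the block-based scheme $\rho_{e(\lvl(e))} \cdot 2^{f(\lvl(e))}$ of Section~\ref{sec:indepedent_set}, the two pivot-level edges $uv_1$ and $uv_2$ carry identical weights and their contributions to $\omega(A) - \omega(B)$ cancel; a routine ``swap'' example, in which $u$ has children $v_1, v_2$ each with a further child, shows that this cancellation can propagate through every level, so purely level-based weights cannot isolate maximum matchings.

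Accordingly, the plan is to attach an identifier-sensitive random hash to each vertex. For every level $i \in \{0, \ldots, d-1\}$ we independently sample a random prime $p_i$ from a range $[M]$ with $M = n^{\Theta(1)}$, spending $\Oh(\log n)$ random bits per level for $\Oh(d \log n)$ bits in total. Setting $h(v) \coloneqq 2^{\id(v)} \bmod p_{\lvl(v)}$, we define the edge weight
\[ \omega(uv) \coloneqq Q \cdot h(u) + h(v), \]
where $u$ is the ancestor endpoint, $v$ is the descendant endpoint, and $Q = n^{\Theta(1)}$ is a scaling factor chosen large enough so that the two summands cannot interfere. All weights are bounded by $n^{\Oh(1)}$, as required.

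The isolation analysis then follows a bottom-up induction along the elimination forest in the spirit of Sections~\ref{sec:HCgeneral}--\ref{sec:ham-tw}, maintaining at each vertex $v$ and each relevant ``interaction type'' of a partial matching in $G[\subtree[v]]$ with the ancestors of $v$ the invariant that there is at most one min-weight realization. In the ``single pivot'' subcase the $Q$-scaled contribution of the pivot edge is $\pm h(u)$, which by FKS applied to $p_{\lvl(u)}$ fails w.h.p. to cancel against the ancestor-hashes of deeper symmetric-difference edges. In the ``double pivot'' subcase the $Q$-scaled contributions cancel because both pivot edges share the ancestor $u$, but the descendant hashes $h(v_1)$ and $h(v_2)$ differ with high probability by FKS for $p_{\lvl(v_1)} = p_{\lvl(v_2)}$, and this surviving difference propagates to $\omega(A) \neq \omega(B)$. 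The main technical obstacle will be to show that the matching constraint, together with the edge-exchange property, keeps the number of configurations per vertex of $F$ small enough for a union-bound argument to close, analogously to how Theorem~\ref{thm:rankbased} bounds the number of minimum compliant partial solutions in Section~\ref{sec:HCgeneral}.
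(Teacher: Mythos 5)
Your proposal takes a genuinely different route from the paper's proof, and that route contains a gap you flag yourself but do not close. The paper's scheme is simply $\omega_{\ol\rho}(e) \coloneqq \id(e)\cdot\rho_{\lvl(e)}$, where each $\rho_i\in[n^{10}]$ is a uniformly random \emph{integer} (no primes, no FKS): $\Oh(\log n)$ bits per level, hence $\Oh(d\log n)$ in total, with weights $\Oh(n^{12})$. The analysis is a verbatim replay of the Section~\ref{sec:indepedent_set} template: associate to each matching $X$ the linear form $\phi^X(\ol\rho)=\sum_{e\in X}\id(e)\,\rho_{\lvl(e)}$, use the edge-exchange property (Lemma~\ref{mm-has-exch-prop}) to reduce to two minimizers $A,B$ with a unique edge-pivotal vertex $u$, and observe that the coefficients of $\rho_{\lvl(u)}$ in $\phi^A$ and $\phi^B$ differ, because $A$ and $B$ restricted to level $\lvl(u)$ differ exactly in the (at most two) distinct edges at $u$, whose identifiers are distinct. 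A single application of Lemma~\ref{lem:chari-forms} then finishes; there is no induction.

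Your ``double pivot'' observation --- that a weight depending only on $\lvl(e)$ cannot isolate $\MM(\cdot)$ --- is correct, and is indeed why the vertex scheme $\rho_{e(i)}2^{f(i)}$ cannot be ported literally. But the fix is far lighter than hashing endpoint identifiers: multiplying by $\id(e)$ already forces the $\rho_{\lvl(u)}$-coefficients of $\phi^A$ and $\phi^B$ apart, since $uv_1$ and $uv_2$ have the same level but different edge identifiers. Incidentally, your proposed weight $\omega(uv)=Q\cdot h(u)+h(v)$ with $h(v)=2^{\id(v)}\bmod p_{\lvl(v)}$ consults the levels of \emph{both} endpoints, not only $\lvl(e)=\min(\lvl(u),\lvl(v))$, which already strains the stated level-aware model.

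The more serious issue is the analysis plan. A bottom-up induction over the elimination forest in the style of Sections~\ref{sec:HCgeneral}--\ref{sec:ham-tw} must control, at each node $v$, the number of ``interaction types'' of a partial matching in $G[\subtree[v]]$ with $\tail(v)$; this is essentially the set of subsets of $\tail(v)$ that get matched downward, of which there can be $2^{|\tail(v)|}=2^{\Theta(d)}$. A union bound over $n$ nodes and $2^{\Theta(d)}$ configurations forces each FKS prime to carry $\Omega(d+\log n)$ bits, and with $d$ levels that is $\Omega(d^2+d\log n)$ random bits --- exceeding your claimed budget --- unless some structural argument (a rank-based bound, say, which you do not supply) collapses the configuration count. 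You name exactly this as ``the main technical obstacle,'' and it is one: without resolving it the plan does not reach $\Oh(d\log n)$. The paper sidesteps the induction entirely and pays for one invocation of the Chari et al.\ linear-form lemma.
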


Note that contrary to the situation in Section~\ref{sec:indepedent_set}, we will conduct our reasoning only for the edge selection problem $\MM(\cdot)$, as we will use some additional combinatorial properties of maximum matchings.
While strong Isolation Lemma's already exist for the case of maximum matchings~\cite{tarnawski17}, our approach uses less random bits and seems extendable to other edge selection problems. 

%
%
%
%


For the rest of this section, let us fix the graph $G=(V,E)$, and enumeration of its edges $\id\colon E\to [m]$, and an elimination forest $F$ of $G$.
Mirroring the structure from Section~\ref{sec:indepedent_set}, we first give a deterministic isolation scheme, which will be subsequently randomized in order to reduce the weights at the cost of random bits.

Recall that our weight functions take two parameters: an edge and its level. Similarly as in Section~\ref{sec:indepedent_set}, we think of weight functions as acting only on edges, while the level of an edge is inferred implicitly from the forest $F$.

\subsection{Deterministic isolation scheme}

Let us introduce the weight function
$$\detomega(e) \coloneqq \id(e) \cdot n^{2\lvl(e)}.$$
We observe the following.

\begin{lemma}[Exponential weight isolation ]
    \label{exp-matching}
    Function $\detomega$ isolates the family $\MM(G)$.
\end{lemma}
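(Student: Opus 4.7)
The plan is to proceed by contradiction: assume there exist two different minimizers $A,B$ of $\detomega$ on $\MM(G)$. By the exchange property just established in Lemma~\ref{mm-has-exch-prop}, we may additionally assume that $A$ and $B$ have exactly one edge-pivotal vertex, call it $u$, and set $i_0 \coloneqq \lvl(u)$. The first structural step is to show that $i_0$ equals the minimum level among all endpoints of edges of $A\triangle B$, and that $u$ is in fact the unique vertex achieving this minimum. Indeed, any minimum-level endpoint $w$ of an edge in $A\triangle B$ automatically satisfies condition (1) of the definition of edge-pivotal (otherwise a smaller-level endpoint would exist) and condition (2) is witnessed directly; since no strict ancestor of $w$ can be an endpoint of $A\triangle B$, no such ancestor can be edge-pivotal either, hence $w$ itself is edge-pivotal. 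By the uniqueness of $u$ we conclude $w=u$.

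Next I would deduce the combinatorial restriction that will make the arithmetic work: every edge of $A\triangle B$ of level exactly $i_0$ must be incident to $u$, with its other endpoint lying in $\subtree(u)$. Since $A$ and $B$ are matchings, each of them contains at most one edge incident to $u$, so the set $E_0 \coloneqq \{e\in A\triangle B\colon \lvl(e)=i_0\}$ contains either one or two edges. Moreover, by condition (2) of edge-pivotality, $E_0$ is nonempty.

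The main calculation is now to reduce $\detomega(A)-\detomega(B)=0$ modulo $n^{2(i_0+1)}$. Every edge of $A\triangle B$ at level greater than $i_0$ contributes a multiple of $n^{2(i_0+1)}$, and after factoring out $n^{2i_0}$ from the remaining terms we obtain
$$\sum_{e\in (A\setminus B)\cap E_0}\id(e)\ \equiv\ \sum_{e\in (B\setminus A)\cap E_0}\id(e)\pmod{n^2}.$$
Each side contains at most one term. If only one side is nonempty, say $uv\in A\setminus B$, the congruence reads $\id(uv)\equiv 0\pmod{n^2}$, which is impossible as $1\leq \id(uv)\leq m<n^2$. If both sides contain a single edge, say $uv\in A\setminus B$ and $uv'\in B\setminus A$ with $v\neq v'$, we obtain $\id(uv)\equiv \id(uv')\pmod{n^2}$, and since $|\id(uv)-\id(uv')|<n^2$ this forces $uv=uv'$, contradicting $v\neq v'$.

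The step I expect to require the most care is the first one, namely justifying that $u$ coincides with the unique minimum-level endpoint of $A\triangle B$; everything afterwards is a direct level-by-level divisibility argument, analogous to the one used for $\MIS(\cdot)$ in Lemma~\ref{exp-mis}. The only real difference from the vertex case is that the gap between consecutive levels has been widened from a factor $2$ to a factor $n^2$, which is precisely what is needed to accommodate up to two edges per level whose identifiers lie in $[m]\subseteq[1,n^2)$.
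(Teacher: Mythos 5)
Your proof is correct and follows the paper's argument: invoke the exchange property to obtain two minimizers with a unique edge-pivotal vertex $u$, then derive a contradiction from $\detomega(A)=\detomega(B)$ by reducing modulo $n^{2(\lvl(u)+1)}$. Your explicit case split on whether $E_0$ contains one or two edges is in fact slightly more careful than the paper's version, which tacitly asserts that $u$ is matched in \emph{both} $A$ and $B$ (edge-pivotality only guarantees that $u$ is matched in at least one of them), though the divisibility argument goes through either way as you show.
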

\begin{proof}
    Assume for the contrary, that there exists two different maximum matchings $A,B \in \MM(G)$ that are both minimizers of $\detomega$. Because $\MM(\cdot)$ has the edge-exchange property, we can assume that $A$ and $B$
    have exactly one edge-pivotal vertex. Let $u$ be such a vertex and let
    $\ell$ be its level in the elimination forest~$F$.

    Let $$R \coloneqq A \setminus \binom{\subtree[u]}{2}$$ be the set of edges
    from $A$ that have at least one endpoint outside of $\subtree[u]$. Note that
    since $u$ is the only edge-pivotal vertex for $A$ and $B$, it holds that $$R
    = B \setminus \binom{\subtree[u]}{2}.$$  For a vertex $v \in V$, let
    $E[v]$ be the set of edges with at least one endpoint in $v$. Finally, let
    $$S_A \coloneqq A \setminus (R \cup E[u])=A\cap \binom{\subtree(u)}{2}\qquad\textrm{and}\qquad S_B \coloneqq B \setminus (R \cup E[u])=B\cap \binom{\subtree(u)}{2}.$$

    We assumed that $\detomega(A) = \detomega(B)$. Therefore,
    $$\detomega(R) + \detomega(S_A) + \detomega(E[u] \cap A) = \detomega(R) + \detomega(S_B) + \detomega(E[u] \cap B),
    $$
    implying that
    \begin{equation}\label{eq:edge-cntr}
     \detomega(S_A) + \detomega(E[u] \cap A) = \detomega(S_B) + \detomega(E[u] \cap B)
    \end{equation}
    
    Every vertex in the maximum matching has
    degree $1$. Since $u$ is pivotal for $A$ and $B$, $u$ is incident in $A$ to an edge $e_A$ and incident in $B$ to a different edge $e_B$ such that $\lvl(e_A)=\lvl(e_B)=\ell$.
    Therefore, 
    $$\detomega(E[u] \cap A) = \id(e_A) n^{2\ell}\qquad\textrm{and}\qquad
    \detomega(E[u] \cap B) = \id(e_B) n^{2\ell}.$$ 
    Note that $\lvl(e)>\ell$ for each edge $e\in S_A\cup S_B$, hence both $\detomega(S_A)$ and $\detomega(S_B)$ is divisible by $n^{2\ell+2}$. Since $\id(e_A)\neq \id(e_B)$, we conclude that the two sides of~\eqref{eq:edge-cntr} give different remainders modulo $n^{2\ell+2}$, a contradiction.
\end{proof}

\subsection{Randomized isolation scheme}

%
%

We now proceed to the proof of Theorem~\ref{thm:mm-iso-edge}.
We begin with the construction of the weight function. First, for every $i\in [d]$ we choose a number $r_i\in [n^{10}]$ independently and uniformly at random. Thus $\ol{r}\coloneqq (r_1,\ldots,r_d)$ is a random vector, distributed uniformly in $\Omega\coloneqq [n^{10}]^d$. For $\ol{\rho}\in \Omega$, we define the weight function $\omega_{\ol{\rho}}$ as follows:
$$\omega_{\ol{\rho}}(e)\coloneqq \id(e)\cdot \rho_{\lvl(e)}.$$
Our isolation scheme simply samples $\ol{r}$ as above and returns the weight function $\omega_{\ol{r}}$. Note that $\omega_{\ol{r}}$ assigns weights upper bounded by $\Oh(n^{10}\id(e))=\Oh(n^{12})$ and uses $\Oh(d\log n)$ random bits, as promised, so it remains to prove that $\omega_{\ol{r}}$ isolates $\MM(G)$ with probability at least $\frac{1}{2}$. 

\paragraph*{Analysis.} The argument is similar to that used in the proof of Theorem~\ref{thm:exchange-iso}. For each $X\subseteq E$, we define a linear form $\phi^X\colon \Omega\to \nat$ as
$$\phi^X(\ol{\rho})\coloneqq \sum_{e\in X} \id(e)\cdot \rho_{\lvl(e)}.$$
Thus, we have
\begin{equation}\label{eq:bobr2}
\phi^X(\ol{\rho})=\omega_{\ol{\rho}}(X). 
\end{equation}
Let
$$\Phi\coloneqq \{\phi^X\colon X\in \MM(G)\}.$$
Again, the key step is captured by the following lemma.

\begin{lemma}\label{lem:MM-diff-forms}
  Suppose $\ol{\rho}\in \Omega$ is such that $\omega_{\ol{\rho}}$ does not isolate $\Pp(G)$. Then there are two different linear forms $\alpha,\beta\in \Phi$ such that $$\alpha(\ol{\rho})=\beta(\ol{\rho})=\min_{\gamma\in \Phi} \gamma(\ol{\rho}).$$
\end{lemma}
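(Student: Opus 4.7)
The plan is to follow the template of Lemma~\ref{lem:diff-forms}. Suppose $\ol{\rho}$ is such that $\omega_{\ol{\rho}}$ does not isolate $\MM(G)$. Apply the exchange property (Lemma~\ref{mm-has-exch-prop}) to obtain two different minimizers $A, B \in \MM(G)$ of $\omega_{\ol{\rho}}$ having exactly one edge-pivotal vertex $u$; let $\ell \coloneqq \lvl(u)$. By~\eqref{eq:bobr2} we immediately have $\phi^A(\ol{\rho}) = \phi^B(\ol{\rho}) = \min_{\gamma \in \Phi}\gamma(\ol{\rho})$, so the only nontrivial task is to verify that $\phi^A$ and $\phi^B$ are distinct as formal linear forms. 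I plan to do this by exhibiting a single variable, namely $\rho_\ell$, whose coefficient differs between $\phi^A$ and $\phi^B$.

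The main obstacle is to localize the symmetric difference $A \triangle B$ inside $\binom{\subtree[u]}{2}$. I would prove this by a walk-up argument: suppose for contradiction that some edge $xy \in A \triangle B$ with $x$ an ancestor of $y$ has $x \notin \subtree[u]$. Starting with $w_0 = x$, iteratively replace $w_i$ by a strict ancestor $t$ whenever $w_i t \in A \triangle B$ with $t \in \tail(w_i)$; this process terminates at some $w_k$ with no differing edges going strictly up, while $w_k w_{k-1} \in A \triangle B$ (or, if $k=0$, $xy \in A \triangle B$) witnesses a differing edge going strictly down. Hence $w_k$ satisfies the first two clauses of the edge-pivotal definition. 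Replacing $w_k$ by the highest ancestor of $w_k$ (including itself) that still shares those two properties then produces a vertex that is edge-pivotal and therefore equals $u$ by uniqueness. Since all $w_i$ are ancestors of $x$, this ancestor of $w_k$ would be an ancestor of $x$, forcing $x \in \subtree[u]$---a contradiction.

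With $A \triangle B \subseteq \binom{\subtree[u]}{2}$ established, every edge of $A\triangle B$ has level at least $\ell$, and those of level exactly $\ell$ must be incident to $u$ (because $u$ is the unique vertex of $\subtree[u]$ at level $\ell$, and every edge inside $\subtree[u]$ goes between a vertex and one of its descendants). The first clause of the edge-pivotal condition also prevents $u$ from being matched to an ancestor in either $A$ or $B$: such a match would have to appear in both, but then $u$ could not be incident to any other edge, contradicting the existence of some $uy \in A \triangle B$ with $y \in \subtree(u)$ required by the edge-pivotal property. A short case analysis on whether $u$ is matched in $A$ and/or in $B$ then shows that the coefficient of $\rho_\ell$ in $\phi^A - \phi^B$ equals either $\pm \id(e)$ for a single edge $e$, or $\id(e_A) - \id(e_B)$ for two distinct edges $e_A, e_B$ incident to $u$; in either situation it is nonzero, because $\id$ is injective and takes strictly positive values. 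This yields $\phi^A \neq \phi^B$, completing the proof.
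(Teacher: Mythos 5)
Your proof is correct and follows the same strategy as the paper's: reduce to a pair of minimizers with a unique edge-pivotal vertex $u$ via the exchange property, then show $\phi^A\neq\phi^B$ by comparing the coefficients of $\rho_{\lvl(u)}$. The paper asserts tersely that $A\cap L$ and $B\cap L$ differ only in the edge incident to $u$; your walk-up argument and case analysis supply a fully rigorous justification of exactly that claim.
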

\begin{proof}
 Since $\omega_{\ol{\rho}}$ does not isolate $\MM(G)$, there are two different minimizers of $\omega_{\ol{\rho}}$ on $\MM(G)$. By the exchange property, there are also two different minimizers $A$ and $B$ such that there exists exactly one pivotal vertex for $A$ and $B$, say $u$. Since $A$ and $B$ are both minimizers, by~\eqref{eq:bobr2} we have
 $$\phi^A(\ol{\rho})=\phi^B(\ol{\rho})=\min_{\gamma\in \Phi} \gamma(\ol{\rho}).$$
 Hence, it suffices to prove that $\phi^A\neq \phi^B$.
 
 Let $i\coloneqq \lvl(u)$. We claim that the coefficients of $\phi^A$ and $\phi^B$ standing by the variable $\rho_i$ are different. Letting $L\coloneqq \{e~|~\lvl(e)=i\}$, we see that these coefficients are respectively equal to
 $$\sum_{e\in A\cap L} \id(e)\qquad \textrm{and}\qquad \sum_{e\in B\cap L} \id(e).$$
 Now recall that $u$ is the only pivotal vertex of $A$ and $B$. Hence, as both $A$ and $B$ are matchings, we observe that $A\cap L$ and $B\cap L$ differ only in the edge that is incident to $u$ (or lack thereof). Since the two edges incident to $u$ in $A\cap L$ and $B\cap L$ have different identifiers (or one is non-existent), it follows that $\sum_{e\in A\cap L} \id(e) \neq \sum_{e\in B\cap L} \id(e)$. This concludes the proof.
\end{proof}

Now observe that linear forms from $\Phi$ have coefficients upper bounded by $m\cdot n<n^4$. Hence, we can again combine Lemma~\ref{lem:MM-diff-forms} with Lemma~\ref{lem:chari-forms} (applied for $N=n^2$) to infer that $\omega_{\ol{r}}$ isolates $\MM(G)$ with probability at least $\frac{1}{2}$.

\section{Directions for further research}
\label{sec:conc}
In this paper we presented several isolation schemes for $\mathsf{NP}$-complete problems, and we showed that analogues of
decomposition-based methods such as Divide\&Conquer can also be used to design more randomness-efficient isolation schemes.
While we provide nearly matching lower bounds for all our results, at least as far as the number of random bits is concerned, we still leave open a number of interesting open questions:

\begin{enumerate}

\item Can we improve our isolation schemes to have weights that are only
    polynomial in $n$, while not increasing the number of used random bits? Note
    that in our approach, the use of large weights is crucial for the
    application of Lemma~\ref{FKS} that deals with interactions between
    different partial solutions in our isolation
    schemes.\footnote{In~\cite{isolation-lemma2} a similar lemma was used to
    obtain isolation schemes with polynomial weights, but since the objects of
    the set family are not decomposed, the authors did not have this issue of interactions
    between different partial solutions.} 
    

\item Can we shave off the log factors in the number of used random bits in our
    results?  While some of the $\log n$ factors seem to be inherent in our
    ideas, there still might be a little room. For example, Melkebeek and Prakriya~\cite{ccc17} presented an isolation scheme for reachability that uses
    $\Oh(\log^{1.5}(n))$-random bits. Perhaps with their ideas one can get the
    same guarantees for isolating Hamiltonian cycles in constant treewidth graphs.
    
\item Does the (even more) natural isolation scheme work as well? Many of our
    isolation schemes draw several random prime numbers and assign a weight that
    is obtained by concatenating the congruence class of the vertex/edge
    identifier with respect to the different primes. A more natural, but
    possibly harder to analyse, scheme would be to sample a single (larger)
    prime number and define the weights to be the congruence classes of the
    identifiers with respect to that single prime.

\item Our methods allowed us to derandomize polynomial-space algorithms for
    $H$-minor free graphs without significantly increase the running time. Can our
    methods be used to derandomize other algorithms likewise?

\end{enumerate}



\bibliographystyle{abbrv}
\bibliography{bib}



\end{document}